\documentclass[ twoside]{article}

\usepackage{amsmath,amsthm,amssymb}
\usepackage{times}
\usepackage{enumerate}
\usepackage{hyperref}

\pagestyle{myheadings}
\def\titlerunning#1{\gdef\titrun{#1}}
\makeatletter
\def\author#1{\gdef\autrun{\def\and{\unskip, }#1}\gdef\@author{#1}}
\def\address#1{{\def\and{\\\hspace*{18pt}}\renewcommand{\thefootnote}{}%
\footnote {#1}}%
\markboth{\autrun}{\titrun}}
\makeatother
\def\email#1{e-mail: #1}
\def\subjclass#1{{\renewcommand{\thefootnote}{}%
\footnote{\emph{Mathematics Subject Classification (2010):} #1}}}
\def\keywords#1{\par\medskip
\noindent\textbf{Keywords.} #1}

\newcommand{\eq}[1]{\eqref{#1}}




\newtheorem{theorem}{Theorem}[section]
\newtheorem{proposition}[theorem]{Proposition}
\newtheorem{lemma}[theorem]{Lemma}
\newtheorem{corollary}[theorem]{Corollary}
\newtheorem{sublemma}[theorem]{Sublemma}

\theoremstyle{definition}
\newtheorem{definition}[theorem]{Definition}
\newtheorem{remark}[theorem]{Remark}

\newtheorem*{quotethm}{}

\numberwithin{equation}{section}

\frenchspacing

\textwidth=15cm
\textheight=23cm
\parindent=16pt
\oddsidemargin=-0.5cm
\evensidemargin=-0.5cm
\topmargin=-0.5cm
%


\DeclareMathOperator{\supp}{supp}
\DeclareMathOperator{\tr}{tr}
\DeclareMathOperator{\Ran}{Ran}
\DeclareMathOperator{\dist}{dist}

\DeclareMathOperator{\essinf}{ess\,inf}

\DeclareMathOperator{\diam}{diam}

\newcommand{\pr}{\prime}

\newcommand\R{\mathbb R}
\newcommand\N{\mathbb N}
\newcommand\C{\mathbb C}
\newcommand\Z{\mathbb Z}
\newcommand\G{\mathbb{G}} 

\renewcommand\P{\mathbb P}
\newcommand\E{\mathbb E}

\renewcommand\H{\mathcal{H}}
\renewcommand\L{\mathrm{L}}
 
\newcommand\D{\mathcal{D}}

\newcommand\I{\mathcal{I}}

\newcommand{\cX}{\mathcal{X}}
\newcommand{\cP}{\mathcal{P}}
\newcommand{\cQ}{\mathcal{Q}}
\newcommand{\cC}{\mathcal{C}}
\newcommand{\cJ}{\mathcal{J}}
\newcommand{\cD}{\mathcal{D}}
\newcommand{\cR}{\mathcal{R}}
\newcommand{\cG}{\mathcal{G}}
\newcommand{\cE}{\mathcal{E}}
\newcommand{\cA}{\mathcal{A}}
\newcommand{\cS}{\mathcal{S}}
\newcommand{\cL}{\mathcal{L}}
\newcommand{\cZ}{\mathcal{Z}}
\newcommand{\cF}{\mathcal{F}}

\newcommand{\cN}{\mathcal{N}}
\newcommand{\cB}{\mathcal{B}}
\newcommand{\cW}{\mathcal{W}}
\newcommand{\cY}{\mathcal{Y}}
\newcommand{\cT}{\mathcal{T}}
\newcommand{\cM}{\mathcal{M}}
\newcommand{\cU}{\mathcal{U}}

\newcommand{\bE}{\boldsymbol{E}}
\newcommand{\Pb}{\boldsymbol{P}}
\newcommand{\W}{\boldsymbol{W}}

\newcommand\bt{\boldsymbol{t}}

\newcommand\bfe{\boldsymbol{e}}

\newcommand\di{\mathrm{d}}
\newcommand\e{\mathrm{e}}

\newcommand\beps{\boldsymbol{\varepsilon}}
\newcommand{\bom}{\boldsymbol{\omega}}
\newcommand{\bzt}{\boldsymbol{\zeta}}

\newcommand{\bta}{\boldsymbol{\eta}}
\newcommand{\bSig}{\boldsymbol{\Sigma}}

\newcommand{\om}{\omega}
\newcommand\eps{\varepsilon}
\newcommand{\vrho}{\varrho}
\newcommand\La{\Lambda}
\newcommand{\vphi}{\varphi}
\newcommand{\zt}{\zeta}
\newcommand{\vs}{\varsigma}

\newcommand\Chi{\raisebox{.2ex}{$\chi$}}

\newcommand{\la}{\langle}
\newcommand{\ra}{\rangle}

\newcommand{\abs}[1]{\left\lvert #1 \right\rvert}
\newcommand{\norm}[1]{\left\lVert #1 \right\rVert}
\newcommand{\scal}[1]{\left\langle #1 \right\rangle}
\newcommand{\set}[1]{\left\{ #1 \right\}}
\newcommand{\pa}[1]{\left( #1 \right)}
\newcommand{\br}[1]{\left[ #1 \right]}
\newcommand{\rb}[1]{\left] #1 \right[}
\newcommand{\hnorm}[1]{\left\{ \!\left\{ #1\right\}\! \right\}}

\newcommand{\up}[1]{^{(#1)}}

\newcommand\beq{\begin{equation}}
\newcommand\eeq{\end{equation}}

\newcommand{\Vper}{V_{\mathrm{per}}}

\newcommand{\qtx}[1]{\quad\text{#1}\quad}

\begin{document}

\titlerunning{Localization  for continuous Anderson models}
\title{A comprehensive proof of localization for continuous Anderson models with singular random potentials}

\author{Fran\c cois Germinet
\and 
Abel Klein}

\date{}

\maketitle

\address{F. Germinet: Universit\'e de Cergy-Pontoise, IUF,
UMR 8088 CNRS, D\'epartement de Math\'ematiques,
F-95000 Cergy-Pontoise, France;  \email{francois.germinet@u-cergy.fr}
\and \!\!A. Klein:
University of California, Irvine,
Department of Mathematics,
Irvine, CA 92697-3875,  USA;
 \email{aklein@uci.edu}}
 
 \subjclass{Primary 82B44; Secondary 47B80, 60H25, 81Q10}

\begin{abstract}
We study continuous Anderson Hamiltonians with  non-degenerate 
single site probability distribution of bounded support, without  any regularity  condition on the single site probability distribution. We prove the existence
of  a strong form of localization   at the bottom of the spectrum, which includes Anderson localization (pure point spectrum with exponentially decaying eigenfunctions) with finite multiplicity of eigenvalues,  dynamical localization (no spreading of wave packets under the time evolution),  decay of eigenfunctions correlations, and decay of the Fermi projections.
   We also prove  log-H\" older continuity of the integrated density of states at the bottom of the spectrum.

\keywords{Anderson localization, dynamical localization, random Schr\"odinger 
operator, continuous Anderson model, integrated density of states}
 \end{abstract}


 \setcounter{tocdepth}{2}
 \tableofcontents


 \normalsize
\section*{Introduction}
\addcontentsline{toc}{section}{Introduction}

Anderson Hamiltonians are 
 alloy-type   random Schr\"odinger 
operators on $\mathrm{L}^2(\mathbb{R}^d)$   that model the motion of an electron moving in a randomly disordered crystal.   They  are the continuous analogue of the Anderson  model,  a random Schr\"odinger 
operator on $\ell^{2}(\Z^{d})$.

In this paper we prove
 a strong form of localization  at  the bottom of the spectrum for  Anderson Hamiltonians with a non-degenerate single site probability distribution with compact support, without  any regularity  condition on the single site probability distribution.  This strong form of localization includes   Anderson localization (pure point spectrum with exponentially decaying eigenfunctions) with finite multiplicity of eigenvalues,  dynamical localization (no spreading of wave packets under the time evolution),  decay of eigenfunctions correlations, and decay of the Fermi projections.  We also prove  log-H\" older continuity of the integrated density of states at the bottom of the spectrum.

Localization for random Schr\"odinger operators was first established in the celebrated
paper by  Gol'dsheid,  Molchanov and Pastur \cite{GMP} for a certain
one dimensional continuous random Schr\"odinger operator.  Localization is by now well established for one and quasi-one  random Schr\"odinger operators
\cite{KS,Lac,KlMP,CKM,KLS,Stolz,DSS1}.

In the multi-dimensional case there is a wealth of results concerning localization    for the (discrete) Anderson model and the (continuous) Anderson Hamiltonian as long as the single site probability distribution has enough regularity (absolutely continuous with a bounded density,    H\"older continuous,   log-H\"older continuous).  In this case  Anderson and dynamical localization are well established,   e.g., \cite{FS,MS,FMSS,DLS,SW,SVW,vD,vDK,S,vDK2,AM,Klong,FKlat,A,ASFH,W2,Klweak,HM,CH1,Klop95,GDB,FKac,KSS,KSS2,DS,GKboot,GKfinvol,GKhighdis,AENSS,Kle}.   Localization is also known in a random displacement model where the displacement probability distribution has a bounded density \cite{Klop93, GhK,KLNS},
    for a class of Gaussian random potentials \cite{FLM,U,LMW}, and    for Poisson models where the single-site potentials are multiplied by random variables with bounded densities   \cite{MS2,CH1}. What all these results have in common is the availability of random variables with  sufficiently regular probability distributions,  which can be exploited, in an averaging procedure,  to produce an \emph{a priori} Wegner estimate at all scales (an estimate on the probability of energy resonances in finite volumes), e.g.,  \cite{Weg,FS,HM,CKM,CH1,Klop95,CHM,Kir,FLM,St,CHN,CHKN,CHK1,CHK2}.

In contrast, 
for the  most natural  random Schr\"odinger operators on the continuum (cf.\  \cite[Subsection~1.1]{LGP}),  the Bernoulli-Anderson Hamiltonian (simplest disordered substitutional alloy) and the Poisson Hamiltonian (simplest disordered amorphous medium),  
 localization results in two or more dimensions were much harder to obtain.   The  Bernoulli-Anderson Hamiltonian is an Anderson Hamiltonian where  the single site probability distribution is the distribution of a Bernoulli random variable, and the Poisson Hamiltonian
 is a random Schr\"odinger operator corresponding to identical impurities  placed at locations given by a homogeneous Poisson point process on $\R^{d}$.
 In both cases the  random variables with regular probability distributions are not available, so there is no a priori Wegner estimate.

Bourgain and Kenig \cite{BK} proved Anderson localization at the bottom of the spectrum for   the Bernoulli-Anderson Hamiltonian.  In their remarkable paper the Wegner estimate is established   by a multiscale analysis using  ``free sites" and a new quantitative version of the unique  continuation principle which gives  a lower bound on eigenfunctions. 
Since this Wegner estimate  has weak probability estimates and  the underlying random variables are discrete, they also introduced a new method to prove Anderson localization from estimates on the finite-volume resolvents given by a single energy multiscale analysis.  The new
method does not use spectral averaging as in \cite{DLS,SW,CH1}, which requires random variables with bounded densities.  It is also not an energy-interval multiscale analysis as in \cite{FMSS,vDK,FKac,GKboot,Kle}, which requires better probability estimates.

Germinet, Hislop and Klein \cite{GHK1, GHK2, GHK3} established Anderson localization at the bottom of the spectrum for the Poisson Hamiltonian,  
using   a multiscale analysis that 
exploits  the   probabilistic properties of Poisson point processes to
 control  the randomness of the configurations, and at the same time allows  the  use of the  new ideas introduced by  Bourgain and Kenig.

 Aizenman, Germinet, Klein, and Warzel \cite{AGKW} 
used a Bernoulli decomposition for random variables to show that  spectral localization (pure point spectrum with probability one)  for  Anderson Hamiltonians  follows from an extension of the Bourgain-Kenig results to   nonhomogeneous Bernoulli-Anderson Hamiltonians,   which incorporate an additional 
background potential  and allow
the variances of the Bernoulli terms  not to be identical but   
only uniformly positive.  Such random Schr\"odinger operators are generalized Anderson Hamiltonians as in Definition~\ref{defgenAndH}, for which we prove  Anderson and dynamical localization in this paper, thus providing a proof of  the required extension  stated  in \cite[Theorem~1.4]{AGKW}.

In this article we provide a comprehensive proof of localization for Anderson Hamiltonians, drawing on the methods of   \cite{FS,FMSS,vDK,CH1,FKac,GKboot,GKsudec,Kle} and incorporating the new  ideas of \cite{BK}.  We  make no assumptions on the  single site probability distribution except for  compact support. (The proof can be extended to distributions of unbounded support with appropriate assumptions on the tails of the distribution.)  We perform a  multiscale analysis to obtain  probabilistic statements about restrictions of the Anderson Hamiltonian to finite volumes.    From the conclusions of the multiscale analysis we extract  an infinite volume characterization of localization:  a  probabilistic statement concerning the generalized eigenfunctions of  the (infinite volume) Anderson Hamiltonian, from which we derive both  Anderson and dynamical localization, as well as other consequences of localization, such as decay of eigenfunctions correlations (e.g., SULE, SUDEC) and decay of the Fermi projections.

  This new infinite volume description of localization (given in Theorem~\ref{thmthemainev}(\ref{probloc})) yields all the manifestations of localization that have been previously derived  from  the energy interval multiscale analysis for sufficiently regular single site probability  distribution    \cite{FMSS,vDK,GDB,DS,GKboot,GKsudec,Kle}.  This description may also be derived  from the energy interval multiscale analysis (see Remark~\ref{remenergyint});  it is  implicit in \cite{GKsudec}.  One of the main achievements of this paper is the extraction of  such a clean and simple statement of localization for Bernoulli and other singular single site probability distributions.

We give a detailed account of this single energy multiscale analysis, which uses  `free sites' and the  quantitative unique continuation principle as in \cite{BK} to obtain control of the finite volume resonances.  We also explain in  detail how  all forms of localization can be extracted from this  single energy multiscale analysis.  
To put this extraction in perspective,   Fr\" ohlich and Spencer,  in their seminal paper \cite{FS}, obtained  a single energy multiscale analysis
 for the discrete  Anderson model  with good probability estimates, but  were not able to derive Anderson localization from their result. The desired localization was later obtained from a multiscale analysis by   two different methods.   Spectral averaging gets  Anderson localization from a single energy multiscale analysis as in \cite{FS}, but requires  absolutely
continuous single site probability  distributions with a bounded density  \cite{DLS,SW,CH1}.  Anderson localization, and later dynamical localization, can be proven from an  energy interval multiscale analysis using generalized eigenfunctions \cite{FMSS,vDK,DS,GKboot,Kle}. None of these methods were available in Bourgain and Kenig's setting.  Spectral averaging is
 not feasible for
Bernoulli random variables, and the  energy interval multiscale analysis  requires better probability estimates than possible using the quantitative unique continuation principle. In response, Bourgain and Kenig developed a new method for obtaining Anderson localization from a single energy multiscale analysis,  using Peierl's argument, generalized eigenfunctions, and two energy reductions [BouK, Section 7] . (Their method is simpler in the setting of [FrS], where the second energy reduction is not needed--see Remarks~\ref{remp>1} and \ref{remp>3}.)  In this paper we combine the ideas of [BouK, Section 7] with methods we developed in [GK1, GK6] to extract all forms of localization from a  single energy multiscale analysis, giving a detailed account of all steps.

We also derive  log-H\" older continuity of the integrated density of states from the conclusions of the multiscale analysis. The multiscale analysis requires  the probabilistic control of finite volume resonances subexponentially close to the given energy (and no more, as  noted in \cite{vDK}). In \cite{BK} and in this article, this control is obtained as part of the multiscale analysis. We show that, in the presence of a  multiscale analysis,  log-H\" older continuity of the integrated density of states is the infinite volume trace of this probabilistic control  (the `Wegner estimate').

The integrated density of states of  
 the discrete Anderson model   is always  log-H\" older continuous \cite{CS}. 
If  the single site probability distribution  is continuous  (i.e., it has no atoms), then   the integrated density of states for  both  discrete  Anderson models and continuous Anderson Hamiltonians has at least as much regularity as the concentration function of this probability distribution \cite{CHK2}. 
  Although for the discrete Anderson model there is an easy proof of continuity of the integrated density of states for arbitrary single site probability distribution  \cite{DeS}, for the continuous Anderson Hamiltonian  it is not even known if the integrated density of states is always a continuous function  if this probability distribution has an atom.

Neither Anderson localization nor dynamical localization carry information about the regularity of the integrated density of states.  
Roughly speaking, dynamical localization and regularity of the integrated density of states carry complementary types of information. This  is made more precise in \cite{GK5}, where we showed that for Anderson Hamiltonians    with an a priori  Wegner estimate, dynamical localization is necessary and sufficient to perform a multiscale analysis. The multiscale analysis contains more information than just localization properties: it also encodes  regularity of the integrated density of states. This fact has been overlooked,  since,
previous to the multiscale analysis in \cite{BK}, 
all multiscale analyses for Anderson models were performed with an a priori  Wegner estimate which readily implied regularity of the integrated density of states, even without localization.  In view of our results in \cite{GK5}, we may argue that,  by proving both localization and log-H\" older continuity of the integrated density of states,  we have extracted  from the multiscale analysis all the encoded information.   This `philosophical' remark would become a mathematical statement if we could prove  that localization combined with the log-H\" older continuity of the  integrated density of states is enough to start a multiscale analysis, extending the results of \cite{GK5} to the setting of this article.

 The strong  localization results, including  Anderson localization and  dynamical localization, and the  log-H\" older continuity of the integrated density of states,
 presented in this paper  for Anderson Hamiltonians, are also valid for  Poisson Hamiltonians using the probabilistic properties of Poisson point processes to
 control  the randomness of the configurations as in \cite{GHK2}.
 
It remains a challenge to prove localization for other random Schr\"odinger operators with  no assumptions on the  single site probability distribution except for  compact support (e.g., for  a Bernoulli distribution).  In particular, there is no proof of localization for the multidimensional discrete Bernoulli-Anderson model, for which     everything in \cite{BK} and this paper is valid  except for the quantitative unique continuation principle;   there is no unique continuation principle for discrete Schr\" odinger operators, where non-zero eigenfunctions may vanish on arbitrarily large sets \cite[Theorem~2]{J}.  The same applies to random Landau Hamiltonians \cite{CH2,W1,GKS,GKS2,GKM}, where, although the unique continuation principle holds,  an appropriate quantitative unique continuation principle is missing. (There is a quantitative unique continuation principle for Landau Hamiltonians, but it comes with the exponent $2$ instead of $\frac 4 3$ \cite{Davey}.  The multiscale analysis requires an exponent $<\frac {1+\sqrt{3}} 2$, as discused in Remark~\ref{remQUCP}. Note that $\frac 4 3<\frac {1+\sqrt{3}} 2< 2$.) The same is also true for a continuous alloy-type   random Schr\"odinger 
operators with single site potentials of indefinite sign \cite{Klop95,KloN,HK}, where, although we have the  quantitative unique continuation principle, it cannot be used to  control  the finite volume resonances.

This article is organized  as follows:

\begin{enumerate}[1]
\item  \textbf{Main results:}  In Section~\ref{secmainr} we define Anderson Hamiltonians and state our main results, Theorem~\ref{thmthemainev} and  Corollary~\ref{corallloc}.

\item    \textbf{Anderson Hamiltonians:}  In Section~\ref{secgenAnd}  we introduce (normalized) generalized Anderson Hamiltonians, finite volume operators,  and prove some  basic deterministic  properties.  We always work with generalized Anderson Hamiltonians in the following sections. 

\item    \textbf{Preamble to the multiscale analysis:}   In Section~\ref{secpreambleMSA}  we introduce the machinery  for the multiscale analysis.  We define `good boxes', `free sites', `suitable coverings'  of boxes and annuli, and prove some basic lemmas.  

\item   \textbf{The multiscale analysis with a Wegner estimate:}  Section~\ref{sectMSAWegner} is devoted to  the multiscale analysis;  Theorem~\ref{thmMSA} states the full result at the bottom of the spectrum.  Proposition~\ref{propInitial} gives  a priori finite volume estimates at the bottom of the spectrum  that yield the starting condition for the multiscale analysis.  The single energy multiscale analysis with a Wegner estimate is performed in Proposition~\ref{propA} on any energy  interval where we have a priori finite volume estimates. 

\item   \textbf{Preamble to localization:} In Section~\ref{secpreambleloc}  we introduce tools for extracting localization from the multiscale analysis.  We discuss generalized eigenfunctions and the the generalized eigenfunction expansion, and show that generalized eigenfunctions are small in good boxes (eg., Lemma~\ref{lemgoodW}).

\item    \textbf{From the multiscale analysis to localization:}  In Section~\ref{secMSAloc} we extract localization from the multiscale analysis.   We assume that the conclusions of the multiscale analysis  (i.e., of Proposition~\ref{propA}) hold  for all energies in   a bounded open interval (not necessarily at the bottom of the spectrum), and derive localization in that interval.  Theorem~\ref{thmmainev}  encapsulates all forms of  localization. 

\item    \textbf{Localization:} In Section~\ref{secloc} we extract the usual forms of localization from Theorem~\ref{thmmainev}.   Anderson localization and finite multiplicity of eigenvalues is proven in 
Theorem~\ref{thmAndloc}.  Eigenfunctions correlations  (e.g., SUDEC, SULE) are obtained with probability one  in Theorem~\ref{thmSUDEC} and in expectation in Theorem~\ref{thmsgenSUDEC}. Dynamical localization and decay of Fermi projections are proved with probability one  in Corollary~\ref{cordynloc}  and in expectation   in   Corollary~\ref{coroldynlocexp}.

\item   \textbf{Log-H\" older continuity of the integrated density of states:} In Section~\ref{seclogH} we derive log-H\" older continuity of the integrated density of states from the multiscale analysis with a Wegner estimate; see Theorem~\ref{logHolder}.  

\item[A]  \textbf{A quantitative unique continuation principle for Schr\"odinger operators:} In  Appendix~\ref{appendixQUP}  we rewrite   Bourgain and Kenig's quantitative unique continuation principle
for Schr\"odinger operators, i.e.,    \cite[Lemma~3.10]{BK},  in a convenient form   for our purposes; see Theorem~\ref{thmucp}  and Corollary~\ref{corQUCPD}.
We also give an application of this quantitative unique continuation principle to periodic Schr\"odinger operators, providing an alternative proof to Combes, Hislop and Klopp's  lower bound estimate concerning periodic potentials and  spectral projections \cite[Theorem~4.1]{CHK1}.

\end{enumerate}

 \section{Main results}\label{secmainr}
 We start by defining Anderson Hamiltonians. 

\begin{definition} \label{defAndHam} An \emph{Anderson Hamiltonian} is a  random Schr\"odinger 
operator on 
$\mathrm{L}^2(\mathbb{R}^d)$ of the form
\beq\label{AndH}
H_{\bom}: =  -\Delta + V_{\mathrm{per}} +
V_{\bom} ,
\eeq
where
\begin{enumerate}
\item 
$\Delta$ is the $d$-dimensional Laplacian operator, 
\item $V_{\mathrm{per}}$ is a bounded periodic potential with period $q \in \N$,

\item$V_{\bom}$ is an alloy-type random potential,
\beq
V_{\bom} (x):= 
\sum_{\zeta \in \Z^d} \omega_\zeta \,  u(x - 
\zeta),\label{AndV}
\eeq
where  
\begin{enumerate}
\item
the single site potential $u$ is a  nonnegative bounded 
measurable function
on $\R^{d}$ with compact support, uniformly 
bounded away from zero in
a neighborhood of the origin,
\item 

$\bom=\{ \omega_\zeta \}_{\zeta\in
\Z^d}$ is a family of independent 
identically distributed random
variables  whose  common probability 
distribution $\mu$ is non\--dege\-nerate with bounded support.

\end{enumerate}
\end{enumerate}
\end{definition} 

Given an Anderson Hamiltonian $H_{\bom}$,  we set  $P_{\bom}(B):=\Chi_{B}(H_{\bom})$ for a Borel set $B \subset \R^{d}$,  $P_{\bom}({E}):= P_{\bom}(\{{E}\})$ and    $P_{\bom}^{(E)}:=P_{\bom}(]-\infty,E])$ for ${E} \in \R$.

An Anderson Hamiltonian  $H_{\bom}$  is a $q\Z^d$-ergodic family of
random self-adjoint operators ($q=1$ if $V_{\mathrm{per}}=0$).
It follows   (see \cite{KiM,CL,PF})
that there exists fixed subsets $\Sigma$,  $\Sigma_{\mathrm{pp}}$, $\Sigma_{\mathrm{ac}}$  and $\Sigma_{\mathrm{sc}}$ of $\R$ so that the spectrum $\sigma(H_{\bom})$
of $H_{\bom}$,  as well as its pure point, 
absolutely continuous, and singular continuous  components,
are equal to these fixed sets with probability one.  We let $E_{\inf} = \inf \Sigma> - \infty$,  the bottom of the non-random spectrum; note that there exists $E_{1}> E_{\inf}$ such that $[E_{\inf}, E_{1}] \subset \Sigma$ \cite{KiM2}.

We will use the following notation:
\begin{itemize}
\item 
 Given $x=\pa{x_1,x_2,\ldots, x_d} \in \R^d$, we set
\beq \label{supnorm}
\norm{x}:= \max \set{\abs{x_1},\abs{x_2},  \ldots,\abs{x_d}}\quad \text{and}\quad \scal{x} := \pa{1 + \norm{x}^{2}}^{\frac 12}.
\eeq
\item 
 Given $\nu>0$  and $y \in \R^{d}$, we let 
 $T_{\nu,y} $  be  the operator on   $\mathrm{L}^2(\mathbb{R}^d)$ given by multiplication by the function
$T_{\nu,y}(x):= \la x-y\ra^{\nu}$. We set  $\scal{X-y}:= T_{1,y}$ and  $T_{\nu}:=T_{\nu,0}=\scal{X}^\nu$. 

\item
  We let  
\beq
\Lambda_{L}(x):= \set{y \in \R^d; \; \norm{y-x} < \tfrac L 2} =x + \left]-\tfrac L 2,\tfrac L 2\right[^d 
\eeq
 denote  the  (open)
box of side $L$ centered at $x \in \R^{d}$.  By a box $\Lambda_L$ we will mean a box $\Lambda_{L}(x)$ for some $x \in \R^d$. We write $\overline{\Lambda}_L=\overline{\Lambda_L} $ for the closed box. Given scales $L_{1} < L_{2}$, we consider the (open)  annulus 
\beq
\La_{L_{2},L_{1}}(x):=\La_{L_{2}}(x)\setminus {\overline{\La}}_{L_{1}}(x) =
\set{y \in \R^d; \; \tfrac {L_{1}} 2 <  \norm{y-x} < \tfrac {L_{2}} 2},
\eeq
and  let  $\overline{\La}_{L_{2},L_{1}}(x):=\overline{\La_{L_{2},L_{1}}(x)}$ be the closed annulus.

\item 
Given a set $B$, we write  $\Chi_B$ for its characteristic function. 

\item   $\Chi_x$ will  denote the characteristic function of the unit box centered at $x \in \R^d$, i.e., $\Chi_x:=\Chi_{\Lambda_1(x)} $.

\item The cardinality of a set $A$ will be denoted by $\# A$.

\item Given a Borel set  ${\Xi} \subset \R^d$, we will denote its Lebesgue measure by $\abs{{\Xi}}$.

\item  We will use the notation  $\sqcup$ for disjoint unions:  given sets $A$ and $ B$,  then  $C=A\sqcup B$ means that  $C=A \cup B$ and  $A \cap B =\emptyset$.

\item  We let $\cB_{b}$ denote the collection of bounded complex-valued Borel functions on $\R$,  and set
 $\cB_{b,1}:=\set{f \in \cB_{b};  \ \sup_{t \in \R} \abs{f(t)}\le 1}$.

\item Given an open set ${\Xi} \subset \R^d$ and $n \in \N \cup \set{\infty}$, $C^n(\Xi)$ will denote the collection of  $n$-times continuously differentiable complex-valued functions on $\Xi$, with $C_c^n(\Xi)$ denoting the subset of functions with compact support.

\item 
  By a  constant  we will always mean a finite constant.  We will use  $C_{a,b, \ldots}$, $C^{\pr}_{a,b, \ldots}$,  $C(a,b, \ldots)$, etc., to  denote a constant depending only on the parameters
$a,b, \ldots$.
 
\end{itemize}

We prove  a probabilistic statement about the generalized eigenfunctions of an Anderson Hamiltonian, from which we will derive all the usual statements about localization.   Generalized eigenfunctions, originally used by Martinelli and Scoppola \cite{MS}  to extract absence of absolutely continuous from the multiscale analysis, have been an indispensable tool in all proofs of localization that do not use spectral averaging \cite{FMSS,vDK,GKboot,Kle,BK}.

 Let $H_{\bom}$ be an Anderson Hamiltonian on $\mathrm{L}^2(\mathbb{R}^d)$ and fix 
 $\nu>0 $.    A generalized eigenfunction for a realization $H_{{\bom}}$ (i.e., we fix the values of the random variables  $\bom$) with generalized eigenvalue ${E} \in \R$  is a  measurable function $\psi$ on $\R^{d}$, with $ 0< \norm{T_{\nu}^{-1}\psi} < \infty$,
satisfying the eigenvalue equation for ${E}$ in the weak sense, i.e.,
\beq  \label{eigb100}
\scal{H_{{\bom}}\vphi,\psi}= {E} \scal{\vphi,\psi} \quad \text{for all}\quad \vphi \in C_{c}^{\infty}(\R^{d}).
\eeq   
We will denote by
$\Theta\up{\nu}_{\bom}({E})$ the collection of  generalized eigenfunctions for $H_{{\bom}}$ with generalized eigenvalue ${E}$.

To detect localization  for a realization $H_{{\bom}}$, we introduce quantities that measure  the concentration of the  generalized eigenfunctions with generalized eigenvalue ${E}$ in certain  subsets of $\R^{d}$.  Given $x\in \R^{d}$,  we will measure this concentration  at $x$ by
 \begin{align} \label{defGWxint}
W\up{\nu}_{\bom,x}({E}):=\begin{cases} 
{\sup_{\psi \in \Theta\up{\nu}_{\bom}({E})} }
\ \frac {\| \Chi_{x}\psi \|}
{\|T_{\nu,x}^{-1}\psi \|}&
 \text{if $\Theta\up{\nu}_{\bom}({E})\not=\emptyset$}\\0 & \text{otherwise}\end{cases},
\end{align}
 and at  an annulus around $x$ at scale  
 $L\ge 1$ by 
  \begin{align} \label{defGWxLint}
W\up{\nu}_{\bom,x,L}({E}):=\begin{cases} {\sup_{\psi \in \Theta\up{\nu}_{\bom}({E})} }
\frac {\norm{ \Chi_{x,L}\psi}}
{\norm{T_{\nu,x}^{-1}\psi }}&
 \text{if $\Theta\up{\nu}_{\bom}({E})\not=\emptyset$}\\0 & \text{otherwise}
\end{cases},
\end{align}
where $ \Chi_{x,L}:= \Chi_{\La_{2L +1,L-1}(x)}$.  (For technical reasons we  will need an annulus slightly bigger than $ \Chi_{\La_{2L,L}(x)}$.) We always have $0\le W\up{\nu}_{\bom,x}({E})\le \left( \tfrac 5 4\right)^{\frac {\nu} 2}<2^{\frac {\nu} 2}  $ and  $0\le W\up{\nu}_{\bom,x,L}({E})\le 2^{\frac {\nu} 2} L^{\nu}$.
We will work with a fixed $\nu> \frac d 2$, but note that $W\up{\nu}_{\bom,x}({E})$ and $W\up{\nu}_{\bom,x,L}({E})$ are increasing in $\nu$.

We also prove  log-H\"older continuity of the integrated density of states. The integrated density of states $N(E)$ for an Anderson Hamiltonian $H_{\bom}$, usually defined  through the infinite volume limit of the normalized eigenvalue counting functions
of appropriate restrictions to finite volumes (e.g., \cite{CL,PF}), equals (e.g.,  \cite{DIM}) 
\beq
N(E) =\frac 1 {q^{d}}\,  \E\set{ \tr \pa{\Chi_{\Lambda_q(0)}P_{\bom}^{(E)}\Chi_{\Lambda_q(0)}} }\qtx{for} E \in \R.
\eeq

The following theorem contains our main results; item (\ref{probloc}) encapsulates localization for Anderson Hamiltonians.

 \begin{theorem}\label{thmthemainev}
 Let $H_{\bom}$ be an Anderson Hamiltonian on $\mathrm{L}^2(\mathbb{R}^d)$. For each $p \in \rb{\frac 1 3, \frac 3 8}$ there exists an energy  $E_0> E_{\inf}$ such that the following holds for all  $\widetilde{p}\in ]0,p[$:
  \begin{enumerate}[\upshape (A)]

 \item  \label{logHoldc} The integrated density of states $N(E)$ is locally log-H\"older continuous of order $\widetilde{p}d$ in the interval $ [E_{\inf},E_{0}[$, i.e.,  for all $\widetilde{p} \in ]0,p[$ and compact intervals $I \subset  [E_{\inf},E_{0}[$ with length $\abs{I}\le \frac 1 2$ we have
  \beq\label{Nxest11}
\abs{N(E_{2})-N(E_{1})} \le \frac {C_{\widetilde{p},I}} {\abs {\log \abs{E_{2}-E_{1}} }^{\widetilde{p}d}} \qtx{for all} E_{1},E_{2} \in I.
\eeq

\item \label{probloc}  Let $\vartheta=\frac 1 2 \rho^{n_1}$ for some $\rho \in ]\frac 1 {1+p} ,1[$  and $n_1 \in\N$  with $(n_1+1) \rho^{n_1} < p-\widetilde{p}$. There exists a constant  $M>0$ so, fixing $\nu> \frac d 2$, there is a finite scale $L_{0}$ such that for all
 $L \ge L_{0}$ and $x_{0}\in \R^{d}$    there exists an event $\cU_{L,x_{0}}$ with the following properties:
 \begin{enumerate}[(i)]
\item $\cU_{L,x_{0}}$  depends only on the random variables $\set{\omega_{\zeta}}_{\zeta \in \Lambda_{\frac {1001L}{500}}(x_{0})}$,  and  
\beq\label{cUdesiredint}
  \P\set{\cU_{L,x_{0}} }\ge  1 - L^{- \widetilde{p}d}.
\eeq

\item If  $\bom \in\cU_{L,x_{0}}$,   for all  ${E} \in [E_{\inf},E_{0}[ $ we have that 
 \beq \label{distEIred-int}
\text{either}\quad  W_{\bom,x_{0}}\up{\nu}({E}) \le   \e^{-  M L^{\vartheta} } \qtx{or}  W_{
\bom,x_{0},L}\up{\nu}({E}) \le \e^{- M L}.
 \eeq
In particular, for all   $\bom \in\cU_{L,x_{0}}$   we have 
\beq \label{EDI9L99int}
W_{\bom,x_{0}}\up{\nu}({E})W_{\bom,x_{0},L}\up{\nu}({E}) \le \e^{-  \frac 1 2 {M}  L^{\vartheta}}\qtx{for}  {E} \in  [E_{\inf},E_{0}  [  .
\eeq

\end{enumerate}

\end{enumerate}
  \end{theorem}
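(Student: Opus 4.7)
The plan is to deduce both (\ref{logHoldc}) and (\ref{probloc}) from the multiscale analysis with Wegner estimate encapsulated in Theorem~\ref{thmMSA}. Its proof has the usual two-step structure: Proposition~\ref{propInitial} gives the initial-scale estimate at the bottom of the spectrum, using non-degeneracy and compact support of $\mu$ together with a Combes--Thomas/Lifshitz-tails argument to produce polynomially small probability of a resonant box at some scale $L_0$ in an interval $[E_{\inf},E_0[$; Proposition~\ref{propA} then performs the Bourgain--Kenig single-energy induction from scale $\ell$ to a scale $L\sim \ell^{1/\rho}$, ruling out resonant boxes by freezing some random variables as ``free sites'' and applying the quantitative unique continuation principle of Appendix~\ref{appendixQUP} to displace the offending eigenvalue away from $E$. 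The range $p\in ]\tfrac13,\tfrac38[$ and the form $\vartheta=\tfrac12\rho^{n_1}$ encode precisely what this construction yields: Wegner-type probability decay $L^{-pd}$ and subexponential resolvent decay with rate $L^\vartheta$ after the finitely many energy reductions discussed in Remarks~\ref{remp>1} and~\ref{remp>3}.

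Part~(\ref{logHoldc}) is then the infinite-volume trace of the Wegner estimate contained in the output of Theorem~\ref{thmMSA}, and is the content of Theorem~\ref{logHolder}. Concretely, the multiscale analysis bounds the probability that a finite-volume restriction of $H_{\bom}$ to a box $\Lambda_L$ has an eigenvalue in a subexponential window of $E$; inserting this into the trace-class formula $N(E)=q^{-d}\,\E\{\tr(\Chi_{\Lambda_q(0)}P_{\bom}^{(E)}\Chi_{\Lambda_q(0)})\}$ and optimising the scale $L$ against $|E_2-E_1|$ produces the log-Hölder bound~\eqref{Nxest11}.

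For part~(\ref{probloc}), the key tool is Theorem~\ref{thmmainev} in Section~\ref{secMSAloc}. Fix a realisation $\bom$ and a generalised eigenfunction $\psi\in\Theta\up{\nu}_{\bom}(E)$. The MSA produces at scale $L$ an event $\cU_{L,x_0}$ whose locality in the random variables is guaranteed by the suitable covering machinery of Section~\ref{secpreambleMSA}, of probability at least $1-L^{-\widetilde pd}$, on which a dichotomy holds: either the box $\Lambda_L(x_0)$ itself is good for $E$, or the annulus $\La_{2L+1,L-1}(x_0)$ admits a suitable covering by good boxes. In the latter case the good-box estimate Lemma~\ref{lemgoodW} converts finite-volume resolvent decay into pointwise smallness of $\Chi_{x_0,L}\psi$, yielding $W\up{\nu}_{\bom,x_0,L}(E)\le\e^{-ML}$. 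In the former case, chaining Lemma~\ref{lemgoodW} across a sequence of nested annuli out to the boundary of $\Lambda_L(x_0)$, in the Peierls-style fashion of~\cite{BK}, produces $W\up{\nu}_{\bom,x_0}(E)\le\e^{-ML^\vartheta}$. These are the two alternatives in~\eqref{distEIred-int}; multiplying them then yields~\eqref{EDI9L99int}.

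The central obstacle throughout is the absence of an a priori Wegner estimate: both controlling resonances inside the inductive step and extracting regularity of $N$ from its output rely exclusively on the quantitative unique continuation principle, and it is this tool that simultaneously restricts the admissible exponent to $p<\tfrac38$ and forces the decay rate at $x_0$ to be subexponential of order $L^\vartheta$ rather than genuinely exponential.
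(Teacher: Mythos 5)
Your high-level plan is right in its first half: Theorem~\ref{thmMSA} (Propositions~\ref{propInitial} and~\ref{propA}) supplies the inputs, Theorem~\ref{logHolder} yields part~(\ref{logHoldc}), and Theorem~\ref{thmmainev} yields part~(\ref{probloc}). But your sketch of how Theorem~\ref{thmmainev} is established contains a genuine gap, and gets the logical structure of the dichotomy in~\eqref{distEIred-int} wrong.

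You present the event $\cU_{L,x_0}$ as supporting the alternative ``either the box $\Lambda_L(x_0)$ is good for $E$, or the annulus admits a good covering,'' uniformly for all $E\in[E_{\inf},E_0[$. No single event with probability $1-L^{-\widetilde p d}$ can do this: the multiscale analysis only gives probability $\ge 1-L^{-pd}$ that a fixed box is good at a \emph{single} energy $E$, and one cannot intersect over the continuum of energies. This is exactly where the paper's argument departs from what you wrote. The actual logic in Proposition~\ref{propmainev} is: if $W_{\bom,x_0}(E)>\e^{-ML^{\vartheta}}$, then, \emph{because the generalised eigenfunction is not small near $x_0$}, the ``energy trap'' Lemma~\ref{lem1trap} and the energy bootstrap Lemma~\ref{lembootstrap} force $E$ to lie exponentially close to the spectrum $\sigma\up{\I}(H_{\bom,\Lambda_{L_-}(x_0)})$; the second spectral reduction (Proposition~\ref{prop2red}) then pins $E$ to a \emph{finite} reduced spectrum of cardinality $\lesssim L^{(n_1+1)\beta d}$. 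Only for those finitely many energies does one need the annulus boxes in the $\frac{L}{100}$-covering of $\Lambda_{L_+,L_-}(x_0)$ to be good, and then Lemma~\ref{lemgoodWL} gives $W_{\bom,x_0,L}(E)\le\e^{-ML}$. The first branch of~\eqref{distEIred-int} is then just the logical negation $W_{\bom,x_0}(E)\le\e^{-ML^\vartheta}$, not a conclusion obtained from a good box.

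Your explanation of where the exponent $\vartheta=\tfrac12\rho^{n_1}$ comes from is also incorrect. If a box $\Lambda_L(x_0)$ is $(\bom,E,m,\vs)$-good, Lemma~\ref{lemgoodW} directly gives $W_{\bom,x_0}(E)\le\e^{-\frac{m}{15}L}$, which is \emph{exponential} in $L$; no Peierls chaining of nested annuli is needed, nor would it produce a subexponential rate. The subexponential exponent $L^{\vartheta}$ is an artefact of the two spectral reductions: the first produces the $\sqrt{L}$ scale (Lemma~\ref{lem1trap}), and the second reduction iterates through the scales $L_n=L^{\rho^n}$, $n=0,\dots,n_1$, with the condition $(n_1+1)\rho^{n_1}<p-\widetilde p$ controlling the cardinality of the reduced spectrum so that the union bound over it can be afforded. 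Without incorporating these reductions into the actual structure of the event $\cU_{L,x_0}$ (rather than as a parenthetical gloss), your argument cannot produce an event of the claimed probability that simultaneously handles all energies in $[E_{\inf},E_0[$.
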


\begin{remark}\label{remmainev}  The conclusions of Theorem~\ref{thmthemainev} hold on 
any bounded open interval $\I$ in which we  verify the starting condition (i.e., hypotheses) for the multiscale analysis of Proposition~\ref{propA}.    Theorem~\ref{thmthemainev} is stated for an interval at the bottom of the spectrum,  where the starting condition for the multiscale analysis is derived from Lifshitz tails estimates in Proposition~\ref{propInitial}.   This starting condition, and hence the analogue of Theorem~\ref{thmthemainev},  can also be proved  in intervals at the edge of spectral gaps,  similarly to Proposition~\ref{propInitial},   using the  internal Lifshitz tails estimates given in \cite{Klop99}.   This starting condition is also derived in  Proposition~\ref{propInitialHD} for a fixed interval at the bottom of the spectrum at high disorder, provided $\mu\pa{\set{\inf \supp \mu}}=0$, and the conclusions of Theorem~\ref{thmthemainev} hold in this fixed interval at high disorder if $\mu([{\inf \supp \mu},{\inf \supp \mu} +t])\le C t^\gamma$, with $\gamma >0$ appropriately  large. Note  that Theorem~\ref{thmthemainev} holds also if the single site potential $u$ in Definition~\ref{defAndHam} is assumed to be nonpositive instead of nonnegative, since in this case replacing $u$ by $-u$ and $\mu$ by $\widetilde{\mu}$, where $\widetilde{\mu}(B)=\mu(-B)$, rewrites the random Schr\" odinger operator as an Anderson Hamiltonian as in Definition~\ref{defAndHam}.
\end{remark}

Theorem~\ref{thmthemainev}(\ref{logHoldc}) says that in the interval $ [E_{\inf},E_{0} [$ (more generally, in the interval where we have a multiscale analysis) the  integrated density of states  $N(E)$ is log-H\" older continuous regardless of the (lack of)  regularity of $\mu$.  If  the single site probability distribution $\mu$ is continuous  (i.e., $\mu$ has no atoms), then it is known that   the integrated density of states  has at least as much regularity as the concentration function $S_\mu$ of $\mu$ \cite{CHK2}: for all compact intervals $I\subset \R$ we have
 \beq\label{Nxest1199}
\abs{N(E_{2})-N(E_{1})} \le  {C_{I}} S_\mu( \abs{E_{2}-E_{1}} ) \qtx{for all} E_{1},E_{2} \in I,
\eeq
where  $S_\mu(s):= \sup_{t \in \R}\mu([t,t+s])$ for $s \ge0$. If $\mu$ has an atom, \eq{Nxest1199} is still true but useless, since $\inf_{s>0} S_\mu(s)>0$.  For the continuous Anderson Hamiltonian  it is not even known if $N(E)$ is  a continuous function on $\R$ if $\mu$ has an atom.

Theorem~\ref{thmthemainev}(\ref{probloc}) is a probabilistic statement about the infinite volume Anderson Hamiltonian; there is no mention of finite volume operators. It captures
 all the usual forms of localization. Anderson localization with finite multiplicity of eigenvalues will  follow from \eq{cUdesiredint} and \eq{distEIred-int} by a simple application of the  Borel-Cantelli Lemma. Dynamical localization,  decay of eigenfunctions correlations (e.g., SULE, SUDEC), and decay of the Fermi projections will be consequences of \eq{cUdesiredint} and \eq{EDI9L99int}. These and other familiar localization properties are stated in Corollary~\ref{corallloc}. (Theorem~\ref{thmthemainev}(\ref{logHoldc}) is not needed for Corollary~\ref{corallloc}.)

\begin{corollary}\label{corallloc} Let $H_{\bom}$ be an Anderson Hamiltonian on $\mathrm{L}^2(\mathbb{R}^d)$.  Fix  $p \in \rb{\frac 1 3, \frac 3 8}$, and  let  $E_0> E_{\inf}$, $\widetilde{p}\in ]0,p[$,  $\vartheta > 0$ and $M>0$  be as in  Theorem~\ref{thmthemainev}.  Then  $H_{\bom}$ exhibits strong localization  in the energy interval $[E_{\inf},E_0[$ in the following sense:
\begin{enumerate}
\item \label{Andlocitem} The following holds with probability one:
\begin{enumerate}
\item   {$H_{\bom}$}  has  pure point spectrum  in the interval $[E_{\inf},E_0[$. 
\item  
For all $E\in [E_{\inf},E_0[$, 
 $\psi \in  \Ran P_{\bom}(E)$, and  $\nu>\frac d 2$,  we have 
\begin{equation}\label{expdecayIntro}
 \|\Chi_x \psi\| \le C_{\bom,E,\nu}\norm{T_{\nu}^{-1} \psi}\, e^{- M \norm{x}} \qquad \text{for all}\quad  x \in \R^{d}.
\end{equation}
In particular,  each   {eigenfunction} {$\psi$}   of  {$H_{\bom}$} 
with {eigenvalue}  {$E \in [E_{\inf},E_0[$} is exponentially localized  with the non-random rate of decay $ M>0$.

\item  The {eigenvalues} of  {$H_{\bom}$} in $[E_{\inf},E_0[$  have {finite multiplicity}: 
\beq\label{finmultIntro}
\tr P_{\bom}(E) < \infty  \quad\text{for all} \quad E\in [E_{\inf},E_0[.
\eeq
\end{enumerate}

\item \label{dynlocitem} The following  holds with probability one for all $\eps >0$ on all compact intervals $I \subset [E_{\inf},E_0[$ :

\begin{enumerate}

\item For all $E\in  I$,   $ x, y \in \R^{d}$, and  $\nu>\frac d 2$, we have
\beq\label{SUDECasIntro}
\norm{\Chi_{x} \phi}\norm{\Chi_{y} \psi}\le C_{\bom,I,\nu,\eps}\norm{T_{\nu}^{-1}\phi}\norm{
T_{\nu}^{-1} \psi}\e^{\norm{x}^{(1+\eps)\frac {\vartheta}{\widetilde{p}} }}  \e^{-\frac 1 4 M \norm{x-y}^{\vartheta}}
\eeq
for all  $\phi,\psi \in  \Ran P_{\bom}(E)$, and 
\beq \label{SUDECasPIntro}
 \|\Chi_x P_{\bom}(E)\|_{2} \norm{\Chi_{y} P_{\bom}(E)}_{2}\le  C_{\bom,I,\nu,\eps}\norm{T_{\nu}^{-1}P_{\bom}(E)}_{2}^{2}\e^{\norm{x}^{(1+\eps)\frac {\vartheta}{\widetilde{p}} }}  \e^{-\frac 1 4 M \norm{x-y}^{\vartheta}}.
 \eeq

\item    For all $E\in  I$, there exists a ``center of localization'' $y_{\bom,E}\in \R^{d}$
for all eigenfunctions with eigenvalue $E$, in the sense that  for all $x \in \R^{d}$ and $\nu>\frac d 2$    we have 
\beq\label{SULEasIntro}
\norm{\Chi_{x} \phi}\le  C_{\bom,I,\nu,\eps} \norm{T_{\nu}^{-1} \phi}\e^{\norm{y_{\bom,E}}^{(1+\eps)\frac {\vartheta}{\widetilde{p}} }} \e^{- \frac 1 4 M \norm{x-y_{\bom,E}}^{\vartheta}}
\eeq
for  all $\phi \in  \Ran P_{\bom}(E)$,
and 
\beq\label{SULEasPIntro}
\norm{\Chi_{x}  P_{\bom}(E)}_{2}\le   C_{\bom,I,\nu,\eps} \norm{T_{\nu}^{-1} P_{\bom}(E)}_{2}\e^{\norm{y_{\bom,E}}^{(1+\eps)\frac {\vartheta}{\widetilde{p}} }}\e^{-\frac 1 4 M \norm{x-y_{\bom,E}}^{\vartheta}} .
\eeq
Moreover,\beq \label{NSULEIntro}
N_{\bom,I}(L):= \sum_{\substack{E \in I\\\norm{y_{\bom,E}}\le L } }\tr P_{\bom}(E) \le  C_{\bom,I,\eps}\, L^{(1 +\eps)\frac {d}{\widetilde{p}}} \quad \text{for}\quad L \ge 1.
\eeq

\item For all   $ x, y \in \R^{d}$  we  have
\beq\label{decaykernelasIntro}
\sup_{f \in \cB_{b,1}}    \norm{\Chi_{y} f(H_{\bom})P_{\bom}(I)
\Chi_{x} }_{1} \le  C_{\bom,I,\eps}  \e^{\norm{x}^{(1+\eps)\frac {\vartheta}{\widetilde{p}} }} \e^{- \frac 1 4 M  \norm{x-y}^{\vartheta}}.
\eeq

\item For all $E\in  I$ and $ x, y \in \R^{d}$ we have
  \beq\label{decayFermipIntro}
\norm{\Chi_{y}P_{\bom}\up{E} \Chi_{x}}_{1} \le  C_{\bom,I,\eps}\e^{\norm{x}^{(1+\eps)\frac {\vartheta}{\widetilde{p}} }}\e^{-\frac 1 4 M \norm{x-y}^{\vartheta}}.
  \eeq
\end{enumerate}

\item\label{strdynlocitem} Given $b >0$, for all 
 $s \in \left]0,\frac {p}{ b + \frac 1 2  }\right[$,  $x_{0}\in \R^{d}$, and  compact intervals $I \subset [E_{\inf},E_0[$,   we have 
\beq \label{HSdynlocIntro}
\E\set{\sup_{f \in \cB_{b,1}}    \norm{\scal{X}^{bd} f(H_{\bom})P_{\bom}( I)\Chi_{x_{0}} }_{1}^{s}      } < \infty,
\eeq
\beq \label{HSdynloc2Intro}
\E\set{\sup_{t \in \R}    \norm{\scal{X}^{bd} \e^{-itH_{\bom}} P_{\bom}(I)
\Chi_{x_{0}} }_{1}^{s}      }< \infty,
\eeq
and
 \beq\label{decayFermipexpIntro}
\E\set{\sup_{E \in   I}\norm{\scal{X}^{bd}P_{\bom}\up{E} \Chi_{x_{0}}}_{1}^{s}} < \infty.
  \eeq

\end{enumerate}
\end{corollary}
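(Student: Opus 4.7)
The plan is to derive all three parts of the corollary from Theorem~\ref{thmthemainev}(\ref{probloc}) combined with the Martinelli--Scoppola/Simon generalized eigenfunction expansion. First I would fix $b \in \Z^{d}$ and apply the Borel--Cantelli lemma to the geometric sequence of scales $L_{k}:=2^{k}$: since $\sum_{k} L_{k}^{-\widetilde{p}d} < \infty$ for any $\widetilde{p}>0$, the event $\Omega_{b}:=\liminf_{k} \cU_{L_{k},b}$ has full probability, and so does $\widetilde{\Omega}:=\bigcap_{b \in \Z^{d}} \Omega_{b}$. On $\widetilde{\Omega}$ both the dichotomy \eqref{distEIred-int} and the product bound \eqref{EDI9L99int} hold at $x_{0}=b$ for every $b \in \Z^{d}$, all $k \ge k_{0}(\bom,b)$, and all $E \in [E_{\inf},E_{0}[$.

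For Anderson localization (part~(1)) I would invoke the generalized eigenfunction expansion: the spectral measure of $H_{\bom}$ is carried by generalized eigenvalues admitting a polynomially bounded $\psi \in \Theta_{\bom}^{(\nu)}(E)$, and for any such $\psi \ne 0$ there is $b \in \Z^{d}$ with $\norm{\Chi_{b}\psi} > 0$, so $W_{\bom,b}^{(\nu)}(E) > 0$. The first alternative in \eqref{distEIred-int} therefore fails for all large $k$, forcing $W_{\bom,b,L_{k}}^{(\nu)}(E) \le \e^{-ML_{k}}$ for every $k \ge k_{0}$. Since the annuli $\La_{2L_{k}+1,L_{k}-1}(b)$ cover $\R^{d}$ outside a bounded set, this immediately yields the exponential decay \eqref{expdecayIntro} with non-random rate $M$. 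Pure point spectrum in $[E_{\inf},E_{0}[$ then follows by the standard Schnol/Simon argument, and \eqref{finmultIntro} follows from the uniform exponential decay by injecting $\Ran P_{\bom}(E)$ into a finite-dimensional space of restrictions to a sufficiently large box.

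For part~(2), given $x,y \in \R^{d}$ I would pick $b \in \Z^{d}$ closest to $x$ and choose $L \sim \norm{x-y}$ so that $y \in \La_{2L+1,L-1}(b)$; applying \eqref{EDI9L99int} at scale $L$ directly yields the SUDEC estimate \eqref{SUDECasIntro}, after absorbing both the polynomial discrepancy between $T_{\nu,b}^{-1}$ and $T_{\nu}^{-1}$ and the Borel--Cantelli fluctuation (the smallest admissible $L$ at $b$ may be as large as $\norm{x}^{(1+\eps)/\widetilde{p}}$) into the stretched-exponential prefactor $\e^{\norm{x}^{(1+\eps)\vartheta/\widetilde{p}}}$. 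The operator versions \eqref{SUDECasPIntro}, \eqref{SULEasPIntro}, \eqref{decaykernelasIntro}, \eqref{decayFermipIntro} follow by summing the SUDEC bound over an orthonormal basis of $\Ran P_{\bom}(E)$ (finite by \eqref{finmultIntro}) using $\norm{\Chi_{x}P_{\bom}(E)}_{2}^{2}=\sum_{j}\norm{\Chi_{x}\phi_{j}}^{2}$ and standard spectral-calculus manipulations. The centers $y_{\bom,E}$ are defined as maximizers of $b \mapsto \norm{\Chi_{b}\phi}$ over $b \in \Z^{d}$, and \eqref{NSULEIntro} follows by counting how many centers can fall in $\La_{L}(0)$ using \eqref{finmultIntro} on each unit cube.

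Part~(3) reduces to integrating the deterministic decay of part~(2) against the probabilistic weights. Setting $L(\bom,x_{0}):=\min\set{L_{k}:\bom \in \cU_{L_{k},x_{0}}}$, the tail estimate $\P\set{L(\bom,x_{0}) > L_{k}} \le \sum_{j>k} L_{j}^{-\widetilde{p}d}$ is of polynomial order $L_{k}^{-\widetilde{p}d}$; a Chebyshev/Fubini computation then bounds the required moment provided the polynomial weight $\scal{X}^{bd}$ is compatible with this tail, the balance yielding precisely $s<p/(b+\tfrac12)$. The main obstacle lies in part~(1): converting the dichotomy \eqref{distEIred-int}, which is a joint statement over all $E$ simultaneously, into a usable decay statement for each individual eigenfunction requires careful coupling of the Schnol--Simon polynomial growth with the exclusion of the first alternative along subsequences of scales. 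The product bound \eqref{EDI9L99int} precisely bypasses this difficulty for the SUDEC-type estimates, which is why parts~(2) and~(3) are proved via \eqref{EDI9L99int} rather than via the raw dichotomy \eqref{distEIred-int}.
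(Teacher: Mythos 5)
Your overall route --- Borel--Cantelli on the events $\cU_{L,x_{0}}$ of Theorem~\ref{thmthemainev}, then extraction of decay from the dichotomy \eqref{distEIred-int} for the eigenfunction statement and from the product bound \eqref{EDI9L99int} for SUDEC and dynamical localization --- is the paper's (Theorems~\ref{thmAndloc}, \ref{thmSUDEC}, Corollaries~\ref{cordynloc} and \ref{coroldynlocexp}), and your closing observation about why \eqref{EDI9L99int}, not \eqref{distEIred-int}, drives parts~(\ref{dynlocitem})--(\ref{strdynlocitem}) is correct. But the Borel--Cantelli step you set up does not deliver what you later use.

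You fix $b\in\Z^{d}$, run Borel--Cantelli over the scales $L_{k}=2^{k}$ to get $\Omega_{b}=\liminf_{k}\cU_{L_{k},b}$, and intersect over $b$. On $\widetilde\Omega=\bigcap_{b}\Omega_{b}$ the only information is that the first reliable index $k_{0}(\bom,b)$ is finite for every $b$; nothing controls how fast $k_{0}(\bom,b)$ grows in $\norm{b}$. When in part~(\ref{dynlocitem}) you invoke ``the smallest admissible $L$ at $b$ may be as large as $\norm{x}^{(1+\eps)/\widetilde{p}}$,'' you are asserting $L_{k_{0}(\bom,b)}\le C_{\bom}\norm{b}^{(1+\eps)/\widetilde{p}}$ uniformly in $b$ --- exactly what the explicit prefactor $\e^{\norm{x}^{(1+\eps)\vartheta/\widetilde{p}}}$ in \eqref{SUDECasIntro} demands and exactly what $\widetilde\Omega$ does \emph{not} give: it is consistent with $\bom\in\widetilde\Omega$ that $k_{0}(\bom,b)$ grows super-polynomially in $\norm{b}$. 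To close the gap you must run Borel--Cantelli jointly over sites and scales, as in the proof of Theorem~\ref{thmSUDEC}: set $\cJ_{k}:=\bigcap_{b\in\Z^{d},\ \norm{b}^{1+\eps}\le\tau L_{k}^{\widetilde{p}}}\cU_{L_{k},b}$, verify $\sum_{k}\P\set{\Omega\setminus\cJ_{k}}<\infty$ (the count of admissible $b$'s against the probability estimate \eqref{cUdesiredint} is what forces the exponent $(1+\eps)/\widetilde{p}$), and apply Borel--Cantelli to the $\cJ_{k}$'s. This is harmless for part~(\ref{Andlocitem}), which only needs $k_{0}(\bom,b)<\infty$ for a single $b$ (and which the paper handles even more simply, using only $\cU_{L,0}$ via the unique continuation principle, which guarantees $W\up{\nu}_{\bom,0}(E)>0$ for every generalized eigenvalue).

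For part~(\ref{strdynlocitem}), your route via the random scale $L(\bom,x_{0})$ differs from the paper's: Theorem~\ref{thmsgenSUDEC} forgoes Borel--Cantelli entirely, bounding $\E\set{\sup_{E\in I}\pa{\W_{\bom,x_{0}}(E)\W_{\bom,x_{0},L}(E)}^{s}}$ at each fixed $L$ by splitting into $\cU_{L,x_{0}}$ (use \eqref{EDI9L99int}) and its complement (use the a priori bound $\W\le 2^{\nu/2}L^{\nu}$ together with \eqref{cUdesiredint}), then summing a geometric series in Corollary~\ref{coroldynlocexp}. Your tail-bound approach can work, but be aware that to control decay over all annuli you need $L(\bom,x_{0})$ to be the first scale after which \emph{all} larger scales are reliable --- which is what your stated tail $\sum_{j>k}L_{j}^{-\widetilde{p}d}$ implicitly assumes, but not what your $\min$-definition says.
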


\begin{remark} 
If Theorem~\ref{thmthemainev}(\ref{probloc}) holds on a given  bounded open interval $\I$ (instead of the interval $ [E_{\inf},E_0[$ at the bottom of the spectrum, as discussed in Remark~\ref{remmainev}), then Corollary~\ref{corallloc} also holds as stated in the interval $\I$.
\end{remark}

\begin{remark} Theorem~\ref{thmthemainev} and Corollary~\ref{corallloc}  hold also for Poisson Hamiltonians,  with minor modifications.  Their proofs can be modified for Poisson Hamiltonians  using the methods of \cite{GHK2,GHK3}, both for positive and attractive Poisson potentials. 
\end{remark}

\begin{remark}\label{remenergyint} It is instructive to compare  Theorem~\ref{thmthemainev} and Corollary~\ref{corallloc}  to the known results for the case when the single site probability distribution $\mu$ is   absolutely continuous with a bounded density (or   H\"older  continuous),  for which slightly stronger versions of these results have been be derived from an energy-interval multiscale analysis as in \cite{FMSS,vDK,FKac,GKboot,GKsudec,Kle}.  In this case the probability estimate \eq{cUdesiredint} is much stronger, one gets sub-exponential decay  $\e^{-L^{\xi}}$ for any $\xi \in ]0,1[$ for  the bad probabilities \cite{GKboot}, and even exponential decay
when the fractional moment method applies \cite{AENSS}.  The `either or' statement in \eq{distEIred-int} is stronger: \emph{either $W_{\bom,x_{0}}\up{\nu}({E}) \le   \e^{-  M L } $ or $ W_{\bom,x_{0},L}\up{\nu}({E}) \le \e^{- M L}$}.  We also have exponential decay in \eq{EDI9L99int} and in Corollary~\ref{corallloc}~\ref{dynlocitem}, that is, they hold with $\vartheta =1$.  Corollary~\ref{corallloc}~\ref{strdynlocitem}  holds for all $b>0$ with $s=1$. The SUDEC estimate \eq{SUDECasIntro} and the SULE estimate \eq{SULEasIntro}  hold with exponential decay and milder than exponential growth in $x$ or $y$; moreover they are equivalent, one can be derived from the other  (see \cite{GKboot,GKsudec}).
But in the general case \eq{SUDECasIntro} and \eq{SULEasIntro}   are not equivalent; \eq{SUDECasIntro} implies \eq{SULEasIntro} but the converse is not true.
\end{remark}

 Theorem~\ref{thmthemainev} and Corollary~\ref{corallloc}  will be proved in the context of generalized Anderson Hamiltonians.   Theorem~\ref{thmthemainev}(\ref{logHoldc}) is proven in Theorem~\ref{logHolder}, and  Theorem~\ref{thmthemainev}(\ref{probloc}) is contained in  Theorem~\ref{thmmainev}.  Corollary~\ref{corallloc}\ref{Andlocitem}   is proven in Theorem~\ref{thmAndloc}, Corollary~\ref{corallloc}\ref{dynlocitem} in Theorem~\ref{thmSUDEC} and Corollary~\ref{cordynloc}, and Corollary~\ref{corallloc}\ref{strdynlocitem}  follows from  Corollary~\ref{coroldynlocexp}.

 \section{Anderson Hamiltonians}\label{secgenAnd}

 \subsection{Normalized Anderson Hamiltonians}
 
  Given an Anderson Hamiltonian  $H_{\bom}$, it follows from Definition~\ref{defAndHam} that the common probability
distribution $\mu$ of the random variables  $\bom=\{ \omega_\zeta \}_{\zeta\in
\Z^d}$   satisfies
 \beq\set{M_{-} ,M_{+}}\subset \supp \mu \subset   [M_{-} ,M_{+}] \quad \text{for some} \quad -\infty <M_{-} <M_{+}< \infty.
 \eeq
Letting
\begin{gather}\notag
\widehat{ V}_{\mathrm{per}}=  \widetilde{ V}_{\mathrm{per}} -  \inf \sigma(-\Delta + \widetilde{ V}_{\mathrm{per}})
, \; \text{with}\; \widetilde{ V}_{\mathrm{per}}(x)=  V_{\mathrm{per}}(x) + M_{-}\sum_{\zeta \in \Z^d} u(x-\zeta), \\
\widehat{ V}_{\widehat{\bom}}(x)=\sum_{\zeta \in \Z^d} \widehat{\omega}_\zeta \,   \widehat{u}(x - 
\zeta),  \; \text{with}\;\; \widehat{u}= (M_{+}-M_{-})u  \;\; \text{and}\; \; \widehat{\omega}_\zeta=\frac  {\omega_{\zeta}- M_{-}}{M_{+}-M_{-}}, \\
\widehat{H}_{\widehat{\bom}}= -\Delta + \widehat{ V}_{\mathrm{per}} + \widehat{ V}_{\widehat{\bom}},\notag
\end{gather}
we have
\beq
H_{\bom}= \widehat{H}_{\widehat{\bom}} + \inf \sigma(-\Delta + \widetilde{ V}_{\mathrm{per}}).
\eeq
Since $\widehat{H}_{\widehat{\bom}}$ is a normalized  Anderson Hamiltonian as in Definition~\ref{defAndH} below, we conclude that  every Anderson Hamiltonian  equals a normalized Anderson Hamiltonian plus a constant.
Thus, without loss of generality,  it suffices to  study  normalized Anderson Hamiltonians as in Definition~\ref{defAndH},  which makes the relevant parameters explicit.

 \begin{definition} \label{defAndH}
 A \emph{normalized  Anderson Hamiltonian} is  an   Anderson Hamiltonian $H_{\bom}$ such that:
 \begin{enumerate} 
\item  the periodic potential $ {V}_{\mathrm{per}}$ satisfies 
\beq\label{AndH287}
\inf \sigma(-\Delta + {V}_{\mathrm{per}})= 0 ,
\eeq

\item the single site potential $u$ is a measurable function on $\R^d$
 with 
 \begin{equation} \label{u}
u_{-}\Chi_{\Lambda_{\delta_{-}}(0)}\le u \le u_{+}\Chi_{\Lambda_{\delta_{+}}(0)}\quad \text{for some constants $u_{\pm}, \delta_{\pm}\in ]0,\infty[ $},
\end{equation}

\item  $\bom=\{ \om_\zeta\}_{\zeta \in \Z^d}$ is a family of independent, identically distributed random variables with a  common probability distribution $\mu$  satisfying
\beq \label{mu}
\set{0  ,1}\subset\supp \mu \subset   [0,1] .
\eeq
\end{enumerate}

 \end{definition}

The condition \eq{AndH287} implies that $[0, E_1] \subset\sigma\pa{H_{0}}$  for some $E_1>0$. 
It follows that
  the non-random spectrum $\Sigma$ of    a normalized   Anderson Hamitonian  $H_{\bom}$
 satisfies (see \cite{KiM2})
\begin{gather}
\sigma\pa{ H_0} \subset \Sigma \subset [0, \infty[,
\intertext{so}
 \inf \Sigma=0 \quad \text{and} \quad     [0, E_1] \subset\Sigma \quad \text{for some $E_1=E_1(V_{\mathrm{per}})>0$}.
\intertext{In particular, we have}
 \Sigma= \sigma\pa{ -\Delta } =[0,\infty[  \quad \text{if} \quad  V_{\mathrm{per}}=0 .
\end{gather}

 \subsection{Generalized Anderson Hamiltonians}
  We will conduct our analysis of normalized Anderson Hamiltonians  in a more general context  which incorporates an additional background potential, bounded and nonnegative, but otherwise arbitrary, and allows variability  in the single site potentials as long as they satisfy uniform bounds.

  \begin{definition} \label{defgenAndH}
 A \emph{generalized (normalized)   Anderson Hamiltonian} is  a random Schr\"odinger 
operator    on  $\mathrm{L}^2(\mathbb{R}^d)$ of the form
\beq \label{genAndH}
H_{\bom}= H_0 + V_{\bom}, \quad \text{with} \quad  H_0 = -\Delta + V_{\mathrm{per}} +U,  \eeq 
where $V_{\mathrm{per}}$   is a bounded periodic potential with period $q \in \N$ such that
\beq
\inf \sigma( -\Delta + V_{\mathrm{per}})= 0 ,
\eeq
 $U$ is a measurable function  on $\R^d$ satisfying
\begin{equation}
 0 \le U(x) \le U_+ \quad \text{for all} \quad x \in \R^d \quad \text{for some constant}\quad  U_+\in [0,\infty[, \label{potU}
\end{equation}
and  $V_{\bom}$ is the random potential 
\beq
V_{\bom} (x):= 
\sum_{\zeta \in \Z^d} \omega_\zeta \,  u_\zeta(x),\label{AndVzeta}
\eeq
 where the family of  random variables  $\bom=\{ \om_\zeta\}_{\zeta \in \Z^d}$ is as in Definition~\ref{defAndH}, and  $\mathbf{u}=\set{u_\zeta}_{\zeta \in \Z^d}$ is a family of measurable functions on $\R^d$
such that there are constants $u_{\pm}, \delta_{\pm}\in ]0,\infty[ $ for which
 \begin{equation} \label{uzeta}
u_{-}\Chi_{\Lambda_{\delta_{-}}(\zt)}\le u_\zeta \le u_{+}\Chi_{\Lambda_{\delta_{+}}(\zt)}\quad \text{for all} \quad  \zeta \in \Z^d.
\end{equation}
 \end{definition}

 Without loss of generality, we realize the random variables $\{ \om_\zeta\}_{\zeta \in \Z^d}$ as  the coordinate functions on the probability space  $\pa{\Omega,\cF,\P}$, where  $\Omega=  [0,1]^{\Z^d}$,   $\cF$ denotes the $\sigma$-algebra generated by  the coordinate functions,  and   $\P=   \mu^{\Z^{d}}$,  the  product measure of $\Z^{d}$ copies of  the common probability distribution $\mu$ of the random variables $\{ \om_\zeta\}_{\zeta \in \Z^d}$.  In other words, $\pa{\Omega,\cF,\P}=  \pa{[0,1], \cB_{[0,1]},\mu}^{\Z^{d}}$, the product measure space of $\Z^{d}$ copies of the measure space  $\pa{[0,1], \cB_{[0,1]},\mu}$, 
 where $\cB_{[0,1]}$ is the Borel $\sigma$-algebra on $[0,1]$. The expectation with respect to $\P$ will be denoted by $\E$.
  Note that $\Omega$ is a compact Hausdorff space with the product topology
  and $\cF$ is the corresponding Borel $\sigma$-algebra.  
  A set $\cU\in \cF$ will be called an event.

 A generalized  Anderson Hamiltonian $H_{\bom}$ is a
 measurable map
from the probability space $(\Omega,\mathcal{F},\mathbb{P})$
 to  
self-adjoint operators on the Hilbert space  $\mathrm{L}^2(\mathbb{R}^d)$.  Measurability of $H_\omega$ 
means that
 the maps $\omega \to f(H_\omega)$ are weakly
(and hence strongly)
measurable for all bounded Borel  measurable functions $f$ on $\mathbb{R}$.
 
A generalized Anderson Hamiltonian  $H_{\bom} $ is   not, in general,  a  $q\Z^d$-ergodic family of
random self-adjoint operators for any $q \in \N$, so the spectrum of $H_{\bom}$,  as well as its pure point, 
absolutely continuous, and singular continuous  components,
 need not be non-random (i.e., equal to some  fixed set with probability one).  But we always have    $\sigma(H_{\bom}) \subset [0,\infty)$ for all $\bom \in \Omega$.

 \subsection{Finite volume  Anderson Hamiltonians}
    
  Given a  set ${\Xi} \subset \R^d$, we set $\widetilde{{\Xi}} :={\Xi}   \cap \Z^d$ 
  and consider  the product measure space  $\pa{\Omega_{\Xi},\cF_{\Xi},\P_{\Xi}}=   \pa{[0,1], \cB_{[0,1]},\mu}^{ \widetilde{{\Xi}}}$;  in particular,
$\Omega_{\Xi}=[0,1]^{ \widetilde{{\Xi}}}$.  We  identify $\cF_{\Xi}$ with the sub-$\sigma$-algebra of subsets of $\Omega$ generated by the coordinate functions $\bom_{\Xi}=\{ \om_\zeta\}_{\zeta \in  \widetilde{{\Xi}}}$, in which case
$\P_{\Xi}$ is the restriction of  $\P$ to $\cF_{\Xi}$.

Given a generalized  Anderson Hamiltonian $H_{\bom}$, we set 
\beq \label{Vlocal}
V_{{\bom}_{{\Xi}}}(x):=  \sum_{\zeta \in \widetilde{{\Xi}}} \omega_\zeta \,u_\zeta(x )
\quad \text{for}\quad  \bom \in \Omega \quad \text{and}\quad  {\Xi} \subset \R^d,
 \eeq
and define the corresponding  finite volume (generalized) Anderson Hamiltonian on a  box $\Lambda={\Lambda}_{L}(x)$ in $\R^d$  as follows:  
\begin{align}\label{finvolH}
H_{{\bom},\Lambda} :={H_{0,\Lambda}}+ V_{{\bom},\Lambda} \quad \text{on}   \quad \L^{2}(\Lambda),
\end{align}
with
\beq\label{H0La}
H_{0,\Lambda}:= - \Delta_{\Lambda} +  V_{\mathrm{per},\Lambda} + U_\Lambda,
\eeq
where 
$\Delta_{\Lambda}$ is the  Laplacian on $\Lambda$ with Dirichlet boundary condition,  and 
$ V_{\mathrm{per},\Lambda}$,  $ U_\Lambda$ and $V_{{\bom},\Lambda}$  are the restrictions of $ V_{\mathrm{per}}$,  $ U$ and $V_{{\bom}_\Lambda}$  to $\Lambda$.
Since we are using Dirichlet boundary condition, we always have   $\inf \sigma(H_{0,\Lambda}) \ge  0$ (easy to see using quadratic forms), and hence    $\inf \sigma(H_{{\bom},\Lambda}) \ge  0$. 
The finite volume resolvent, defined for $z \notin  \sigma(H_{{\bom},\Lambda})$ by  
\begin{align}\label{finvolResolv}
R_{\bom,\Lambda} (z):=(H_{\bom,\Lambda} - z)^{-1} \quad \text{on}   \quad \L^{2}(\Lambda),
\end{align}
  is a compact operator.
Note that
$\Delta_{\Lambda}= \nabla_{\Lambda}\cdot\nabla_{\Lambda}$, where 
$\nabla_{\Lambda}$ is the gradient with Dirichlet boundary condition.

We will identify $\L^{2}(\Lambda)$ with $\Chi_{\Lambda }\L^{2}(\R^{d})$ when convenient,
and,  if necessary, we  will use subscripts $\Lambda$ and $\R^{d}$ to distinguish between the norms and inner products of $\L^{2}(\Lambda)$ and $\L^{2}(\R^{d})$.
In particular, we use the identification $ V_{\mathrm{per},\Lambda}=\Chi_\Lambda  V_{\mathrm{per}}$, 
$U_\Lambda = \Chi_\Lambda U$, and $V_{{\bom},\Lambda} =\Chi_\Lambda V_{{\bom}_\Lambda}$. If $\Lambda \subset \Lambda^\pr$, we  will also extend operators on $\L^{2}(\Lambda)$,  such as  $R_{\bom,\Lambda} (z)$, to operators   on  $\L^{2}(\Lambda^\pr)$ by making them the zero operator on $\L^{2}(\Lambda^\pr \setminus \Lambda)$.
If $\eta \in  \L^{\infty}(\Lambda)$, we will also use $\eta$ to denote the operator given by multiplication by $\eta$  on  $\L^{2}(\Lambda)$.

 If $\Xi\subset \R^{d}$, $\overline{\Xi}$ will denote its closure, $\Xi^0$ its interior, and $\partial \Xi:=\overline{\Xi}\setminus \Xi^0$ its boundary. If $\Xi \subset \Xi^{\pr}\subset \R^{d}$, $\partial^{ \Xi^{\pr}} \Xi:=\partial \Xi \setminus \partial \Xi^\pr $ will denote the boundary of $\Xi$ in $\Xi^\pr$. ($\partial^{ \Xi^{\pr}} \Xi$ is the boundary of $\Xi$ with respect to the relative topology on $\Xi^{\pr}$.) 

 Given a box $\La\subset \La^{\pr}$, where $\La^{\pr}$ is either a box or $\R^{d}$, and $\delta>0$, we set (the distance is given by the norm in \eq{supnorm})
\beq\begin{split}\label{LambdaupL}
 \Lambda^{\Lambda^\pr,\delta}& := \set{x \in \Lambda; \;  \Lambda_{2\delta}(x) \cap \Lambda^\pr \subset \Lambda}= \set{x \in \Lambda; \; \dist \pa{x, \partial^{\Lambda^\pr} \Lambda} \ge \delta },\\
  \partial^{\Lambda^\pr,\delta} \Lambda & := \Lambda \setminus  \Lambda^{\Lambda^\pr,\delta}.
\end{split}\eeq
If $\Lambda^\pr=\R^d$ we generally omit it from the notation.

In general  $ V_{\bom,\Lambda}\not=  \Chi_{\Lambda} V_{\bom,\Lambda^\pr}$ for   $\Lambda \subset \Lambda^\pr$, but we always have
\begin{equation}\label{compat}
 \Chi_{ \Lambda^{\Lambda^\pr,\frac {\delta_+} 2}} V_{\bom,\Lambda}=   \Chi_{ \Lambda^{\Lambda^\pr,\frac {\delta_+} 2}} V_{\bom,\Lambda^\pr}.
\end{equation}

In this paper we will always assume that the  finite volumes $\Lambda=\Lambda_{L}$ where we define $H_{{\bom},\Lambda}$  have $L \ge 100\pa{\delta_{+}+1}$.

\subsection{Generalized eigenfunctions} Let $H_{\bom}
$ be a generalized Anderson Hamiltonian, fix $\bom \in \Omega$, and let $\La$ be either  $\R^{d}$ or a box  $\La_{L}$. Recall that $\cD(H_{{\bom},\Lambda})=\cD(\Delta_{\La})$.

\begin{definition}  \label{defgeneiginitial}
 A \emph{generalized eigenfunction} for $H_{{\bom},\Lambda}$ with  \emph{generalized eigenvalue} $E \in \R$  is a  measurable function $\psi$ on $\Lambda$ with
  \beq
 0< \norm{T_{\nu}^{-1}\psi}_{\Lambda} < \infty \quad \text{for some}\quad \nu >0,
 \eeq
such that
\beq  \label{eigb1}
\scal{H_{{\bom},\Lambda}\vphi,\psi}= {E} \scal{\vphi,\psi} \quad \text{for all}\quad \vphi \in C_{c}^{\infty}(\La).
\eeq
\end{definition}

It follows (e.g., \cite{KKS}) that if $\psi$ is a generalized eigenfunction for $H_{{\bom},\Lambda}$ with generalized eigenvalue ${E}\in \R$, then for all 
$\phi \in C_{c}^{2}(\La)$ we have  $\phi \psi \in \cD(\Delta_{\La})\subset \cD(\nabla_{\La})$  and 
\beq\label{localeig}
\pa{H_{\bom,\Lambda}-{E}}\phi \psi=  W_\Lambda(\phi)\psi,
\eeq
where  $W_\Lambda({\phi})$ is the closed densely defined operator on  $\L^{2}(\Lambda)$ given by \beq \label{Wphi}
W_\Lambda({\phi}) = -2( \nabla {\phi}) \cdot \nabla_\Lambda - (\Delta {\phi}).
\eeq
(More precisely, $ W_\Lambda(\phi)\psi : =  W_\Lambda(\phi)\widetilde{\phi}\psi$ for all $\widetilde{\phi} \in  C_{c}^{2}(\La)$ such that $ \widetilde{\phi}\equiv 1$ on $\supp \phi$.)

Eigenfunctions are always  generalized eigenfunctions.  

\subsection{Properties of finite volume operators} We will now derive some deterministic properties of the finite volume operators corresponding to a  generalized Anderson Hamiltonian $H_{\bom}$.

Given  $\La$,  either a finite box or $ \R^{d}$, and   $x,y \in \La$,  $\norm{\Chi_y R_{\bom,\Lambda} (z)  \Chi_x}\in [0,\infty[$  is well defined  for $z \notin  \sigma\pa{H_{\bom,\Lambda}}$.  We will abuse the notation and make the extension to $z\in  \sigma\pa{H_{\bom,\Lambda}}$ by 
\beq\label{exttoE}
\norm{\Chi_y R_{\bom,\Lambda}(z) \Chi_x} := \limsup_{\eps \to 0}\norm{\Chi_y R_{\bom,\Lambda}(z+i\eps) \Chi_x} \in [0,\infty].
\eeq

We will consider boxes  $\Lambda \subset  \Lambda^\pr$ without requiring the interior box $\Lambda$ to be at a certain distance from the boundary of $ \Lambda^\pr$. For this reason we work with $\partial^{\Lambda^\pr} \Lambda$ (the boundary of $ \Lambda$ in $\Lambda^{\pr}$) instead of  $\partial \Lambda$.

\begin{lemma} \label{lempreSLI}  Consider a box  $\Lambda=\Lambda_\ell \subset  \Lambda^\pr$, where $\La^{\pr}$ is either a finite box or $\R^{d}$, and  let $z \notin  \sigma\pa{H_{\bom,\Lambda}} $. 
Then, given   $x \in \Lambda$ with  $\Lambda_{ \delta_+  + 3}(x) \cap \Lambda^\pr \subset \Lambda$ and $y\in \Lambda^\pr $, we can find $x^\pr \in \Upsilon_\Lambda^{\Lambda^\pr}$, where
\beq \label{Upsilon}
\Upsilon_\Lambda^{\Lambda^\pr} := \set{x \in \Lambda; \; \dist \pa{x, \partial^{\Lambda^\pr} \Lambda} = \tfrac {\delta_+ + 1} 2 } ,
\eeq
such that
\begin{align}\label{yRxall}
&\norm{\Chi_y R_{\bom,\Lambda^\pr} (z)  \Chi_x} \\
& \qquad \le 
\norm{\Chi_y \Chi_{\Lambda\up{\La^\pr,\frac 1 2}}  R_{\bom,\Lambda} (z) \Chi_x }+ \gamma_{z} \ell^{d-1}
\norm{\Chi_y  R_{\bom,\Lambda^\pr}(z) \Chi_{x^\pr}}\norm{
\Chi_{x^\pr}  R_{\bom,\Lambda} (z)\Chi_x },\notag
\end{align}
with
\beq\label{defgammaz}
\gamma_{z}= \gamma_{ z,d, V_{\mathrm{per}}}= C_d \pa{1 +\max \set{0,\Re z  -\essinf V_{\mathrm{per}} }}^{\frac 1 2} .
\eeq
In particular, 
   \begin{enumerate}
   \item if $y \in \Lambda^\pr \setminus \Lambda$, we have 
  \begin{align}\label{yRx239}
\norm{\Chi_y R_{\bom,\Lambda^\pr} (z)  \Chi_x}  \le 
 \gamma_{z}\ell^{d-1}
\norm{\Chi_y  R_{\bom,\Lambda^\pr}(z) \Chi_{x^\pr}}\norm{
\Chi_{x^\pr}  R_{\bom,\Lambda} (z)\Chi_x },
\end{align}

\item if $y \in \Lambda$, we have 
\begin{align}\label{yRx249}
&\norm{\Chi_y R_{\bom,\Lambda^\pr} (z)  \Chi_x} \\
& \qquad \le 
\norm{\Chi_y  R_{\bom,\Lambda} (z) \Chi_x }+ \gamma_{z}\ell^{d-1}
\norm{\Chi_y  R_{\bom,\Lambda^\pr}(z) \Chi_{x^\pr}}\norm{
\Chi_{x^\pr}  R_{\bom,\Lambda} (z)\Chi_x }.\notag
\end{align}
   \end{enumerate}
\end{lemma}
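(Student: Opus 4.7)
The plan is to prove this via the standard geometric resolvent identity combined with a covering of the boundary shell of $\Lambda$ by unit boxes centered on $\Upsilon_\Lambda^{\Lambda^\pr}$. The key inputs are: (i) the nonnegativity of $U$ and $V_{\bom,\Lambda}$ and the lower bound $V_{\mathrm{per}} \ge \essinf V_{\mathrm{per}}$, which give a quadratic-form estimate for the gradient of the resolvent and produce the factor $\gamma_z$; and (ii) the compatibility identity \eqref{compat}, which guarantees that restricting the potential to $\Lambda$ versus $\Lambda^\pr$ does not matter on the support of the cutoff.

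First I would choose a smooth cutoff $\phi \in C^2_c(\Lambda)$ with $0\le \phi \le 1$, $\phi \equiv 1$ on $\Lambda^{\Lambda^\pr,\frac{\delta_+ + 1}{2}+1}$ (which contains $\Lambda_1(x)$ by the assumption $\Lambda_{\delta_++3}(x)\cap\Lambda^\pr \subset \Lambda$), $\supp \phi \subset \Lambda^{\Lambda^\pr,\frac{\delta_+ +1}{2}-1}$, and $\|\nabla\phi\|_\infty,\|\Delta\phi\|_\infty \le C_d$, with $\supp \nabla \phi$ contained in a thin shell around $\Upsilon_\Lambda^{\Lambda^\pr}$. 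Since $\phi$ vanishes near $\partial^{\Lambda^\pr}\Lambda$, $\phi R_{\bom,\Lambda}(z)\Chi_x \in \cD(H_{\bom,\Lambda^\pr})$, and on $\supp \phi$ the potentials $V_{\bom,\Lambda}$ and $V_{\bom,\Lambda^\pr}$ agree by \eqref{compat}. A direct computation then yields the geometric resolvent identity
\begin{equation*}
(H_{\bom,\Lambda^\pr}-z)\phi R_{\bom,\Lambda}(z)\Chi_x = \phi\Chi_x + W_{\Lambda^\pr}(\phi) R_{\bom,\Lambda}(z)\Chi_x,
\end{equation*}
with $W_{\Lambda^\pr}(\phi) = -2(\nabla\phi)\cdot\nabla_{\Lambda^\pr} - (\Delta\phi)$. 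Applying $R_{\bom,\Lambda^\pr}(z)$ and using $\phi\Chi_x = \Chi_x$, I get
\begin{equation*}
R_{\bom,\Lambda^\pr}(z)\Chi_x = \phi R_{\bom,\Lambda}(z)\Chi_x - R_{\bom,\Lambda^\pr}(z)W_{\Lambda^\pr}(\phi)R_{\bom,\Lambda}(z)\Chi_x.
\end{equation*}
Multiplying by $\Chi_y$ and noting $\supp\phi \subset \Lambda^{\Lambda^\pr,\frac 1 2}$ and $\|\phi\|_\infty \le 1$, the first term is bounded by $\|\Chi_y\Chi_{\Lambda^{\Lambda^\pr,\frac 1 2}}R_{\bom,\Lambda}(z)\Chi_x\|$.

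Next I would cover $\supp \nabla\phi$ by a disjoint family of unit boxes $\{\Lambda_1(x_j^\pr)\}_{j=1}^N$ with centers $x_j^\pr\in \Upsilon_\Lambda^{\Lambda^\pr}$, where $N \le C_d \ell^{d-1}$ since the shell has surface area $\sim \ell^{d-1}$. Writing $W_{\Lambda^\pr}(\phi) = \sum_j \Chi_{\Lambda_1(x_j^\pr)} W_{\Lambda^\pr}(\phi)$ and taking norms gives
\begin{equation*}
\|\Chi_y R_{\bom,\Lambda^\pr}(z)W_{\Lambda^\pr}(\phi)R_{\bom,\Lambda}(z)\Chi_x\| \le \sum_j \|\Chi_y R_{\bom,\Lambda^\pr}(z)\Chi_{x_j^\pr}\|\,\|\Chi_{x_j^\pr}W_{\Lambda^\pr}(\phi)R_{\bom,\Lambda}(z)\Chi_x\|,
\end{equation*}
and bounding the sum by $N$ times its largest term selects some $x^\pr \in \Upsilon_\Lambda^{\Lambda^\pr}$.

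The main obstacle is controlling $\|\Chi_{x^\pr} W_{\Lambda^\pr}(\phi) R_{\bom,\Lambda}(z)\Chi_x\|$ by $\gamma_z\,\|\Chi_{x^\pr}R_{\bom,\Lambda}(z)\Chi_x\|$, i.e., absorbing the gradient. Using $\|\nabla\phi\|_\infty,\|\Delta\phi\|_\infty \le C_d$, this reduces to estimating $\|\tilde\chi\,\nabla_\Lambda R_{\bom,\Lambda}(z)\Chi_x\|$ where $\tilde\chi$ is a smooth cutoff supported near $\Lambda_1(x_j^\pr)$ but contained in the shell where $\phi$ transitions. Writing $g := R_{\bom,\Lambda}(z)\Chi_x$ and using that $\tilde\chi$ is supported away from $x$, so $\tilde\chi \Chi_x = 0$, I would use the quadratic-form identity
\begin{equation*}
\|\nabla_\Lambda(\tilde\chi g)\|^2 = \langle \tilde\chi g,\,-\Delta_\Lambda \tilde\chi g\rangle \le \langle \tilde\chi g,(H_{\bom,\Lambda}-\essinf V_{\mathrm{per}})\tilde\chi g\rangle,
\end{equation*}
which holds because $U + V_{\bom,\Lambda} \ge 0$ and $V_{\mathrm{per},\Lambda}\ge \essinf V_{\mathrm{per}}$. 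Commuting $\tilde\chi$ past $H_{\bom,\Lambda}$ produces a commutator $W_\Lambda(\tilde\chi)g$ of the same type that is reabsorbed, and using $H_{\bom,\Lambda} g = \Chi_x + z g$ together with $\tilde\chi\Chi_x = 0$ yields
\begin{equation*}
\|\tilde\chi\,\nabla_\Lambda g\|^2 \le C_d \bigl(1+\max\{0,\Re z - \essinf V_{\mathrm{per}}\}\bigr)\|\tilde\chi g\|^2 + \text{lower order},
\end{equation*}
which gives exactly the factor $\gamma_z$ after taking the square root and localizing $\tilde\chi g$ to $\Chi_{x^\pr}g$. Combining everything yields \eqref{yRxall}. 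The two special cases \eqref{yRx239} and \eqref{yRx249} then follow by observing that when $y\in \Lambda^\pr\setminus\Lambda$ the first term vanishes, and when $y\in\Lambda$ one obtains $\|\Chi_y R_{\bom,\Lambda}(z)\Chi_x\|$ after trivially dropping $\Chi_{\Lambda^{\Lambda^\pr,1/2}}$.
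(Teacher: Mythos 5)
Your proof follows exactly the same route as the paper: geometric resolvent identity with a cutoff supported inside $\Lambda$, a disjoint decomposition of $\supp\nabla\phi$ into small boxes centered on $\Upsilon_\Lambda^{\Lambda^\pr}$, selection of the maximizing box, and absorption of the gradient via the interior quadratic-form estimate (the paper invokes \cite[Lemma~A.2]{GK5} for this). The structure is correct. However, your specific numerics for $\phi$ do not work. You assert that $\phi\equiv1$ on $\Lambda^{\Lambda^\pr,\frac{\delta_++1}{2}+1}$ ``contains $\Lambda_1(x)$ by the assumption $\Lambda_{\delta_++3}(x)\cap\Lambda^\pr\subset\Lambda$''; but that assumption only gives $\dist(x,\partial^{\Lambda^\pr}\Lambda)\ge\frac{\delta_++1}{2}+1$, so for $y\in\Lambda_1(x)$ one merely gets $\dist(y,\partial^{\Lambda^\pr}\Lambda)\ge\frac{\delta_++1}{2}+\frac12$, which does \emph{not} put $\Lambda_1(x)$ inside $\Lambda^{\Lambda^\pr,\frac{\delta_++1}{2}+1}$. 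Hence $\phi\Chi_x=\Chi_x$ can fail, which breaks the step immediately after the geometric resolvent identity. Relatedly, your $\phi$ transitions over a shell of thickness $2$ (from level set $\frac{\delta_++1}{2}-1$ to $\frac{\delta_++1}{2}+1$) while unit boxes centered on the codimension-one surface $\Upsilon$ only cover a shell of thickness $1$, so $\supp\nabla\phi$ is not covered and the decomposition $W(\phi)=\sum_j\Chi_{\Lambda_1(x_j^\pr)}W(\phi)$ is unjustified.

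Both problems are fixed by choosing a cutoff that transitions over the much thinner shell $\dist(\cdot,\partial^{\Lambda^\pr}\Lambda)\in\br{\frac{\delta_++1}{2}-\frac14,\frac{\delta_++1}{2}+\frac14}$, as in the paper's \eqref{defphi}--\eqref{nablaphi}, together with boxes of side $\frac12$ rather than $1$; the gradient bound $\abs{\nabla\phi}\le C_d$ is then still uniform since the transition width is an absolute constant. You should also note that when $\partial^{\Lambda^\pr}\Lambda\ne\partial\Lambda$ (the box $\Lambda$ touching $\partial\Lambda^\pr$), some of the covering boxes stick out of $\Lambda$ and need the separate treatment the paper gives to the $y'_{j'}$ corner boxes, and that the statement also covers $z\in\sigma(H_{\bom,\Lambda^\pr})$ via the $\limsup$ convention \eqref{exttoE}, which requires a final approximation step that you omit.
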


\begin{proof}
Given boxes  $\Lambda \subset  \Lambda^\pr$, we let $\Upsilon_\Lambda^{\Lambda^\pr}$ be as in    \eq{Upsilon}) and set
\beq
 \widehat{ \Upsilon}= \widehat{ \Upsilon}_\Lambda^{\Lambda^\pr}  := \set{x \in \Lambda; \; \dist \pa{x,  \Upsilon_\Lambda^{\Lambda^\pr}} <  \tfrac { 1} 4 }=\set{\bigcup_{y^\pr \in  \Upsilon_\Lambda^{\Lambda^\pr}} \Lambda_{\frac 1 2}(y^\pr)}\cap \Lambda. 
\eeq
There exists  a constant  $C_d$, independent of $\Lambda$ and $\Lambda^\pr$, for which we can find  a function ${\phi}={\phi}_\Lambda^{\Lambda^\pr} \in C^2(\Lambda^\pr)$, with  $0\le {\phi} \le 1$,  such that
\begin{gather}\label{defphi}
{\phi}\equiv 1 \quad \text{on} \quad   { \Lambda^{\Lambda^\pr,\frac {\delta_+ + 1} 2 +\frac 1 4 }} ,\\
{\phi}\equiv 0 \quad \text{on} \quad  \Lambda^\pr \setminus  { \Lambda^{\Lambda^\pr,\frac {\delta_+ + 1} 2 -\frac 1 4 }} ,\\
\abs{\nabla {\phi}}, \abs{\Delta {\phi}}\le C_d. \label{nablaphi}
\end{gather}
Note that  
\beq
\supp {\phi} \subset \Lambda\up{1}:= { \Lambda^{\Lambda^\pr,\frac {\delta_+ + 1} 2 -\frac 1 4 }}\quad \text{and}\quad \supp {\nabla {\phi}} \subset \widehat{ \Upsilon}=\widehat{ \Upsilon}_\Lambda^{\Lambda^\pr} .
\eeq
In particular, we have  ${\phi} \mathcal{D}{(\Delta_{\Lambda})} \subset  \mathcal{D}{(\Delta_{\Lambda})}$ and   ${\phi} \mathcal{D}{(\Delta_{\Lambda^\pr})} \subset  \mathcal{D}{(\Delta_{\Lambda^\pr})}$.

Suppose first that  $z \notin  \sigma\pa{H_{\bom,\Lambda}} \cup  \sigma\pa{H_{\bom,\Lambda^\pr}}$.
In this case we use the geometric resolvent identity (cf. \cite{CH1,FKac,BK}). In view of \eq{compat}, if $z \notin  \sigma\pa{H_{\bom,\Lambda}} \cup  \sigma\pa{H_{\bom,\Lambda^\pr}}$ we get
\beq \label{geometricresolvent}
R_{\bom,\Lambda^\pr} (z)  {\phi} = {\phi} R_{\bom,\Lambda} (z) + R_{\bom,\Lambda^\pr} (z) W_\Lambda({\phi}) R_{\bom,\Lambda} (z),
\eeq
as operators from $\L^{2}(\La)$ to $\L^{2}(\La^{\pr})$, where $W_\Lambda({\phi})$ is as in \eq{Wphi}.
Given $x \in \Lambda$ with  $\Lambda_{\delta_+ + 3}(x) \cap \Lambda^\pr \subset \Lambda$, i.e., $x \in \Lambda^{\Lambda^\pr,\frac {\delta_+ + 1} 2 +1 }$, we have 
  \beq \label{xdisjoint}
   \Chi_x={\phi} \Chi_x,  \quad \Chi_{y^{\pr} } \Chi_x=0 \quad\text{for}\quad  y^{\pr} \in  \Upsilon_\Lambda^{\Lambda^\pr}.
   \eeq
  It follows that for $y \in \Lambda^\pr$ we have
\begin{align}\label{yRx}
\Chi_y R_{\bom,\Lambda^\pr} (z)  \Chi_x & = \Chi_y R_{\bom,\Lambda^\pr} (z) {\phi} \Chi_x\\
&=\Chi_y  {\phi} R_{\bom,\Lambda} (z) \Chi_x +\Chi_y  R_{\bom,\Lambda^\pr} (z) W_\Lambda({\phi}) R_{\bom,\Lambda} (z)\Chi_x   \notag \\
&=\Chi_y  {\phi} R_{\bom,\Lambda} (z) \Chi_x +\Chi_y  R_{\bom,\Lambda^\pr}(z) \Chi_{ \widehat{ \Upsilon}} W_\Lambda({\phi}) R_{\bom,\Lambda} (z)\Chi_x . \notag
\end{align}

Let $\ell$ be the length of the side of the box $\Lambda$, i.e., $\Lambda=\Lambda_\ell$.  Then we can pick $y_1,y_2,\ldots,y_{J} \in  \Upsilon_\Lambda^{\Lambda^\pr}$, 
where $C^\pr_d \ell^{d-1}\le J \le C_d^{\pr\pr} \ell^{d-1}$, and $y_1^\pr,y_2^\pr,\ldots,y_{J^\pr}^\pr \in \overline{ \Upsilon_\Lambda^{\Lambda^\pr}}\setminus \Upsilon_\Lambda^{\Lambda^\pr}$, with $0\le J^\pr \le  C_d^{\pr\pr\pr} \ell^{d-2}$  (note $J^\pr=0$ if $\partial^{\Lambda^\pr} \Lambda=\partial \Lambda$, in which case  $\overline{ \Upsilon_\Lambda^{\Lambda^\pr}}= \Upsilon_\Lambda^{\Lambda^\pr}$), such that $\Lambda_{1}(y_j) \subset  \Lambda$ for $j=1,2,\ldots,J$,
 \beq
\overline{\widehat{ \Upsilon}_\Lambda^{\Lambda^\pr}} =\set{ \bigcup_{j=1}^J
\overline{\Lambda}_{\frac 1 2}(y_j)} \bigcup \set{ \bigcup_{j^\pr =1}^{J^\pr}
\overline{\Lambda}_{\frac 1 2}(y^\pr_{j^\pr})\cap \overline{\Lambda}}.
 \eeq
 and $y_1,y_2,\ldots,y_{J},y_1^\pr,y_2^\pr,\ldots,y_{J^\pr}^\pr$ form a minimal set with respect to this properties.
It follows that we can select disjoint open sets ${\mathcal{O}_j \subset {\Lambda}_{\frac 1 2}(y_j)}$ and  ${\mathcal{O}^\pr_{j^\pr} \subset {\Lambda}_{\frac 1 2}(y^\pr_{j^\pr})}\cap {\Lambda}$,  where $j=1,2,\ldots,J$ and $j^\pr=1,2,\ldots,J^\pr$, such that
 \beq
\overline{\widehat{ \Upsilon}_\Lambda^{\Lambda^\pr}} =\set{ \bigcup_{j=1}^J
\overline{\mathcal{O}_j}} \bigcup \set{ \bigcup_{j^\pr =1}^{J^\pr}
\overline{\mathcal{O}^\pr_{j^\pr}}}.
 \eeq
It follows that
\begin{align}\label{startSLI}
&\norm{\Chi_y  R_{\bom,\Lambda^\pr}(z) \Chi_{ \widehat{ \Upsilon}} W_\Lambda({\phi}) R_{\bom,\Lambda} (z)\Chi_x }\\ \notag
& \quad \quad  \le \sum_{j=1}^J \norm{\Chi_y  R_{\bom,\Lambda^\pr}(z) \Chi_{\mathcal{O}_j} W_\Lambda({\phi}) R_{\bom,\Lambda} (z)\Chi_x }\\ \notag
&  \qquad \qquad \qquad + \sum_{j^\pr=1}^{J^\pr} \norm{\Chi_y  R_{\bom,\Lambda^\pr}(z) \Chi_{\mathcal{O}^\pr_{j^\pr} } W_\Lambda({\phi}) R_{\bom,\Lambda} (z)\Chi_x }\\ \notag 
& \quad  \quad  \le \sum_{j=1}^J \set{\norm{\Chi_y  R_{\bom,\Lambda^\pr}(z)\Chi_{{\Lambda}_{\frac 1 2}(y_j)} }  \norm{\Chi_{{\Lambda}_{\frac 1 2}(y_j)} W_\Lambda({\phi}) R_{\bom,\Lambda} (z)\Chi_x }}\\ \notag
& \qquad \qquad \qquad + \sum_{j^\pr=1}^{J^\pr}\set{ \norm{\Chi_y  R_{\bom,\Lambda^\pr}(z)\Chi_{ {\Lambda}_{\frac 1 2}(y^\pr_{j^\pr})\cap {\Lambda}} }\norm{\Chi_{ {\Lambda}_{\frac 1 2}(y^\pr_{j^\pr})\cap {\Lambda}} W_\Lambda({\phi}) R_{\bom,\Lambda} (z)\Chi_x }}.
\end{align}

Let $\Lambda_\sharp$  be either   ${\Lambda}_{\frac 1 2}(y_j)={\Lambda}_{\frac 1 2}(y_j)\cap \Lambda$ or  ${\Lambda}_{\frac 1 2}(y^\pr_{j^\pr})\cap {\Lambda}$ for some $j$ or $j^\pr$. We write $\Lambda_\sharp^\pr$ for the corresponding ${\Lambda}_{1}(y_j)\cap \Lambda$ or ${\Lambda}_{1}(y^\pr_{j^\pr})\cap \Lambda$. Using \eq{Wphi} and \eq{nablaphi} we get
\begin{align}
 \norm{\Chi_{\Lambda_\sharp} W_\Lambda({\phi}) R_{\bom,\Lambda} (z)\Chi_x }
 \le 2C_d  \norm{\Chi_{\Lambda_\sharp}  \nabla_\Lambda   R_{\bom,\Lambda} (z)\Chi_x } + C_d \norm{\Chi_{\Lambda_\sharp}  R_{\bom,\Lambda} (z)\Chi_x }.
\end{align}

We now use the following  interior estimate
 (e.g., \cite[Lemma~A.2]{GK5}):   Let
$\eta \in C^1(\mathcal{O})$ with  $\|\eta\|_\infty \le 1$, where $\mathcal{O}\subset \R^d$ is an open set.  Given a finite  box $\Lambda$ such that $\overline{\Lambda} \subset \mathcal{O}$, 
we set $\eta_\Lambda=\eta \Chi_\Lambda$. Then, 
 for all $\bom \in [0,1]^
{\Z^d}$, $z \in \C $, and   $\psi \in \mathcal{D}{(\Delta_{\Lambda})}$, we have
 \begin{align} \label{interior}
\norm{{\eta_\Lambda} \nabla_{\Lambda} \psi}^2 &\le 
 \norm{\Chi_{\supp {\eta_\Lambda}} \pa{H_{\bom,\Lambda}-z} \psi}^2\\ 
 &\qquad  + 
\left(1 +\max \set{0,\Re z  -\essinf V_{\mathrm{per}} } +
4\|\nabla {\eta_\Lambda}\|_\infty^2\right) \norm{
\Chi_{\supp {\eta_\Lambda}} \psi}^2 . \notag
\end{align}
(Although   \cite[Lemma~A.2]{GK5} is stated with somewhat different conditions on $\eta$,  the proof applies with $\eta$ as above. The important observation is that with Dirichlet  boundary condition we have 
$\eta \psi=\eta_\Lambda \psi \in  \mathcal{D}{(\Delta_{\Lambda})}$ for all $\psi \in \mathcal{D}{(\Delta_{\Lambda})}$.)

Given a box ${\Lambda}_{\frac 1 2}(x^\pr)$, we fix a function $\eta \in  C^1(\R^d)$ with  $0\le \eta \le 1$, such that $\eta \equiv 1$ on ${\Lambda}_{\frac 1 2}(x^\pr)$, 
$\supp \eta \subset {\Lambda}_{1}(x^\pr)$, and $\norm{\nabla \eta}_\infty \le C^{\pr\pr\pr}_d$.    We have, using \eq{interior} and $\eta \Chi_x =0$ (see \eq{xdisjoint}),
\begin{align}\label{usinginterior}
  \norm{\Chi_{\Lambda_\sharp}  \nabla_\Lambda   R_{\bom,\Lambda} (z)\Chi_x } \le
    \norm{\eta_\Lambda  \nabla_\Lambda   R_{\bom,\Lambda} (z)\Chi_x } \le   \gamma^\pr_{\Re z, d,V_{\mathrm{per}}}
 \norm{
\Chi_{\Lambda_\sharp^\pr}  R_{\bom,\Lambda} (z)\Chi_x }  ,
 \end{align}
with
\beq
 \gamma^\pr_{\Re z, d,V_{\mathrm{per}}}
:=  C^{\pr\pr\pr}_d\pa{1 +\max \set{0,\Re z  -\essinf V_{\mathrm{per}} }}^{\frac 1 2} 
\eeq

If $\Lambda_\sharp={\Lambda}_{\frac 1 2}(y_j)$, we have $\Chi_{\Lambda_\sharp}\le \Chi_{\Lambda_\sharp^\pr}=\Chi_{y_j}$. If $\Lambda_\sharp= {\Lambda}_{\frac 1 2}(y^\pr_{j^\pr})\cap {\Lambda}$, we have $\Chi_{\Lambda_\sharp}\le \Chi_{\Lambda_\sharp^\pr}\le\Chi_{y^{\pr\pr}_{j^\pr}}$ for some $y^{\pr\pr}_{j^\pr} \in  \Upsilon_\Lambda^{\Lambda^\pr}$.  Thus, it  follows from 
\eq{startSLI} and \eq{usinginterior}  that
\beq\label{endSLI}
\norm{\Chi_y  R_{\bom,\Lambda^\pr}(z)\Chi_{ \widehat{ \Upsilon}} W_\Lambda({\phi}) R_{\bom,\Lambda} (z)\Chi_x }
  \le c_d (1 + \gamma^\pr_{\Re z, d,V_{\mathrm{per}}})\ell^{d-1}
\norm{\Chi_y  R_{\bom,\Lambda^\pr}(z)\Chi_{x^\pr}}\norm{
\Chi_{x^\pr}  R_{\bom,\Lambda} (z)\Chi_x }
\eeq
for some $x^\pr \in \Upsilon_\Lambda^{\Lambda^\pr}$.

Combining  \eq{yRx} and \eq{endSLI} we conclude that
\begin{align}\label{yRx2}
\norm{\Chi_y R_{\bom,\Lambda^\pr} (z)  \Chi_x}  \le 
\norm{\Chi_y  {\phi} R_{\bom,\Lambda} (z) \Chi_x }+ \gamma_{z}\ell^{d-1}
\norm{\Chi_y  R_{\bom,\Lambda^\pr}(z)\Chi_{x^\pr}}\norm{
\Chi_{x^\pr}  R_{\bom,\Lambda} (z)\Chi_x }\
\end{align}
for some $x^\pr \in \Upsilon_\Lambda^{\Lambda^\pr}$, where $\gamma_{z}$ is  as in \eq{defgammaz},
 which yields  \eq{yRxall}.
If $y \in \Lambda^\pr \setminus \Lambda$, $\Chi_y  {\phi}=0$, and we get \eq{yRx239}.
If $y \in   \Lambda$, using  $0\le \phi\le 1$   we get \eq{yRx249}.

If $z \in  \sigma\pa{H_{\bom,\Lambda^\pr}}\setminus  \sigma\pa{H_{\bom,\Lambda}}$, for all have $\eps \not=0$ we have
$z +i\eps \notin  \sigma\pa{H_{\bom,\Lambda}} \cup  \sigma\pa{H_{\bom,\Lambda^\pr}}$, and the lemma holds for $z +i\eps$.  The lemma then follows for $z$ in view of \eq{exttoE}.
\end{proof}

\begin{lemma}\label{lemEDI} Consider a box  $\Lambda=\Lambda_\ell \subset  \Lambda^\pr$, where $\La^{\pr}$ is either a finite box or $\R^{d}$.  Let $\psi$ be a generalized eigenfunction of $H_{\bom,\Lambda^\pr}$ with generalized eigenvalue $E \in \R\setminus \sigma(H_{\bom,\Lambda})$.
  Then for every  $x \in \Lambda$, with  $\Lambda_{ \delta_+ + 3 }(x) \cap \Lambda^\pr \subset \Lambda$, we can find $x^\pr \in \Upsilon_\Lambda^{\Lambda^\pr}$ such that
\beq\label{EDI}
\norm{\Chi_{x}\psi}\le  \gamma_{E}\ell^{d-1}
\norm{\Chi_{x^\pr}  R_{\bom,\Lambda} (E)\Chi_x } \norm{\Chi_{x^\pr}\psi}.
\eeq
\end{lemma}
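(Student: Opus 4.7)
The strategy is to mimic the proof of Lemma~\ref{lempreSLI}, replacing the ``outer'' resolvent $R_{\bom,\La^{\pr}}(z)$ with the generalized eigenfunction $\psi$ via the local eigenvalue equation \eq{localeig}.

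First I would fix the same cutoff function $\phi = \phi_{\La}^{\La^{\pr}} \in C^{2}(\La^{\pr})$ used in the proof of Lemma~\ref{lempreSLI}, with $0\le \phi\le 1$, satisfying \eq{defphi}--\eq{nablaphi}, so that $\supp \nabla\phi \subset \widehat{\Upsilon} = \widehat{\Upsilon}_{\La}^{\La^{\pr}}$ and $\phi \equiv 1$ on $\La^{\La^{\pr},\frac{\delta_{+}+1}{2}+\frac{1}{4}}$. Since $\La_{\delta_{+}+3}(x)\cap \La^{\pr}\subset \La$, the hypothesis $x \in \La^{\La^{\pr},\frac{\delta_{+}+1}{2}+1}$ holds, so $\Chi_{x}\phi = \Chi_{x}$. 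By \eq{localeig} applied to $\phi$ we have $\phi\psi \in \cD(\Delta_{\La})$ and $(H_{\bom,\La}-E)\phi\psi = W_{\La}(\phi)\psi$. Since $E \notin \sigma(H_{\bom,\La})$, I can invert to obtain
\beq \label{plan:start}
\Chi_{x}\psi = \Chi_{x}\phi\psi = \Chi_{x}R_{\bom,\La}(E) W_{\La}(\phi)\psi.
\eeq

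Next I would dualize in order to move the derivative in $W_{\La}(\phi)$ onto a function in $\cD(\Delta_{\La})$. For any test $f\in \L^{2}(\La)$ with $\norm{f}\le 1$ and $\supp f \subset \La_{1}(x)$, setting $g := R_{\bom,\La}(E)\Chi_{x}f \in \cD(\Delta_{\La})$, Dirichlet boundary conditions make integration by parts legitimate, and using \eq{Wphi} together with $\widetilde{\phi}\equiv 1$ on $\supp \phi$ I would obtain
\beq
\scal{f,\Chi_{x}\psi} = \scal{g, W_{\La}(\phi)\widetilde\phi\psi} = \scal{2\Chi_{\widehat{\Upsilon}}(\nabla\phi)\cdot\nabla_{\La} g + \Chi_{\widehat{\Upsilon}}(\Delta\phi) g,\psi},
\eeq
so that, by Cauchy--Schwarz,
\beq \label{plan:dual}
\abs{\scal{f,\Chi_{x}\psi}} \le \bigl(2C_{d}\norm{\Chi_{\widehat{\Upsilon}}\nabla_{\La} g} + C_{d}\norm{\Chi_{\widehat{\Upsilon}} g}\bigr)\,\sup_{y \in \widehat{\Upsilon}}\norm{\Chi_{y}\psi}
\eeq
after decomposing the supports appropriately.

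Then I would decompose $\widehat{\Upsilon}$ into the $O(\ell^{d-1})$ unit pieces $\mathcal{O}_{j}, \mathcal{O}^{\pr}_{j^{\pr}}$ around points $y_{j}, y_{j^{\pr}}^{\pr}$ exactly as in \eq{startSLI}, and on each such piece apply the interior estimate \eq{interior} to the function $g = R_{\bom,\La}(E)\Chi_{x}f$ (using $(H_{\bom,\La}-E)g = \Chi_{x}f$ together with $\Chi_{x}$ disjoint from the piece, by \eq{xdisjoint}), which yields $\norm{\Chi_{\La_{\sharp}}\nabla_{\La} g} \le \gamma^{\pr}_{E} \norm{\Chi_{\La_{\sharp}^{\pr}} g}$ with $\gamma^{\pr}_{E}$ as in the proof of Lemma~\ref{lempreSLI}. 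Since $\La_{\sharp}^{\pr}\subset \La_{1}(x^{\pr})$ for some $x^{\pr} \in \Upsilon_{\La}^{\La^{\pr}}$, one has $\norm{\Chi_{\La_{\sharp}^{\pr}}g}\le \norm{\Chi_{x^{\pr}}R_{\bom,\La}(E)\Chi_{x}f}\le \norm{\Chi_{x^{\pr}}R_{\bom,\La}(E)\Chi_{x}}$. Summing over the $O(\ell^{d-1})$ pieces and picking the single $x^{\pr}\in \Upsilon_{\La}^{\La^{\pr}}$ that maximizes the product $\norm{\Chi_{x^{\pr}}R_{\bom,\La}(E)\Chi_{x}}\norm{\Chi_{x^{\pr}}\psi}$, and then taking the supremum over admissible $f$ on the left of \eq{plan:dual}, would give \eq{EDI} with $\gamma_{E}$ as in \eq{defgammaz}. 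The case $E \in \sigma(H_{\bom,\La^{\pr}})\setminus \sigma(H_{\bom,\La})$ is not at issue here since $\psi$ is a given generalized eigenfunction and only $R_{\bom,\La}(E)$ appears on the right.

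The routine step is the decomposition/counting argument, which is essentially copied from Lemma~\ref{lempreSLI}. The only place requiring care---and the main conceptual point of the proof---is justifying the integration by parts and the interpretation of $W_{\La}(\phi)\psi$, given that $\psi$ is only a generalized eigenfunction: this is handled by the $\widetilde{\phi}$ localization noted after \eq{Wphi}, which ensures $\widetilde{\phi}\psi \in \cD(\Delta_{\La})$ and allows the adjoint calculation to proceed in $\L^{2}(\La)$ with Dirichlet boundary conditions.
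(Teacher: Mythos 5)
Your proof is correct, and it reaches \eq{EDI} by a route that differs from the paper's in where the interior estimate is applied, while using the same ingredients. The paper writes $\phi\psi = R_{\bom,\Lambda}(E)W_\Lambda(\phi)\psi$ via \eq{localeig} and then says ``proceed as in \eq{startSLI}--\eq{endSLI}''; the natural reading is to decompose $\Chi_{\widehat{\Upsilon}}$ and bound each $\norm{\Chi_{\Lambda_\sharp}W_\Lambda(\phi)\psi}$ by the Caccioppoli-type estimate \eq{interior} applied directly to the generalized eigenfunction $\psi$, the first term of \eq{interior} vanishing because $\psi$ satisfies the eigenvalue equation weakly on $\supp\eta$, which sits well inside $\La$ where \eq{compat} holds. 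You instead keep the interior estimate on the resolvent side: you dualize against $f$ supported in $\Lambda_1(x)$, move the derivative from $\psi$ to $g = R_{\bom,\Lambda}(E)\Chi_x f$ through the adjoint $W_\Lambda(\phi)^{\ast} = 2(\nabla\phi)\cdot\nabla_\Lambda + (\Delta\phi)$, and apply \eq{interior} to $g\in\cD(\Delta_\Lambda)$ exactly as in \eq{usinginterior}, using $(H_{\bom,\Lambda}-E)g = \Chi_x f$ and \eq{xdisjoint}. Both routes are sound. Yours has the modest advantage of invoking \eq{interior} only on genuine $\cD(\Delta_\Lambda)$ elements produced by the resolvent, in precisely the form already used in the proof of Lemma~\ref{lempreSLI}, so one never needs to reinterpret \eq{interior} for generalized eigenfunctions; the price is the extra integration-by-parts step, which you justify correctly via the auxiliary cutoff $\widetilde{\phi}$. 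A small cosmetic point: the right-hand side of \eq{plan:dual} should be a sum over the $O(\ell^{d-1})$ unit pieces of $\widehat{\Upsilon}$, each carrying its own factor $\norm{\Chi_{\Lambda_\sharp}\psi}$, rather than a single factored $\sup_{y\in\widehat{\Upsilon}}\norm{\Chi_{y}\psi}$; your next sentence performs exactly that decomposition, so the argument itself is unaffected.
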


\begin{proof} Let ${\phi}={\phi}_\Lambda^{\Lambda^\pr}$ be the function in the proof of the previous lemma (cf. \eq{defphi}-\eq{nablaphi}).  It follows from  \eq{localeig}
that
\beq
\phi \psi= R_{\bom,\Lambda}(E) W_\Lambda(\phi)\psi.
\eeq
Thus, given   $x \in \Lambda$ with  $\Lambda_{ \delta_+  + 3}(x) \cap \Lambda^\pr \subset \Lambda$, we have
\begin{align}
\norm{\Chi_{x}\psi}=\norm{\Chi_{x}\phi \psi}= \norm{\Chi_{x} R_{\bom,\Lambda}(E) W_\Lambda(\phi)\psi}.
\end{align}
Proceeding as in \eq{startSLI}-\eq{endSLI} we get \eq{EDI}.
\end{proof}

\section{Preamble to the multiscale analysis}\label{secpreambleMSA}

 We  fix a generalized  Anderson Hamiltonian $H_{\bom}$.

\subsection{Good boxes and free sites} A finite box will be called `good' at an energy  $E$ when the finite volume resolvent is not too big and exhibits exponential decay.
   As in \cite{B,BK,GHK2}, we will also require `free sites'. 
   
    Given a box $\Lambda$, a subset  $S \subset\widetilde{\Lambda}$, and $\bt_S=\{t_\zeta\}_{\zeta \in S} \in [0,1]^S$, we set
\begin{align}\label{finvolHS}
H_{{\bom},{\bt_S},\Lambda} :=H_{0,\Lambda}+  V_{{\bom},{\bt_S},\Lambda}\quad \text{on}   \quad \L^{2}(\Lambda),
\end{align}
where  $ V_{{\bom},{\bt_S},\Lambda}= \Chi_\La V_{{\bom}_\Lambda,{\bt_S}}$ with 
\begin{equation}
 V_{{\bom_\Lambda},{\bt_S}}(x):= V_{\bom_{\Lambda \setminus S}}(x)+ V_{\bt_S}(x)=
  \sum_{\zeta \in\widetilde{\Lambda}\setminus S} \omega_\zeta \,u_\zt (x)+ \sum_{\zeta \in  S} t_\zeta \,u_\zt(x ).  \label{finvolVS}
\end{equation}
 $R_{\bom,{\bt_S},\Lambda} (z)$ will denote the corresponding finite volume resolvent.

 \begin{definition}\label{defgood} Consider a configuration  $\bom \in{\Omega}$, an energy $E\in \C$,   a rate of decay $m>0$,  $0< \varsigma< 1$, and $S \subset\widetilde{\Lambda}_L$.  A   box  $\Lambda_{L}$ is said to be $(\bom,E,m,\varsigma,S)$- \emph{good}  if 
  the following holds for all $\bt_S \in [0,1]^{S}$:
  \begin{align}\label{weg}
\| R_{\bom, \bt_S, \Lambda_{L}}(E) \|& \le \e^{L^{1-\varsigma}}  
\intertext{and} 
\| \Chi_x R_{\bom,\bt_S, \Lambda_{L}}(E) \Chi_y \|& \le \e^{-m  \norm{x-y}} \qtx{for all}  x,y \in \Lambda_{L} \qtx{with}  \norm{x-y}\ge \tfrac L{100}. \label{good}
\end{align} 
 In this case $S$ consists of $(\bom,E,m,\vs)$- \emph{free sites} for the box  $\Lambda_{L}$. 
 If  no free sites are specified, i.e., $S=\emptyset$, $\Lambda_{L}$ is said to be $(\bom,E,m,\varsigma)$- \emph{good}.
 \end{definition}

  \begin{remark}\label{remgood} Condition \eq{good} is stronger than the usual condition in the definition of a good box  (cf. \cite{vDK,CH1,GKboot,Kle}), where decay is postulated only from the center of the box to its boundary.  We introduce  the exponential decay in $\|x-y\|$ for arbitrary $x,y$ in the box, not too close to each other,   in order to prove Lemma~\ref{lemkeyMSA}, where we will need to consider  locations $x$ and $y$  that may be anywhere in a box $\Lambda^{\pr}$. In particular, we will need to consider the case when both $x$ and $y$ are close to the boundary of $\Lambda^{\pr}$. Thus,  we will need to apply Lemma~\ref{lempreSLI}  for boxes $\Lambda \subset \Lambda^{\pr}$ that  touch the boundary of $\Lambda^{\pr}$ (i.e., $ \partial \Lambda \cap  \partial \Lambda^{\pr}\not= \emptyset$). For this reason  we defined $\Upsilon_\Lambda^{\Lambda^\pr}$  in \eq{Upsilon} in terms of $\partial^{\Lambda^\pr} \Lambda$, the boundary of $ \Lambda$ in $\Lambda^{\pr}$.
\end{remark}

\begin{remark} \label{remgoodmble} 
 It follows from \eq{Vlocal} and \eq{finvolH} that for all $E \in \C$ we have
  \beq
 \set{  \Lambda_{L} \text{\ is  $(E,m,\varsigma,S)$-good}}:=\set{\bom \in \Omega ; \; \Lambda_{L} \text{\ is  $(\bom,E,m,\varsigma,S)$-good}}\in \cF_{\Lambda_{L}}.
 \eeq
Moreover, the set
 \beq \label{goodmeas}
 \set{\pa{E,\bom_{\Lambda_{L}}} \in \R \times \Omega_{\Lambda_{L}} ; \; \Lambda_{L} \text{\ is  $(\bom,E,m,\varsigma,S)$-good}} 
 \eeq
is  closed in $\R \times\Omega_{\Lambda_{L}} $, and hence jointly measurable in $\pa{E,\bom_{\Lambda_{L}}}$.
 \end{remark}

  \begin{definition} Consider an energy $E\in \R$,   a rate of decay $m>0$, and numbers  $0< \varsigma < 1$ and $p>0$. A scale $L>0$ is called $(E,m,\varsigma, p)$- \emph{good} if for every  $x \in \R^d$ we have
  \beq\label{probgood}
  \P\set{  \Lambda_{L}(x) \text{\ is  $(E,m,\varsigma)$-good}} \ge 1 - L^{-pd}  .
  \eeq
  \end{definition}

 If a box  $\Lambda_{L}$ is  $(\bom,E,m,\varsigma)$-good, then it is just as good for energies $E^\pr$ such that $\abs{E^\pr - E} \le  \e^{-cL}$, the precise statement being given in the  following definition and lemma.
 
  \begin{definition}\label{defjgood} Consider a configuration  $\bom \in{\Omega}$, an  energy $E\in \C$,   a rate of decay $m>0$, and   $0< \varsigma< 1$.  A   box  $\Lambda_{L}$ is said to be $(\bom,E,m,\varsigma)$- \emph{jgood}  (just as good)  if 
   \begin{align}\label{jweg}
\| R_{\bom, \Lambda_{L}}(E) \|& \le 2\e^{L^{1-\varsigma}}  \\
\intertext{and} 
\| \Chi_x R_{\bom, \Lambda_{L}}(E) \Chi_y \|& \le 2\e^{-m  \norm{x-y}} \qtx{for all}  x,y \in \Lambda_{L} \qtx{with}  \norm{x-y}\ge \tfrac L{100}. \label{jgood}
\end{align} 
 \end{definition}

 \begin{lemma}\label{lemjgood}  Let $\bom \in{\Omega}$,  $E\in \C$, $0< \tau <\varsigma< 1$.  Suppose the   box  $\Lambda_{L}$ is  $(\bom,E,m,\varsigma)$-good with a rate of decay  $m\ge L^{-\tau}$.   Then, if  $L \ge \widetilde{L}_{\vs,\tau}$, the box  $\Lambda_{L}$ is  $(\bom,E^\pr,m,\varsigma)$-jgood  for all  energies  $E^\pr  \in \C$ such that $\abs{E^\pr -E}\le \e^{-2 mL}$.
 \end{lemma}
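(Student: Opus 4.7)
The plan is a standard first-order resolvent perturbation. Writing $R=R_{\bom,\La_L}(E)$ and $R'=R_{\bom,\La_L}(E')$, the second resolvent identity gives $R'=R+(E'-E)RR'$, so
\[
R'=R\bigl(I-(E'-E)R\bigr)^{-1}=\sum_{n=0}^{\infty}(E'-E)^{n}R^{n+1},
\]
provided $|E'-E|\,\|R\|<1$. Since \eq{weg} gives $\|R\|\le \e^{L^{1-\vs}}$ and $|E'-E|\le \e^{-2mL}$ with $m\ge L^{-\tau}$, we have
\[
|E'-E|\,\|R\|\;\le\;\e^{-2mL+L^{1-\vs}}\;\le\;\e^{-2L^{1-\tau}+L^{1-\vs}},
\]
which is at most $\tfrac12$ once $L\ge \widetilde L_{\vs,\tau}$, because $\tau<\vs$ forces the exponent to tend to $-\infty$. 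This is the only quantitative input; the rest is bookkeeping.

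From the geometric series one immediately gets the norm bound $\|R'\|\le \|R\|/(1-|E'-E|\,\|R\|)\le 2\|R\|\le 2\e^{L^{1-\vs}}$, which is \eq{jweg}. For the off-diagonal bound \eq{jgood} I would not expand the whole series site-by-site; instead, use the identity in the much simpler form
\[
\Chi_x R'\Chi_y=\Chi_x R\Chi_y+(E'-E)\,\Chi_x R\,R'\,\Chi_y,
\]
and estimate the second term by submultiplicativity: $\|\Chi_xRR'\Chi_y\|\le \|R\|\,\|R'\|\le 2\e^{2L^{1-\vs}}$. Thus, for $x,y\in\La_L$ with $\|x-y\|\ge L/100$,
\[
\|\Chi_x R'\Chi_y\|\;\le\;\e^{-m\|x-y\|}+2\,\e^{-2mL+2L^{1-\vs}}.
\]
It remains to absorb the second term into a factor of $2$ times the first. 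Using $\|x-y\|\le L$ (since $x,y\in\La_L$), it suffices to have $2\e^{-2mL+2L^{1-\vs}}\le \e^{-mL}\le \e^{-m\|x-y\|}$, i.e.\ $mL\ge 2L^{1-\vs}+\log 2$. Since $m\ge L^{-\tau}$ and $\tau<\vs$, this inequality reads $L^{\vs-\tau}\ge 2+L^{\vs-1}\log 2$ and holds for all $L\ge \widetilde L_{\vs,\tau}$ (taking $\widetilde L_{\vs,\tau}$ large enough to handle both this and the earlier $\tfrac12$ bound simultaneously).

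There is no real obstacle here; the only delicate point is making sure that the single threshold $\widetilde L_{\vs,\tau}$ can be chosen to depend only on $\vs$ and $\tau$, which is immediate from the two elementary inequalities above since they involve only $L$, $\vs$ and $\tau$ (the hypothesis $m\ge L^{-\tau}$ is used as a lower bound, and the worse case $m=L^{-\tau}$ governs the threshold). This gives \eq{jgood} and completes the proof.
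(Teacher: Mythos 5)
Your argument is correct and is essentially the paper's own: both proofs apply the resolvent identity $R'=R-(E'-E)RR'$, absorb the correction term in the operator-norm bound using $|E'-E|\,\|R\|\le \e^{-2mL+L^{1-\vs}}\to 0$ (where $\tau<\vs$ is exactly what makes the exponent go to $-\infty$), and then estimate $\|\Chi_xR'\Chi_y\|$ by the good bound for $R$ plus $|E'-E|\,\|R\|\,\|R'\|\le 2\e^{-2mL+2L^{1-\vs}}$. Your introduction of the full Neumann series is unnecessary — the first-order identity suffices, as you in fact observe — and your final absorption step ($\|x-y\|\le L$ so $\e^{-m\|x-y\|}\ge \e^{-mL}$, then comparing exponents) is just the bookkeeping the paper leaves implicit in the phrase ``and \eqref{jgood} follows.''
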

 
 \begin{proof} By the resolvent identity, 
 \beq R_{\bom, \Lambda_{L}}(E^\pr)=  R_{\bom,\Lambda_{L}}(E) - (E^\pr-E) R_{\bom,\Lambda_{L}}(E)  R_{\bom, \Lambda_{L}}(E^\pr).
 \eeq
Thus, for   $\abs{E^\pr - E} \le  \e^{-2mL}$, we get
\beq
\norm{R_{\bom, \Lambda_{L}}(E^\pr)} \le \e^{L^{1-\varsigma}} +  \e^{-2mL}\e^{L^{1-\varsigma}}\norm{R_{\bom, \Lambda_{L}}(E^\pr)}
\eeq
 Since $0< \tau <\varsigma< 1$, \eq{jweg} follows.
 
 Similarly, using also \eq{jweg},  given  $x,y \in \Lambda_{L}$ with $ \norm{x-y}\ge \tfrac L{100}$, we have
 \beq
 \norm{ \Chi_x R_{\bom, \Lambda_{L}}(E^\pr) \Chi_y} \le \e^{-m  \norm{x-y}} +2 \e^{-2mL}\e^{2L^{1-\varsigma}},
 \eeq
 and \eq{jgood} follows.
 \end{proof}

 We also need the following variant of Lemma~\ref{lemjgood}; the proof is almost identical.

\begin{lemma}\label{lemjgoodm1} Let $\bom \in{\Omega}$,  $E\in \C$, $0< \varsigma< 1$,    $0 <\widetilde{m}<m$.   Suppose the box  $\Lambda_{L}$ is $(\bom,E,m,\varsigma)$-good.  Then,  if  $L \ge \widetilde{L}_{\vs, \widetilde{m}}$, given  $E^{\pr}\in \C$ with  $\abs{E^{\pr}-E}\le \e^{-m_{1}L}$, where  $m_{1} \in [\widetilde{m},m]$,
  the box 
 $\Lambda_{L}$ is $(\bom,E^{\pr},m_{2},\varsigma)$-jgood with
 \beq\label{m2m1}
 m_{2}= m_{1}\pa{1 - C \widetilde{m}^{-1}L^{ -\vs }}.
 \eeq
\end{lemma}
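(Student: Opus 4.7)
The plan is to mimic the proof of Lemma~\ref{lemjgood} using the first resolvent identity, keeping careful track of the loss in the exponential rate that comes from $|E'-E|$ being only $e^{-m_1 L}$ rather than $e^{-2m_1 L}$. Write
\[
R_{\bom,\Lambda_L}(E') = R_{\bom,\Lambda_L}(E) - (E'-E)\,R_{\bom,\Lambda_L}(E')\,R_{\bom,\Lambda_L}(E).
\]
First I would derive \eq{jweg}: taking norms and using $\|R_{\bom,\Lambda_L}(E)\|\le e^{L^{1-\vs}}$ together with $|E'-E|\,e^{L^{1-\vs}}\le e^{-m_1 L + L^{1-\vs}}\le e^{-\widetilde m L + L^{1-\vs}}\le \tfrac12$ for $L\ge\widetilde L_{\vs,\widetilde m}$ yields $\|R_{\bom,\Lambda_L}(E')\|\le 2e^{L^{1-\vs}}$.

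For \eq{jgood}, given $x,y\in\Lambda_L$ with $\|x-y\|\ge L/100$, sandwich with $\Chi_x,\Chi_y$ and insert the resolution of the identity $\sum_{z\in\widetilde\Lambda_L}\Chi_z=I$ between the two resolvents:
\[
\Chi_x R_{\bom,\Lambda_L}(E')\Chi_y = \Chi_x R_{\bom,\Lambda_L}(E)\Chi_y - (E'-E)\sum_{z\in\widetilde\Lambda_L}\Chi_x R_{\bom,\Lambda_L}(E')\Chi_z\,\Chi_z R_{\bom,\Lambda_L}(E)\Chi_y.
\]
Bounding $\|\Chi_x R_{\bom,\Lambda_L}(E')\Chi_z\|\le\|R_{\bom,\Lambda_L}(E')\|\le 2e^{L^{1-\vs}}$ uniformly in $z$ (by the step above), and bounding each $\|\Chi_z R_{\bom,\Lambda_L}(E)\Chi_y\|$ crudely by $\|R_{\bom,\Lambda_L}(E)\|\le e^{L^{1-\vs}}$, together with the trivial volume bound $\#\widetilde\Lambda_L\le (L+1)^d$, I obtain
\[
\|\Chi_x R_{\bom,\Lambda_L}(E')\Chi_y\|\le e^{-m\|x-y\|} + 2(L+1)^d e^{2L^{1-\vs}}e^{-m_1 L}.
\]

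The final step is to verify this is bounded by $2e^{-m_2\|x-y\|}$ for $m_2=m_1(1-C\widetilde m^{-1}L^{-\vs})$. Since $m\ge m_1\ge m_2$, the first term is $\le e^{-m_2\|x-y\|}$. Using $\|x-y\|\le L$, the second term is controlled provided $(m_1-m_2)L\ge 2L^{1-\vs}+d\log(L+1)+\log 2$. But $(m_1-m_2)L = C(m_1/\widetilde m)L^{1-\vs}\ge C L^{1-\vs}$ because $m_1\ge\widetilde m$, so any $C\ge 3$ works once $L\ge\widetilde L_{\vs,\widetilde m}$ is large enough to absorb the logarithmic term. The only mild subtlety in the argument is keeping the constant $C$ universal over the range $m_1\in[\widetilde m,m]$: this is precisely why the relative form $m_2=m_1(1-C\widetilde m^{-1}L^{-\vs})$ appears in the conclusion rather than an absolute shift $m_2=m_1-CL^{-\vs}$, since the necessary absolute loss $L^{-\vs}$ translates to a relative loss of order $\widetilde m^{-1}L^{-\vs}$ in the worst case $m_1=\widetilde m$.
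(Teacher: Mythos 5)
Your proof is correct and follows the same resolvent-identity route the paper has in mind (the paper gives no proof, saying only that it is ``almost identical'' to that of Lemma~\ref{lemjgood}, which is exactly the computation you carry out, adapted to $|E'-E|\le\e^{-m_1 L}$). One remark: the insertion of the resolution of the identity $\sum_{z}\Chi_z$ between the two resolvents is unnecessary, and costs you a spurious $(L+1)^d$ factor --- you can simply bound $\|\Chi_x R_{\bom,\Lambda_L}(E')R_{\bom,\Lambda_L}(E)\Chi_y\|\le\|R_{\bom,\Lambda_L}(E')\|\,\|R_{\bom,\Lambda_L}(E)\|\le 2\e^{2L^{1-\vs}}$ directly, as in the proof of Lemma~\ref{lemjgood}; your absorption estimate handles the extra polynomial factor anyway, so this is purely cosmetic. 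Your closing observation about why the loss must be stated relatively, as $m_2=m_1(1-C\widetilde m^{-1}L^{-\vs})$ rather than $m_2=m_1-CL^{-\vs}$, to get a $C$ uniform over $m_1\in[\widetilde m,m]$, is exactly the right way to understand the form of \eqref{m2m1}.
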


 The following definition will be needed only for real energies.

  \begin{definition} Consider an energy $E\in \R$,   a rate of decay $m>0$, and numbers  $0< \varsigma,\varsigma^\pr < 1$ and $p>0$.
  \begin{enumerate}
  
  \item Given a   box  $\Lambda_{L}$, a subset $S \subset\widetilde{ \Lambda}_{L}$ is called $\vs^\pr$- \emph{abundant} if  
     \beq\label{abundant}
  \# \pa{S \cap \Lambda_{\frac L 5}}\ge L^{(1 - \vs^\pr)d}\quad \text{for all boxes $ \Lambda_{\frac L 5} \subset \Lambda_L$}.
  \eeq

  \item   Given a   box  $\Lambda_{L}$,    an event $\cC$ is said to be
  $(\Lambda_{L},E,m,\varsigma, \varsigma^\pr)$- \emph{adapted} if  there exists a $\vs^\pr$-abundant subset $S_\cC \subset\widetilde{ \Lambda}_{L}$ such that  $\cC \in \cF_{\Lambda_L\setminus S_\cC}$ and 
   $\Lambda_{L}$ is $(\bom,E,m,\varsigma,S_\cC)$-good for all $\bom \in \cC$. In this case $\cC$ will also be called $(\Lambda_{L},E,m,\varsigma, \varsigma^\pr,S_\cC)$- \emph{adapted}.

\item     Given a   box  $\Lambda_{L}$, an event $\cE$ is called  $(\Lambda_{L},E,m,\varsigma, \varsigma^\pr)$- \emph{extra good} if it is the disjoint union of a finite number of  $(\Lambda_{L},E,m,\varsigma, \varsigma^\pr)$-adapted events, i.e.,  there exist disjoint   $(\Lambda_{L},E,m,\varsigma, \varsigma^\pr)$-adapted events $\set{\cC_i}_{i=1,2,\ldots,I}$ such that
\beq  \label{Eloc}
\cE= \bigsqcup_{i=1}^I  \cC_i  .
\eeq

\item A scale $L>0$ is called $(E,m,\varsigma, \varsigma^\pr,p)$- \emph{extra good} if for every  $x \in \R^d$  there exists a $(\Lambda_{L}(x),E,m,\varsigma, \varsigma^\pr)$-{extra good} event  $\cE_{L,x}$ such that
\beq \label{PEL0}
\P\set{\cE_{L,x}} \ge 1 - L^{-pd}  .
\eeq
\end{enumerate}
 \end{definition}
 
If a scale $L$ is  $(E,m,\varsigma, \varsigma^\pr,p)$-{extra good}, it is clearly also  $(E,m,\varsigma, p)$-good.

 \subsection{Tools for the multiscale analysis}
 We now combine Lemmas~\ref{lempreSLI} and \ref{lemEDI} with good boxes to obtain crucial tools for the multiscale analysis. In Lemmas~\ref{lemSLI} and \ref{lemkeyMSA}  we 
will  not know a priori that $E \notin \sigma(H_{\bom,\Lambda})$, and we will apply  Lemma~\ref{lempreSLI} with the notation given in  \eq{exttoE}.

\begin{lemma}\label{lemSLI}  Fix a  configuration  $\bom \in{\Omega}$ and an energy $E\in \C$.   Let $\La$  be either $\R^{d}$ or a box $\La_{L}$. Consider a scale $\ell$, with  $\ell <\frac  L  6$ if $\La=\La_{L}$, numbers $0< \tau<\varsigma< 1$,   and 
 $m\ge \ell^{-\tau}$. Let $\Theta\subset  \Lambda$ be such that for all $x \in \La \setminus \Theta$ there exists a $(\bom,E,m,\varsigma)$-good box, denoted by $\La_\ell\up{x}$, such that $\La_\ell\up{x}\subset \La$ with  $\La_{\frac {\ell}5}(x) \cap \La \subset \La_\ell\up{x}$. Then there exists a constant  $C=C_{d, V_{\mathrm{per}},E}$, locally bounded in $E$, such that setting 
 \beq\label{mpr}
m^\pr= m\pa{1 - C (\log \ell)  \ell^{\tau -1}},
\eeq
the following holds:
\begin{enumerate}
\item \label{sublemSLI} For all  $x, y \in \Lambda$ with  $x \notin \Theta $ we have
\beq\label{goodcasem}
\norm{\Chi_y R_{\bom,\Lambda}(E) \Chi_x} \le \norm{\Chi_y \Chi_{\La_\ell\up{x,\La,\frac 1 2}} R_{\bom,\La_\ell\up{x}}(E)\Chi_{x}}+  \e^{-m^\pr \norm{x-x_1} }\norm{\Chi_y R_{\bom,\Lambda}(E) \Chi_{x_1}},
\eeq
for some $x_1 \in  \Upsilon_{\La_\ell\up{x}}^{\Lambda} $, so in particular
\beq \label{x1dist}
\tfrac \ell {11}   \le \norm{x-x_1} \le \ell.
\eeq

\item \label{sublemSLI2} Let $x, y \in \Lambda$ with  $x \notin \Theta $ and $\norm{x-y} \ge \ell $.
  Then
\beq\label{goodcasem2}
\norm{\Chi_y R_{\bom,\Lambda}(E) \Chi_x} \le   \e^{-m^{\pr} \norm{x-x^\pr} }\norm{\Chi_y R_{\bom,\Lambda}(E) \Chi_{x^\pr}}
\eeq
for some $x^\pr \in \Lambda$ such that {either}  $x^\pr \in \Theta$  or  $ \norm{x^\pr-y} <  \ell $, i.e.,
\beq \label{x1dist2}
 x^\pr \in \Theta \cup  \Lambda_{2\ell}(y).
\eeq

 \item \label{sublemEDII2}   Suppose $E\in \R$ and $\psi$ is a generalized   eigenfunction 
 of $H_{\bom,\Lambda}$  with generalized eigenvalue $E^\pr  \in \br{E - \e^{-2 m \ell},E + \e^{-2 m\ell}}$. 
 Then for all $x \in \Lambda \setminus \Theta$ we have
 \begin{align}\label{EDI22}
\norm{\Chi_{x}\psi}&\le   \e^{-m^{\pr} \norm{x-x^\pr} } \norm{\Chi_{x^\pr}\psi}\le   \e^{-\frac {m^{\pr}} {11} \ell }  \norm{\Chi_{x^\pr}\psi}\qtx{for some} x^\pr  \in  \Upsilon_{\La_\ell\up{x}}^{\Lambda}, \\
\intertext{and also}
\label{EDI2}
\norm{\Chi_{x}\psi}&\le   \e^{-m^{\pr} \norm{x-x^{\pr\pr}} } \norm{\Chi_{x^{\pr\pr}}\psi}\le  \e^{-{m^{\pr}} \dist\set{x,\Theta}} \norm{\Chi_{x^{\pr\pr}}\psi}\qtx{for some} x^{\pr\pr} \in \Theta.
\end{align}
If $E^\pr=E$, \eq{EDI22} and \eq{EDI2} hold with $m$ substituted for $m^\pr$.

\end{enumerate}
\end{lemma}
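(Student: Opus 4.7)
The plan is to derive (i) from a single application of Lemma~\ref{lempreSLI} with the good sub-box $\La_\ell\up{x}\subset \La^\pr := \La$, and then bootstrap it into (ii) and (iii) by iteration. The hypothesis $\La_{\delta_++3}(x)\cap \La\subset \La_\ell\up{x}$ required by Lemma~\ref{lempreSLI} is implied by the given $\La_{\ell/5}(x)\cap \La\subset \La_\ell\up{x}$ once $\ell$ is large enough, which is ensured by $m\ge \ell^{-\tau}$ together with the specification of $m^\pr$ in \eq{mpr}. The same inclusion forces $\dist(x,\partial^\La\La_\ell\up{x})\ge \ell/10$, so Lemma~\ref{lempreSLI}'s witness $x_1\in\Upsilon_{\La_\ell\up{x}}^\La$ automatically satisfies $\ell/11\le \|x-x_1\|\le \ell$, the upper bound coming from the fact that $x$ and $x_1$ both lie in $\La_\ell\up{x}$, an open box of side~$\ell$.

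For (i), Lemma~\ref{lempreSLI} bounds $\|\Chi_y R_{\bom,\La}(E)\Chi_x\|$ by the first term in \eq{goodcasem} plus the remainder
\[
\gamma_E\,\ell^{d-1}\,\|\Chi_{x_1} R_{\bom,\La_\ell\up{x}}(E)\Chi_x\|\,\|\Chi_y R_{\bom,\La}(E)\Chi_{x_1}\|.
\]
The goodness estimate \eq{good} applied to $\La_\ell\up{x}$ controls the middle factor by $e^{-m\|x-x_1\|}$, using $\|x-x_1\|\ge \ell/11>\ell/100$; absorbing the polynomial pre-factor $\gamma_E\ell^{d-1}$ into the exponential at the cost of downgrading $m$ to the $m^\pr$ of \eq{mpr} is then a direct calculation, given the lower bound $\|x-x_1\|\ge \ell/11$ and the hypothesis $m\ge \ell^{-\tau}$.

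For (ii), the key observation is that when $\|x-y\|\ge \ell$ the point $y$ falls outside the open box $\La_\ell\up{x}$ of side~$\ell$ containing $x$, so the cutoff $\Chi_y\Chi_{\La_\ell\up{x,\La,\frac12}}$ in (i) vanishes and only the exponentially decaying tail survives. We then iterate (i) at $x,x_1,x_2,\ldots$, stopping at the first index $k$ with $x_k\in\Theta$ or $\|x_k-y\|<\ell$ — a procedure that terminates because each step advances by at least $\ell/11$ inside the bounded region $\La$. Multiplying the single-step inequalities and invoking $\sum_i \|x_i-x_{i+1}\|\ge \|x-x_k\|$ delivers \eq{goodcasem2} with $x^\pr=x_k$.

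For (iii), the same scheme is applied to Lemma~\ref{lemEDI} in place of Lemma~\ref{lempreSLI}. The new ingredient is the transfer of goodness to the shifted energy: Lemma~\ref{lemjgood}, applicable since $|E^\pr-E|\le e^{-2m\ell}$ and $\ell$ is large, makes $\La_\ell\up{x}$ $(\bom,E^\pr,m,\vs)$-jgood, so $E^\pr\notin \sigma(H_{\bom,\La_\ell\up{x}})$ and $\|\Chi_{x^\pr} R_{\bom,\La_\ell\up{x}}(E^\pr)\Chi_x\|\le 2\,e^{-m\|x-x^\pr\|}$. Inserting this bound into \eq{EDI} and absorbing $2\gamma_{E^\pr}\ell^{d-1}$ into the exponent yields \eq{EDI22}; iterating exactly as in (ii), but now until an iterate lands in $\Theta$, gives \eq{EDI2}. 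When $E^\pr=E$, Lemma~\ref{lemjgood} is unneeded and the argument goes through with $m$ in place of $m^\pr$. The one technical point to watch throughout is that the single-step loss $m-m^\pr$ is driven only by the fixed pre-factor $\gamma_E\ell^{d-1}$ against a minimum spatial step of $\ell/11$, so the same $m^\pr$ survives every iteration rather than degrading at each step.
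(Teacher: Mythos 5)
Your proposal is correct and follows essentially the same route as the paper's own proof: apply Lemma~\ref{lempreSLI} (resp.\ Lemma~\ref{lemEDI}) with the good box $\La_\ell\up{x}$, absorb the prefactor $\gamma_E\ell^{d-1}$ into a uniformly reduced rate $m^\pr$, and iterate, summing the step-lengths to get $\|x-x^\pr\|$. Your observation that the first term of \eqref{goodcasem} vanishes once $\|x-y\|\ge\ell$ — because then $\La_1(y)$ misses $\La_\ell\up{x,\La,\frac 1 2}$ — is the same (implicit) point the paper relies on in part (ii), and your remark that the degradation $m\to m^\pr$ is a one-time loss rather than a per-step loss correctly isolates the key quantitative feature.
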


\begin{proof} 
\begin{enumerate}
\item  Since $x \notin \Theta$,   we use the existence of the good box $\La_\ell\up{x}$ and   apply \eq{yRxall} to get
\beq\label{goodcase222}
\norm{\Chi_y R_{\bom,\Lambda}(E) \Chi_x} \le  \norm{\Chi_y\Chi_{\La_\ell\up{x,\La,\frac 1 2}} R_{\bom,\La_\ell\up{x}}(E)\Chi_{x}}+ \gamma_{E} \ell^{d-1} \e^{-m  \norm{x-x_1} }\norm{\Chi_y R_{\bom,\Lambda}(E) \Chi_{x_1}}
\eeq
for some $x_1 \in  \Upsilon_{\La_\ell\up{x}}^{\Lambda} $, so $ \frac \ell {10} -\tfrac {\delta_{+}+1}{2}\le \norm{x-x_1} \le  \ell -\frac { \delta_{+}+1}{2} $,  hence \eq{x1dist},  and we have \eq{goodcasem} with \eq{mpr}.

\item  Since $x \notin \Theta $ and $\norm{x-y} \ge  \ell $, we  apply \eq{goodcasem} repeatedly to get  
\beq\label{goodcasem3}
\norm{\Chi_y R_{\bom,\Lambda}(E) \Chi_x} \le  \e^{-m^\pr \sum_{i=1}^n
\norm{x_{i-1} -x_i}}\norm{\Chi_y R_{\bom,\Lambda}(E) \Chi_{x_n}},
\eeq
with $x_0=x$ and  $x_i \in  \Upsilon_{\La_\ell\up{x_{i-1}}}^{\Lambda} $, $ i=1,2,\ldots,n$, where $n\in \N$ is such that $x_i \notin \Theta $ and $\norm{x_i-y} \ge  \ell $ for $ i=0,1,\ldots,n-1$, and either $x_n\in \Theta $ or $\norm{x_n-y} <  \ell $.   Since
$\norm{x_0 - x_n} \le \sum_{i=1}^n
\norm{x_{i-1} -x_i}$, \eq{goodcasem2} follows.

\item   It follows from Lemma~\ref{lemjgood} that for all $x \notin \Theta$ the box $\La_\ell\up{x}$ is $(\bom,E^\pr,m,\varsigma)$-jgood.  Thus, given $x \notin \Theta$,  we apply Lemma~\ref{lemEDI} with the  box $\La_\ell\up{x}$ to get  \eq{EDI22}.   To prove  \eq{EDI2}, we proceed similarly to the proof of \eq{goodcasem2},  applying  Lemma~\ref{lemEDI} repeatedly. 

\end{enumerate}

Note that in (iii) the constance $C$ in \eq{mpr} depends on $E^{\pr}$.  Since $\abs{E^{\pr}-E}\le 1$,  we can fix a constant $C=C_{E,d, V_{\mathrm{per}}}$, locally bounded in $E$, that works for all the conclusions of the lemma.
\end{proof}

 The following lemma will play an important role in the multiscale analysis. We use the notation given in \eq{LambdaupL}.

 \begin{lemma} \label{lemkeyMSA} Fix a  configuration  $\bom \in{\Omega}$ and an energy $E\in \C$.  Consider a box $\Lambda=\Lambda_L$ and   let $\varsigma,\rho, \kappa,\tau \in ]0,1[$,  $\ell= L^\rho$,  $m\ge \ell^{-\tau}$, $K,K^{\pr} \in \N$, where
 \beq\label{rhovstau}
 \kappa \vs >   {\tau}\rho.
 \eeq
Suppose   there exist   $ \Theta=\sqcup_{j=1}^K \Theta_j \subset \Lambda$  satisfying the following conditions:
\begin{enumerate}
\item There exist disjoint boxes  $\Lambda_j= \Lambda_{L_j}(y_j) \subset  \Lambda$ with  $L^{\kappa}\le L_{j}\le K^{\pr }L^{\kappa}$, $j=1,2,\ldots,K$, 
such that     
\beq\label{Thetajs}
\Theta_j \subset \Lambda_j\up{\Lambda, \frac {L^{\kappa}} {10}},
\eeq
and
\beq\label{wegL1}
\| R_{\bom,  \Lambda_{j}}(E) \| \le \e^{L^{\kappa(1-\varsigma)}} .
\eeq

\item  For all $x \in \La \setminus \Theta$ there exists a $(\bom,E,m,\varsigma)$-good box $\La_\ell\up{x}\subset \La$ such that $\La_{\frac {\ell}5}(x) \cap \La \subset \La_\ell\up{x}$.

\end{enumerate}
Then the box $\La$ is     $(\bom,E,M,\varsigma)$-good for $L\ge L_{d,E,\Vper,\vs,K,\tau,\kappa}$, where 
\beq \label{massM}
M \ge m(1 - C\pa{L^{\kappa-1} + L^{\rho \tau -\kappa \zeta}}) \ge L^{-\tau}
\eeq
and   $C=C_{E,d,  V_{\mathrm{per}},K,K^{\pr}}$ is locally bounded in $E$.
 \end{lemma}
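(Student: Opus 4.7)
The plan is to iterate the single-step good-box estimate from Lemma~\ref{lemSLI}(\ref{sublemSLI2}) to propagate exponential decay of the resolvent, bridging across $\Theta$ by applying Lemma~\ref{lempreSLI} to each bad box $\La_j$ equipped with its a priori norm bound \eq{wegL1}. Starting from $x\in\La$ with $\|x-y\|\ge L/100$ and $x\notin\Theta$, \eq{goodcasem2} replaces $x$ by a point $x^\pr$ that either lies in $\Theta$ or is $\ell$-close to $y$, at the cost of a factor $\e^{-m^\pr\|x-x^\pr\|}$ with $m^\pr$ as in \eq{mpr}. When $x^\pr\in\Theta_j$, the inclusion $\Theta_j\subset\La_j^{\La,L^\kappa/10}$ in \eq{Thetajs} places $x^\pr$ deep inside $\La_j$, so (for $L$ large so that $L^\kappa/10\ge\delta_++3$) we may apply \eq{yRxall} with the interior box $\La_j$ to extract a factor $\gamma_E L_j^{d-1}\|R_{\bom,\La_j}(E)\|\le CL^{\kappa(d-1)}\e^{L^{\kappa(1-\vs)}}$, moving to a point $x^{\pr\pr}\in\Upsilon_{\La_j}^{\La}$ on the $\tfrac{\delta_++1}{2}$-thin boundary layer of $\La_j$. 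Since the $\La_j$ are disjoint and $x^{\pr\pr}$ lies on the outer rim of $\La_j$ (far from any $\Theta_i$), we have $x^{\pr\pr}\notin\Theta$, and the good-box iteration can be resumed from $x^{\pr\pr}$.

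The decisive accounting is the following. Each bad-box crossing traverses a distance of at least $L^\kappa/10-\tfrac{\delta_++1}{2}\gtrsim L^\kappa/10$, and had this distance been covered by good boxes the gained factor would have been of order $\e^{-m^\pr L^\kappa/10}\le \e^{-L^{\kappa-\rho\tau}/10}$; the bad-box crossing instead costs a factor of order $L^{\kappa(d-1)}\e^{L^{\kappa(1-\vs)}}=\e^{L^{\kappa-\kappa\vs}+O(\log L)}$. Hypothesis \eq{rhovstau}, namely $\kappa\vs>\rho\tau$, is precisely what makes the would-be good-box gain outstrip the bad-box cost, so the crossing degrades the effective decay rate by a factor $(1-O(L^{\rho\tau-\kappa\vs}))$. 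Since $x^{\pr\pr}$ lies outside $\La_j^{\La,L^\kappa/10}$, the chain cannot reenter $\La_j$, and disjointness of the $\La_j$ bounds the total number of bad-box crossings by $K$. Handling the boundary cases $x\in\Theta$ or $y\in\Theta$ by beginning (or ending) the chain with a bad-box step via \eq{yRxall}, one obtains the pointwise decay \eq{good} with a rate $M$ satisfying \eq{massM}: the $L^{\kappa-1}$ term in \eq{massM} encapsulates the polynomial prefactors and the finite number $K$ of crossings, while the $L^{\rho\tau-\kappa\vs}$ term comes from each individual crossing, and both combine to keep $M\ge L^{-\tau}$ once $L$ is large.

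For the operator-norm bound \eq{weg}, I would use the Schur test $\|R_{\bom,\La}(E)\|\le (\sup_y \sum_x \|\Chi_x R_{\bom,\La}(E)\Chi_y\|)^{1/2}(\sup_x\sum_y \|\cdots\|)^{1/2}$ with $x,y$ ranging over unit-cube centers in $\La$. Terms with $\|x-y\|\ge L/100$ are bounded by the decay just proved and contribute at most a polynomial in $L$. Terms with $\|x-y\|<L/100$ are treated locally: if $y\in\La_j$ then $\|\Chi_x R_{\bom,\La}(E)\Chi_y\|$ is bounded via \eq{yRxall} applied to $\La_j$ by $\e^{L^{\kappa(1-\vs)}}$ plus a controlled decay tail; if $y\notin\Theta$ we use instead the $\ell$-scale good box at $y$, whose resolvent has norm $\le \e^{\ell^{1-\vs}}=\e^{L^{\rho(1-\vs)}}$, together with Lemma~\ref{lemSLI}(\ref{sublemSLI}). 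Summing these local contributions over the at most $L^d$ nearby unit cubes yields a total $\le L^{O(1)}\e^{L^{\kappa(1-\vs)}}$, which is dominated by $\e^{L^{1-\vs}}$ once $L$ is large since $\kappa<1$. This proves \eq{weg} and concludes that $\La$ is $(\bom,E,M,\vs)$-good.

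The main technical obstacle is the bookkeeping of compounded losses: the good-box step \eq{mpr} introduces a correction $C(\log\ell)\ell^{\tau-1}$ at each of the $\sim L^{1-\rho}$ chained steps, and each of the $K$ bad-box crossings introduces a correction of the form $L^{\rho\tau-\kappa\vs}$ together with a polynomial factor $L^{\kappa(d-1)}$ absorbed into a $\log$-loss; verifying that all these corrections collapse into the single expression \eq{massM} and that $M\ge L^{-\tau}$ — which relies crucially on $\kappa<1$, on $\rho\tau<\tau$ (i.e., $\rho<1$), and on the strict inequality \eq{rhovstau} — is the delicate estimation to carry out. A secondary subtlety is ensuring that after each bad-box exit the new base point $x^{\pr\pr}$ admits a good box $\La_\ell^{(x^{\pr\pr})}\subset\La$ fitting the hypotheses of Lemma~\ref{lemSLI}, which follows from $\ell=L^\rho\ll L^\kappa$ and from $x^{\pr\pr}$ being on the inner side of $\partial^\La\La_j$.
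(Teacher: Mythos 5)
Your decay-chain picture and the accounting of correction factors (the $L^{\rho\tau-\kappa\vs}$ loss from each bad-box crossing, governed by \eqref{rhovstau}, and the $L^{\kappa-1}$ loss from the distance swallowed by crossings) match the paper's computations, but the logical order of your two steps is reversed in a way that leaves a real gap. You propose to prove the decay estimate \eqref{good} first and then deduce the norm bound \eqref{weg} by a Schur test, but the chain from $x$ to $y$ does not terminate cleanly: every application of Lemma~\ref{lemSLI}\ref{sublemSLI2} or \eqref{yRxall} produces a leftover factor $\|\Chi_y R_{\bom,\Lambda}(E)\Chi_{x_n}\|$ with $x_n$ near $y$ or in $\Theta$, and the only thing that quantity can be compared to is $\|R_{\bom,\Lambda}(E)\|$ itself — precisely what \eqref{weg} is supposed to bound. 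The paper therefore proves \eqref{weg} \emph{before} \eqref{good}: it bounds each $\|\Chi_y R_{\bom,\Lambda}(E')\Chi_x\|$ by a constant of order $\e^{L^{\kappa(1-\vs)}}$ plus $\tfrac12 L^{-2d}\|R_{\bom,\Lambda}(E')\|$, sums over the $\lesssim L^{2d}$ cube pairs, and rearranges this Neumann-type self-consistent inequality to extract $\|R_{\bom,\Lambda}(E')\|\le 2L^{2d}\e^{L^{\kappa(1-\vs)}}$. Only then does the chain argument for \eqref{good} close, because its terminal factor is now controlled by \eqref{msweg}. Your Schur test could in principle be cast as the same rearrangement argument, but as written the near-field terms you call ``local'' still contain the full resolvent through \eqref{yRxall}, so the circularity is not dispelled by switching from a sup bound to a Schur sum.

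A second omission: the rearrangement $\|R\|\le a+\tfrac12\|R\|\Rightarrow\|R\|\le 2a$ is legitimate only when $\|R\|<\infty$ a priori, which fails when $E\in\sigma(H_{\bom,\Lambda})$. The paper works at nearby energies $E'$ with $0<|E'-E|<\eps$ where $H_{\bom,\Lambda}$ has no spectrum, upgrades the good boxes to jgood boxes (Lemma~\ref{lemjgood}) and \eqref{wegL1} to \eqref{wegL15}, runs the self-consistent estimate, and finally takes $E'\to E$ via \eqref{exttoE}. This limit step is not cosmetic; without it the whole argument for \eqref{weg} is vacuous at spectral energies. Finally, a smaller point: your claim that after exiting $\La_j$ at $x''\in\Upsilon_{\La_j}^{\La}$ the chain ``cannot reenter $\La_j$'' is not true — since $\ell\ll L^\kappa$, the good-box chain started at $x''$ is free to wander back into $\La_j$ and hit $\Theta_j$ again. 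The paper instead extracts from the raw chain a subsequence $j_0,j_1,\ldots,j_r$ of \emph{distinct} box indices (steps (a)--(c) after \eqref{goodcasem45}) and lower-bounds $\sum\|x_{i_s-1}^{(0)}-x_{i_s}\|$ by $\|x-y\|-(KK'+1)L^\kappa$; revisits of the same $\La_j$ are allowed in the chain but simply dropped from the accounting.
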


 \begin{proof}

 We start by proving  \eq{weg} for $\La$. Since $H_{\bom,\Lambda}$ has discrete spectrum,
 there exists $\eps>0$ such that $E^{\pr}\notin \sigma(H_{\bom,\Lambda}) $ if  $0 <\abs{E^{\pr}-E}< \eps$.  We take $\eps \le \e^{-2 m\ell}$, so the boxes $\La_\ell\up{x}$ given in  condition (ii) are $(\bom,E^{\pr},m,\varsigma)$-jgood by Lemma~\ref{lemjgood}, and small enough such that it follows from  \eq{wegL1} that
 \beq\label{wegL15}
\| R_{\bom,  \Lambda_{j}}(E^{\pr}) \| \le 2\e^{L^{\kappa(1-\varsigma)}} \quad \text{for}\quad  j=1,2,\ldots,K.
\eeq

We will estimate $\norm{ R_{\bom,\Lambda}(E^{\pr}) }$ for  $0 <\abs{E^{\pr}-E}< \eps$.  
Suppose either $x$ or $y$ are not in $\Theta$, say $x \notin \Theta$.  In this case 
we apply Lemma~\ref{lemSLI}\ref{sublemSLI}. It  follows from \eq{goodcasem},  appropriately modified for jgood boxes, Definition~\ref{defjgood}, and \eq{x1dist}, that
\beq\begin{split}\label{goodcase}
\norm{\Chi_y R_{\bom,\Lambda}(E^{\pr}) \Chi_x} &\le 2\e^{\ell^{1 -\vs}} + 2\e^{-m^\pr \frac \ell {11}}\norm{R_{\bom,\Lambda}(E^{\pr})}\\
&\le 2\e^{\ell^{1 -\vs}} + 2 \e^{-\frac 1 {15} \ell^{1 -\tau} }\norm{R_{\bom,\Lambda}(E^{\pr})}\le 2 \e^{\ell^{1 -\vs}} + \tfrac 1 2 L^{-2d}\norm{R_{\bom,\Lambda}(E^{\pr})}
\end{split}\eeq
 for large $L$.   If $x \in \Theta$ and   $y \notin \Theta$ we use $\norm{\Chi_y R_{\bom,\Lambda}(E^{\pr}) \Chi_x}=\norm{\Chi_x R_{\bom,\Lambda}(E^{\pr}) \Chi_y}$ to get \eq{goodcase}.  Suppose now
 $x, y \in  \Theta$, say $x \in \Theta_s$.  Then we apply  \eq{yRxall} with the box $\Lambda_s$, and use \eq{wegL15}, getting
 \beq\label{bad11}
\norm{\Chi_y R_{\bom,\Lambda}(E^{\pr}) \Chi_x} \le 2\e^{L^{\kappa(1 -\vs)}} +2 \gamma \pa{K^{\pr}L^{\kappa}}^{d-1}  \e^{L^{\kappa(1 -\vs)}}\norm{\Chi_y R_{\bom,\Lambda}(E^{\pr}) \Chi_{x_0}},
\eeq
where $x_0 \in  \Upsilon_{\Lambda_s}^{\Lambda} $ and $\gamma=\gamma_{E+1}$.  Note that \eq{Thetajs} implies $\dist\set{x_0,\Theta}\ge \frac {L^{\kappa}} {11}$;  in particular, $\norm{x_{0} -y}\ge \frac {L^{\kappa}} {11}$ as $y\in \Theta$.  We can now use  
Lemma~\ref{lemSLI}\ref{sublemSLI2}, with $\frac{m^{\pr}} 2$ replacing $m^{\pr}$  in \eq{goodcasem2} to compensate for using jgood boxes instead of good boxes,  to conclude that
\beq \label{bad12}
\norm{\Chi_y R_{\bom,\Lambda}(E^{\pr}) \Chi_{x_0}}\le \e^{-\frac{m^\pr} 2 \norm{x_0 - x^\pr} }\norm{\Chi_y R_{\bom,\Lambda}(E^{\pr}) \Chi_{x^\pr}}\le \e^{-m^\pr  \frac {L^{\kappa}} {30} }\norm{\Chi_y R_{\bom,\Lambda}(E^{\pr}) \Chi_{x^\pr}} ,
\eeq
where $x^\pr$ satisfies \eq{x1dist2}, so $ \norm{x_0 - x^\pr} \ge   \frac {L^{\kappa}} {11} -  \ell >  \frac {L^{\kappa}} {15} $ for large $L$. From \eq{bad11},  \eq{bad12}, and \eq{rhovstau}, we conclude that,  for large $L$, we have 
\begin{align}\notag
\norm{\Chi_y R_{\bom,\Lambda}(E^{\pr}) \Chi_x}& \le2 \e^{L^{\kappa(1 -\vs)}} + 2\gamma \pa{K^{\pr}L^{\kappa}}^{d-1}  \e^{L^{\kappa(1 -\vs)}}\e^{-\frac 1 {60}  {\ell^{-\tau}} {L^{\kappa}} }\norm{ R_{\bom,\Lambda}(E^{\pr}) }\\
& \le 2 \e^{L^{\kappa(1 -\vs)}} +\tfrac 1 2 L^{-2d}\norm{R_{\bom,\Lambda}(E^{\pr}
)}
 .\label{badcase}
\end{align}
Combining \eq{goodcase} and \eq{badcase} we get
\beq\begin{split}\label{badcase2}
\norm{ R_{\bom,\Lambda}(E^{\pr}) }&\le  L^{2d}\set{2 \e^{L^{\kappa(1 -\vs)}} + \tfrac 1 2 L^{-2d}\norm{R_{\bom,\Lambda}(E^{\pr})}}\\
& \le 2 L^{2d} \e^{L^{\kappa(1 -\vs)}}+\tfrac 1 2 \norm{R_{\bom,\Lambda}(E^{\pr})},
\end{split}\eeq
and hence, for large $L$,
\beq\label{msweg345}
\norm{ R_{\bom,\Lambda}(E^{\pr}) }\le 2   L^{2d}  \e^{L^{\kappa(1 -\vs)}}\le \e^{L^{1 -\vs}}.
\eeq

We now conclude that for large $L$ we have
\beq\label{msweg}
\norm{ R_{\bom,\Lambda}(E) }=\lim_{E^{\pr}\to E}\norm{ R_{\bom,\Lambda}(E^{\pr}) }\le \e^{L^{1 -\vs}}
\eeq

To finish the proof, we need to prove \eq{good} for the box $\Lambda$.

\begin{sublemma} \label{sublemallcases} Given $s \in \set{1,2,\ldots,K}$, let $x,y \in \Lambda$ with $x\in \Theta_s $ and $\norm{x-y}\ge  {L_{s}}$.
 Then there exist  $x\up{0} \in  \Upsilon_{\Lambda_s}^{\Lambda} $ and   $x^\pr \in \Lambda$, with  $x^\pr$ satisfying \eq{x1dist2} and
\beq \label{x0xy}
\tfrac {1 } {11} L^{\kappa}\le \norm{x\up{0} - x } \le L_{s  }- \tfrac {1 } {10} L^{\kappa} \quad \text{and}\quad  \norm{x\up{0} -y} \ge  \tfrac{1 }{10} L^{\kappa},
\eeq 
such that
 \beq\label{bad1125}
\norm{\Chi_y R_{\bom,\Lambda}(E) \Chi_{x}} \le  \e^{-m^{\pr\pr} \norm{x\up{0} - x^{\pr} } }\norm{\Chi_y R_{\bom,\Lambda}(E) \Chi_{x^{\pr}}},
\eeq
where
\beq  \label{mprpr}
m^{\pr\pr}= m^\pr\pa{1- C \ell^\tau L^{-\kappa \vs}} \quad \text{with} \quad C=C_{E,d,V_{\mathrm{per}},K^{\pr}} \quad \text{locally bounded in $E$}.
\eeq

\end{sublemma}

\begin{proof}
 Let $x,y \in \Lambda$ with $x\in \Theta_s $ and $\norm{x-y}\ge  L_{s}$.  
We proceed as in \eq{bad11} and  \eq{bad12} (note that we are now working at energy $E$, so we have \eq{wegL1} and   condition (ii) holds), getting
 \beq\begin{split}\label{bad112}
\norm{\Chi_y R_{\bom,\Lambda}(E) \Chi_{x}}& \le \gamma_{E} \pa{K^\pr L^\kappa}^{d-1}  \e^{L^{\kappa(1 -\vs)}}\norm{\Chi_y R_{\bom,\Lambda}(E) \Chi_{x\up{0}}}\\
& \le \gamma_{E} \pa{K^\pr L^\kappa}^{d-1}\e^{L^{\kappa(1 -\vs)}} \e^{-m^\pr \norm{x\up{0} - x^{\pr} } }\norm{\Chi_y R_{\bom,\Lambda}(E) \Chi_{x^{\pr}}}\\
& \le  \e^{-m^{\pr\pr} \norm{x\up{0} - x^{\pr} } }\norm{\Chi_y R_{\bom,\Lambda}(E) \Chi_{x^{\pr}}},
\end{split}\eeq
where $x\up{0} \in  \Upsilon_{\Lambda_s}^{\Lambda} $, so   we have \eq{x0xy}, and  
 $x^\pr \in \Lambda $ satisfies \eq{x1dist2},
  so $ \norm{x\up{0} - x^\pr} \ge   \frac {L^{\kappa}} {11} - \ell >  \frac {L^{\kappa}} {15} $, 
and thus $m^{\pr\pr}$ is as in \eq{mprpr}.  
\end{proof}

Now let  $x,y \in \Lambda$ with $\norm{x-y}\ge\frac {L}{100} \ge K^{\pr}  {L^{\kappa}}$. If $x\notin \Theta$, we  apply Lemma~\ref{lemSLI}\ref{sublemSLI2}, obtaining $x^\pr$ satisfying \eq{x1dist2}.  If $\norm{x^\pr -y }<K^{\pr} L^{\kappa}$, we stop.   Otherwise we then start from $x^\pr$ and apply Sublemma~\ref{sublemallcases}   repeatedly, until we get\beq\label{goodcasem45}
\norm{\Chi_y R_{\bom,\Lambda}(E) \Chi_x} \le  \e^{-m^{\pr\pr} \sum_{i=1}^n
\norm{x_{i-1}\up{0} -x_i}}\norm{\Chi_y R_{\bom,\Lambda}(E) \Chi_{x_n}},
\eeq
with $x=x_0=x_0\up{0}$, $x_1=x^\pr$,  $x_{i-1}\up{0}$ and $x_i$ correspond to  $x\up{0} $ and $x^\pr$ in Sublemma~\ref{sublemallcases}  for $x_{i-1}, y$ for  $ i=2,\ldots,n$, 
and  $n\in \N$ is such that  $\norm{x_i-y}\ge K^{\pr} L^{\kappa}$  (and hence $x_i \in \Theta$) for $ i=1,2,\ldots,n-1$, and  $\norm{x_n-y} <K^{\pr} L^{\kappa}$. If $x \in \Theta$, we start directly with   Sublemma~\ref{sublemallcases} obtaining also \eq{goodcasem45} but with $x=x_0$, and 
$x_{0}\up{0}$ and $x_1$ corresponding to  $x\up{0} $ and $x^\pr$ in Sublemma~\ref{sublemallcases}  for $x_{0}$ and $ y$.

Now let us choose distinct  $j_0,j_1,\ldots,j_r \in \set{0,1,2,\ldots,K+1}$, where $0\le r\le  K+1$, as follows:
\begin{enumerate}
\item[(a)]   If $x\notin \Theta$, we set $j_0=0$ and  $\Lambda_0=\Theta_0=\set{x}$.  If $x\in \Theta$, $j_0$ is determined by $x \in \Theta_{j_0}$.  Set also $\Theta_{K+1}=\set{x_n}$.

\item[(b)]  Pick $j_1\not= j_0$  such that for some $i_1\in \set{1,2,\ldots,n}$ we have
$x_{i_1-1}\up{0} \in  \Lambda_{j_0}$ and $x_{i_1} \in  \Theta_{j_1}$

\item[(c)]  Given $j_0,j_1,\ldots, j_{s}$,  if $i_{s}=n$, $r=s$, so stop. If not,   pick $j_{s+1} \notin \set{j_0,j_1,\ldots, j_{s}}$ such that  that for some $i_{s+1} \in  \set{1,2,\ldots,n}$ we have $x_{i_{s+1} -1}\up{0} \in \Lambda_{j_{s} }$ and   $x_{i_{s+1}}\in \Theta_{j_{s+1}}$.
\end{enumerate}
It then follows from \eq{goodcasem45} that
\beq\label{goodcasem458}
\norm{\Chi_y R_{\bom,\Lambda}(E) \Chi_x} \le  \e^{-m^{\pr\pr} \sum_{s=1}^r
\norm{x_{i_s-1}\up{0} -x_{i_s}}}\norm{R_{\bom,\Lambda}(E) }.
\eeq
By our construction,
\beq
\sum_{s=1}^r \norm{x_{i_s-1}\up{0} -x_{i_s}}\ge \sum_{s=1}^r
 \dist\set{\Lambda_{s-1},\La_s}
 \ge \norm{x - x_n} - KK^{\pr}L^{\kappa}\ge \norm{x - y} - (KK^{\pr}+1) L^{\kappa}.
\eeq
It follows, using also \eq{msweg},  that
\beq\label{goodcasem459}
\norm{\Chi_y R_{\bom,\Lambda}(E) \Chi_x} \le  \e^{-m^{\pr\pr} \pa{ \norm{x - y} - (KK^{\pr}+1) L^{\kappa}}}
 \e^{L^{1 -\vs}}\le  \e^{-M \pa{ \norm{x - y}}},
\eeq
where
\beq \label{massM111}
M= m^{\pr\pr}\pa{1- C \pa{ L^{\kappa} L^{-1} + \ell^\tau L^{-\vs}}}, 
\eeq
with   a constant $ C=C_{E,d,V_{\mathrm{per}},K,K^{\pr}}$ locally bounded in $E$.

The lemma is proved.
\end{proof}

\subsection{Suitable coverings of  boxes and annuli}

\subsubsection{Suitable coverings of  boxes}
  
\begin{definition}\label{defsuitcov}
Given scales $\ell <  L$,  a  \emph{suitable $\ell$-covering} of a box $\Lambda_{L}(x)$ is
a collection of boxes $\Lambda_{\ell}$ of the form
\begin{align}\label{standardcover}
\cG_{\Lambda_{L}(x)}^{(\ell)}= \{ \Lambda_{\ell}(r)\}_{r \in \G_{\Lambda_{L}(x)}^{(\ell)}},
\end{align}
where
\beq  \label{bbG}
\G_{\Lambda_{L}(x)}^{(\ell)}:= \{ x + \alpha\ell  \Z^{d}\}\cap \Lambda_{L}(x)\quad 
\text{with}  \quad \alpha \in \br{\tfrac {3} {5},\tfrac {4} {5}}   \cap \set{\tfrac {L-\ell}{2 \ell n}; \, n \in \N }.
\eeq
\end{definition}

\begin{lemma}\label{lemcovering} Let $\ell \le \frac  L   6$. Then every box   $\Lambda_{L}(x)$  has a   suitable $\ell$-covering, and   for any  suitable $\ell$-covering
  $\cG_{\Lambda_{L}(x)}^{(\ell)}$  of $\Lambda_{L}(x)$ we have 
  \begin{align}\label{nestingproperty}
&\Lambda_{L}(x) =\bigcup_{r \in\G_{\Lambda_{L}(x)}^{(\ell)} } \Lambda_{\ell}(r),\\ \label{bdrycover}
&\text{for each $y \in \Lambda_{L}(x)$ there is $r \in \G_{\Lambda_{L}(x)}^{(\ell)}$ with $\Lambda_{\frac{\ell} 5}(y)\cap \Lambda_{L}(x) \subset \Lambda_{\ell}(r)$},\\
\label{freeguarantee}
&\Lambda_{\frac{\ell}{5}}(r)\cap \Lambda_{\ell}(r^{\prime})=\emptyset
\quad \text{for all} \quad r,r^{\pr}\in  x + \alpha\ell  \Z^{d}, \  r\ne r^{\prime},\\ \label{number}
&  (\tfrac{L} {\ell})^{d}\le \#\G_{\Lambda_{L}(x)}^{(\ell)}= \pa{ \tfrac{L-\ell} {\alpha \ell}+1}^d \le   (\tfrac{2L} {\ell})^{d}.
\end{align}
Moreover,  given $y \in  x + \alpha \ell  \Z^{d}$ and $n \in \N$, it follows that
\beq \label{nesting}
\Lambda_{(2  n \alpha  +1) \ell}(y)= \bigcup_{r \in  \{ x + \alpha \ell \Z^{d}\}\cap\Lambda_{(2n \alpha  +1) \ell}(y) } \Lambda_{\ell}(r),
\eeq
and  $ \{ \Lambda_{\ell}(r)\}_{r \in  \{ x + \alpha\ell  \Z^{d}\}\cap\Lambda_{(2n\alpha +1) \ell}(y)}$ is a suitable $\ell$-covering of the box $\Lambda_{(2 n\alpha +1) \ell}(y)$.  In particular,
\beq \label{ell5cover}
\text{for each $y \in \Z^{d}$ there is $r \in  x + \alpha\ell  \Z^{d}$ with $\Lambda_{\frac{\ell} 5}(y) \subset \Lambda_{\ell}(r)$}.
\eeq
\end{lemma}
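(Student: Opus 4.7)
The plan is first to establish the existence of a suitable $\alpha$, and then verify each assertion by elementary geometry after translating so that $x=0$. For existence we need an integer $n$ with $\alpha=\tfrac{L-\ell}{2\ell n}\in[\tfrac 35,\tfrac 45]$, equivalently $n\in\bigl[\tfrac{5(L-\ell)}{8\ell},\tfrac{5(L-\ell)}{6\ell}\bigr]$. This interval has length $\tfrac{5(L-\ell)}{24\ell}$, which exceeds $\tfrac{25}{24}>1$ because $L\ge 6\ell$ forces $L-\ell\ge 5\ell$; hence an integer $n\ge 1$ exists and yields an admissible $\alpha$. Fix such a pair $(\alpha,n)$. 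The identity $L=(2\alpha n+1)\ell$ then gives $\G_{\Lambda_L(0)}^{(\ell)}=\{\alpha\ell k:k\in\Z^d,\ \norm{k}\le n\}$, from which $\#\G_{\Lambda_L(0)}^{(\ell)}=(2n+1)^d=\bigl(\tfrac{L-\ell}{\alpha\ell}+1\bigr)^d$; the numerical bounds in \eqref{number} then follow from $\alpha\in[\tfrac35,\tfrac45]$.

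The three key geometric facts to verify are: (a) $\alpha<1$, so consecutive boxes $\Lambda_\ell(r)$ along each coordinate axis overlap, and the outer boxes (centered at $\pm\alpha\ell n$) extend exactly to the faces of $\Lambda_L(0)$, yielding \eqref{nestingproperty}; (b) $\alpha>1/5$, so the lattice spacing $\alpha\ell$ exceeds $\ell/5$ and the boxes $\Lambda_{\ell/5}(r)$ are pairwise disjoint, yielding \eqref{freeguarantee}; (c) for every $y\in\R^d$ there is $r\in x+\alpha\ell\Z^d$ with $\norm{y-r}\le\alpha\ell/2\le 2\ell/5$, so that $\Lambda_{\ell/5}(y)\subset \Lambda_\ell(r)$ since $\alpha/2+1/10\le 1/2$, giving \eqref{ell5cover}.

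For \eqref{bdrycover}, given $y\in\Lambda_L(0)$, choose $r\in\G_{\Lambda_L(0)}^{(\ell)}$ coordinatewise nearest to $y$ subject to $|r_i|\le\alpha\ell n$. On coordinates with $|y_i|\le\alpha\ell n$, the estimate from (c) gives $\bigl(y_i-\tfrac\ell{10},y_i+\tfrac\ell{10}\bigr)\subset\bigl(r_i-\tfrac\ell 2,r_i+\tfrac\ell 2\bigr)$. On the coordinates with $\alpha\ell n<|y_i|<L/2$, the nearest admissible lattice point is $r_i=\pm\alpha\ell n$, and one checks directly, using $L/2=\alpha\ell n+\ell/2$, that
\begin{equation*}
\bigl(y_i-\tfrac\ell{10},y_i+\tfrac\ell{10}\bigr)\cap\bigl(-\tfrac L2,\tfrac L2\bigr)\subset\bigl(r_i-\tfrac\ell 2,r_i+\tfrac\ell 2\bigr).
\end{equation*}
Taking the product over coordinates yields \eqref{bdrycover}.

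Finally, for \eqref{nesting}: given $y\in x+\alpha\ell\Z^d$, set $L':=(2\alpha n+1)\ell$; then $\tfrac{L'-\ell}{2\ell n}=\alpha$, so the same $\alpha$ is admissible for the box $\Lambda_{L'}(y)$. Since $y+\alpha\ell\Z^d=x+\alpha\ell\Z^d$, the collection $\{\Lambda_\ell(r)\}_{r\in(x+\alpha\ell\Z^d)\cap\Lambda_{L'}(y)}$ is a suitable $\ell$-covering of $\Lambda_{L'}(y)$, and the union identity in \eqref{nesting} follows by applying \eqref{nestingproperty} to this covering. The only mildly delicate step is the boundary case in \eqref{bdrycover}, where $y$ may lie between the outermost lattice point and the face of $\Lambda_L(x)$; everything else is routine bookkeeping with the relations $L=(2\alpha n+1)\ell$ and $\alpha\in[\tfrac35,\tfrac45]$.
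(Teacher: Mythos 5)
Your overall approach is the same as the paper's (which only records the observations that $\ell\le L/6$ makes $\big[\tfrac35,\tfrac45\big]\cap\{\tfrac{L-\ell}{2\ell n}:n\in\N\}$ nonempty, that $\alpha\le\tfrac45$ yields \eqref{bdrycover} and \eqref{ell5cover}, and that $\alpha\ge\tfrac35$ yields \eqref{freeguarantee}), and most of your computations are correct, including the delicate boundary case in \eqref{bdrycover} and the observation that $\alpha=\tfrac{L'-\ell}{2\ell n}$ remains admissible for the box $\Lambda_{(2n\alpha+1)\ell}(y)$ in \eqref{nesting}.

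However, your item (b) is wrong as stated. You claim ``$\alpha>1/5$, so $\ldots$ the boxes $\Lambda_{\ell/5}(r)$ are pairwise disjoint, yielding \eqref{freeguarantee}.'' But \eqref{freeguarantee} is not disjointness of the small boxes $\Lambda_{\ell/5}(r)$ and $\Lambda_{\ell/5}(r')$; it asserts disjointness of the small box $\Lambda_{\ell/5}(r)$ from the \emph{large} box $\Lambda_\ell(r')$. In the sup norm that requires the centers to be separated in some coordinate by at least $\tfrac{\ell}{10}+\tfrac{\ell}{2}=\tfrac{3\ell}{5}$. Distinct points of $x+\alpha\ell\Z^d$ differ in some coordinate by a nonzero integer multiple of $\alpha\ell$, so the actual condition needed is $\alpha\ell\ge\tfrac{3\ell}{5}$, i.e.\ $\alpha\ge\tfrac35$ — exactly the lower bound imposed in the definition of a suitable covering. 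The condition $\alpha>1/5$ you invoke only gives disjointness of the small boxes from each other, which is a strictly weaker statement and does not imply \eqref{freeguarantee}. The fix is immediate since you already have $\alpha\ge\tfrac35$, but as written the derivation of \eqref{freeguarantee} is incorrect.
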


\begin{proof}
It suffices to note that $\ell \le \frac  L   6$ ensures
$\br{\tfrac {3} {5},\tfrac {4} {5}}   \cap \set{\tfrac {L-\ell}{2 \ell n}; \, n \in \N }\not= \emptyset$,
$\alpha\le \frac 4 5$ gives \eqref{bdrycover} and \eq{ell5cover}, and  $\alpha\ge\frac 35$ yields  \eqref{freeguarantee}. 
\end{proof} 

To fixate ideas we make the following definition.

\begin{definition}\label{defstdcov}
The  \emph{standard $\ell$-covering} of a box $\Lambda_{L}(x)$ is the unique   suitable $\ell$-covering of  $\Lambda_{L}(x)$ with 
\beq\label{alphaL}
\alpha =\alpha_{L,\ell}:= \max \set{\br{\tfrac {3} {5},\tfrac {4} {5}}   \cap \set{\tfrac {L-\ell}{2 \ell n}; \, n \in \N }}.
\eeq
\end{definition} 

We now consider standard coverings by good boxes.

 \begin{definition}\label{defpgood} Consider a configuration  $\bom \in{\Omega}$,  an energy $E\in \R$,  a rate of decay $m>0$,  $0< \varsigma< 1$, and $\eta >0$.  A   box  $\Lambda_{L}$ is said to be $(\bom,E,m,\varsigma,\eta)$- \emph{pgood} (for predecessor of good)  if, letting $\ell=L^{\frac 1{1 + \eta}}$, every box $\Lambda_{\ell}$ in the  standard $\ell$-covering of $\La_{L}$ is 
 $(\bom,E,m,\varsigma)$-good.
  \end{definition}

  \begin{lemma} \label{lempggodtogood}  Suppose
 the box  $\Lambda_{L}$ is $(\bom,E,m,\varsigma,\eta)$-pgood for some  $\bom \in{\Omega}$,   $E\in \R$,   $m>0$,  $0< \varsigma< 1$, and $\eta >0$,   set $\ell=L^{\frac 1{1 + \eta}}$, and let $0<\widehat{m}\le m$.  Then, if  $L \ge \widehat{L}_{\vs, \widetilde{m},\eta}$, given  $m_{1} \in [\widehat{m},m]$,  the box $\Lambda_{L}$ is   $(\bom,E^\pr,{M_{1}},\varsigma)$-good for all   energies $E^\pr \in \C$ such that
  $\abs{E^\pr-E}\le \e^{-m_{1}\ell}$, where 
  \beq \label{massM1}
M_{1}= m_{1}\pa{1- C_{d,p,\widehat{m}}  L^{-\frac {\min\set{\vs,\eta}}{1 + \eta}} } .
\eeq
    \end{lemma}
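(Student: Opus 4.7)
The plan is to reduce to Lemma~\ref{lemkeyMSA}, applied to $\Lambda_L$ at the perturbed energy $E'$ with no bad sites ($\Theta = \emptyset$). Set $\ell = L^{1/(1+\eta)}$; by the pgood hypothesis, every box $\Lambda_\ell(r)$ in the standard $\ell$-covering $\cG^{(\ell)}_{\Lambda_L}$ is $(\bom, E, m, \varsigma)$-good. Since $|E' - E| \le \e^{-m_1 \ell}$ with $m_1 \in [\widehat m, m]$, I apply Lemma~\ref{lemjgoodm1} to each such inner box (with the substitutions $L \leftrightarrow \ell$ and $\widetilde m = \widehat m$), which promotes it to an $(\bom, E', m_2, \varsigma)$-jgood box with
\[
m_2 = m_1\bigl(1 - C \widehat m^{-1} \ell^{-\varsigma}\bigr) = m_1\bigl(1 - C \widehat m^{-1} L^{-\varsigma/(1+\eta)}\bigr).
\]
Property \eqref{bdrycover} guarantees that for every $x \in \Lambda_L$ there exists $r \in \G^{(\ell)}_{\Lambda_L}$ with $\Lambda_{\ell/5}(x) \cap \Lambda_L \subset \Lambda_\ell(r)$, so this jgood box serves as the $\Lambda_\ell^{(x)}$ required by hypothesis~(ii) of Lemma~\ref{lemkeyMSA}.

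With $\Theta = \emptyset$, hypothesis~(i) of Lemma~\ref{lemkeyMSA} is vacuous ($K = 0$) and the proof simplifies considerably. I run its argument at energy $E'$ with the $m_2$-jgood boxes in place of good boxes; since Lemmas~\ref{lempreSLI}, \ref{lemEDI} and \ref{lemSLI} only invoke the bounds \eqref{weg}--\eqref{good}, passing from good to jgood costs at most harmless factors of $2$. Concretely, to establish the resolvent bound I approximate $E'$ by $E'' \notin \sigma(H_{\bom, \Lambda_L})$, apply Lemma~\ref{lemSLI}(i) (the $\Theta$-alternative in \eqref{goodcasem} is never triggered), and iterate the bootstrap \eqref{goodcase}--\eqref{msweg345} to obtain
\[
\norm{R_{\bom, \Lambda_L}(E'')} \le 4 L^{2d} \e^{\ell^{1-\varsigma}} = 4 L^{2d} \e^{L^{(1-\varsigma)/(1+\eta)}} \le \e^{L^{1-\varsigma}}
\]
for $L \ge \widehat L_{\varsigma, \widehat m, \eta}$, using $(1-\varsigma)/(1+\eta) < 1 - \varsigma$; letting $E'' \to E'$ gives \eqref{weg} at $E'$. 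For the off-diagonal decay \eqref{good}, given $x, y \in \Lambda_L$ with $\norm{x-y} \ge L/100$, iterated application of Lemma~\ref{lemSLI}(ii) (with $\Theta = \emptyset$) produces a chain of length $O(L/\ell)$ ending in $\Lambda_{2\ell}(y)$; combined with the just-proved resolvent bound, this yields $\norm{\Chi_y R_{\bom, \Lambda_L}(E') \Chi_x} \le \e^{-M_1 \norm{x-y}}$.

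The rate $M_1$ suffers two multiplicative corrections relative to $m_1$: the jgood promotion above costs $(1 - C \widehat m^{-1} L^{-\varsigma/(1+\eta)})$, while the iteration of Lemma~\ref{lemSLI}(ii) (valid for any small fixed $\tau > 0$, since $m_2 \ge \widehat m / 2$ is bounded below) together with the boundary contributions $m^\pr \ell / \norm{x-y} \le 100\, m^\pr L^{-\eta/(1+\eta)}$ and $L^{1-\varsigma}/\norm{x-y} \le 100\, L^{-\varsigma}$ absorbed from the last step cost a further $(1 - C L^{-\eta/(1+\eta)})$ (the $L^{-\varsigma}$ term being already dominated by $L^{-\varsigma/(1+\eta)}$). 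Combining the two yields
\[
M_1 \ge m_1 \bigl(1 - C_{d, p, \widehat m}\, L^{-\min\{\varsigma, \eta\}/(1+\eta)}\bigr),
\]
which is \eqref{massM1}. The main technical point is tracking these two sources of loss in parallel and checking that their sum is indeed $o(1)$ as $L \to \infty$; apart from that, the argument is a mechanical specialization of the proof of Lemma~\ref{lemkeyMSA} to the $\Theta = \emptyset$ case at the perturbed energy $E'$.
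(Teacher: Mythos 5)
Your proof is correct and follows essentially the same route as the paper's: both promote each $\Lambda_\ell$ in the standard $\ell$-covering from $(\bom,E,m,\varsigma)$-good to $(\bom,E',m_2,\varsigma)$-jgood via Lemma~\ref{lemjgoodm1}, then run the argument of Lemma~\ref{lemkeyMSA} with $\Theta=\emptyset$ (the bootstrap \eqref{goodcase}--\eqref{msweg345} for \eqref{weg}, and the iterated chain \eqref{goodcasem45}--\eqref{goodcasem459} for \eqref{good}), tracking the two multiplicative losses — $L^{-\varsigma/(1+\eta)}$ from the jgood promotion and $L^{-\eta/(1+\eta)}$ from the residual boundary step — to arrive at \eqref{massM1}. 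The bookkeeping details match the paper's intermediate rates $m_2$ and $m_3$.
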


  \begin{proof} Let $\Lambda_{\ell}$ be  $(\bom,E,m,\varsigma)$-good and  $E^\pr \in \C$ with $\abs{E^{\pr}-E}\le \e^{-m_{1}\ell}$.   It follows from Lemma \ref{lemjgoodm1} that $\Lambda_{\ell}$ is $(\bom,E^{\pr},m_{2},\varsigma)$-jgood if  $\ell \ge \widetilde{\ell}_{\vs, \widehat{m}}$, with $m_{2}= m_{1}\pa{1 - C \widehat{m}^{-1}\ell^{ -\vs }}.$

Now suppose $\La=\Lambda_{L}$ is $(\bom,E,m,\varsigma,\eta)$-pgood and $\ell \ge \widetilde{\ell}_{\vs, \widehat{m}}$ . We proceed as in Lemma~\ref{lemkeyMSA} (but note that we have $\Theta=\emptyset$).   Proceeding as in \eq{goodcase} and \eq{badcase2},  using the fact that  every box $\Lambda_{\ell}$ in the  standard $\ell$-covering of $\La_{L}$ is 
$ (\bom,E^{\pr},m_{2},\varsigma)$-jgood,
 we get, for $L$ sufficently large,
\beq\begin{split}\label{badcase25}
\norm{ R_{\bom,\Lambda}(E^{\pr}) }&\le  L^{2d}\pa{ 2\e^{\ell^{1 -\vs}} + \e^{-m_{3} \frac \ell {11}}\norm{R_{\bom,\Lambda}(E^{\pr})}}\\
&\le  2L^{2d} e^{\ell^{1 -\vs}}+\tfrac 1 2 \norm{R_{\bom,\Lambda}(E)},
\end{split}\eeq
where $m_{3}= m_{2}\pa{1 - C_{d,V_{\mathrm{per}},p,\widehat{m},\I} \frac {\log \ell} \ell}$,
and hence
\beq\label{2msweg}
\norm{ R_{\bom,\Lambda}(E^{\pr}) }\le 4   L^{2d} e^{\ell^{1 -\vs}}\le \e^{L^{1 -\vs}}.
\eeq
Given  $x,y \in\La= \Lambda_{L}$ with $\norm{x-y}\ge\frac {L}{100}$, we proceed as in the derivation of \eq{goodcasem459}  (with $\Theta=\emptyset$) to obtain, using \eq{2msweg},
\beq\label{goodcasem45988}
\norm{\Chi_y R_{\bom,\Lambda}(E^{\pr}) \Chi_x} \le  \e^{-m_{3} \pa{ \norm{x - y} - \ell) }}
 4   L^{2d} e^{\ell^{1 -\vs}}\le  \e^{-M_{1}  \norm{x - y}},
\eeq
where $M_{1}$ is as in \eq{massM1}.
  \end{proof}

 \begin{lemma}\label{lemprobpgood} Suppose the scale $\ell$ is  $(E,m,\varsigma,p)$-good, where   $E\in \R$,   $m>0$,  $0< \varsigma< 1$, and $p>0$.  Then, if $L=\ell^{1+\eta}$, where  $0<\eta <p$, we have
    \beq\label{probxgood}
  \P\set{  \Lambda_{L}(x) \text{\ is  $(\bom,E,m,\varsigma,\eta)$-pgood}} \ge 1 -2^d L^{-\frac {p-\eta}{1+\eta}d} \quad \text{for all} \quad  x \in \R^d.
  \eeq  
  \end{lemma}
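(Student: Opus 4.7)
The proof is a direct union bound argument. The plan is as follows.

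First, I would unpack the definitions. By Definition~\ref{defpgood}, the box $\Lambda_L(x)$ is $(\bom,E,m,\vs,\eta)$-pgood precisely when every box in the standard $\ell$-covering $\cG_{\Lambda_L(x)}^{(\ell)}$ is $(\bom,E,m,\vs)$-good, where $\ell = L^{\frac{1}{1+\eta}}$. Equivalently, $L = \ell^{1+\eta}$, and since $\eta > 0$ we have $\ell < L/6$ for all $L$ large enough, so Lemma~\ref{lemcovering} indeed gives a legitimate standard $\ell$-covering whose cardinality satisfies
\[
\#\G_{\Lambda_L(x)}^{(\ell)} \le \pa{\tfrac{2L}{\ell}}^d.
\]

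Second, I would apply a union bound. Since the scale $\ell$ is $(E,m,\vs,p)$-good, for every $r \in \G_{\Lambda_L(x)}^{(\ell)}$ we have
\[
\P\set{\Lambda_\ell(r) \text{ is not } (\bom,E,m,\vs)\text{-good}} \le \ell^{-pd}.
\]
Therefore
\[
\P\set{\Lambda_L(x) \text{ is not } (\bom,E,m,\vs,\eta)\text{-pgood}}
\le \#\G_{\Lambda_L(x)}^{(\ell)} \cdot \ell^{-pd}
\le 2^d \pa{\tfrac{L}{\ell}}^d \ell^{-pd} = 2^d L^d \ell^{-(p+1)d}.
\]

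Third, I would substitute $\ell = L^{\frac{1}{1+\eta}}$ to obtain
\[
2^d L^d \ell^{-(p+1)d} = 2^d L^{d\pa{1 - \frac{p+1}{1+\eta}}} = 2^d L^{-\frac{p-\eta}{1+\eta}d},
\]
which yields \eqref{probxgood}. The hypothesis $0 < \eta < p$ is exactly what guarantees that the exponent $\frac{p-\eta}{1+\eta}d$ is positive, so the estimate is nontrivial. There is no real obstacle here — the lemma is a bookkeeping statement converting a per-box probability estimate at scale $\ell$ into a per-box-of-boxes estimate at scale $L$, with a small loss in the exponent $p \mapsto \frac{p-\eta}{1+\eta}$ caused by the $(L/\ell)^d$ many subboxes.
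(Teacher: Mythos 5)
Your proof is correct and matches the paper's argument exactly: both use the union bound over the standard $\ell$-covering together with the cardinality estimate \eqref{number} and the per-box probability bound \eqref{probgood}, then substitute $\ell = L^{1/(1+\eta)}$. The only difference is that you spell out the union bound and the implicit requirement $\ell \le L/6$ explicitly, which the paper leaves tacit.
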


  \begin{proof} It follows from \eq{probgood} and  \eq{number} that
 \begin{align}
  \P\set{  \Lambda_{L} \text{\ is not  $E$-pgood}} <  \set{2 L^{\frac \eta {1 +\eta}}}^{d}  L^{-\frac {pd} {1 +\eta}}=2^d L^{-\frac {p-\eta}{1+\eta}d}.
\end{align} 
  \end{proof}

\subsubsection{Suitable coverings of   annuli}\label{seccovannulus}
Given scales $L_{1} < L_{2}$, we consider the open \emph{annulus} 
\beq
\La_{L_{2},L_{1}}(x):=\La_{L_{2}}(x)\setminus {\overline{\La}}_{L_{1}}(x) =
\set{y \in \R^d; \; \tfrac {L_{1}} 2 <  \norm{y-x} < \tfrac {L_{2}} 2}.
\eeq
We let  $\bar{\La}_{L_{2},L_{1}}(x):=\overline{\La_{L_{2},L_{1}}(x)}$ be the closed annulus,
and set $\La_{\infty,L}(x):=\R^{d}\setminus {\overline{\La}}_{L}(x)$.  

\begin{definition}\label{defsuitcovann}
Given scales $\ell , L_{1}, L_{2}$ with $ L_{1}<L_{2}$ and   $\ell < \frac{ L_{1}- L_{2}} 2$,  a  \emph{suitable $\ell$-covering} of an annulus $\La_{L_{2},L_{1}}(x)$ is
a collection of boxes $\Lambda_{\ell}$ of the form
\begin{align}\label{standardcoverann}
\cG_{\Lambda_{L_{2},L_{1}}(x)}^{(\ell)}= \{ \Lambda_{\ell}(r)\}_{r \in \G_{\Lambda_{L_{2},L_{1}}(x)}^{(\ell)}},
\end{align}
where
\begin{align} \label{bbGLL}
\G_{\Lambda_{L_{2},L_{1}}(x)}^{(\ell)}&:= \set{r \in  x +\mathbb{U}_{L_{1},\ell}+ \alpha\ell  \Z^{d}; \; \Lambda_{\ell}(r) \subset \Lambda_{L_{2},L_{1}}(x)}, \quad
\text{with}  \\ \mathbb{U}_{L_{1},\ell}&:= \set{0,\tfrac {L_{1}}2,-\tfrac {L_{1}}2, \tfrac {L_{1}+\ell}2,-\tfrac {L_{1}+\ell}2}^{d}\setminus \set{0,\tfrac {L_{1}}2,-\tfrac {L_{1}}2}^{d}, \\
\alpha & \in \br{\tfrac {3} {5},\tfrac {4} {5}}   \cap \set{\tfrac {L_{2}-L_{1}-2\ell}{2 \ell n}; \, n \in \N }.
\end{align}
\end{definition}

\begin{lemma}\label{lemcoveringann} Consider  scales $\ell , L_{1}, L_{2}$ with $ L_{1}<L_{2}$ and   $\ell < \frac{ L_{2}- L_{1}} 7$. Then every   annulus $\La_{L_{2},L_{1}}(x)$  has a   suitable $\ell$-covering, and   for any  suitable $\ell$-covering
  $\cG_{\Lambda_{L_{2},L_{1}}(x)}^{(\ell)}$  of $\La_{L_{2},L_{1}}(x)$ we have 
  \begin{align}\label{nestingpropertyann}
&\Lambda_{L_{2},L_{1}}(x) =\bigcup_{r \in\G_{\Lambda_{L_{2},L_{1}}(x)}^{(\ell)} } \Lambda_{\ell}(r),\\ \label{bdrycoverann}
&\text{given $y \in \Lambda_{L_{2},L_{1}}(x)$ there is $r \in \G_{\Lambda_{L_{2},L_{1}}(x)}^{(\ell)}$ with $\Lambda_{\frac{\ell} 5}(y)\cap\Lambda_{L_{2},L_{1}}(x) \subset \Lambda_{\ell}(r)$},\\
&   \label{number22}
 \#\G_{\Lambda_{L_{2},L_{1}}(x)}^{(\ell)} \le  (\tfrac{2L_{2}} {\ell})^{d}\# \mathbb{U}_{L_{1},\ell} \le  (\tfrac{10 L_{2}} {\ell})^{d}.
\end{align}
\end{lemma}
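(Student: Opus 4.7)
The proof will closely parallel that of Lemma~\ref{lemcovering}, with the extra bookkeeping needed to handle the inner boundary of the annulus. The main new ingredient is the shift set $\mathbb{U}_{L_1,\ell}$, whose purpose is to anchor a layer of boxes along the inner face $\{\|y-x\|=L_1/2\}$ so that no uncovered strip of width up to $\ell$ is left next to the hole.

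The first step is to verify that a suitable $\ell$-covering exists, i.e., that the admissible set for $\alpha$ is nonempty. Since we require $\alpha = \frac{L_2-L_1-2\ell}{2\ell n} \in [\tfrac 3 5,\tfrac 4 5]$ for some $n \in \N$, this amounts to the interval $\bigl[\tfrac{5(L_2-L_1-2\ell)}{8\ell},\tfrac{5(L_2-L_1-2\ell)}{6\ell}\bigr]$ containing an integer, for which it suffices that its length $\tfrac{5(L_2-L_1-2\ell)}{24\ell}$ is at least $1$. A direct computation shows that the assumption $\ell < \tfrac{L_2-L_1}{7}$ is precisely what guarantees this, exactly as in the box case.

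Next I would verify \eqref{nestingpropertyann} and \eqref{bdrycoverann} by a reduction to Lemma~\ref{lemcovering}. By translation we may take $x=0$, and we may analyze the annulus face by face. For any $y \in \Lambda_{L_2,L_1}(0)$, there is a coordinate $i$ with $|y_i|>L_1/2$; on the slab $\{z:\ |z_i|>L_1/2,\ \|z\|<L_2/2\}$ the box centers live on the shifted lattice whose $i$-coordinate is $\pm(L_1+\ell)/2 + \alpha\ell\,\Z$ and whose remaining coordinates are in $\{0,\pm L_1/2,\pm(L_1+\ell)/2\} + \alpha\ell\Z$. The first choice covers the interval $[\tfrac{L_1}{2},\tfrac{L_2}{2}]$ in the $i$-direction: the leftmost center is at $\tfrac{L_1+\ell}{2}$, giving a box reaching down to $\tfrac{L_1}{2}$, and successive spacings $\alpha\ell\le \tfrac{4\ell}{5}$ together with the arithmetic condition on $\alpha$ place the rightmost center at $\tfrac{L_2-\ell}{2}$, so the right face is reached as well. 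In the transverse coordinates one is back in the setting of Lemma~\ref{lemcovering}, applied to the $(d-1)$-dimensional box of side $L_1+\ell$ (handled by shifts $0,\pm L_1/2,\pm(L_1+\ell)/2$) or $L_2$. Combining the faces yields \eqref{nestingpropertyann}; the $\ell/5$-neighborhood statement \eqref{bdrycoverann} follows from the same argument with $\alpha\le \tfrac 45$, exactly as \eqref{bdrycover} was deduced in Lemma~\ref{lemcovering}.

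Finally, for the counting bound \eqref{number22}, I would note that $\#\mathbb{U}_{L_1,\ell}\le 5^d$ and that for each fixed $u\in\mathbb{U}_{L_1,\ell}$ the number of $r\in x+u+\alpha\ell\Z^d$ with $\Lambda_\ell(r)\subset\Lambda_{L_2,L_1}(x)$ is at most $\bigl(\tfrac{L_2}{\alpha\ell}\bigr)^d\le \bigl(\tfrac{2L_2}{\ell}\bigr)^d$ since $\alpha\ge \tfrac 35$. Multiplying gives $(10L_2/\ell)^d$. The main obstacle is purely bookkeeping: one has to check that the face-by-face covering argument really does piece together without leaving gaps at the corners of the inner box, and this is where the precise form of $\mathbb{U}_{L_1,\ell}$ (all $d$-tuples in $\{0,\pm L_1/2,\pm(L_1+\ell)/2\}^d$ with at least one coordinate equal to $\pm(L_1+\ell)/2$) enters in an essential way.
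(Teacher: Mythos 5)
Your verification of the existence of a suitable covering, of \eqref{nestingpropertyann}, and of the cardinality bound \eqref{number22} is correct and follows the adaptation of Lemma~\ref{lemcovering} that the paper intends (the paper gives no explicit proof here, saying only that the lemma is ``proven similarly''). The weak point is exactly the one you flagged without resolving: the inner corners. You assert that \eqref{bdrycoverann} ``follows from the same argument \ldots\ exactly as \eqref{bdrycover} was deduced,'' with the precise form of $\mathbb{U}_{L_1,\ell}$ taking care of the corners, but for \eqref{bdrycoverann} it does not.

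Take $d=2$, $x=0$, $y=(L_1/2+\eps,\,L_1/2+\eps)$ with $0<\eps<\ell/10$. Then $\Lambda_{\ell/5}(y)\cap\Lambda_{L_2,L_1}(0)=\Lambda_{\ell/5}(y)\setminus\overline{\Lambda}_{L_1}(0)$ projects onto the full interval $(y_j-\ell/10,\,y_j+\ell/10)$ in each coordinate $j$ (e.g.\ $(z_1,\,y_2+\ell/20)$ lies in it for every $z_1$ in that interval), so any box $\Lambda_\ell(r)$ containing it must contain all of $\Lambda_{\ell/5}(y)$; this forces $|r_j-y_j|\le 2\ell/5$, hence $|r_j|<L_1/2+\ell/10+2\ell/5=(L_1+\ell)/2$ for both $j$. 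But $\Lambda_\ell(r)\subset\Lambda_{L_2,L_1}(0)$ requires $|r_j|\ge(L_1+\ell)/2$ for at least one $j$. So no $r\in\G_{\Lambda_{L_2,L_1}(0)}^{(\ell)}$ works, and \eqref{bdrycoverann} as written is simply false near the inner corners. (The shifts in $\mathbb{U}_{L_1,\ell}$ do make \eqref{nestingpropertyann} hold there, since for that one only needs $y\in\Lambda_\ell(r)$, which is a strictly weaker demand.) This defect is already in the paper's statement --- its only invocation of \eqref{bdrycoverann}, inside Lemma~\ref{lemgoodWL}, is for $y$ with $\Lambda_{\ell/5}(y)\subset\Lambda_{L_2,L_1}(x)$, where the intersection equals $\Lambda_{\ell/5}(y)$ and the corner difficulty never arises --- but a proof of the lemma as stated must either keep $y$ at distance of order $\ell$ from the inner corners or weaken \eqref{bdrycoverann}, and ``exactly as in Lemma~\ref{lemcovering}'' does neither.
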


Definition~\ref{defsuitcovann} is similar  to Definition~\ref{defsuitcov}, and  Lemma~\ref{lemcoveringann} is proven similarly to Lemma~\ref{lemcovering}, but there are some differences.  In particular, we do not have the analog of \eq{freeguarantee}.

As in Definition~\ref{defstdcov}, the standard $\ell$-covering of $\Lambda_{L_{2},L_{1}}(x)$
corresponds to
\beq\label{alphaLL}
\alpha =\alpha_{L_{2},L_{1},\ell}:= \max \set{\br{\tfrac {3} {5},\tfrac {4} {5}}   \cap \set{\tfrac {L_{2} - L_{1}-\ell}{2 \ell n}; \, n \in \N }}.
\eeq

\section{The multiscale analysis with  a Wegner estimate}\label{sectMSAWegner}

We will prove the following theorem.

\begin{theorem} \label{thmMSA}
 Let $H_{\bom}$ be a generalized Anderson Hamiltonian on  $\L^{2}(\R^{d})$.     Fix    $p\in \rb{ \frac 1 3, \frac 3 8}$ and   $ \vs,\vs^\pr  \in]0,1[ $.   
 Then there exist an energy $E_{0}>0$, a rate of decay $m>0$, and a scale $L_{0}$, all depending only  on $d, \Vper, \delta_{\pm},u_{\pm},U_{+},\mu,p, \vs,\vs^\pr$, such that
 all scales  $L\ge L_{0}$ are $(E,m,\varsigma, \varsigma^\pr,p)$-{extra good} for all energies $E \in [0,E_{0}]$.  In particular, all scales $L\ge L_{0}$ are $(E,m,\varsigma, p)$-good for all energies $E \in [0,E_{0}]$.
\end{theorem}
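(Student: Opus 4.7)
The plan is to establish Theorem~\ref{thmMSA} via the standard inductive scheme of multiscale analysis: verify a strong finite-volume hypothesis at one initial scale, then propagate it from a scale $\ell$ to a larger scale $L = \ell^{1+\eta}$ using a deterministic combination lemma together with a probabilistic estimate that controls resonances. This is exactly the content foreshadowed by Proposition~\ref{propInitial} (initial step) and Proposition~\ref{propA} (inductive step) in the outline, so my proposal is to carry out both and compose them.

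\textbf{Initial step.} First I would fix $E_0>0$ small and $L_0$ large, and prove that every scale $L$ in the range $[L_0, L_0^{1+\eta}]$ is $(E,m_0,\varsigma,\varsigma',p)$-extra good for all $E\in[0,E_0]$. The crucial input is a Lifshitz tail bound: with probability at least $1-L^{-pd-\varepsilon}$ the operator $H_{\bom,\Lambda_L}$ has no spectrum in $[0,E_0]$, from which a trivial Combes--Thomas estimate supplies \eqref{weg}--\eqref{good} for all $E\in[0,E_0]$ with a strong initial decay rate $m_0$. To obtain the $\varsigma'$-abundance of free sites required by the definition of extra good, one restricts to the event where the configuration has sufficiently many $\omega_\zeta$ near $0$ (near the ``favorable'' value for producing small spectrum near the edge) in every sub-box of $\Lambda_L$ of side $L/5$; a union bound over the $O(L^d)$ such sub-boxes, together with Bernstein-type concentration for i.i.d.\ sums, yields abundance with overwhelming probability. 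Partitioning according to which configurations of free sites realize the goodness gives the decomposition \eqref{Eloc} into adapted events.

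\textbf{Inductive step.} I would assume the scale $\ell$ is $(E,m,\varsigma,\varsigma',p)$-extra good for all $E\in[0,E_0]$ and deduce the same for $L=\ell^{1+\eta}$ with a rate $M=m(1-\ell^{-\delta})$. Fix $x_0$ and cover $\Lambda_L(x_0)$ by the standard $\ell$-covering of Lemma~\ref{lemcovering}. Call a sub-box bad if its extra-good event fails. The plan is to show: with probability at least $1-L^{-pd}$, the bad sub-boxes can be enclosed in a collection of at most $K$ disjoint intermediate boxes $\Lambda_j$ of side in $[L^\kappa,K' L^\kappa]$, on which we additionally have the Wegner-type norm bound $\|R_{\bom,\Lambda_j}(E)\|\le e^{L^{\kappa(1-\varsigma)}}$ uniformly in $E\in[0,E_0]$. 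Granted this, Lemma~\ref{lemkeyMSA} immediately upgrades $\Lambda_L(x_0)$ to an $(\bom,E,M,\varsigma)$-good box for every $E\in[0,E_0]$ with $M$ as in \eqref{massM}. Extra-goodness at scale $L$ is then obtained by organizing the free sites of the surviving (good) sub-boxes: their union is $\varsigma'$-abundant in $\Lambda_L$ because each fixed sub-box of side $L/5$ meets enough good sub-boxes of side $\ell$, and the adapted-event decomposition is inherited from the $\ell$-scale one.

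\textbf{The Wegner estimate on the intermediate boxes.} This is where I expect the main obstacle, and where the Bourgain--Kenig free sites together with the quantitative unique continuation principle (Theorem~\ref{thmucp} / Corollary~\ref{corQUCPD}) enter. For a given intermediate box $\Lambda_j$, write
\[
H_{\bom,\Lambda_j} \;=\; H_{\bom,\bt_{S_j},\Lambda_j} \;+\; \sum_{\zeta\in S_j}(\omega_\zeta-t_\zeta)\,u_\zeta,
\]
where $S_j$ is an abundant set of free sites. By QUCP applied to a normalized eigenfunction of $H_{\bom,\bt_{S_j},\Lambda_j}$ with eigenvalue in $[0,E_0]$, one obtains a lower bound $\sum_{\zeta\in S_j}\langle \psi,u_\zeta\psi\rangle \ge L^{-\beta}$ with $\beta<\tfrac{1+\sqrt 3}{2}$ coming from the $\tfrac{4}{3}$-exponent of \cite{BK}. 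Feeding this into an eigenvalue variation argument (Hellmann--Feynman plus a Stollmann-type lemma) converts smallness of the probability that $\dist(\sigma(H_{\bom,\Lambda_j}),E)<\eta$ into a bound of order $\eta^{p}\,|S_j|^{-p}$, which after optimization (and using $\#S_j\gtrsim L^{(1-\varsigma')d}$) gives a probability $\le L^{-\widetilde p d}$ that any given intermediate box fails the Wegner bound. The constraint $p\in(\tfrac13,\tfrac38)$ is exactly what is needed to beat the combinatorial entropy $L^{O(\kappa d)}$ of placements of $K$ intermediate boxes, so a union bound closes the induction.

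\textbf{Composition and conclusion.} Once the inductive step is proved, I would iterate starting from the initial scale $L_0$ along the geometric sequence $L_{k+1}=L_k^{1+\eta}$; the rate of decay $m_k$ stays bounded below by a fixed $m>0$ because the multiplicative losses $(1-L_k^{-\delta})$ are summable. Finally, to obtain extra-goodness at \emph{every} scale $L\ge L_0$ rather than just along the sequence $L_k$, I use Lemma~\ref{lempggodtogood} and Lemma~\ref{lemprobpgood}, which show that pgoodness at scale $L_k$ implies goodness at any intermediate scale $L\in[L_k,L_{k+1}]$ after a negligible loss. Collecting the pieces yields the $(E,m,\varsigma,\varsigma',p)$-extra goodness of every scale $L\ge L_0$ uniformly in $E\in[0,E_0]$, which is the statement of the theorem; the last assertion (plain goodness) is immediate from the definitions.
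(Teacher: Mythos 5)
Your overall strategy — an initial-scale estimate (Lifshitz tails plus Combes--Thomas) to start the induction, then an inductive step propagating goodness across scales with a Wegner estimate supplied by free sites and the quantitative unique continuation principle — is the strategy of the paper. Theorem~\ref{thmMSA} is indeed stated in the text to be an immediate consequence of Propositions~\ref{propInitial} and \ref{propA}, and your plan corresponds roughly to proving those two propositions. But there is a genuine gap at the heart of your Wegner estimate, and the gap is precisely the one the entire paper is designed to close.

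You write that after the unique-continuation lower bound you would ``feed this into an eigenvalue variation argument (Hellmann--Feynman plus a Stollmann-type lemma)'' to convert the lower bound on $\sum_\zeta\langle\psi,u_\zeta\psi\rangle$ into a bound on $\P\{\dist(\sigma(H_{\bom,\Lambda_j}),E)<\eta\}$. Stollmann's lemma, and all of its variants, requires a modulus of continuity on the single-site distribution $\mu$: it bounds the probability that a monotone function of the $\omega_\zeta$ lands in a small window by the concentration function of $\mu$. When $\mu$ is Bernoulli (or more generally has atoms), the concentration function does not tend to zero, and Stollmann-type estimates give nothing. The paper's hypothesis is that $\mu$ is merely non-degenerate with bounded support, so this route collapses. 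What the paper actually does (in Lemma~\ref{lemwegner}) is to turn the eigenvalue-monotonicity plus the QUCP lower bound into a statement that the set of ``free-site configurations'' producing an eigenvalue in the small window forms an \emph{anti-chain} in the Boolean lattice $\{0,1\}^J$, and then applies the probabilistic Sperner lemma of Aizenman--Germinet--Klein--Warzel \cite[Lemma~3.1]{AGKW}, together with a Hoeffding large-deviation bound on the number of usable free sites, to get a probability estimate of order $(\#J)^{-1/2}\asymp\ell_1^{-\frac12(1-\vs')d}$. The appearance of the exponent $\tfrac12$ (rather than a distribution-dependent H\"older exponent) is exactly what propagates through Lemma~\ref{lemwegnerplus2} and forces the condition $p<\tfrac12\rho_1(1-\vs')-\rho_2$ in \eqref{rhos}, hence the upper bound $p<\tfrac38$; see Remark~\ref{remQUCP}. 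Your proposal as written would not reproduce this for singular $\mu$.

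Two smaller remarks. First, in your initial step the abundance of free sites need not be established probabilistically: Proposition~\ref{propInitial} simply takes the deterministic set $S=\widetilde{\Lambda}_L(x)\setminus\tilde q\,\Z^d$, whose $\vs'$-abundance is automatic, and then shows that the Lifshitz-tail event $\{H_{\bom,\bt_S,\Lambda_L}\ge 2E_L\ \forall\bt_S\}$ already has probability $\ge 1-L^{-pd}$. Second, the paper's inductive step is more layered than a single jump $\ell\mapsto\ell^{1+\eta}$: Proposition~\ref{propA} starts from an entire interval of scales $[L_0,L_0^{1/(\rho_1\rho_2)}]$, introduces ``notsobad'' boxes supported by good boxes at the full nested family $L_n=\ell_1^{\rho_1^n}$, $n=1,\dots,n_1$, and propagates along scale intervals $\cL_k$, which is what lets the constraints \eqref{rhos} close. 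This is an organizational rather than a conceptual difference, but it is needed to make the exponent bookkeeping work out, and to obtain goodness at every $L\ge L_0$ without a separate ``filling-in'' step via pgoodness.
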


To prove the theorem we first obtain an {\it a priori}   estimate on the probability that a box $\Lambda_{L}$ is good    with an adequate supply of free sites for all energies in an interval   at the bottom of the spectrum (Proposition~\ref{propInitial}). Next, we   perform a multiscale analysis to  show that if such a probabilistic estimate  holds for a given energy at a sufficiently large scale, then it holds  all large scales  (Proposition~\ref{propA}).  
Theorem~\ref{thmMSA} is an immediate consequence of Propositions~\ref{propInitial} and \ref{propA}.

\begin{remark}
 If  $0$ is not an atom for the measure $\mu$ in \eq{mu}, Proposition~\ref{propInitialHD}  provides an alternative to Proposition~\ref{propInitial}, giving an {\it a priori}   estimate  in a fixed interval at the bottom of the spectrum for sufficiently high disorder.   If we also have $\mu([0,t])\le C t^\gamma$, with $\gamma >0$ appropriately  large, 
Proposition~\ref{propInitialHD} and Proposition~\ref{propA} (and their proofs) yield an alternative high disorder version of Theorem~\ref{thmMSA}.
\end{remark}

 \subsection{`A priori' finite volume estimates}  \label{sectinit} 

We set $\tilde{q}=\max \set{q,2}$, where   $q\in \N$ is the period of the background periodic operator  $V_{\mathrm{per}}$ in  \eqref{genAndH}.

\begin{proposition}\label{propInitial}  Let $H_{\bom}$ be a generalized Anderson Hamiltonian on  $\L^{2}(\R^{d})$, and  fix  $p>0$ and $0<\eps \le 1$.  
There exists  $\tilde{L}=\tilde{L}(d, V_{\mathrm{per}},u_-,\delta_-,\mu,p,\eps)$, such that for all 
 scales $L\ge\tilde{L}$ and all $x \in \R^d$ we have
  \beq \label{Hlowerbd}
\P \set{H_{{\bom},{\bt_S},\Lambda_L(x)}\ge  \pa{(p+1)d    \log (L +\delta_{+} + \tilde{q})}^{- \frac {2 + \eps} d} \, \text{for all} \; \,
\bt_S \in [0,1]^{S}} \ge 1 - L^{-pd} ,
\eeq
where  $S=S_{x,L,q} =\widetilde{\Lambda}_L(x) \setminus \tilde{q} \Z^d  $.   In particular,
 setting  
\beq  \label{EmL}
 E_{L}= \tfrac 1 2\pa{(p+1)d    \log (L +\delta_{+} + \tilde{q})}^{- \frac {2 + \eps} d} \quad \text{and} \quad m_{L}= \tfrac 1 2 \sqrt{E_{L}},
\eeq
it follows that  for all 
 scales $L\ge\tilde{L}$,    $x \in \R^d$,  $t_{S} \in [0,1]^{{S}}$, and  energies $E \in [0, E_{L}]$, we have,   with probability  $ \ge  1 - L^{- pd}$, that
\begin{align}\label{wegL0}
\| R_{\bom,t_{S},\Lambda_L(x)}(E) \|& \le \tfrac 1 {E_{L}},
\intertext{and, for all  $y,y^{\prime} \in \Lambda_L$ with   $\norm{y-y^{\pr}} \ge 20 \sqrt{d}$,}   \label{goodL01}
\| \Chi_y R_{\bom,t_{S}, \Lambda_L(x)}(E)  \Chi_{y^{\prime}} \|& \le 
\tfrac 2 {E_{L}} \e^{-\frac23 \sqrt{E_{L}}\norm{y-y^{\prime}}}.
 \end{align}
In particular, given $ \varsigma,\vs^\pr \in ]0,1[$, there is  $\tilde{\tilde{L}}=\tilde{\tilde{L}}(d,V_{\mathrm{per}},u_-,\delta_-,\mu,p,\vs,\vs^\pr,\eps)$, such that all scales  $L\ge\tilde{\tilde{L}}$ are $(E,m_L,\varsigma, \varsigma^\pr,p)$-{extra good} for all energies $E \in [0,E_{L}]$.
 \end{proposition}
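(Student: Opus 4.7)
The plan is to prove \eq{Hlowerbd} by first reducing the uniform bound over $\bt_S$ to an estimate on a single $\bt_S$-independent random operator and then invoking Lifshitz-tails asymptotics; \eq{wegL0}--\eq{goodL01} and the extra-good conclusion will follow by a Combes--Thomas estimate and a direct verification. First, since each $u_\zeta \ge 0$ and each $t_\zeta \in [0,1]$, the quadratic-form inequality
\begin{equation*}
H_{\bom,\bt_S,\Lambda_L(x)} \ge -\Delta_{\Lambda_L(x)} + \Vper|_{\Lambda_L(x)} + \widetilde W_{\bom,\Lambda_L(x)}
\end{equation*}
holds uniformly in $\bt_S \in [0,1]^S$, where $\widetilde W_{\bom,\Lambda} := \sum_{\zeta \in \tilde q \Z^d \cap \widetilde\Lambda} \om_\zeta\, u_\zeta|_\Lambda$ depends only on the non-free coordinates $\set{\om_\zeta}_{\zeta \in \tilde q \Z^d}$ (we discard both $U \ge 0$ and the nonnegative free-site contributions $t_\zeta u_\zeta$). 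The right-hand side is the Dirichlet restriction to $\Lambda_L(x)$ of the operator $\widetilde H_\bom := -\Delta + \Vper + \widetilde W_\bom$ on $\R^d$, which is $\tilde q \Z^d$-ergodic because $q \mid \tilde q$; the definition $\tilde q = \max\set{q,2}$ ensures both this periodicity compatibility and that the free-site set $S = \widetilde\Lambda_L(x)\setminus \tilde q \Z^d$ has positive density.

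Next, since $\widetilde W_\bom$ is a nontrivial nonnegative alloy potential and $0 \in \supp \mu$, the almost-sure spectrum of $\widetilde H_\bom$ has infimum $0$, and standard Lifshitz-tails asymptotics for ergodic alloy-type operators apply: the integrated density of states $\widetilde N(E)$ satisfies $\widetilde N(E) \le \e^{-c E^{-d/2}}$ on some interval $\rb{0, E_*}$, with $c, E_* > 0$ depending only on $d, \Vper, u_-, \delta_-, \mu, \tilde q$. Since Dirichlet boundary conditions raise the spectrum, the Kirsch--Martinelli bound $\E\br{\tr \Chi_{[0,E]}(\widetilde H_{\bom,\Lambda_L(x)})} \le L^d \widetilde N(E)$ holds, and Markov's inequality yields
\begin{equation*}
\P\set{\inf \sigma(\widetilde H_{\bom,\Lambda_L(x)}) \le E} \le L^d \e^{-c E^{-d/2}}.
\end{equation*}
Choosing $E = \eta_L := ((p+1)d \log(L+\delta_++\tilde q))^{-(2+\eps)/d}$ renders the right side $\le L^{-pd}$ for $L \ge \tilde L$ large enough, and combined with the first step this yields \eq{Hlowerbd}.

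Finally, on the event in \eq{Hlowerbd} and for $E \in [0, E_L]$ with $E_L = \eta_L/2$, the spectral gap $H_{\bom,\bt_S,\Lambda_L(x)} - E \ge E_L$ gives \eq{wegL0}, and a standard Combes--Thomas estimate with rate $\tfrac{2}{3}\sqrt{E_L}$ gives \eq{goodL01}, valid because the total potential $\Vper + U + V_{\bom,\bt_S,\Lambda_L}$ is uniformly bounded. The event just constructed is measurable with respect to $\cF_{\Lambda_L(x)\setminus S}$ since it depends only on $\set{\om_\zeta}_{\zeta \in \tilde q \Z^d}$, and $\#(S\cap\Lambda_{L/5}) \ge (L/5)^d(1-\tilde q^{-d}) \ge L^{(1-\vs^\pr)d}$ for $L \ge \tilde{\tilde L}$, so $S$ is $\vs^\pr$-abundant; the event is therefore $(\Lambda_L(x), E, m_L, \vs, \vs^\pr)$-adapted and, viewed as a one-term disjoint union, extra-good for every $E \in [0, E_L]$ (the inequalities $1/E_L \le \e^{L^{1-\vs}}$ and the absorption of the prefactor $2/E_L$ into the slightly smaller decay rate $m_L = \tfrac{1}{2}\sqrt{E_L}$ both hold for $L$ large, since $\norm{x-y} \ge L/100$ far exceeds $20\sqrt{d}$). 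The main obstacle is the Lifshitz-tails input in the second step: deriving $\widetilde N(E) \le \e^{-cE^{-d/2}}$ in the precise form needed for the sparser $\tilde q \Z^d$-alloy with the bounded $q$-periodic background; the slack $\eps > 0$ in the exponent $(2+\eps)/d$ absorbs constants and finite-volume corrections without requiring delicate optimization.
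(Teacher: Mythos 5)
Your proof follows essentially the same route as the paper: reduce to the sparser $\tilde q\Z^d$-alloy by discarding $U\ge0$ and the nonnegative free-site contributions, invoke Lifshitz tails via the eigenvalue-counting bound $\E\bigl[\tr\Chi_{[0,E]}(\widetilde H_{\bom,\Lambda_L})\bigr]\le L^d\,\widetilde N(E)$ together with Chebyshev, and then get \eq{wegL0}--\eq{goodL01} by Combes--Thomas and verify the $\vs^\pr$-abundance of $S$. Two small points where the paper is more careful and your sketch glosses: first, the rigorous Lifshitz-tails input (Klopp's Remark~7.1) only gives $N^{(q)}(E)\le\e^{-E^{-d/(2+\eps)}}$, not $\e^{-cE^{-d/2}}$; the $\eps$-loss is exactly why $E_L$ carries the exponent $(2+\eps)/d$, and while your choice of $\eta_L$ still works (since $(2+\eps)/2>1$), you should cite the correct available bound rather than the optimal one. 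Second, the counting bound requires $L\in\tilde q\N$, which the paper handles by passing to the enlarged scale $L_q\in[L+\delta_+,\,L+\delta_++\tilde q)$ and using Dirichlet monotonicity plus \eq{compat} to compare $H_{\bom,\Lambda_L}$ with $\widetilde H_{\bom,\Lambda_{L_q}}^{(q)}$; this is the source of the $\delta_++\tilde q$ inside the logarithm in \eq{EmL}, which your sketch does not account for.
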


\begin{proof} It suffices to prove \eq{Hlowerbd}, since given $H_{{\bom},{\bt_S},\Lambda_L(x)}\ge   2 E_L $,   for all   $E \in [0, E_{L}]$ we get immediately \eqref{wegL0},  and  \eqref{goodL01}   follows  by the Combes-Thomas estimate. (We use the precise estimate given in \cite[Eq. (19)]{GK2}, which  is also valid for finite volume operators with Dirichlet boundary condition.)   Moreover, in view of \eq{potU} and \eq{uzeta}, it suffices to prove  \eq{Hlowerbd} for the case when $U=0$,  and $u_\zeta= u_{-}\Chi_{\Lambda_{\delta_{-}}(\zt)}$ for all $\zeta \in \tilde{q}\Z^d$,  $u_\zt =0$ otherwise.

So let  
\beq \label{genAndHq}
H_{\bom}^{(q)}= H_0 + V_{\bom}^{(q)}, \quad \text{with} \quad V_{\bom}^{(q)} (x):= 
\sum_{\zeta \in \tilde{q} \Z^d} \omega_\zeta \,  u(x-\zeta ),  \eeq 
where $u= u_{-}\Chi_{\Lambda_{\delta_{-}}(0)}$.  Note that  $H_{\bom}^{(q)}$ is an Anderson Hamiltonian as in Definition~\ref{defAndH}, except that $\Z^d$ was replaced by $ \tilde{q} \Z^d$ and the periodic potential has period $\tilde{q}$, and hence its integrated density of states  $N^{(q)}(E)$ is well defined with the usual properties (cf. \cite{CL,PF}).   Given a box $\Lambda$, we define the corresponding finite volume  operator $H_{\bom,\Lambda}^{(q)}$ as in \eq{finvolH}.  For scales $L \in  \tilde{q}\N$ we set
\beq \label{defNq}
N_{\bom,\Lambda_{L}}^{(q)}(E):= \tr \Chi_{]- \infty,E]}\pa{\widetilde{H}_{\bom,\Lambda_{L}}^{(q)}},
\eeq
where 
\begin{align}\label{finvolHcorrected}
\widetilde{H}_{\bom,\Lambda_{L}}^{(q)} :={H_{0,\Lambda}}+ \widetilde{V}_{\bom,\La}^{(q)} \quad \text{on}   \quad \L^{2}(\Lambda),
\end{align}
where ${H_{0,\Lambda}}$ is as in \eq{H0La} and $ \widetilde{V}_{\bom,\La}^{(q)} $ is the restriction of ${V}_{\bom}^{(q)}$ to $\La$. In general $ \widetilde{V}_{\bom,\La}^{(q)}\not= {V}_{\bom,\La}^{(q)}$,  but we have \eq{compat}.

We recall (e.g., \cite[Eq.~(VI.15) on page 311]{CL}) that
\beq  \label{Nineq}
\E \pa{N_{\bom,\Lambda_{L}}^{(q)}(E)}\le N^{(q)}(E) \abs{\Lambda_{L}}\quad \text{for all} \quad  L \in  \tilde{q}\N.
\eeq

We now use the  Lifshitz tails estimate as in  \cite[Remark~7.1]{Klop99} (note that it applies  with $\mu$ as in \eq{mu}):
\beq
\lim_{E \downarrow 0}     \frac{\log \abs{\log    N^{(q)}(E)}}{\log E}   
\le  - \frac d 2  .
\eeq
It follows that there is an energy $E_1=E_1(d, V_{\mathrm{per}},u_-,\delta_-,\mu,\eps) >0$, such that
\beq \label{lifest}
 N^{(q)}(E) \le \e^{- E^{- \frac d {2 + \eps}} } \quad \text{for all energies} \quad E \le E_1.
\eeq
Combining \eq{defNq}, \eq{Nineq} and \eq{lifest}, and using Chebyschev's inequality, we get
that for all scales $ L \in  \tilde{q}\N$, $x \in \R^d$, and energies $ E \le E_1$,
\beq
\P\set{\sigma\pa{\widetilde{H}_{\bom,\Lambda_L(x)}^{(q)}} \cap [0, E] \not= \emptyset}\le 
\E \pa{N_{\bom,\Lambda_L(x)}^{(q)}(E)} \le \e^{- E^{- \frac d {2 + \eps}} } L^d ,
\eeq
and hence
\beq\label{Hlowerbdqq}
\P \set{\widetilde{H}_{\bom,\Lambda_L(x)}^{(q)}  \ge \min \set{\pa{(p+1)d  \log L}^{- \frac {2 + \eps} d}, E_1}   }  \ge 1 - L^{-pd}.
\eeq

To get  \eq{Hlowerbd} from \eq{Hlowerbdqq}, given
a scale $L\ge 1$ we set
\beq
L_{q}:= \min \set{L^{\pr}\in   \tilde{q}\N, L +\delta_{+} \le L^{\pr}}.
\eeq
It follows from \eq{compat} that  
\beq
\Chi_{\La_{L}(x)} \widetilde{V}_{\bom,\La_{L_{q}}(x)}^{(q)}= {V}_{\bom,\La_{L}(x)}^{(q)}.
\eeq
Since we are using Dirichlet boundary condition for the Laplacian, we conclude that
$\inf \sigma(H_{\bom,\La_{L}(x)}^{(q)})\ge  \inf \sigma(\widetilde{H}_{\bom,\La_{L_{q}}(x)}^{(q)})$.  Since  $L +\delta_{+}\le L_{q} < L +\delta_{+}+ \tilde{q}$, we conclude that
\beq\label{Hlowerbdqq2}
\P \set{H_{\bom,\Lambda_L(x)}^{(q)}  \ge \min \set{\pa{(p+1)d  \log (L +\delta_{+}+ \tilde{q})}^{- \frac {2 + \eps} d}, E_1}   }  \ge 1 - L^{-pd}
\eeq
for all $L\ge 1$.  The desired estimate \eq{Hlowerbd} follows for all scales $ L \ge \tilde{L}$, where 
$\tilde{L}=\tilde{L}(d, V_{\mathrm{per}},u_-,\delta_-,\mu,p,\eps)$.
\end{proof}

\begin{remark} \label{remnoVper} In the absence of a periodic background potential, i.e.,  $V_{\mathrm{per}}= 0$,  one can prove a slightly modified form of Proposition~\ref{propInitial} using ideas from \cite{BK} instead of Lifshitz tails. 
 As in the proof of Proposition~\ref{propInitial}, it suffices to consider the operator  $H_{\bom}= -\Delta + V_{\bom}$, where $V_{\bom}$ is as in \eq{genAndHq}. Setting $K > 10 \delta_{-}$, $\Lambda=\Lambda_L$,  It follows from   the lower bound in \eqref{uzeta} that there exists a constant  $c_{u_{-}, \delta_{-},d} >0$ such that
\begin{equation}\label{average}
\overline{V}_{{\bom}_\Lambda}(x) := \frac1{K^d} \int_{\Lambda_{K}(0)} V_{{\bom}_\Lambda} (x-a)\, \mathrm{d} a   \ge c_{u_{-}, \delta_{-},d}\, Y_{{\bom},\Lambda} \Chi_{\Lambda}(x) , 
\end{equation}
where 
\beq    Y_{{\bom},\Lambda}:= \min_{\xi \in \widetilde{\Lambda}}  \frac1{K^d}  \sum_{\zeta \in 
 \widetilde{\Lambda_{\frac K 3}(\xi)}} {\om}_\zeta\,  .
\eeq
It follows from standard estimates (e.g., \cite[Proposition~3.3.1]{Y}) that, with $\bar{\mu}$ and $\sigma$ the mean and standard deviation of the probability measure $\mu$, we have 
\beq
\P\set{\frac1{K^d}  \sum_{\zeta \in 
 \widetilde{\Lambda_{\frac K 3}(\xi)}} \om_\zeta \le \frac {\bar{\mu}} 2 } \le \e^{- A K^d},
\eeq
where
\beq
A = A_{\mu,d} = \frac {\bar{\mu}}{3^d 8 \sigma^2 (1 + \frac {\bar{\mu}} 2)} >0,
\eeq
and hence
\beq \label{Yprob} 
\P\set{ Y_{\bom,\Lambda} \le \frac {\bar{\mu}} 2 }\le L^d \e^{- A K^d}.
\eeq
It follows from \eqref{average} and \eqref{Yprob} that. with $ c_{u_{-}, \delta_{-},d}^{\pr}=\frac 1 2 { c_{u_{-}, \delta_{-},d}}$,
\beq \label{proboverV}
\P\set{\overline{V}_{\bom_\Lambda} >   c_{u_{-}, \delta_{-},d}^{\pr} \,  {\bar{\mu}}  \Chi_{\Lambda}}\ge 1-  L^d \e^{- A K^d},
\eeq
so, if $\overline{V}_{{\bom}_\Lambda} >  c_{u_{-}, \delta_{-},d}^{\pr} \,  {\bar{\mu}}  \Chi_{\Lambda}$, we have
\begin{equation}\label{lb2}
\overline{ H}_{ {\bom},\Lambda}:= -\Delta_{\Lambda } +\Chi_{\Lambda}\overline{V}_{{\bom}_\Lambda} \ge   c_{u_{-}, \delta_{-},d}^{\pr} \,  {\bar{\mu}}\quad \text{on} \quad \L^{2}(\Lambda). 
\end{equation}
Thus, if $\vphi \in C_{c}^{\infty}({\Lambda})$ with $\norm{\vphi}=1$, we have
\begin{align}\notag
\scal{\vphi,H_{ {\bom},\Lambda}\vphi }_{\Lambda}& = \scal{\vphi,\overline{H}_{ {\bom},\Lambda}\vphi }_{\Lambda}
+ \scal{\vphi,\left({V}_{{\bom}_\Lambda}- \overline{V}_{{\bom}_\Lambda}\right) \vphi }_{\Lambda}\\
& \ge  c_{u_{-}, \delta_{-},d}^{\pr} \,  {\bar{\mu}} +  \scal{\vphi,\left({V}_{{\bom}_\Lambda}- \overline{V}_{{\bom}_\Lambda}\right) \vphi }_{\R^{d}}\\
&  \ge  c_{u_{-}, \delta_{-},d}^{\pr} \,  {\bar{\mu}} + \scal{\vphi,{V}_{{\bom}_\Lambda}\vphi }_{\R^{d}}-  \frac1{K^d} \int_{\Lambda_{K}(0)} \scal{\vphi(\cdot +a),{V}_{{\bom}_\Lambda}\vphi(\cdot +a) } \mathrm{d} a    \notag\\
&  \ge c_{u_{-}, \delta_{-},d}^{\pr}\,  {\bar{\mu}}  -\frac1{K^d} \int_{\Lambda_{K}(0)}  \left|\scal{\vphi,{V}_{{\bom}_\Lambda}\vphi }- \scal{\vphi(\cdot +a),{V}_{{\bom}_\Lambda}\vphi(\cdot +a) }\right|  \mathrm{d} a    \notag\\
&  \ge  c_{u_{-}, \delta_{-},d}^{\pr} \,  {\bar{\mu}}  - c^{\prime}_{u} K \norm{\nabla_{\Lambda}\vphi}_{\Lambda}
\ge c_{u_{-}, \delta_{-},d}^{\pr} \,  {\bar{\mu}}  - c^{\prime}_{u}K  \scal{\vphi,H_{ {\bom},\Lambda}\vphi }_{\Lambda}^{\frac 1 2},\notag
\end{align}
where we used
\begin{equation}
\norm{\vphi(\cdot +a)-\vphi}_{\R^{d}}= \norm{(\e^{ a\cdot \nabla} -1)\vphi}_{\R^{d}}\le \abs{a} \norm{\nabla\vphi}_{\R^{d}}= \abs{a} \norm{\nabla_{\Lambda}\vphi}_{\Lambda}.
\end{equation}
It follows that there is $ \tilde{K}_{u,d}>0$, such that  for $K> \tilde{K}_{u,d}$ we have 
\begin{equation}
 \scal{\vphi,H_{ {\bom},\Lambda}\vphi }_{\Lambda} \ge  c_{u_{-}, \delta_{-},d}^{\pr\pr}\,\frac{ {\bar{\mu}}^2}{K^2}.
\end{equation}
Since this holds for all $\vphi \in C_{c}^{\infty}({\Lambda})$ with $\norm{\vphi}=1$, we have
\beq \label{lowerbd}
H_{ {\bom},\Lambda} \ge c_{u_{-}, \delta_{-},d}^{\pr\pr}\,\frac{ {\bar{\mu}}^2}{K^2} \quad  \text{on} \  \L^{2}(\Lambda).
\eeq
From \eqref{proboverV} and \eqref{lowerbd} we get
\beq
\P\set{H_{ \bom,\Lambda} \ge c_{u_{-}, \delta_{-},d}^{\pr\pr}\,\frac{ {\bar{\mu}}^2}{K^2}} 
 >   1-  L^d \e^{- A K^d}.
\eeq
Given $ p>0$, we take $K= \pa{\frac {(p+1)d} {A_{\mu,d}} \log L}^{\frac 1 d}$ and get
\begin{equation} 
\P\set{H_{ \bom,\Lambda_L} \ge  2 C_{u_{-}, \delta_{-},\mu,d,p}\,\pa{\log L}^{-\frac 2 d}}
 >   1-  L^{-pd} ,
\end{equation} 
for $L\ge \tilde{L}_{u_{-}, \delta_{-},\mu,d,p}$, where $C_{ u_{-}, \delta_{-},\mu,d,p}>0$ is an appropriate constant.

We then take   $n \in \N$ and let $S=S_\Lambda = n \Z^d \cap \Lambda$.   If $n \ll K$, 
we get, as in \eqref{average}, that for all   $t_S \in [0,1]^S$ we have
\begin{equation}\label{averageS}
\overline{V}_{{\bom}_\Lambda,t_S}(x):= \frac1{K^d} \int_{\Lambda_{K}(0)} V_{{\bom}_\Lambda, t_S} (x-a)\, \mathrm{d} a  \ge  c_{u_{-}, \delta_{-},d} \, Y_{{\bom},S,\Lambda}\, \Chi_{\Lambda}(x), 
\end{equation}
where 
\beq    Y_{{\bom},S,\Lambda}:= \min_{\xi \in \widetilde{\Lambda}}  \frac1{K^d}  \sum_{\zeta \in 
 \widetilde{\Lambda_{\frac K 3}(\xi)}\setminus S} {\om}_\zeta\,  .
\eeq
Proceeding as above, we conclude that
\beq \label{lowerSboundevent}
\P\set{H_{ \bom, t_S,\Lambda_L} \ge  2 C_{u_{-}, \delta_{-},\mu,d,p,q}\,\pa{\log L}^{-\frac 2d} \quad \text{for all  $t_S \in [0,1]^S$}}
 >   1-  L^{-pd} ,
\eeq
for $L\ge \tilde{L}_{u_{-}, \delta_{-},\mu,d,p,q}$, where $C_{u_{-}, \delta_{-},\mu,d,p,q}>0$ is an appropriate constant.
\end{remark}

If $0$ is not an atom for the measure $\mu$ in \eq{mu}, i.e., if  $\mu(\{0\})=0$, we can also obtain a high disorder  `a priori' finite volume estimate.

\begin{proposition}\label{propInitialHD}  Consider the generalized Anderson Hamiltonian $H_{\bom,\lambda}= H_{0 }+ \lambda V_{\bom}$ on  $\L^{2}(\R^{d})$, where $H_{0}$ and  $V_{\bom}$ are as in \eq{genAndH} and $\lambda >0$. Suppose  $0$ is not an atom for the measure $\mu$ in \eq{mu}. There exists an energy  $\widetilde{E}=\widetilde{E}(d, V_{\mathrm{per}},u_-,\delta_-) >0$, such that, fixing $E_{0}\in ]0,\widetilde{E}[$ and  $p>0$, given  $L \ge 100(\delta_{+}+1)$ there exists a constant  $\widetilde{\lambda}(L)=\widetilde{\lambda}(d, V_{\mathrm{per}},u_-,\delta_-,\mu,p,E_{0},L)$,  nondecreasing as a function of $E_0$, so for all $\lambda\ge \widetilde{\lambda}(L)$  we have
\beq \label{HlowerbdHD}
\P \set{H_{\bom,{\bt_S},\lambda,\Lambda_{L}(x)} \ge E_{0} \; \text{for all} \; \;
\bt_S \in [0,1]^{S}}\ge  1 - L^{-pd} \quad\text{for all}\quad  x\in \R^{d},
\eeq
where  $S=S_{x,L,q} =\widetilde{\Lambda}_L(x) \setminus \tilde{q} \Z^d  $. 
Thus,  for  all $E\in [0, E_{0}[$, $x \in \R^d$,  $\bt_{S} \in [0,1]^{{S}}$, and 
 $\lambda\ge\widetilde{\lambda}(L)$,   it follows, with probability  $ \ge  1 - L^{- pd}$, that
\begin{align}\label{wegL02HD}
\| R_{\bom,\bt_{S},\lambda,\Lambda_L(x)}(E) \|& \le  (E_{0} - E)^{-1},
\intertext{and,  {for $y,y^{\prime} \in \Lambda_L$,  $\norm{y-y^{\pr}} \ge 20 \sqrt{d}$},}   \label{goodL012HD}
\| \Chi_y R_{\bom,\bt_{S},\lambda, \Lambda_L(x)}(E)  \Chi_{y^{\prime}} \|& \le 
2  (E_{0} - E)^{-1} \e^{-\frac23 \sqrt{E_{0} - E}\norm{y-y^{\prime}}}.
 \end{align}
In particular, given $ \varsigma,\vs^\pr \in ]0,1[$, and  $0<E_{1}<E_{0}<\widetilde{E}$, there is  $\tilde{\tilde{L}}=\tilde{\tilde{L}}(d,\vs,\vs^\pr,E_{0}-E_{1})$, such that for all energies $E \in [0,E_{1}]$ a scale  $L\ge\tilde{\tilde{L}}$ is  $(E,\frac 1 2 \sqrt{E_{0}-E_{1}},\varsigma, \varsigma^\pr,p)$-{extra good}  if  $\lambda\ge \widetilde{\lambda}(L)$.
  \end{proposition}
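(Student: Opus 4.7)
The strategy is to reduce \eq{HlowerbdHD} to a deterministic large-coupling spectral estimate for a $\tilde q\Z^d$-periodic Schr\"odinger operator. Because $u_\zt\ge 0$ and $t_\zt\ge 0$, the finite-volume potential $V_{\bom,\bt_S,\La_L(x)}$ is monotone in both $\bom$ and $\bt_S$, so it suffices to lower bound $H_{\bom,\bt_S,\lambda,\La_L(x)}$ on an event on which every random variable $\om_\zt$ with $\zt\in\tilde q\Z^d\cap\widetilde{\La}_L(x)$ exceeds a common threshold. Using $\mu(\set{0})=0$, I pick $\eps_L=\eps_L(\mu,p,d,\tilde q)>0$ with $\mu([0,\eps_L])\le\tilde q^d L^{-(p+1)d}$ and set
\[
\cU_{L,x}:=\set{\om_\zt>\eps_L\text{ for all }\zt\in\tilde q\Z^d\cap\widetilde{\La}_L(x)}.
\]
A union bound gives $\P(\cU_{L,x})\ge 1-L^{-pd}$ for $L\ge L_0(\tilde q,p)$, and $\cU_{L,x}\in\cF_{\La_L(x)\setminus S}$ where $S=\widetilde{\La}_L(x)\setminus\tilde q\Z^d$. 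On $\cU_{L,x}$, the bound \eq{uzeta} and $U\ge 0$ yield, for every $\bt_S\in[0,1]^S$,
\[
H_{\bom,\bt_S,\lambda,\La_L(x)}\ \ge\ -\Delta^D_{\La_L(x)}+V_{\mathrm{per},\La_L(x)}+\lambda\eps_L u_-\,W_{\tilde q,\La_L(x)},
\]
where $W_{\tilde q}(y):=\sum_{\zt\in\tilde q\Z^d}\Chi_{\La_{\delta_-}(\zt)}(y)$ is $\tilde q\Z^d$-periodic.

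Since zero-extension embeds $H^1_0(\La)$ into $H^1(\R^d)$, the variational principle gives $\inf\sigma(H_{\bom,\bt_S,\lambda,\La_L(x)})\ge\inf\sigma(K_\alpha)$ on $\cU_{L,x}$, where $\alpha:=\lambda\eps_L u_-$ and $K_\alpha:=-\Delta+V_{\mathrm{per}}+\alpha W_{\tilde q}$ is the infinite-volume periodic operator on $\L^2(\R^d)$. The deterministic heart of the proof is the claim that $\inf\sigma(K_\alpha)\uparrow\widetilde E$ as $\alpha\to\infty$, with $\widetilde E=\widetilde E(d,V_{\mathrm{per}},u_-,\delta_-)>0$. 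By monotone convergence of quadratic forms, $K_\alpha$ converges in strong resolvent sense to the Dirichlet realization $K_\infty$ of $-\Delta+V_{\mathrm{per}}$ on the ``swiss-cheese'' domain $\Omega:=\R^d\setminus\bigcup_{\zt\in\tilde q\Z^d}\overline{\La_{\delta_-}(\zt)}$ (if $W_{\tilde q}$ is bounded away from zero the claim is trivial with $\widetilde E=+\infty$). Floquet decomposition reduces $\inf\sigma(K_\infty)$ to the infimum over quasi-momenta of the ground-state eigenvalue on the bounded fundamental well $\La_{\tilde q}(0)\cap\Omega$, with Dirichlet on the internal hole and quasi-periodic BC on the outer boundary; a Faber--Krahn/Poincar\'e bound on this bounded domain with nonempty Dirichlet part then gives $\widetilde E>0$. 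Given $E_0\in\rb{0,\widetilde E}$, pick $\alpha_0=\alpha_0(d,V_{\mathrm{per}},u_-,\delta_-,E_0)$, nondecreasing in $E_0$, with $\inf\sigma(K_{\alpha_0})\ge E_0$, and set $\widetilde\lambda(L):=\alpha_0/(\eps_L u_-)$.

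Once \eq{HlowerbdHD} is established, \eq{wegL02HD} is immediate and \eq{goodL012HD} follows by the Combes--Thomas estimate of \cite[Eq.~(19)]{GK2}, exactly as at the end of the proof of Proposition~\ref{propInitial}. For the extra-good statement, $S$ has density $1-\tilde q^{-d}$, so $\#(S\cap\La_{L/5})\ge L^{(1-\vs^\pr)d}$ for $L\ge\tilde{\tilde L}$, i.e., $S$ is $\vs^\pr$-abundant; together with $\cU_{L,x}\in\cF_{\La_L(x)\setminus S}$, this makes $\cU_{L,x}$ a single $(\La_L(x),E,\tfrac12\sqrt{E_0-E_1},\vs,\vs^\pr)$-adapted---hence extra-good---event for every $E\in[0,E_1]$. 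The main obstacle is the deterministic large-coupling lemma $\inf\sigma(K_\alpha)\to\widetilde E>0$: strong-resolvent convergence is routine monotone-form machinery, but isolating a positive limit uniformly in the Floquet parameter requires a careful Dirichlet-hole Poincar\'e estimate on the fundamental cell and tracking of the $V_{\mathrm{per}}$-dependence through the normalization $\inf\sigma(-\Delta+V_{\mathrm{per}})=0$, so that $\widetilde E$ depends only on $d,V_{\mathrm{per}},u_-,\delta_-$ as claimed.
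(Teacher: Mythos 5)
Your overall structure matches the paper's: Dirichlet monotonicity in the coupling constants, an event on which every $\omega_\zeta$ with $\zeta\in\tilde q\Z^d\cap\widetilde\Lambda_L(x)$ is bounded below (which also keeps the $S$-sites free), a union bound together with $\mu(\{0\})=0$ to control the probability, and a reduction to a deterministic large-coupling lower bound for a $\tilde q$-periodic Schr\"odinger operator; the passage to \eq{wegL02HD}--\eq{goodL012HD} and the extra-good conclusion are identical. Where you diverge is the deterministic lemma itself. The paper simply cites \cite[Lemma~3.1]{Klop95}: $E(t)=\inf\sigma(-\Delta+V_{\mathrm{per}}+tW_{\tilde q})$ is \emph{strictly} increasing with $E(0)=0$ (this is first-order perturbation theory on the unique Floquet ground state), so $\widetilde E:=\lim_{t\to\infty}E(t)\ge E(1)>0$ without any need to identify the limit. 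You instead attempt to compute the limit via monotone form convergence to the ``swiss-cheese'' Dirichlet operator $K_\infty$, Floquet decomposition, and a Faber--Krahn bound; this is a legitimate alternative, but the step you gloss over --- that $\lim_\alpha\inf\sigma(K_\alpha)$ equals (rather than is merely bounded above by) $\inf\sigma(K_\infty)$ --- is exactly the delicate part. One would argue fiberwise: each $K_\alpha(\theta)$ has compact resolvent and $E_0(\alpha,\theta)\uparrow E_0(\infty,\theta)$, then appeal to continuity of $E_0(\infty,\cdot)$ in $\theta$ and Dini's theorem to get uniformity, which in turn needs the Dirichlet--hole--Poincar\'e estimate you mention. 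You flag this yourself as ``the main obstacle,'' and it is; the Klopp route is shorter precisely because strict monotonicity at any single $t>0$ already yields $\widetilde E>0$ and eliminates the need to control the large-coupling limit. Otherwise the proposal is correct, and the bookkeeping ($\eps_L$ chosen so $\mu([0,\eps_L])\le\tilde q^dL^{-(p+1)d}$, $\widetilde\lambda(L)=\alpha_0/(\eps_L u_-)$, and the $\varsigma'$-abundance of $S=\widetilde\Lambda_L(x)\setminus\tilde q\Z^d$) is a clean equivalent of the paper's formulation.
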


\begin{proof}  Similarly to the proof of Theorem~\ref{propInitial}, in view of \eq{uzeta} it suffices to consider the  case when $u_{\zeta}= u_{-}\Chi_{\La_{\delta_{-}}(\zeta)}$ for all $\zeta \in \Z^{d}$.   Given $t \ge 0$, we set  $H(t)= H_{0} + V(t)$, where
$V(t)=t \sum_{\zeta \in \Z^d}   u_{\zeta} $ is a periodic potential with period one.  Then
$E(t)=\inf \sigma(H(t))$  is a strictly increasing continuous function of $t\ge 0$ with 
$E(0)=0$ (see \cite[Lemma~3.1 and its proof]{Klop95}); we set  $E(\infty)=\lim_{t \to \infty} E(t) >0$.  Given a box  $\La=\La_{L}(x)$ we let
$H_{ \bom,\lambda,\La}$ and $H_{ \La} (t)$ be the corresponding finite volume operators  to $H_{ \bom,\lambda}$ and $H (t)$.  ($H_{ \La} (t)=H_{ \bom,t,\La} $ with $\omega_{j}=1$ for all $j\in\Z^{d}$.)  Since we are using Dirichlet boundary condition, we have 
$H_{ \La} (t)\ge E(t)$, and hence
\beq
H_{\bom,\lambda,\La} \ge  E(t) \quad \text{if}\quad \lambda \omega_{j}\ge t \quad \text{for all}\quad j \in \widetilde{\La}.
\eeq
Given $E_{0} \in ]0,E(\infty)[$, let $t_{0}>0$ be defined by  $E_{0}=E(t_{0})$.   We conclude that
\beq
\P \set{H_{\bom,\lambda,\Lambda} \ge E_{0}}\ge 1 - L^{d} \mu([0, \tfrac {t_{0}}{\lambda}[)).
\eeq
Since $\mu(\set{0})=0$ by hypothesis, we have $\lim_{\lambda \to \infty}\mu([0, \tfrac {t_{0}}{\lambda}[))=0$, 
and hence there exists  $\widetilde{\lambda}(L)=\widetilde{\lambda}(d, V_{\mathrm{per}},u_-,\delta_-,\mu,p,E_{0},L)<\infty$,  such that 
\beq
\P \set{H_{\bom,\lambda,\Lambda} \ge E_{0}} \ge 1 - L^{-pd} \quad\text{for}\quad \lambda\ge \widetilde{\lambda}(L).
\eeq

To prove   a similar estimate with free sites, we set 
$H_{\bom,\lambda}^{(q)}= H_0 +\lambda  V_{\bom}^{(q)}$ with $V_{\bom}^{(q)}$ as in \eq{genAndHq}.  Proceeding as above, let $H^{(q)}(t)= H_{0} + V^{(q)}(t)$, where
$V^{(q)}(t)=t \sum_{\zeta \in \tilde{q}\Z^d}   u(x - \zeta) $, set $E^{(q)}(t)=\inf \sigma(H^{(q)}(t))$, and let   $E^{(q)}(\infty)=\lim_{t \to \infty} E^{(q)}(t) >0$. Given $E_{0} \in ]0,E^{(q)}(\infty)[$, let $t^{(q)}_{0}>0$ be defined by  $E_{0}=E^{(q)}(t^{(q)}_{0})$. Civen  a box $\Lambda= \Lambda_{L}$, we set  $S =\widetilde{\Lambda}_L \setminus \tilde{q} \Z^d  $.
 We conclude that
there is $\widetilde{\lambda}(L)< \infty$  such that for all $\lambda \ge \widetilde{\lambda}(L)$ we have 
\beq
\P \set{H_{\bom,\bt_{S},\lambda,\Lambda} \ge E_{0}  \; \text{for all} \; \;
\bt_S \in [0,1]^{S}}\ge 1 - (\tfrac L q)^{d} \mu([0, \tfrac {t^{(q)}_{0}}{\lambda}))\ge 1 - L^{-pd},
\eeq
which is \eq{HlowerbdHD}.
 As in Proposition~\ref{propInitial}, if  $\lambda \ge \widetilde{\lambda}(L)$, then for  all $E\in [0, E_{0}[$ and 
 $\bt_{S} \in [0,1]^{{S}}$,   it follows, with probability  $ \ge  1 - L^{- pd}$, that we have \eq{wegL02HD} and \eq{goodL012HD}.
 \end{proof}

\subsection{The multiscale analysis}
\label{sectMSA}

We  now state our single energy  multiscale analysis    for generalized Anderson Hamiltonians.

\begin{proposition}  \label{propA} Let $H_{\bom}$ be a generalized Anderson Hamiltonian on  $\L^{2}(\R^{d})$.    Fix  $E_{0} >0$,  $p\in\rb{\frac 1 3, \frac 3 8}$,  $ \vs,\vs^\pr , \tau,\rho_1,\rho_2  \in]0,1[ $,  
 with $\tau < \vs$ and   $ \rho_{2}=\rho_{1}^{n_{1}}$,   $n_{1}\in  \N$, such that 
\beq  \label{rhos}
 \tfrac 1 {1 +p} <  \rho_{1}< \tfrac 3 4  (1 -\vs)\quad
 \text{and} \quad   p < \tfrac 1 2 \rho_{1}(1 - \vs^\pr) -\rho_{2} .
\eeq
There exists a finite scale  $\widetilde{L}_{0}=\widetilde{L}_{0}\pa{d, \Vper, \delta_{\pm},u_{\pm},U_{+},\mu,E_{0},p, \rho_{1},\rho_{2},\vs,\vs^\pr,\tau}$ with the following property: 
given an energy  $ E \in [0,E_{0}]$,  a scale $L_{0}\ge \widetilde{L}_{0}$, and a number
\beq \label{m0}
 m_{0}\ge  L_{0}^{-\tau} ,
   \eeq 
if  all scales
  $L \in \br{L_{0}, L_{0}^{ {\frac 1 {\rho_{1}\rho_{2}}}}}$ are   $(E,m_0,\varsigma, \varsigma^\pr,p)$-{extra good}, it follows that every scale $L \ge L_{0}$
  is $\pa{E,\frac  {m_{0}} 2,\vs,\vs^\pr,p}$-{extra good}.  
\end{proposition}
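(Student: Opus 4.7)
The plan is to perform an induction on the doubly-exponential sequence $L^{(k)} := L_{0}^{(1/\rho_{1})^{k}}$, carrying along rates $m^{(k)}$ that decrease only a tiny fraction per step, so that $m^{(k)} \ge m_{0}/2$ is maintained for all $k$. The hypothesis's starting window $\br{L_{0}, L_{0}^{1/(\rho_{1}\rho_{2})}}$ spans exactly the $n_{1}+1$ consecutive $L^{(k)}$'s that the inductive step will consume: at each stage it needs the small (covering) scale $\approx (L')^{\rho_{1}}$ and at least one coarser intermediate (norm-control) scale available among the already-good ones. The inductive claim to propagate is that every scale $L' \in \br{L_{0}, L^{(k+n_{1})}}$ is $(E, m^{(k)}, \vs, \vs^{\pr}, p)$-extra good.

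For the inductive step fix a new scale $L'$ past the current frontier and a center $x_{0}$. Set $\ell \approx (L')^{\rho_{1}}$ and take the standard $\ell$-covering $\set{\Lambda_{\ell}(r)}_{r \in \G}$ of $\Lambda_{L'}(x_{0})$ (Lemma~\ref{lemcovering}). Call $r$ \emph{bad} when $\bom \notin \cE_{\ell,r}$. The support condition \eq{uzeta} makes events at disjoint $\ell$-boxes independent, so the probability of having more than $K$ bad $\ell$-sub-boxes that cannot be enclosed in $K$ intermediate boxes of scale in an already-good range is bounded by $\binom{\#\G}{K+1}\ell^{-(K+1)pd} \lesssim (L')^{(K+1)d(1 - \rho_{1}(1+p))}$. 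The first condition in \eq{rhos} gives $\rho_{1}(1+p) > 1$, making the exponent negative, and $K$ can be fixed so the bound is at most $\tfrac{1}{2}(L')^{-pd}$; a comparable second contribution tracks the failure of the intermediate enclosing boxes to be extra good, and is absorbed by taking $\widetilde L_{0}$ large.

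On the complement of this bad event, all bad $\ell$-sub-boxes cluster into at most $K$ groups, each captured in a norm-controlled intermediate box $\Lambda_{j} = \Lambda_{L_{j}}(y_{j})$ with $L_{j}$ in an already-extra-good range, supplying the bound \eq{wegL1}. The data $(\Theta, \{\Lambda_{j}\})$ satisfy the hypotheses of Lemma~\ref{lemkeyMSA} with $\rho = \rho_{1}$ and $\kappa$ chosen so $(L')^{\kappa}$ matches the intermediate scale; the admissibility condition $\kappa\vs > \tau\rho$ reduces to $\tau < \vs$ (the standing hypothesis) once $\kappa \ge \rho_{1}$, which it is. Lemma~\ref{lemkeyMSA} then promotes $\Lambda_{L'}(x_{0})$ to $(\bom,E,M,\vs)$-goodness with $M \ge m^{(k)}\pa{1 - C((L')^{\kappa-1} + (L')^{\rho_{1}\tau - \kappa\vs})}$; since $L^{(k)}$ grows doubly-exponentially, the telescopic shrinkage factors across the induction sum to less than $\tfrac{1}{2}$ once $\widetilde L_{0}$ is large, guaranteeing $M \ge m^{(k+1)} \ge m_{0}/2$.

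To upgrade ``good'' to ``extra good,'' I refine the event above into disjoint adapted pieces $\cC_{\mathbf{i},B}$ indexed by a configuration $B$ of bad sites and a choice $\mathbf{i} = (i_{r})_{r \notin B}$ of adapted sub-piece $\cC_{i_{r}}^{(r)} \subset \cE_{\ell,r}$ for each good $r$. Each is $\cF_{\Lambda_{L'}(x_{0}) \setminus S_{\mathbf{i},B}}$-measurable with $S_{\mathbf{i},B} := \sqcup_{r \notin B} S_{\cC_{i_{r}}^{(r)}}$, and the deterministic argument of the previous paragraph applies uniformly in $t_{S_{\mathbf{i},B}} \in [0,1]^{S_{\mathbf{i},B}}$ by the $\ell$-scale adaptedness. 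The main obstacle will be verifying $\vs^{\pr}$-abundance of $S_{\mathbf{i},B}$ at scale $L'$: in every $(L'/5)$-sub-box one sums the $\ell$-scale abundance contributions from the good $\ell$-sub-boxes and controls the deficit induced by the at-most-$K$ bad clusters. It is precisely the second condition in \eq{rhos}, $p < \tfrac{1}{2}\rho_{1}(1-\vs^{\pr}) - \rho_{2}$, that makes the combined bookkeeping close, the $\rho_{2}$ term encoding the size of the bad-cluster deficit and the $\tfrac{1}{2}\rho_{1}(1-\vs^{\pr})$ term the free-site gain from the good $\ell$-sub-boxes.
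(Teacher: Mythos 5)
The crucial gap is that your proposal offers no mechanism for establishing the resolvent-norm bound \eq{wegL1} for the intermediate enclosing boxes $\Lambda_j$. You assert that $L_j$ lies in ``an already-extra-good range'' and that this supplies \eq{wegL1}, but being at an extra good scale only means the probability estimate \eq{PEL0} holds; it does not make the specific box $\Lambda_j$ satisfy the Wegner bound, and on the contrary $\Lambda_j$ contains the bad $\ell$-sub-boxes by construction, so nothing inductively excludes a near-$E$ eigenvalue of $H_{\bom,\Lambda_j}$. Even waiving this, the probability that a fixed box at scale $L_j\approx K(L')^{\rho_1}$ is good is only $\ge 1 - L_j^{-pd}\approx 1 - (L')^{-p\rho_1 d}$, which, with $\rho_1 < 1$ and at most $K$ such boxes, is far too weak to reach the target $1 - (L')^{-pd}$. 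In the paper this bound is instead produced by the Wegner estimate of Lemma~\ref{lemwegner}, which is the heart of the Bourgain--Kenig construction and is entirely absent from your argument: the min-max parameterization of the finite-volume eigenvalues, the quantitative unique continuation principle (Corollary~\ref{corQUCPD}) applied through the free sites of the prepared event, and the probabilistic Sperner lemma of \cite{AGKW} together yield the conditional estimate \eq{wegner2}. It is precisely because that Wegner bound carries the weak rate $\e^{C_1 L^{\frac43\rho_1}\log L}$ that the enclosing boxes must be taken at scale $L^{\kappa}$ with $\kappa = \tfrac 1 2\pa{1 + \tfrac 4 3\rho_1(1-\vs)^{-1}}$ close to $1$, not $\kappa = \rho_1$ as you propose, so that $L^{\kappa(1-\vs)}$ dominates $L^{\frac43\rho_1}\log L$; this is exactly where the constraint $\rho_1 < \tfrac 3 4(1-\vs)$ in \eq{rhos} enters.

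A secondary misreading: the condition $p < \tfrac 1 2 \rho_1(1-\vs') - \rho_2$ does not govern the $\vs'$-abundance bookkeeping of $S_{\mathbf{i},B}$. In the paper it is the exponent that makes the conditional Wegner failure probability $C_2 K_1 L^{-\frac d 2(\rho_1(1-\vs')-2\rho_2)}$ in \eq{probW} beat $L^{-pd}$ at the end of Lemma~\ref{lemwegnerplus2}; the $\vs'$-abundance of the resulting free-site set is handled by the much cruder counting argument in Lemma~\ref{lemstructure}(i) and at the end of Lemma~\ref{lemwegnerplus}. Without the free-site / quantitative-unique-continuation / Sperner machinery the proposal cannot be closed.
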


\begin{remark}
To satisfy \eq{rhos} and \eq{m0}, we may
  pick      $p=
\frac 3 8  -$, and appropriate  $  \rho_{1}=\frac 3 4 -$,  $\vs=0+$, $\vs^\pr=0+$,  $\tau=0+$,
$ \rho_{2}=0+$. 
\end{remark}

\begin{remark}\label{remQUCP} The restriction $p\in\rb{\frac 1 3, \frac 3 8}$ comes from the use of the quantitative unique continuation principle, stated in Theorem~\ref{thmucp} and used in the form given in Corollary~\ref{corQUCPD}\ref{corQUCPDi}, which gives  a lower bound of the form $R^{-CR^{\frac 4 3}}$ in \eq{UCPbound1}.  It is instructive to see what happens if this lower bound  was of the form $R^{-CR^{\gamma}}$ for some $\gamma >0$.  In the  multiscale analysis,   Lemma~\ref{lemast} requires $ \tfrac 1 {1 +p} <  \rho_{1}$ in \eq{probast}.   The lower bound of  \eq{UCPbound1} is used to prove Lemma~\ref{lemwegner}; the important estimate \eq{wegner2} is useful only if   $\gamma   \rho_{1} <1$.   Lemma~\ref{lemwegnerplus2} uses $p< \frac 1 2\rho_{1}$ to get the probability estimate \eq{wegnerplusprobest}.  We conclude that the multiscale analysis requires
 \begin{gather}
\gamma < \tfrac {1 +\sqrt{3}} 2  \qtx{and} \gamma -1 <p< \tfrac 1 {2\gamma} , \\
\tfrac 1 {1 +p} <  \rho_{1}< \tfrac 1\gamma (1 -\vs)\quad
 \text{and} \quad   p < \tfrac 1 2 \rho_{1}(1 - \vs^\pr) -\rho_{2} .
\end{gather}
Since the quantitative unique continuation principle gives $\gamma=\frac 4 3<  \frac {1 +\sqrt{3}} 2$, we can perform the multiscale analysis with $p\in\rb{\frac 1 3, \frac 3 8}$ and \eq {rhos}.
\end{remark}

The proof of proposition~\ref{propA} will require several lemmas and definitions. 
We fix an energy $ E \in [0,E_{0}]$, and let  $p,\vs,\vs^\pr,\rho_1,\rho_2,n_1,  \tau$ be as in Proposition~\ref{propA}, satisfying \eqref{rhos}. 

\begin{definition} A collection $\cL$ of scales  is called  $(E ,\varsigma, \varsigma^\pr,p,\tau)$- \emph{extra good}  if for each  ${\ell}\in \cL$ there is a rate of decay $m_\ell$, with 
\beq \label{mL}
 m_{\ell}\ge  \ell^{-\tau} ,
  \eeq
such that for each 
 box $\Lambda_{{\ell}}$  
there is  a $(\Lambda_{\ell},E,m_\ell ,\varsigma, \varsigma^\pr)$-extra good event  $\cE_{\Lambda_{{\ell}}}$ satisfying
 \eqref{PEL0}.  
\end{definition}

   In the following definitions and lemmas, given a scale $L$, we set   $\ell_{1}=L^{\rho_{1}}$ and  $\ell_{2}=\ell_1^{\rho_2}= L^{\rho_1\rho_{2}}$.   We also set  $L_{n}=\ell_{1}^{\rho_{1}^{n}}$ for   $n=0,1,\ldots,n_{1}$;  note $L_{0}=\ell_{1} $  and $L_{n_{1}}=\ell_{2}$.

We start by defining an event that incorporates  \cite[property $(\ast)$]{BK}.  Note that  by writing ``$\cR=\{\Lambda_{\ell}(r)\}_{r \in R}$  is the standard  $\ell$-covering  of  $\Lambda_{L}$" (cf. Definitions~\ref{defsuitcov} and \ref{defstdcov}), we will  mean that $\cR=\cG_{\Lambda_{L}}^{(\ell)}$ as in \eq{standardcover} with $\alpha$ as in \eq{alphaL}; in particular,  $R= \G_{\Lambda_{L}}^{(\ell)}$ as in \eq{bbG}.

\begin{definition}\label{defnotsobad}
 Given a box $\Lambda_{\ell_{1}}$,    let $\cR_{n}=\{\Lambda_{L_{n}}(r)\}_{r \in R_{n}}$ be the standard  $L_{n}$-covering  of  $\Lambda_{\ell_{1}}$.   Fix a number $K_{2}\in \N$. Then:
\begin{enumerate}
\item A box $\Lambda_{\ell_{1}}$   is said to be $(\bom,E,K_2)$- \emph{notsobad}   if   there is $\Theta=\cup_{r \in R^{\pr}_{n_{1}}} \Lambda_{3\ell_{2}}(r)$, where $R^{\pr}_{n_{1}} \subset  R_{n_{1}}$ with $\# R^{\pr}_{n_{1}} \le K_{2}$,  
 such that for all $ x \in   {\Lambda_{\ell_{1}}}\setminus {\Theta}$ there is a
$(\bom,E,m_{L_n},\varsigma)$-good  box $\Lambda_{L_{n}}(r)$, with   $r\in R_{n}$ for some $n\in \{1,\ldots,n_{1}\}$, and $ \Lambda_{\frac { L_{n}} 5}(x )  \cap \Lambda_{\ell_{1}}\subset \Lambda_{L_{n}}(r)$.

\item   An event $\cN$ is  $(\Lambda_{\ell_{1}},E,K_2)$- \emph{notsobad} if
$\cN \in \cF_{\Lambda_{\ell_{1}}}$ and 
the box $\Lambda_{\ell_{1}}$   is  $(\bom,E,K_2)$-notsobad for all  $\bom \in \cN$.
\end{enumerate}
\end{definition}

\begin{lemma} \label{lemast}   Suppose $\set{L_n; \, n=1,2\ldots,n_1}$ is $(E ,\varsigma, \varsigma^\pr,p,\tau)$-extra good. There exists a constant $\widehat{K}_{2}= \widehat{K}_{2}(d,p,\rho_{1}, \rho_{2})$, and for  $K_{2} \in \N$ with  $K_{2}\ge \widehat{K}_{2}$  a constant $\widehat{\ell}_{1}=\widehat{\ell}_{1}(d,p,\rho_{1}, \rho_{2}, K_2)$, such that
 for any box $\Lambda_{\ell_{1}}$ with   $\ell_{1}\ge \widehat{\ell}_{1}$ there exists  a $(\Lambda_{\ell_{1}},E,K_2)$-notsobad
 event $\cN_{\Lambda_{\ell_{1}}}$
 with
\beq \label{notsobad}
\P\{\cN_{\Lambda_{\ell_{1}}}\} > 1- \ell_{1}^{-5d}.
\eeq
\end{lemma}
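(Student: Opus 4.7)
The plan is to build the exceptional set $\Theta$ by tracking, at each scale $L_{n}=\ell_{1}^{\rho_{1}^{n}}$, a family of ``chained bad boxes'' $\cT_{n}\subset R_{n}$, and to show that the chain at the finest scale has cardinality bounded by a constant. Let $\cB_{n}=\{r\in R_{n}:\bom\notin\cE_{\Lambda_{L_{n}}(r)}\}$ denote the centers at scale $L_{n}$ where the extra good event fails, so that on $\cE_{\Lambda_{L_{n}}(r)}$ the box $\Lambda_{L_{n}}(r)$ is in particular $(\bom,E,m_{L_{n}},\varsigma)$-good. I set $\cT_{1}=\cB_{1}$ and, for $2\le n\le n_{1}$,
\[
\cT_{n}=\bigl\{r\in\cB_{n}:\Lambda_{L_{n}}(r)\cap\Lambda_{L_{n-1}}(r')\neq\emptyset\text{ for some }r'\in\cT_{n-1}\bigr\}.
\]
By the covering property~\eqref{bdrycover}, any $x\in\Lambda_{\ell_{1}}$ that is not covered by a good box at any scale $n$ must admit, for each $n$, some $r_{n}(x)\in\cB_{n}$ with $\Lambda_{L_{n}/5}(x)\cap\Lambda_{\ell_{1}}\subset\Lambda_{L_{n}}(r_{n}(x))$; an easy induction on $n$ yields $r_{n}(x)\in\cT_{n}$, so $x\in\Lambda_{\ell_{2}}(r_{n_{1}}(x))$ with $r_{n_{1}}(x)\in\cT_{n_{1}}$. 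Thus taking $R'_{n_{1}}:=\cT_{n_{1}}$, $\Theta:=\bigcup_{r\in\cT_{n_{1}}}\Lambda_{3\ell_{2}}(r)$, and $\cN_{\Lambda_{\ell_{1}}}:=\{|\cT_{n_{1}}|\le K_{2}\}$ will reduce the lemma to the probabilistic estimate $\P\{|\cT_{n_{1}}|\le K_{2}\}>1-\ell_{1}^{-5d}$ (measurability $\cN_{\Lambda_{\ell_{1}}}\in\cF_{\Lambda_{\ell_{1}}}$ follows from $\cE_{\Lambda_{L_{n}}(r)}\in\cF_{\Lambda_{L_{n}}(r)}$, after a harmless truncation of boxes meeting $\partial\Lambda_{\ell_{1}}$).

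The probabilistic heart is a uniform bound, for $n\in\{1,\ldots,n_{1}\}$ and $r'\in R_{n-1}$ (with the convention $L_{0}=\ell_{1}$ and $\cT_{0}=\{0\}$ to cover the base case), on
\[
X_{n}^{(r')}:=\#\bigl\{r\in\cB_{n}:\Lambda_{L_{n}}(r)\cap\Lambda_{L_{n-1}}(r')\neq\emptyset\bigr\},
\]
since the construction gives $|\cT_{n}|\le|\cT_{n-1}|\cdot\max_{r'}X_{n}^{(r')}$. At most $C_{d}(L_{n-1}/L_{n})^{d}$ elements of $R_{n}$ intersect $\Lambda_{L_{n-1}}(r')$, and among any $K+1$ of them a pigeonhole on the $2^{d}$ parity classes of the $\alpha L_{n}$-lattice produces $K^{*}:=\lceil(K+1)/2^{d}\rceil$ \emph{pairwise disjoint} boxes, whose extra good events are then independent. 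Writing $\delta:=\rho_{1}(1+p)-1$, which is strictly positive by $\rho_{1}>1/(1+p)$ in~\eqref{rhos}, a binomial union bound yields
\[
\P\{X_{n}^{(r')}\ge K+1\}\le\frac{C_{d}^{K^{*}}}{K^{*}!}\bigl(L_{n-1}L_{n}^{-(1+p)}\bigr)^{dK^{*}}\le C\,\ell_{1}^{-dK^{*}\rho_{1}^{n-1}\delta}.
\]

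A further union bound over the $\le(2\ell_{1}/L_{n-1})^{d}$ centers $r'\in R_{n-1}$ and over the $n_{1}$ scales, combined with $\rho_{1}^{n-1}\ge\rho_{1}^{n_{1}-1}=\rho_{2}/\rho_{1}$ for $n\le n_{1}$, will let me fix a constant $K=K(d,p,\rho_{1},\rho_{2})$ so large that $\{X_{n}^{(r')}\le K\text{ for all admissible }n,r'\}$ holds with probability $>1-\ell_{1}^{-5d}$ as soon as $\ell_{1}\ge\widehat{\ell}_{1}$. On this event the recursion $|\cT_{n}|\le K|\cT_{n-1}|$ gives $|\cT_{n_{1}}|\le K^{n_{1}}$, so $\widehat{K}_{2}:=K^{n_{1}}$ is the constant announced in the lemma. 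The delicate point I expect is the independence structure: the events $\cE_{\Lambda_{L_{n}}(r)}$ across different scales are \emph{not} mutually independent, because a box at scale $L_{n+1}$ is nested inside one at scale $L_{n}$; the proof must therefore extract independence only among pairwise disjoint boxes at a \emph{single} scale and transfer the information across scales through the purely combinatorial inequality $|\cT_{n}|\le K|\cT_{n-1}|$, with the strict positivity $\delta>0$ furnished by~\eqref{rhos} being exactly what makes each per-scale estimate go through.
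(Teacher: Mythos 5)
Your argument is correct and arrives at the bound by the same overall strategy as the paper: a per-scale control of the number of bad boxes near a given box, exploiting independence of disjoint extra-good events, with the strict positivity of $\delta=\rho_1(1+p)-1$ from \eqref{rhos} supplying the geometric gain, followed by a recursion down to the finest scale $L_{n_1}=\ell_2$. The implementation differs in two symmetric choices. The paper defines $\cN_{\Lambda_{\ell_1}}$ directly as the event that, for every $(n,r')$, there are at most $K'$ \emph{pairwise disjoint} bad boxes in $R_n(r')$; disjointness is thus built into the event and the probability estimate is immediate, while on the deterministic side a maximal disjoint bad family of size $\le K'-1$ is enlarged to a covering by boxes of side $3L_n$, each of which meets at most $C^{\pr\pr}$ boxes of $\cR_n$, yielding $K_2=(C^{\pr\pr}(K'-1))^{n_1}$. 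You instead control the raw count $X_n^{(r')}$ of all bad boxes near $\Lambda_{L_{n-1}}(r')$, moving the overlap bookkeeping to the probabilistic side via the parity pigeonhole that extracts $K^*=\lceil(K+1)/2^d\rceil$ pairwise disjoint boxes before invoking independence; the chained cluster $\cT_n$ then satisfies $|\cT_n|\le K\,|\cT_{n-1}|$ and gives $\widehat K_2=K^{n_1}$ with no $3L_n$-covering step. Both versions put $\rho_1>1/(1+p)$ to work at exactly the same point and yield thresholds depending only on $d,p,\rho_1,\rho_2$. One small remark: the ``harmless truncation'' you flag is not needed — by the choice of $\alpha$ in \eqref{bbG}, every box of a suitable $L_n$-covering of $\Lambda_{\ell_1}$ lies inside $\Lambda_{\ell_1}$, so $\cE_{\Lambda_{L_n}(r)}\in\cF_{\Lambda_{L_n}(r)}\subset\cF_{\Lambda_{\ell_1}}$ as is.
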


\begin{proof}  Given $\Lambda_{L_{n-1}}(r) \in \cR_{n-1}$, we set
\beq \begin{split}
\cR_{n}(r)& := \{\Lambda_{L_{n}}(s) \in \cR_{n}; \,\Lambda_{L_{n}}(s)\cap \Lambda_{L_{n-1}}(r)\not= \emptyset \} \quad \text{and} \quad\\
 R_{n}(r)&:=\{s\in R_{n}; \, \Lambda_{L_{n}}(s)\in \cR_{n}(r)\}.
\end{split}
\eeq
We  have  $\Lambda_{L_{n-1}}(r) \subset \bigcup_{s \in R_{n}(r)} \Lambda_{L_{n}}(s) $
and, similarly to \eqref{number}, $\#  R_{n}(r) \le  (\frac{3L_{n-1}} {L_{n}})^{d}$.  Fix  a number $K^{\pr}$, and define the event $\cN_{\Lambda_{\ell_{1}}}$ as consisting of $\bom \in \Omega$ such that,
for all $n=1,\dots,n_{1}$ and all $r\in R_{n-1}$, we have
$\bom \in \cE_{\Lambda_{L_{n}}(s)}$ for all $s \in R_{n}(r)$, with the possible exception of at most $K^{\pr}$ disjoint boxes $\Lambda_{L_{n}}(s)$ with  $s \in R_{n}(r)$. We clearly have $\cN_{\Lambda_{\ell_{1}}}\in \cF_{\Lambda_{\ell_{1}}}$.  Since $\set{L_n; \, n=1,2\ldots,n_1}$ is $(E ,\varsigma, \varsigma^\pr,p,\tau)$-extra good, the probability of its complementary event to  $\cN_{\Lambda_{\ell_{1}}}$ can be estimated from \eqref{PEL0}:
\begin{align} \notag
\P\left\{\Omega \setminus \cN_{\Lambda_{\ell_{1}}}\right \}
& \le \sum_{n=1}^{n_{1}}    (\tfrac{2\ell_{1}} {L_{n-1}})^{d}   (\tfrac{3L_{n-1}} {L_{n}})^{K^{\pr} d}   L_{n}^{-K^{\pr}p d}\\\label{probast}
&  \le 2^{d} 3^{K^{\pr}d} n_{1 }
 \ell_{1}^{-\rho_{1}^{n_1 -1}(K^{\pr}(\rho_{1}(pd+d)-d)+d)+ d}\\
 & =  2^{d} 3^{K^{\pr}d} n_{1 }
  \ell_{1}^{-d \pa{\rho_{1}^{n_1 -1}(K^{\pr}(\rho_{1}(p+1)-1)+1)- 1}}
 \le \ell_{1}^{-5d},\notag
\end{align}
where the last inequality  holds for all large $\ell_{1}$ after choosing $K^{\pr}$  sufficiently large using \eqref{rhos}.

Given  $\bom \in\cN_{\Lambda_{\ell_{1}}}$, then for each  $n=1,\dots,n_{1}$ and  $r\in R_{n-1}$ we can find $s_{1},s_{2},\ldots,s_{K^{\pr\pr}}\in R_{n}(r)$, with $K^{\pr\pr}\le K^{\pr}-1$, such that $\bom \in   \cE_{\Lambda_{L_{n}}(s)}$  if $s \in R_{n(r)}$
and $s \notin \bigcup_{j=1}^{K^{\pr\pr}} {\Lambda_{3L_{n}}(s_{j})}$.  
(Here we need boxes of side $3L_{n}$ because we only ruled out the existence of $K^{\pr}$ \emph{disjoint} boxes of side $L_{n}$.)  Since each box $ {\Lambda_{3L_{n}}(s_{j})}$ is contained in the union of  at most $C^{\pr\pr} $ boxes in 
$\cR_{n}$, we conclude that for each   $\bom \in\cN_{\Lambda_{\ell_{1}}}$ there are  $t_{1},t_{2},\ldots,t_{K^{\pr\pr\pr}} \in R_{n_{1}}$, with   $K^{\pr\pr\pr}\le K_{2}=(C^{\pr\pr} (K^{\pr}-1))^{n_{1}} $, such that, setting ${\Theta}= \bigcup_{t_{j}=1}^{K^{\pr\pr\pr}} \Lambda_{3\ell_{2} }(t_{j})$,  for all $ x \in   {\Lambda_{\ell_{1}}}\setminus {\Theta}$ we have $\bom \in   \cE_{\Lambda_{L_{n}}(s)}$  for some  $n=1, 2,\ldots,n_{1}$ and  $s \in R_{n}$, with   $ \Lambda_{\frac { L_{n}} 5}(x )   \cap \Lambda_{\ell_{1}}\subset \Lambda_{L_{n}}(s)$.
\end{proof}

\begin{definition}\label{defprepared}
Fix   $K_{1}, K_2 \in \N$.   Then:
 \begin{enumerate}
\item 
An event 
 $\cP$  is called     $(\Lambda,E,K_1,K_2)$-\emph{prepared}  if, with
 $\cR=\{\Lambda_{\ell_{1}}(r)\}_{r \in R}$ being the standard $\ell_{1}$-covering of  $\Lambda=\Lambda_{L}$, there exists a disjoint decomposition  $R=R^{\prime} \sqcup R^{\prime\prime} $ with $ \# R^{\prime\prime}\le K_{1} $, such that
 \beq \label{prepared}
 \cP=   \left\{ \bigcap_{r \in R^{\prime}} \cC_{\Lambda_{\ell_{1}}(r)} \right\} \bigcap \left\{ \bigcap_{r \in  R^{\prime\prime}} \cN_{\Lambda_{\ell_{1}}(r)}  \right\},
 \eeq
 where $\cC_{\Lambda_{\ell_{1}}(r)}$ is a  $(\Lambda_{\ell_{1}}(r),E,m_{\ell_{1}},\varsigma, \varsigma^\pr,S_{\cC_{\Lambda_{\ell_{1}}(r)}})$-adapted event for each $r \in R^{\prime}$,
 and $ \cN_{\Lambda_{\ell_{1}}(r)}$ is a $(\Lambda_{\ell_{1}}(r),E,K_2)$-notsobad event for each $r \in  R^{\prime\prime}$.  In this case we set
\begin{align} \label{ScP}
S_{\cP}&:=   \set{s \in \widetilde{\Lambda}; \quad  s \in \Lambda_{\ell_{1}}(r) \; \Rightarrow  \; r\in R^{\prime} \; \;\text{and} \;\;  s \in S_{\cC_{\Lambda_{\ell_{1}}(r)}}}\\ 
&= \bigcup _{r \in R^{\prime}}\pa{ S_{\cC_{\Lambda_{\ell_{1}}(r)}} \setminus \bigcup_{r^{\pr} \in  R^{\prime}\setminus \{r\}}\pa{ \Lambda_{\ell_{1}}(r^{\pr})\setminus S_{\cC_{\Lambda_{\ell_{1}}(r^{\pr})} } }} \setminus  \bigcup_{r \in  R^{\prime\prime}} \Lambda_{\ell_{1}}(r) .\notag
\end{align}

\item An event $\cQ$ is called $(\Lambda,E,K_1,K_2)$-\emph{ready} if it is the disjoint union of a finite number of $(\Lambda,E,K_1,K_2)$-prepared events, i.e.,   there exist  
  disjoint $(\Lambda,E,K_1,K_2)$-prepared events
$\{\cP_j\}_{j=1,2,\ldots,J}$  such that 
\beq \label{ready}
\cQ= \bigsqcup_{j=1}^{J} \cP_j.
\eeq
\end{enumerate}
\end{definition}

The set $S_{\cP}$ in \eq{ScP} is the maximal set with the required properties. 
It follows from
 \eqref{freeguarantee} that
\beq\label{Pfreesites}
\bigsqcup_{r \in R^{\prime}}\set{ S_{\cC_{\Lambda_{\ell_{1}}(r)}} \cap \Lambda_{\frac{\ell_{1}}{5}}(r)}\subset S_{\cP},
\eeq
and nothing would be lost if we had defined $S_{\cP}$ by making \eq{Pfreesites} an equality.

\begin{lemma} \label{lemstructure}   Suppose $\set{L_n; \, n=0,1,\ldots,n_1}$ is $(E ,\varsigma, \varsigma^\pr,p,\tau)$-extra good. For sufficiently large $K_{1},K_2 \in \N$,
depending only on $d,p,\rho_{1}, \rho_2$, if  $L$ is taken large enough,  depending only on $d,p,\rho_{1}, \rho_2,\vs^{\pr}, K_1, K_2$, the following holds:
\begin{enumerate}

\item\label{lemstructure1} If $\cP$ is  a $(\Lambda,E,K_1,K_2)$-prepared event, then $S_{\cP}$ is a $\vs^\pr$-abundant subset of  $ \widetilde{ \Lambda}$ and $\cP \in \cF_{\Lambda \setminus S_\cP}$.

\item \label{lemstructure2}  There exists  a  $(\Lambda,E,K_1,K_2)$-ready event $\cQ$ such that
\beq \label{readyprob}
\P\{ \cQ\} > 1-2L^{-2d}.
\eeq
\end{enumerate}
\end{lemma}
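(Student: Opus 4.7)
For (i), the measurability $\cP \in \cF_{\Lambda \setminus S_{\cP}}$ is immediate from \eqref{ScP}: each $s \in S_{\cP}$ lies in a box $\Lambda_{\ell_{1}}(r)$ only for $r \in R^{\pr}$, and is then a free site of $\cC_{\Lambda_{\ell_{1}}(r)}$, so $\om_{s}$ is not constrained by any defining event of $\cP$. For $\vs^{\pr}$-abundance I apply \eqref{Pfreesites} to any sub-box $\Lambda_{L/5}(y) \subset \Lambda_{L}$: the centers $r \in R$ with $\Lambda_{\ell_{1}/5}(r) \subset \Lambda_{L/5}(y)$ live on the $\alpha\ell_{1}$-lattice inside $\Lambda_{L/5 - \ell_{1}/5}(y)$ and hence number at least $c_{d}(L/\ell_{1})^{d}$; discarding the $\le K_{1}$ centers in $R^{\pr\pr}$ leaves $\ge \tfrac{1}{2}c_{d}(L/\ell_{1})^{d}$ of them in $R^{\pr}$ for $L$ large, and each contributes $\ge \ell_{1}^{(1-\vs^{\pr})d}$ sites to $S_{\cP}$ by applying \eqref{abundant} to $S_{\cC_{\Lambda_{\ell_{1}}(r)}}$ inside $\Lambda_{\ell_{1}/5}(r)$. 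Multiplying yields $\gtrsim L^{d(1-\rho_{1}\vs^{\pr})} \ge L^{(1-\vs^{\pr})d}$, since $\rho_{1} < 1$.

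For (ii), set $B(\bom) := \{r \in R : \bom \notin \cE_{\Lambda_{\ell_{1}}(r)}\}$ and let
$$\cQ := \{\bom : \#B(\bom) \le K_{1} \text{ and } \bom \in \cN_{\Lambda_{\ell_{1}}(r)} \text{ for every } r \in B(\bom)\}.$$
Partition $\cQ$ by the pair $(B(\bom), (i_{r}(\bom))_{r \in R \setminus B(\bom)})$, where $i_{r}$ indexes the unique adapted block of $\cE_{\Lambda_{\ell_{1}}(r)} = \bigsqcup_{i} \cC^{(i)}_{\Lambda_{\ell_{1}}(r)}$ containing $\bom$. Each partition block equals
$$\bigcap_{r \in R \setminus B} \cC^{(i_{r})}_{\Lambda_{\ell_{1}}(r)} \;\cap\; \bigcap_{r \in B} \bigl(\cN_{\Lambda_{\ell_{1}}(r)} \setminus \cE_{\Lambda_{\ell_{1}}(r)}\bigr).$$
The key observation is that $\cN_{\Lambda_{\ell_{1}}(r)} \setminus \cE_{\Lambda_{\ell_{1}}(r)}$ is again $(\Lambda_{\ell_{1}}(r), E, K_{2})$-notsobad: it lies in $\cF_{\Lambda_{\ell_{1}}(r)}$ and the notsobad pointwise condition passes to measurable subsets. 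Each block is thus prepared with $R^{\pr} = R \setminus B$ and $R^{\pr\pr} = B$, so $\cQ$ is ready.

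For the probability bound, split $\Omega \setminus \cQ$ into $\{\#B > K_{1}\}$ and $\{\#B \le K_{1},\, \exists r \in B,\, \bom \notin \cN_{\Lambda_{\ell_{1}}(r)}\}$. The second is $\le \sum_{r \in R} \P(\Omega \setminus \cN_{\Lambda_{\ell_{1}}(r)}) \le 2^{d} L^{d} \ell_{1}^{-6d} = 2^{d} L^{d(1 - 6\rho_{1})}$ via \eqref{notsobad} and \eqref{number}, which is $\le L^{-2d}/2$ for $L$ large because $\rho_{1} > 1/(1+p) > 8/11$ forces $1 - 6\rho_{1} < -3$. For the first, two $\ell_{1}$-boxes $\Lambda_{\ell_{1}}(r_{1}), \Lambda_{\ell_{1}}(r_{2})$ meet only when $(r_{2} - r_{1})/(\alpha\ell_{1}) \in \{-1,0,1\}^{d}$ (since $1/\alpha \le 5/3$), so any $B$ with $\#B \ge K_{1} + 1$ admits, via greedy extraction, a disjoint sub-collection of size $K^{\pr} := \lceil (K_{1}+1)/3^{d}\rceil$. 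On disjoint boxes the events $\Omega \setminus \cE_{\Lambda_{\ell_{1}}(r)}$ are independent, giving
$$\P\{\#B > K_{1}\} \le \binom{\#R}{K^{\pr}} \ell_{1}^{-pdK^{\pr}} \le 2^{dK^{\pr}} L^{dK^{\pr}(1 - \rho_{1}(1+p))};$$
since \eqref{rhos} yields $\rho_{1}(1+p) > 1$, choosing $K_{1}$ large enough in terms of $d,p,\rho_{1}$ makes this $\le L^{-2d}/2$ for $L$ large. Summing gives \eqref{readyprob}.

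I expect the one subtle point to be the disjoint decomposition in (ii): partitioning by $B(\bom)$ naturally produces complementary factors $\Omega \setminus \cE_{\Lambda_{\ell_{1}}(r)}$ for $r \in B$, which a priori have no adapted structure. The resolution is to absorb them into the notsobad factor via the subset-stability of the notsobad property, preserving both disjointness and the prepared-event structure.
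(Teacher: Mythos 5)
Your proof is correct and follows essentially the same route as the paper's: you define the same event $\cQ$ (the paper's \eqref{cQ}, just presented directly rather than as a disjoint union), make the same key observation that $\cN_{\Lambda_{\ell_1}(r)}\setminus\cE_{\Lambda_{\ell_1}(r)}$ remains notsobad, produce the same $(B,(i_r))$-indexed disjoint decomposition into prepared events, use the same abundance count for (i), and split the complement of $\cQ$ into the same two pieces (some $r$ with $\bom\notin\cN$, versus too many $r$ with $\bom\notin\cE$, the latter handled by extracting a disjoint sub-collection). The only difference is cosmetic bookkeeping of the $3^d$ overlap constant in the greedy extraction.
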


\begin{proof} 
Let $\cP$ be  a $(\Lambda,E,K_1,K_2)$-prepared event, as in \eq{prepared}, and let $S_{\cP}$ be as in \eq{ScP}. In particular,  $\cP \in \cF_{\Lambda \setminus S_\cP}$.
Since  $\# R^{\prime\prime} \le K_{1}$, it follows from \eqref{Pfreesites},  using \eq{abundant},  that  for all boxes  $\Lambda_{\frac{L}{5}}\subset \Lambda$ we have, with $L$ sufficiently large, 
\beq
\#\pa{S_\cP \cap \Lambda_{\frac{L}{5}}}\ge \ell_1^{(1 - \vs^\pr)d}\pa{ \pa{\tfrac 5 4 \tfrac{L} {5\ell_1}-2}^{d} - K_1}\ge  L^{(1 -\vs^\pr)d},
\eeq
and hence $S_{\cP}$ is a $\vs^\pr$-abundant subset of  $ \widetilde{ \Lambda}$.

We now use the hypothesis that $\set{L_n; \, n=0,1,\ldots,n_1}$ is $(E ,\varsigma, \varsigma^\pr,p,\tau)$-extra good.  
For each $r \in R$ we pick a   $(\Lambda_{\ell_{1}}(r),E,m_{\ell_1} ,\varsigma, \varsigma^\pr)$-{extra good} event  $\cE_{\Lambda_{\ell_{1}}(r)}$ as in \eq{Eloc} with
 \eqref{PEL0}.  Taking $K_2$ and $L$ sufficiently large so we can use Lemma~\ref{lemast},
 for each  $r \in R$ we also  pick a $(\Lambda_{\ell_{1}}(r),E,K_2)$-notsobad
 event $\cN_{\Lambda_{\ell_{1}}(r)}$ with \eq{notsobad}, and set
   $\cN_{\Lambda_{\ell_{1}}(r)}^\ast= \cN_{\Lambda_{\ell_{1}}(r)} \setminus \cE_{\Lambda_{\ell_{1}}(r)}$, clearly also a $(\Lambda_{\ell_{1}}(r),E,K_2)$-notsobad event.    Given $K_1 \in \N$, 
 define the  event $\cQ$ by the disjoint union
 \beq  \begin{split}\label{cQ}
  \cQ& := \bigsqcup_{\substack{ R^{\prime}\subset R\\
  \#(R \setminus R^{\prime})\le K_{1}}}  \cQ(R^{\prime}), \quad\text{where} \\
 \cQ(R^{\prime})&=    \left\{ \bigcap_{r \in R^{\prime}} \cE_{\Lambda_{\ell_{1}}(r)} \right\} \bigcap \left\{ \bigcap_{r \in R\setminus R^{\prime}} \cN_{\Lambda_{\ell_{1}}(r)}^\ast \right\} .
 \end{split} \eeq 
Using the probability estimates in  \eqref{PEL0}  and \eqref{notsobad},  and taking $K_{1}$ sufficiently large (independently of the scale), we get
\eq{readyprob}.  
This can be seen as follows.  First, using  \eqref{notsobad},  we have 
\beq
\P\left\{  \cE_{\Lambda_{\ell_{1}}(r)} \cup  \cN_{\Lambda_{\ell_{1}}(r)}^\ast \right \}\ge \P\left\{  \cN_{\Lambda_{\ell_{1}}(r)} \right \} > 1- L^{-5 \rho_{1 }d},
\eeq
and hence 
\beq  \begin{split} \label{Ld1}
&\P\left\{ \bigcap_{r \in R}\left\{   \cE_{\Lambda_{\ell_{1}}(r)} \cup  \cN_{\Lambda_{\ell_{1}}(r)}^\ast \right\}\right\}> 1 -  \left( \frac {2L}{\ell_{1} }\right)^{d} L^{-5 \rho_{1 }d}\\
& \qquad  \qquad  \qquad \qquad  \ge 1 - 2^{d} L^{-(6 \rho_{1 }-1)d}>
1 - L^{-2d},
\end{split}
\eeq
for large $L$, where we used \eq{number} and  \eqref{rhos}.  On the other hand, letting $K_{1}=C^{\pr}(K^{\pr}-1)$,   it follows from \eqref{PEL0} and \eqref{rhos} that
\beq \begin{split} \label{Ld2}
&\P\left\{\text{there are $K^{\pr}$  disjoint boxes $\Lambda_{\ell_{1}}(r) \in \cR$
with $\bom \notin \cE_{\Lambda_{\ell_{1}}(r)}$}\right\}\\
& \qquad \qquad   \le   (\tfrac{2L} {\ell_{1}})^{d K^{\pr}}\ell_{1}^{-pdK^{\pr}}  \le  2 ^{dK^{\pr}}
L^{-dK^{\pr}(\rho_{1}(p+1) -1)}\le L^{-2d},
\end{split} \eeq
if $K_{1} > \tfrac {2C^{\pr}} {\rho_{1}(p+1) -1}$ and $L$ is large enough. We now take   $C^{\pr}=3^{d}- 1$, ensuring that the complementary event has at most $K_{1}$ (not necessarily disjoint)   boxes $\Lambda_{\ell_{1}}(r) \in \cR$
with $\bom \notin\cE_{\Lambda_{\ell_{1}}(r)}$.
The estimate \eqref{readyprob} follows from  \eqref{Ld1} and  \eqref{Ld2}.

Moreover, it follows from  \eqref{Eloc}  and \eq{cQ}  that each   $  \cQ(R^{\prime})$ is a disjoint union of (non-empty) events of the form
\beq \label{cDR}
\cP_{R^{\prime}}= \left\{ \bigcap_{r \in R^{\prime}}\cC_{\Lambda_{\ell_{1}}(r)} \right\} \bigcap \left\{ \bigcap_{r \in R\setminus R^{\prime}} \cN_{\Lambda_{\ell_{1}}(r)}^\ast \right\},
\eeq
where 
$\cC_{\Lambda_{\ell_{1}}(r)}$ is a  $(\Lambda_{\ell_{1}}(r),E,m_{\ell_{1}},\varsigma, \varsigma^\pr, S_{\cC_{\Lambda_{\ell_{1}}(r)}})$-adapted event for each $r \in R^{\prime}$.
  Thus  $\cQ$ is a $(\Lambda,E,K_1,K_2)$-ready event.
 \end{proof}

 Given a box $\Lambda$ and a number $Y>0$, 
\beq
\cW_{\Lambda,Y}:=\set{\bom \in \Omega; \; \norm{R_{\bom, \Lambda}(E)}\ge Y}
\eeq
is a  measurable subset of $\Omega$, i.e., an event, and  moreover $\cW_{\Lambda,Y} \in \cF_\Lambda$.

\begin{lemma}\label{lemwegner}  Given a
 box $\Lambda= \Lambda_L$,
let  $\cP$ be   a $(\Lambda,E,K_1,K_2)$-prepared event, and consider a box $\Lambda_{L_{1}}\subset \Lambda $
with   $L_{1}=(2 k_1 \alpha +1)\ell_{1}$,    constructed as in \eqref{nesting} from  the standard $\ell_{1}$-covering   $\cR=\{\Lambda_{\ell_{1}}(r)\}_{r \in R}$ of  $\Lambda$,  where  $k_1 \in \N$,  $k_1 \ge  100 K_1$. 
  Then, there exist constants $C_{1}=C_{d,\Vper,\mu,\delta_{\pm}, u_{\pm},U_{+},\rho_{1},E_{0},K_{1,}K_{2}}$,   $C_{2}=C_{d,\Vper,\mu,K_{1},K_{2},E_{0}}$, and
  $\widehat{L}= \widehat{L}_{d,\mu, \delta_\pm, \Vper,U_+,E_0,\rho_1,\rho_2,\vs,\vs^\pr,K_1,K_2}$  (the  constants are  all independent of $k_1$),  such that    for  all scales $L\ge \widehat{L}$ we have  the conditional probability estimate
\begin{equation} \label{wegner2}
\P\set{\left .   \norm{ R_{\bom, \Lambda_{L_{1}}}(E)}\ge \e^{C_{1}L^{\frac 4 3 \rho_{1}}\log L } \right |\cP }\le C_{2}L^{-\frac d 2  \pa{  \rho_{1}(1 - \vs^\pr) -2\rho_{2}}}.
\end{equation}
\end{lemma}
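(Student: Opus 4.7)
The plan is to condition on $\cP$ and estimate the resulting conditional probability by a Bourgain--Kenig-type Wegner bound whose two ingredients are the quantitative unique continuation principle of Corollary~\ref{corQUCPD} and a Sperner-type averaging over the free-site variables $\bom_{S_\cP}$. By Lemma~\ref{lemstructure}(\ref{lemstructure1}) we have $\cP\in\cF_{\Lambda\setminus S_\cP}$, so the coordinates $\{\omega_\zeta\}_{\zeta\in S_\cP}$ remain i.i.d.\ with law $\mu$ after conditioning on $\cP$. Setting $\eta=\e^{-C_1 L^{\frac{4}{3}\rho_1}\log L}$, since $\|R_{\bom,\Lambda_{L_1}}(E)\|\ge \eta^{-1}$ is equivalent to $\sigma(H_{\bom,\Lambda_{L_1}})\cap [E-\eta, E+\eta]\neq\emptyset$, the task reduces to controlling this eigenvalue-proximity event uniformly on $\cP$.

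For the first ingredient, given a normalized eigenfunction $\psi_j$ of $H_{\bom,\Lambda_{L_1}}$ with eigenvalue in $[0,2E_0]$, a pigeonhole argument on $\|\psi_j\|=1$ produces an initial cube where $\|\Chi_{x_0}\psi_j\|\ge c L_1^{-d/2}$, and Corollary~\ref{corQUCPD}(\ref{corQUCPDi}) at scale $L_1=\Theta(L^{\rho_1})$ then yields $\|u_\zeta^{1/2}\psi_j\|^{2}\ge\alpha_0 :=\e^{-C_0 L^{\frac{4}{3}\rho_1}\log L}$ for every lattice site $\zeta$ in a slightly shrunken copy of $\Lambda_{L_1}$; this is where the exponent $\frac{4}{3}\rho_1$ appearing in \eqref{wegner2} enters. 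Choosing $C_1>C_0$ gives $\eta\le\alpha_0$, and since $u_\zeta\ge 0$ the Feynman--Hellmann formula produces $\partial_{\omega_\zeta}E_j(\bom)=\|u_\zeta^{1/2}\psi_j\|^{2}\ge\alpha_0$. For the second ingredient, the representation \eqref{prepared} together with $k_1\ge 100 K_1\ge 100\#R''$ forces at least $k_1^d - K_1\sim (L_1/\ell_1)^d$ of the $\ell_1$-boxes from $\cR$ meeting $\Lambda_{L_1}$ to be adapted, each one contributing at least $\ell_1^{(1-\vs^\pr)d}$ free sites by the $\vs^\pr$-abundance of $S_{\cC_{\Lambda_{\ell_1}(r)}}$ at scale $\ell_1$. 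After excising a boundary layer of width $O(\ell_2)$ (to accommodate QUCP and the notsobad sets $\Theta\subset\bigcup_{r}\Lambda_{3\ell_2}(r)$ from Definition~\ref{defnotsobad}), this yields $N\ge c L^{d\rho_1(1-\vs^\pr)-2d\rho_2}$ usable free sites.

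A Sperner--Bourgain--Kenig combinatorial averaging, applied to the monotone family $\{E_j(\bom)\}_j$ viewed as functions of the $N$ free-site variables with uniform derivative lower bound $\alpha_0$, then produces
\[
\P\bigl\{\sigma(H_{\bom,\Lambda_{L_1}})\cap[E-\eta,E+\eta]\neq\emptyset\,\bigm|\,\cP\bigr\}\ \lesssim\ (\eta/\alpha_0+1)\,N^{-\frac{1}{2}}\ \lesssim\ N^{-\frac{1}{2}},
\]
which matches \eqref{wegner2} once the bound on $N$ is substituted, with the loss $2\rho_2$ in the exponent accounting exactly for the discarded boundary layer and for the notsobad contributions. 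I expect the main obstacle to be this Sperner step: for a pure Bernoulli distribution it is the classical argument of Bourgain--Kenig, but for a general non-degenerate $\mu$ with only $\{0,1\}\subset\supp\mu$ one must either invoke a Bernoulli decomposition in the spirit of \cite{AGKW} or adapt the Sperner counting directly to product measures on $[0,1]^N$. A further subtlety is that a naive union bound over the individual eigenvalues $E_j$ would cost a prohibitive factor $\sim L_1^d$ (the Weyl count below $2E_0$), so the averaging must be applied collectively to the distance-to-spectrum function $\omega\mapsto\dist(E,\sigma(H_{\bom,\Lambda_{L_1}}))$, whose $\alpha_0$-shifted super-level sets form antichains in the sense required for Sperner's inequality.
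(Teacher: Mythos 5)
Your overall scheme---conditioning on $\cP$, using the quantitative unique continuation principle to obtain a lower bound on eigenvalue derivatives with respect to the free-site variables, and then a Sperner/Bernoulli-decomposition averaging as in \cite{AGKW}---is the right skeleton and matches the paper. But there are three concrete gaps, two of which would sink the argument as written.

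First, your handling of the eigenvalue count is wrong in principle. You propose applying the Sperner argument ``collectively to the distance-to-spectrum function $\omega\mapsto\dist(E,\sigma(H_{\bom,\Lambda_{L_1}}))$, whose $\alpha_0$-shifted super-level sets form antichains.'' They do not: $\dist(E,\sigma)$ is the \emph{minimum} of $|E-E_n(\bom)|$ over $n$, and although each $E_n$ is monotone in the free variables, the minimum is not, so $\bom\prec\bom'$ can have both in the event via different eigenvalue indices $n\neq m$. The paper resolves this the other way around (cf.~\eqref{eigcount}--\eqref{N2bound}): it uses the decay estimate of Lemma~\ref{lemSLI}\ref{sublemEDII2} (eigenfunctions with eigenvalue near $E$ live on $\Theta$, a union of $\le K_1K_2$ boxes of side $3\ell_2$) and the trace bound \cite[Lemma~A.4]{GK5} to reduce the eigenvalue count from the Weyl $L_1^d$ down to $N_2\lesssim\ell_2^d$, and then sums the Sperner bound over these $N_2$ eigenvalues. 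That is where the $-2\rho_2$ in the exponent of \eqref{wegner2} comes from; your attribution of $2\rho_2$ to a ``boundary layer'' of discarded free sites does not match the actual bookkeeping and would not produce the stated estimate. Second, your application of the QUCP ``at scale $L_1=\Theta(L^{\rho_1})$'' after a pigeonhole over the whole box is incorrect: the lemma is claimed uniformly in $k_1\ge 100K_1$, so $L_1$ may be as large as $L$, and $R\sim L_1$ in Corollary~\ref{corQUCPD}\ref{corQUCPDi} would then produce an exponent $L^{4/3}\log L$, far worse than $L^{\frac43\rho_1}\log L$. The paper instead uses the notsobad structure to show (via \eqref{EDI22} and \eqref{someb}) that an eigenfunction near $E$ has mass $\ge (2K_1)^{-1/2}$ on some $\Theta_{b'}\subset\Lambda_{b'}$ of side $\ell_1$, places the free sites at distance $\Theta(K_1\ell_1)$ from $\Lambda_{b'}$ in \eqref{ellrB}, and applies QUCP with $R\sim K_1\ell_1$; this is what produces $\ell_1^{4/3}\log\ell_1$. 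Third, because the label $b'$ of the notsobad box carrying the eigenfunction mass is not known in advance, the paper groups the chosen free sites into vectors $\bzt_j=\{\zeta_{b,j}\}_{b=1,\dots,B}$ and works with directional derivatives $\partial_j E_n$ along $\bfe_S^{(j)}$ (\eqref{eigbounds2}--\eqref{eigbounds4}); without this grouping the per-coordinate derivative lower bound fails to hold uniformly and the Sperner antichain argument cannot be run. Your sketch omits this step entirely.
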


\begin{proof}
 Let $\cP$ be a $(\Lambda,E,K_1,K_2)$-prepared event as in \eq{prepared}, and let $\set{\Lambda_b}_{b=1,2\ldots,B}$ be an enumeration of the  notsobad boxes    $\set{\Lambda_{\ell_1}(r)}_{r \in R^{\pr\pr} \cap \La_{L_1}}$; note $B \le K_1$.  For each $b=1,2\ldots,B$  we let $\Theta_b\subset \La_b$ be as in Definition~\ref{defnotsobad}, so $\abs{\Theta_b} \le 3^d K_2 \ell_2^d$.  We set $\Theta=\cup_{b=1}^B \Theta_b$, and note  $\abs{\Theta} \le 3^d K_1 K_2 \ell_2^d$.
 
 It follows from \eq{number} and $k_1 \ge  100 K_1$ that $ \# \pa{R\cap \Lambda_{L_1}} \ge \pa{200 K_1}^d$, so  we can pick distinct  $\set{r_b}_{b=1,2\ldots,B}\subset R^\pr\cap \Lambda_{L_1}$ such that for all $b=1,2\ldots,B$ we have 
 \beq\label{ellrB}
4 \ell_1 \le \dist \set{r_b, \Lambda_b} \le 12 K_1 \ell_1
\quad \text{and}\quad   \dist \set{r_b,\cup_{b^\pr=1}^B \Lambda_{b^\pr}}   \ge4 \ell_1 .
 \eeq
 Thus, the boxes $\set{\Lambda_{\frac {\ell_1} 5}(r_b)}_{b=1,2\ldots,B}$  are disjoint, and  it follows from \eq{abundant} that for each $b$ we have 
 \beq
  \# \pa{S_{\cC_{\Lambda_{\ell_{1}}(r_b)}} \cap \Lambda_{\frac {\ell_1} 5}(r_b)}\ge N_1:=\br{\ell_1^{(1 - \vs^\pr)d}}.
 \eeq

We now pick distinct  free sites $\set{\zt_{b,j}}_{j=1}^{N_1}\subset S_{\cC_{\Lambda_{\ell_{1}}(r_b)}}$,   $b=1,2\ldots,B$, and let $S=\cup_{b=1}^B \set{\zt_{b,j}}_{j=1}^{N_1}$, so $S \subset S_{\cP}$ by    \eq{Pfreesites} and we have 
\beq\label{Snumber}
\# {S}=   B N_1\le  K_1\ell_1^{(1 - \vs^\pr)d}.
\eeq
 Given  $\bt_{S}=\set{t_{\zt}}_{\zt \in S}\in [0,1]^S$, we consider $  H_{\bom,\bt_{S},\Lambda_{L_1}}$ as in \eq{finvolHS}. We fix $\bom \in \cP  \in \cF_{\Lambda \setminus S_\cP}\subset \cF_{\Lambda \setminus S}$ and set
\beq
\widetilde{H}_{\bt_{S}}= \widetilde{H}_{\bom,\bt_{S}}= H_{\bom,\bt_{S},\Lambda_{L_1}} \quad \text{on}   \quad \L^{2}(\Lambda_{L_1}).
\eeq

Since $\widetilde{H}_{\bt_{S}}\ge 0$ has compact resolvent, it has nonnegative discrete spectrum.  Using the min-max principle as in \cite[Theorem~A.1]{FKmid}, these eigenvalues (repeated according to the finite multiplicity) are given by
\begin{equation}
E _{n}(\bt_{S})=\inf_{\cL \subset \cD(\Delta_{\La_{L_1}})  ; \,\dim \cL =n}  \left[
\sup_{\psi \in \cL:\,\left\| \psi \right\| =1}{\left\langle \psi ,\widetilde{H}_{\bt_{S}}\psi
\right\rangle }\right] \quad \text{for} \quad n\in \N.  \label{mima}
\end{equation}
Thus, $0 \le E _{1}(\bt_{S})\le E _{2}(\bt_{S})\le \ldots E _{n}(\bt_{S})\le E _{n+1}(\bt_{S})\le\ldots$, and each   $E _{n}(\bt_{S})$ is a continuous function of $\bt_S$, monotone increasing in $t_\zt$ for each $\zt \in S$. In fact, we have
\beq
\abs{E _{n}(\bt_{S})-E _{n}(\bt_{S}^\pr)}\le \norm{V_{\bt_S}- V_{\bt_S^\pr}}\le  \abs{\bt_S-\bt_S^\pr}_1 u_+  ,
\eeq
a general bound does that does not take advantage of  our construction.  To do so, we note
that for  $\zt \in S$, each   $E _{n}(\bt_{S})$  is piecewise differentiable in $t_\zt$ for fixed $\bt_{S\setminus \set{\zt} }$ (cf. \cite[Section~VII.3.5]{Ka}), with
\beq\label{partialE}
\frac {\partial} {\partial t_\zt} E _{n}(\bt_{S})= \scal{\psi_{n}(\bt_{S}), u_\zt \psi_{n}(\bt_{S})},
\eeq
where by $\psi_{n}(\bt_{S})$ we denote a corresponding  normalized eigenfunction:  
\beq
\widetilde{H}_{\bt_{S}} \psi_{n}(\bt_{S})=  E _{n}(\bt_{S})\psi_{n}(\bt_{S}), \quad \psi_{n}(\bt_{S}) \in \cD(\Delta_{\La_{L_1}}) \quad \text{with}\quad  \norm{\psi_{n}(\bt_{S})}=1.
\eeq    Combining with \eq{uzeta}, we get
\beq\label{eigbounds}
u_- \norm{\Chi_{\La_{\delta_-}(\zt)}\psi_{n}(\bt_{S})}^2 \le \frac {\partial} {\partial t_\zt} E _{n}(\bt_{S})
\le u_+ \norm{\Chi_{\La_{\delta_+}(\zt)}\psi_{n}(\bt_{S})}^2.
\eeq

We set $m_1=m_{\ell_1}$, an consider the intervals 
\beq
I_1=[E-\e^{-2 m_1\ell_1},E+ \e^{-2 m_1\ell_1}] \quad\text{and}\quad I_2=[E-\e^{-4 m_1\ell_1},E+ \e^{-4 m_1\ell_1}]
\eeq
If $E _{n}(\bt_{S}) \in I_{2}$  for some $\bt_{S}\in [0,1]^S$, we can use Lemma~\ref{lemSLI}\ref{sublemEDII2}, namely  \eq{EDI2}, to conclude from the upper bound in \eq{eigbounds}, using   \eq{Snumber},  that for all $\bt_{S}^\pr \in [0,1]^S$ we have
\beq\label{eigcount1}
\abs{ E _{n}(\bt_{S}^\pr) - E}\le \e^{-4 m_1\ell_1} + u_+\delta_{+}^{d} K_1\ell_1^{(1 - \vs^\pr)d}\e^{- 3{m_1^{\pr}}  \ell_1 }\le \e^{-2 m_1\ell_1},
\eeq
and hence  $ E _{n}(\bt_{S}^\pr)\in I_{1}$.   In particular,  if  $\bt_S=0_S$ means  $t_\zt=0$ for all $\zt \in S$, we have 
\begin{align}\label{eigcount}
& \# \set{n\in \N; \; E _{n}(\bt_{S})\in I_{2}\; \text{for some} \; \bt_{S}\in [0,1]^S} \qquad \\
& \qquad \qquad  \qquad   \le N_2:=  \# \set{n\in \N; \; E _{n}(0_{S})\in I_{1} }
 =\tr \set{\Chi_{I_1}(\widetilde{H}_{0_{S}})} .\notag
\end{align}

General estimates yield (cf. \cite[Eq.~(A.7)]{GK5})
\beq \label{N2general}
N_2 \le C_{d,\Vper}  (E+ \e^{-2 m_1\ell_1})^{\frac d 2}L_1^d \le  C_{d,\Vper,E_{0}} L_{1}^d,
\eeq
  which is not good enough for our purposes.  To improve the estimate, we  apply Lemma~\ref{lemSLI}\ref{sublemEDII2}. 
    If  $ E _{n}(\bt_{S}) \in I_{1}$,  it follows from  \eq{EDI22} and our construction that
\beq
x \notin \Theta \quad  \Longrightarrow\quad  \norm{\Chi_{x}\psi_{n}(\bt_{S})} \le  \e^{-\frac m {11} \ell_2}\le  \e^{-\frac 1 {11} \ell_2^{1-\tau} },
\eeq
and hence, for large $L$,
\beq \label{LThetapsi}
 \norm{\Chi_{\Lambda_{L_1}\setminus \Theta}\psi_{n}(\bt_{S})} \le L_1^d\e^{-\frac 1 {11} \ell_2^{1-\tau} } \le \e^{-\frac 1 {14} \ell_2^{1-\tau} }.
\eeq
It follows that
\beq
\tr \set{\Chi_{\Lambda_{L_1}\setminus \Theta}\Chi_{I_1}(\widetilde{H}_{0_{S}})}\le  \e^{-\frac 1 7 \ell_2^{1-\tau} }N_2.
\eeq
Recalling that $\Theta$ is a union of at most $K_1K_2$ boxes  of side ${3\ell_2}$, and using  the trace estimate given  in \cite[Lemma~A.4]{GK5}, we obtain
\begin{align}\label{N2bound}
N_2&\le \pa{1- \e^{-\frac 1 7  \ell_2^{1-\tau} }}^{-1}  \tr \set{\Chi_{ \Theta}\Chi_{I_1}(\widetilde{H}_{0_{S}})}\\
& \le 2 \sum_{x \in \Z^d}  \tr \set{\Chi_x \Chi_{ \Theta}\Chi_{I_1}(\widetilde{H}_{0_{S}})}\le C_{d,\Vper,E_{0}}  \pa{K_1K_2}^d \ell_2^d,\notag
\end{align}
a huge improvement over \eq{N2general}.

 In addition, if $ E _{n}(\bt_{S}) \in I_{1}$  we conclude  from \eq{LThetapsi} that 
  there exists  $b^{\pr } \in \set{1,2,\ldots,B}$ such that
\beq\label{someb}
\norm{\Chi_{ \Theta_{b^{\pr}}}\psi_{n}(\bt_{S})} \ge B^{-\frac 1 2} \pa{1- \e^{-\frac 1 7 \ell_2^{1-\tau} }}^{\frac 12}
\ge \pa
{2 K_1}^{-\frac 12} .
\eeq
In view of \eq{ellrB}, it now follows from the quantitative unique continuation principle (\cite[Lemma~3.10]{BK}, see Theorem~\ref{thmucp}), which we use in the form given in Corollary~\ref{corQUCPD}\ref{corQUCPDi}, that 
\begin{align}\label{someb2}
\norm{\Chi_{\La_{\delta_-}( \zt_{b^{\pr},j})}\psi_{n}(\bt_{S})} \ge \e^{-  C_{3}  \ell_{1}^{\frac 4 3} (\log \ell_{1})}
\quad \text{for all  $j=1,2,\ldots,N_{1}$},
\end{align}
with a constant 
\beq
C_{3}=  C_{d,K_{1},\delta_{-}}\pa{1 + \norm{\Vper}+\delta_{+}^{d}u_{+}+ U_{+}+ E_{0}}^{\frac 2 3}, \quad \text{where} \quad  C_{d,K_{1},\delta_{-}}>0.  
\eeq

To exploit \eq{someb2}, we set
 $\bzt_{j}=\set{\zt_{b,j}}_{b=1,2,\ldots,B} $ for $j=1,2,\ldots,N_{1}$, and let
 $u_{\bzt_{j}}: =\sum_{b=1}^{B} u_{\zt_{b,j}} $,   $ \Chi_{\La_{\delta_\pm}( \bzt_{j})}
:= \sum_{b=1}^{B} \Chi_{\La_{\delta_\pm}( \zt_{b^{\pr},j})}
 $.  It follows from  \eq{someb2} that  
\begin{align}\label{someb3}
\norm{ \Chi_{\La_{\delta_-}( \bzt_{j})}
\psi_{n}(\bt_{S})} \ge \e^{-  C_{3}  \ell_{1}^{\frac 4 3} (\log \ell_{1})}
\quad \text{for all  $j=1,2,\ldots,N_{1}$}.
\end{align}
Given $J \subset \set{1,2,\ldots,N_{1}}$ we let
$S_{J}= \cup_{j \in J}\bzt_{j}$.

We now set   $\bt_{j}
=\set{t_{\zt_{b,j}}}_{b=1,2,\ldots,B} $ for $j=1,2,\ldots,N_{1}$, and write $\bt_{S}=\set{\bt_{j}}_{j=1}^{N_{1}}$.   
Given $j^\pr =1,2,\ldots,N_{1}$, we also  define   $\bfe_{j}\up{j^\pr}=\set{e\up{j^\pr}_{\zt_{b,j}}}_{b=1,2,\ldots,B} $  by    
$e\up{j^\pr}_{\zt_{b,j}} =\delta_{j^{\pr},j}$ for ${b=1,2,\ldots,B}$,  $j=1,2,\ldots,N_{1}$, and let   $\bfe_{S}\up{j^\pr}=\set{\bfe_{j}\up{j^\pr}}_{j=1}^{N_{1}}$.  
  It follows, as in \eq{partialE} and \eq{eigbounds}, that for
 $ E _{n}(\bt_{S}) \in I_{1}$   we have
\beq\label{eigbounds2}
u_- \norm{ \Chi_{\La_{\delta_-}( \bzt_{j})}
\psi_{n}(\bt_{S})}^2 \le  {\partial_{j}} E _{n}(\bt_{S})
\le u_+ \norm{ \Chi_{\La_{\delta_+}( \bzt_{j})}
\psi_{n}(\bt_{S})}^2,
\eeq
where
\beq
 {\partial_{j}} E _{n}(\bt_{S})=  \lim_{s \to 0} \tfrac 1 s \pa{E _{n}(\bt_{S}+ s\, \bfe_{S}\up{j})- E _{n}(\bt_{S})},
\eeq
so  \eq{someb3} yields
\beq \label{eigbounds4}
 {\partial_{j}} E _{n}(\bt_{S})\ge u_{-}  \e^{- 2 C_{3}  \ell_{1}^{\frac 4 3} (\log \ell_{1})}.
\eeq

We   pick  $0\le \theta_{-}<\theta_{+}\le 1$  such that, letting  
 \beq\label{mucond}
p_{-} = \mu\pa{\set{\omega \le \theta_{-}}}  \quad \text{and} \quad p_{+}=\mu\pa{\set{\omega \ge \theta_{+}}},
\eeq
 we have $p_{\pm}\in ]0,1[$.  ($\mu$ is the  probability distribution in \eq{mu}.) Such $\theta_{\pm}$ always exist  since $\mu$ is non-degenerate, and we have  $p_{-} + p_{+} \le 1$.
We  set ${\theta_{\mu}}= \theta_{+} - \theta_{-}\in ]0,1]$.

We now define random variables
\beq
\omega^{+}_{j}:= \max_{b=1,2,\ldots,B} \omega_{\zt_{b,j}}\quad \text{and} \quad \omega^{-}_{j}:= \min_{b=1,2,\ldots,B} \omega_{\zt_{b,j}}, \quad j=1,2,\ldots,N_{1},
\eeq
and consider the  events
\beq
\cY_{j}\up{1}=\set{\omega_{j}^{+}\le \theta_{-}}, \quad  \cY_{j}\up{2}=\set{\omega_{j}^{-}  \ge \theta_{+}}, \quad \text{and} \quad \cY_{j}\up{0}=\cY_{j}\up{1}\sqcup \cY_{j}\up{1}.
 \eeq
It follows from \eq{mucond} that
\beq
p\up{1}:=\P\pa{\cY_{j}\up{1}} = p_{-}^{B }, \; p\up{2}:=\P\pa{\cY_{j}\up{2}} = p_{+}^{B }, \; \text{and} \;
p\up{0}:=\P\pa{\cY_{j}\up{0}}=p\up{1} + p\up{2}.
\eeq
We now introduce Bernoulli  random variables
$\eta_{j}\up{a}=\Chi_{\cY_{j}\up{a}}$,  $a=0,1,2$.  Then $\bta\up{a}=\set{\eta_{j}\up{a}}_{j=1}^{N_{1}}$ are independent, identically distributed Bernoulli random variables with $\P\set{\eta_{j}\up{a}=1}= p\up{a}$. Note that $\eta_{j}\up{0}=\eta_{j}\up{1}+\eta_{j}\up{2}$, and
\beq\label{condBer}
\eta_{j}\up{0}=\eta_{j}\up{1}+\eta_{j}\up{2}  \quad \text{and}\quad \P\set{\eta_{j}\up{a}=1\, |\,  \eta_{j}\up{0}=1}=\frac { p\up{a}}{p\up{0}}, \;\; a=1,2.
\eeq
 We consider the random index  set  given by 
$J_{\bta\up{0}}=\set{j \in \set{1,2,\ldots,N_{1}}; \eta_{j}\up{0}=1}$. Then $\# J_{\bta\up{0}}=\sum_{j=1}^{N_{1}} \eta_{j}\up{0}$, and standard large deviation estimates \cite[Theorem~1]{Hoeff} give
\beq\label{largedev}
\P\set{\# J_{\bta\up{0}}\le  \tfrac 1 2 N_{1}p\up{0}} 
\le \e^{-\frac 1 2  N_{1}\pa{p\up{0}}^{2} }=  \e^{-\frac 1 2 \pa{p_{-}^{B} +p_{+}^{B}}^{2}N_{1}}  .
\eeq

 Suppose  $ E _{n}(\bom_{S}), E _{n}(\bom^{\pr}_{S})  \in I_{1}$, such that for some
 $j$ we have $\omega_{\zt}=\omega^{\pr}_{\zt}$ for $\zt \in S \setminus \bzt_{j}$, and we have
  $\eta_{j}\up{1}(\bom_{S})= \eta_{j}\up{2}(\bom^{\pr}_{S})=1$.   It then follows from \eq{eigbounds4} that
  \beq\label{incomp}
  E _{n}(\bom^{\pr}_{S})- E _{n}(\bom_{S}) \ge u_{-} \theta_{\mu} \e^{- 2 C_{3}  \ell_{1}^{\frac 4 3} (\log \ell_{1})}.
  \eeq  
We  set  
\beq \label{incomp2}
I=\br{E - \tfrac 1 2 u_{-} \theta_{\mu} \e^{- 2 C_{3}  \ell_{1}^{\frac 4 3} (\log \ell_{1})}, E + \tfrac 1 2u_{-} \theta_{\mu} \e^{- 2 C_{3}  \ell_{1}^{\frac 4 3} (\log \ell_{1})}},
\eeq
and we will estimate (we write $ \bta= \bta\up{0}$)
\beq \label{Peta}
\P_{S} \set{E _{n}(\bom_{S}) \in I\, |\, \bta} =\widehat{\widehat{\E}}_{J_{\bta}}{\set{\widehat{\P}_{J_{\bta}}\set{E _{n}(\bom_{S}) \in I}}},
\eeq
where, given  $J \subset \set{1,2,\ldots,N_{1}}$, we write
\beq\begin{split}
\widehat{\P}_{{J}}\set{\, \cdot \, }&: = \P_{S_{J}}\set{\, \cdot \, | \, \eta_{j}=1, \; j\in J },\\
\widehat{\widehat{\E}}_{{J}}\set{\, \cdot \, }&: = \E_{S\setminus S_{J}}\set{\, \cdot \, | \, \eta_{j}=0, \; j\notin  J }.
\end{split}\eeq
It follows from \eq{condBer} that, with respect to $\widehat{\P}_{{J}}$, $\bta_{J}^{(2)} =\set{\eta_{j}^{(2)}}_{j \in J} $ is a family of independent identically distributed  Bernoulli random variables with
\beq\label{Berprob}
\widehat{\P}_{{J}}\set{\eta_{j}^{(2)}=1}= \frac { p\up{2}}{p\up{0}}= \frac { p_{+}^{B} }{ p_{-}^{B}  +  p_{+}^{B} }, \;\; j \in J.
\eeq
The configuration space of  $\bta_{J}^{(2)}$, $\set{0,1}^{J}$, is partially ordered by the relation defined by  $\beps  \prec \beps' \; \Longleftrightarrow  \; \eps_j \le \eps_j' \ ;\mbox{for all}\; j \in J$.
Let us write $\bom_{S}= \pa{ \bom_{S \setminus S_{J}},\bom_{S_{J}} }$.  For a fixed  $ \bom_{S \setminus S_{J}}$ we set 
\beq
\cA_{\bom_{S \setminus S_{J}}}=  \set{\bta_{J}^{(2)} \pa{ \bom_{S \setminus S_{J}},\bom_{S_{J}} }; \;  E _{n}\pa{ \bom_{S \setminus S_{J}},\bom_{S_{J}} } \in I }\subset \set{0,1}^{J}.
\eeq
It follows from \eq{incomp}  and \eq{incomp2} that $\cA_{\bom_{S \setminus S_{J}}}$ is an anti-chain in  $\set{0,1}^{J}$, i.e., if  $\beps,\beps^{\pr}\in \cA_{\bom_{S \setminus S_{J}}}$ and $\beps  \prec \beps'$, then  $\beps= \beps^{\pr}$.  Using the probabilistic Sperner  Lemma given in \cite[Lemma~3.1]{AGKW} with  \eq{Berprob},  we get
\begin{align}\label{probsperner}
 \widehat{\P}_{{J}}\set{E _{n}\pa{ \bom_{S \setminus S_{J}},\bom_{S_{J}} } \in I }
= \widehat{\P}_{{J}}\set{\bta_{J}^{(2)} \in \cA_{\bom_{S \setminus S_{J}}}}
 \le \frac {2\sqrt{2}\pa{ p_{-}^{B}  +  p_{+}^{B} }}{\pa{p_{-}    p_{+}}^{\frac B 2}\sqrt{ \#J}}.
\end{align}

It follows from \eq{Peta} and \eq{probsperner} that
\beq \label{Peta2}
\P_{S} \set{E _{n}(\bom_{S}) \in I\, |\, \bta} \le   \frac {2\sqrt{2}}{\pa{p_{-}    p_{+}}^{\frac {K_{1}} 2}\sqrt{ \#J_{\bta}}}
\eeq
Combining \eq{largedev} and \eq{Peta2} we obtain
\begin{align}
 \label{Peta23}
\P_{S} \set{E _{n}(\bom_{S}) \in I}& \le  \e^{-\frac 1 2 \pa{p_{-}^{B} +p_{+
}^{B}}^{2}N_{1}} + 
  4  \pa{p_{-}    p_{+}}^{- \frac B 2}   \pa{p_{-}^{B} +  p_{+}^{B}}^{ \frac 1 2} N_{1}^{-\frac 1 2}\\
& \le  \e^{-\frac 1 2 \pa{p_{-}^{K_{1}} +p_{+
}^{K_{1}}}^{2}N_{1}} + 
  4  \pa{p_{-}    p_{+}}^{- \frac {K_{1}} 2}   N_{1}^{-\frac 1 2}\le C_{\mu,K_{1}}\ell_1^{- \frac 1 2(1 - \vs^\pr)d}. \notag
\end{align}

We now conclude from \eq{Peta23}, \eq{eigcount}, and \eq{N2bound} that
\begin{align}\label{almostweg}
\P_{S} \set{  \norm{ \pa{\widetilde{H}_{\bom_{S}}-E}^{-1}}  \ge   2 \pa{u_{-} \theta_{\mu} }^{-1}\e^{2 C_{3}  \ell_{1}^{\frac 4 3} (\log \ell_{1})}   } 
& =  \P_{S}\set{ \sigma\pa{\widetilde{H}_{\bom_{S}}}\cap I \not
 = \emptyset} \\ &
 \le C_{4}\ell_1^{- \frac 1 2(1 - \vs^\pr)d}\ell_{2}^{d},\notag
\end{align}
with a constant $C_{4}=C_{d,\mu,\Vper,K_{1},K_{2},E_{0}}$.

Recalling  $ \cP  \in \cF_{\Lambda \setminus S_\cP}\subset \cF_{\Lambda \setminus S}$, 
it follows from \eq{almostweg}  that
\begin{align}
&\P\set{ \set{ \norm{ R_{\bom, \Lambda_{L_{1}}}(E)}\ge 2 \pa{u_{-} \theta_{\mu} }^{-1}\e^{2 C_{3}  \ell_{1}^{\frac 4 3} (\log \ell_{1})}}  \cap\cP }\\\notag 
& \quad = \P\set{ \Chi_{\cP}(\bom) \P_{S }\set{ \norm{ R_{\bom, \Lambda_{L_{1}}}(E)}\le 2 \pa{u_{-} \theta_{\mu} }^{-1}\e^{2 C_{3}  \ell_{1}^{\frac 4 3} (\log \ell_{1})}} }  \le C_{4}\ell_1^{- \frac 1 2(1 - \vs^\pr)d}\ell_{2}^{d}\,  \P\set{\cP},\notag
\end{align}
which yields   \eq{wegner2}.
\end{proof}

\begin{lemma}\label{lemwegnerplus}  Given a
 box $\Lambda= \Lambda_L$,
let  $\cP$ be   a $(\Lambda,E,K_1,K_2)$-prepared event.  Then, if   $L$ is large enough,  depending only on $d,\mu, \delta_\pm, \Vper,U_+,p, E_0,\rho_1,\rho_2,\vs,\vs^\pr,\tau,K_1,K_2$, there exists an event
$\cW_{\cP} \subset \cP$, with
\beq\label{probW}
\P\set{\cW_{\cP}} \le C_{2} K_{1} L^{-\frac d 2  \pa{  \rho_{1}(1 - \vs^\pr) -2\rho_{2}}}\P\set{\cP},
\eeq
where the constant $C_{2}$ is as in \eq{wegner2}, such that  the event $\cP\setminus \cW_{\cP}$ is $(\Lambda,E,m_{L},\varsigma, \varsigma^\pr)$-adapted with
\beq\begin{split}
\label{mL34}
m_{L} &= m_{\ell_{1}}\pa{1 - C_{d, V_{\mathrm{per}},E_{0},K_{1}}L^{-\beta
}}\ge L^{-\tau},\quad \text{where}\\
\beta & =  \min \set{\pa{\tfrac 4 3\vs (1-\vs)^{-1}-\tau}\rho_{1} ,\tfrac 12 \pa{1- \tfrac 4 3 (1-\vs)^{-1} \rho_{1}} }>0.
\end{split}\eeq
\end{lemma}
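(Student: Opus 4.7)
The plan is to apply Lemma~\ref{lemwegner} on a family of intermediate-scale ``Wegner boxes'' surrounding the notsobad sub-boxes of $\cP$, then to upgrade the resulting resolvent estimates to a good box at scale $L$ via Lemma~\ref{lemkeyMSA}. Set $\kappa := \tfrac{1}{2}\bigl(1 + \tfrac{4\rho_1}{3(1-\vs)}\bigr)$, so that $\tfrac{4\rho_1}{3(1-\vs)} < \kappa < 1$ (using \eqref{rhos}) and $\kappa > \rho_1$. The lower bound on $\kappa$ ensures $C_1 L^{\frac{4}{3}\rho_1}\log L \le L^{\kappa(1-\vs)}$, so that the Wegner bound \eqref{wegner2} feeds hypothesis \eqref{wegL1} of Lemma~\ref{lemkeyMSA}; the upper bound ensures $L^{\kappa-1}\to 0$. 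Pick $k_1\in \N$ so that $L_1 := (2 k_1 \alpha_{L,\ell_1} + 1)\ell_1$ satisfies $L_1 \le L^\kappa \le 3 L_1$; then $k_1 \ge 100 K_1$ for $L$ large.

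Writing $\cP$ as in \eqref{prepared} with $R = R^\pr \sqcup R^{\pr\pr}$ and $\# R^{\pr\pr}\le K_1$, for each $r\in R^{\pr\pr}$ I will choose a grid-aligned Wegner box $\La_{L_1}(y_r)\subset \La$ containing $\La_{\ell_1}(r)$ at distance $\ge L^\kappa/10$ from $\partial^\La \La_{L_1}(y_r)$. Iteratively merging overlapping Wegner boxes into larger concentric grid-aligned boxes (each of the form $(2k\alpha_{L,\ell_1}+1)\ell_1$ containing both) yields, after at most $K_1$ merges, disjoint boxes $\La_j = \La_{L_{1,j}}(y_j)$, $j = 1,\ldots, K^{\pr\pr}$, with $K^{\pr\pr}\le K_1$ and $L^\kappa \le L_{1,j}\le K' L^\kappa$ for some $K' = K'(K_1)$. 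Each $\La_j$ still satisfies the hypotheses of Lemma~\ref{lemwegner}, so its application yields a bad event $\cW_{\cP,j}\subset \cP$ with $\P(\cW_{\cP,j}\mid \cP)\le C_2 L^{-\frac{d}{2}(\rho_1(1-\vs^\pr) - 2\rho_2)}$; setting $\cW_\cP := \bigcup_j \cW_{\cP,j}$, the union bound over $K^{\pr\pr}\le K_1$ gives \eqref{probW}.

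On $\cP\setminus \cW_\cP$, define $\Theta_j := \bigcup_{r\in R^{\pr\pr}_j} \La_{7\ell_1/5}(r)$ where $R^{\pr\pr}_j := \set{r\in R^{\pr\pr}: \La_{\ell_1}(r)\subset \La_j}$, and $\Theta := \bigsqcup_j \Theta_j$. The hypotheses of Lemma~\ref{lemkeyMSA} with $\rho = \rho_1$, $m = m_{\ell_1}$, $K = K^{\pr\pr}$, and $\kappa$ as above follow by construction: disjointness and size of the $\La_j$'s, $\Theta_j\subset \La_j^{\La, L^\kappa/10}$ (since the buffer is $L^\kappa/10 \gg \ell_1$), the Wegner resolvent bound $\norm{R_{\bom,\La_j}(E)}\le \e^{L^{\kappa(1-\vs)}}$, and for $x\in \La\setminus \Theta$ the existence of a good $\ell_1$-box containing $\La_{\ell_1/5}(x)\cap \La$---since $\|x-r\|_\infty > \frac{7\ell_1}{10}$ for every $r\in R^{\pr\pr}$, the $r^\pr\in R$ supplied by \eqref{bdrycover} (with $\|x-r^\pr\|_\infty\le \frac{2\ell_1}{5}$) must lie in $R^\pr$, and $\cC_{\La_{\ell_1}(r^\pr)}$ certifies $\La_{\ell_1}(r^\pr)$ is $(\bom,E,m_{\ell_1},\vs)$-good. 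The side-condition \eqref{rhovstau} reduces to $\kappa\vs > \tau\rho_1$, which follows from $\kappa>\rho_1$ and $\tau<\vs$. Lemma~\ref{lemkeyMSA} then yields that $\La$ is $(\bom, E, m_L, \vs)$-good with $m_L \ge m_{\ell_1}\bigl(1 - C(L^{\kappa-1} + L^{\rho_1\tau - \kappa\vs})\bigr)$; I compute $\kappa-1 = -\tfrac{1}{2}\bigl(1-\tfrac{4\rho_1}{3(1-\vs)}\bigr)$ and, using \eqref{rhos} to absorb the residual $-\tfrac{\vs}{2}$ term, $\rho_1\tau - \kappa\vs\le \rho_1\bigl(\tau - \tfrac{4\vs}{3(1-\vs)}\bigr)$, yielding the rate \eqref{mL34}. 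The lower bound $m_L \ge L^{-\tau}$ follows from $m_{\ell_1}\ge \ell_1^{-\tau}=L^{-\rho_1\tau}$ and $\rho_1<1$.

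To upgrade this to adaptedness, I take $S_{\cP\setminus \cW_\cP} := S_\cP\setminus \bigl(\bigcup_j \La_j \cap \Z^d\bigr)$, giving $\cP\setminus \cW_\cP \in \cF_{\La\setminus S_{\cP\setminus\cW_\cP}}$ since $\cW_\cP$ depends only on $\bom$ inside $\bigcup_j \La_j$. This $S_{\cP\setminus\cW_\cP}$ remains $\vs^\pr$-abundant: the Wegner boxes cover at most $K_1 (K' L^\kappa)^d \ll L^d$, while $S_\cP$ contributes $\sim L^{d - \rho_1\vs^\pr d}\gg L^{(1-\vs^\pr)d}$ sites to any sub-box $\La_{L/5}$ even after removing the portion inside $\bigcup_j \La_j$. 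The $\bt_S$-uniformity required for goodness in the sense of Definition~\ref{defgood} is inherited from Lemma~\ref{lemwegner} (through the $I_2\subset I_1$ contrapositive underlying \eqref{eigcount1}) and propagated through the proof of Lemma~\ref{lemkeyMSA}, so $\cP\setminus \cW_\cP$ is a $(\La, E, m_L, \vs, \vs^\pr)$-adapted event. The main obstacle is carefully reconciling this free-site bookkeeping---simultaneously maintaining $\vs^\pr$-abundance of $S_{\cP\setminus\cW_\cP}$ and the $\sigma$-algebra condition---with the $\bt_S$-uniformity of the resolvent estimates on both the Wegner boxes (from Lemma~\ref{lemwegner}) and on $\La$ (from Lemma~\ref{lemkeyMSA}).
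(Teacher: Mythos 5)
Your construction matches the paper's essentially step by step: fix $\kappa=\tfrac12\bigl(1+\tfrac43\rho_1(1-\vs)^{-1}\bigr)$, surround the $R''$-sites by disjoint grid-aligned boxes $\La_j$ of side comparable to $L^\kappa$ built from the standard $\ell_1$-covering, control their resolvents via Lemma~\ref{lemwegner}, apply Lemma~\ref{lemkeyMSA} to upgrade to a good box at scale $L$, and remove the sites inside the $\La_j$ from $S_\cP$ to preserve measurability and $\vs^\pr$-abundance. The probability bound \eqref{probW}, the abundance count, and the arithmetic giving \eqref{mL34} all check out.

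The one genuine gap is your account of where the $\bt_S$-uniformity for $(\La,E,m_L,\vs,S_{\cP\setminus\cW_\cP})$-goodness comes from. You attribute it to Lemma~\ref{lemwegner} (``the $I_2\subset I_1$ contrapositive'') being ``propagated through'' Lemma~\ref{lemkeyMSA}, but neither lemma carries any information about varying $\bt_S$: Lemma~\ref{lemwegner} is a probability estimate, and Lemma~\ref{lemkeyMSA} is a purely deterministic statement about a fixed $\bom$ whose conclusion has no free sites. The correct mechanism is the measurability structure you have already set up: since $\cP\setminus\cW_\cP\in\cF_{\La\setminus S_{\cP\setminus\cW_\cP}}$, for any $\bom\in\cP\setminus\cW_\cP$ and any $\bt_S\in[0,1]^{S_{\cP\setminus\cW_\cP}}$ the configuration $\bom'$ that equals $\bom$ off $S_{\cP\setminus\cW_\cP}$ and $\bt_S$ on it again lies in $\cP\setminus\cW_\cP$; applying Lemma~\ref{lemkeyMSA} to $\bom'$ then yields exactly the bounds \eqref{weg}--\eqref{good} for $R_{\bom,\bt_S,\La_L}(E)$. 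So the ``main obstacle'' you flag in the last sentence is already resolved by your own $\sigma$-algebra argument, not by anything inherited from the two lemmas. Two small bookkeeping slips remain, neither affecting the substance: $L_1$ should be chosen $\ge L^\kappa$ (not $\le L^\kappa$), so a Wegner box that needs no merging still meets the lower bound $L_j\ge L^\kappa$ in Lemma~\ref{lemkeyMSA}(i); and the $\La_{\ell_1}(r)$ should sit at distance, say, $\ge L^\kappa/5$ from $\partial^\La\La_{L_1}(y_r)$ so that the fattened $\Theta_j=\bigcup_r\La_{7\ell_1/5}(r)$ still falls inside $\La_j^{\La,L^\kappa/10}$.
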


\begin{proof}
Let $\cP$ be a $(\Lambda,E,K_1,K_2)$-prepared event as in \eq{prepared}.  
We take 
\beq\label{defkappa}
\kappa= \tfrac 1 2\pa{1 + \tfrac 4 3 \rho_{1}(1-\vs)^{-1}}, \quad \text{so} \quad \tfrac 4 3 \rho_{1}(1-\vs)^{-1}<\kappa <1, 
\eeq 
where we used   \eq{rhos}, and we have
\beq
 \kappa \vs -\tau \rho_{1} > \pa{\tfrac 4 3\vs (1-\vs)^{-1}-\tau}\rho_{1} .
\eeq
By geometrical considerations, we can find  disjoint boxes $ \set{\Lambda_{j}}_{j=1}^{J}$, $J\le \#R^{\pr\pr} \le K_1$, where  each $\Lambda_{{j}}=\Lambda_{L_{j}}\subset \Lambda$  is
    constructed as in \eqref{nesting} from  the standard $\ell_{1}$-covering   $\cR=\{\Lambda_{\ell_{1}}(r)\}_{r \in R}$ of  $\Lambda$ with   $ L^{\kappa}\le L_{j}\le K_{1} L^{\kappa}$, and for every $r \in  R^{\pr\pr}$
    there exists a (unique) $j_{r }\in \set{1,2\ldots,J}$ with
$ \Lambda_{\ell_{1}(r)} \subset \Lambda_{{j_{r}}}\up{\Lambda, \frac {L^{\kappa}} {10}}$.  
   Since it follows from \eq{defkappa} that ($L$ large enough),
\beq
 \e^{C_{1}L^{\frac 4 3 \rho_{1}}\log L } \le  \e^{ L^{{\kappa}(1-\vs) }},
\eeq
we conclude from  Lemma~ \ref{lemwegner}  that for all $ j=1,2,\ldots,J$, letting
\beq
\cW_{j}= \set{\norm{ R_{\bom, \Lambda_{{j}}}(E)}\ge\e^{ L^{{\kappa}(1-\vs) }}}\cap
   \cP ,\eeq 
we have
\beq
\P\set{ \cW_{j}
  }\le C_{2}L^{-\frac d 2  \pa{  \rho_{1}(1 - \vs^\pr) -2\rho_{2}}} \P\set{\cP}.
  \eeq
We  set $\cW_{\cP}= \cup_{j=1}^{J} \cW_{j}\subset \cP$, so \eq{probW} holds.

Since $\cP$ is a $(\Lambda,E,K_1,K_2)$-prepared event, the hypotheses of Lemma ~\ref{lemkeyMSA}  are satisfied for $\bom \in \cP\setminus \cW_{\cP}$, so we conclude that
the box $\La$ is  $(\bom,E,m_{L},\varsigma)$-good  for all  $\bom \in \cP\setminus \cW_{\cP}$ with
$m_{L}$ as in \eq{mL34}.

Moreover, for all $j$ we have  $ \set{\norm{ R_{\bom, \Lambda_{{j}}}(E)}\ge\e^{ L^{{\kappa}(1-\vs) }}} \in \cF_{\Lambda_{{j}}}$, so it follows from \eq{prepared} that
$\cW_{j} \in \cF_{{\La}_{j}^{\pr}}$, where ${\La}_{j}^{\pr}= \set{x \in \Lambda; \; \dist \pa{x, \La_{j}}  < \ell_{1}} $. Let $\La^{\pr}= \cup_{j=1}^{J}\La_{j}^{\pr}$.    It follows that $S_{\cP\setminus \cW_{\cP}}:= S_{\cP}\setminus  \La^{\pr}$ consists of free sites for  $\cP\setminus \cW_{\cP}$, i.e., the box $\La$ is  $(\bom,E,m_{L},\varsigma,S_{\cP\setminus \cW_{\cP}})$-good  for all  $\bom \in \cP\setminus \cW_{\cP}$.  

To conclude that 
$\cP\setminus \cW_{\cP}$ is $(\Lambda,E,m_{L},\varsigma, \varsigma^\pr)$-adapted  we need only to show that $S_{\cP\setminus \cW_{\cP}}$ is   $\vs^\pr$-abundant.  This can be shown is as in Lemma~\ref{lemSLI}\ref{lemstructure1}.  Since 
\beq\label{Pfreesites22}
\bigsqcup_{r \in R^{\prime}\setminus \La^{\pr}} S_{\cC_{\Lambda_{\ell_{1}}(r)}} \cap \Lambda_{\frac{\ell_{1}}{5}}(r)\subset S_{\cP\setminus \cW_{\cP}},
\eeq
it follows, using \eq{abundant},  that  for all boxes  $\Lambda_{\frac{L}{5}}\subset \Lambda$ we have ($L$ sufficiently large)
\beq
\#\pa {S_{\cP\setminus \cW_{\cP}} \cap \Lambda_{\frac{L}{5}}}\ge \ell_1^{(1 - \vs^\pr)d}\pa{ \pa{\tfrac 5 4 \tfrac{L} {5\ell_1}-2}^{d} - K_1\pa{\tfrac 5 3 \tfrac {K_{1}L^{\kappa} }{\ell_{1}}}^{d}}\ge  L^{(1 -\vs^\pr)d},
\eeq
and hence $S_{\cP\setminus \cW_{\cP}}$ is a $\vs^\pr$-abundant subset of  $ { \Lambda}$.
\end{proof}

\begin{lemma}\label{lemwegnerplus2}  Suppose $\set{L_n; \, n=0,1,\ldots,n_1}$ is $(E ,\varsigma, \varsigma^\pr,p,\tau)$-extra good. Then,  if   $L$ is sufficiently large,  depending only on $d,\mu, \delta_\pm, \Vper,U_+,p, E_0,\rho_1,\rho_2,\vs,\vs^\pr,\tau$, the scale $L$ is  $\pa{E, m_{L},\vs,\vs^\pr,p}$-{extra good}, and 
\beq \label{mL345}
m_{L}= m_{\ell_{1}}\pa{1 - C_{d, V_{\mathrm{per}},E_{0},\rho_1,\rho_2}L^{-\beta
}}\ge L^{-\tau},
\eeq
where $\beta$ is given in \eq{mL34}.
\end{lemma}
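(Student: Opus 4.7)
The plan is to assemble the scale-$L$ extra good event from the already established building blocks, namely the ready event from Lemma~\ref{lemstructure} and the adapted subevents extracted in Lemma~\ref{lemwegnerplus}. Fix constants $K_1,K_2 \in \N$ large enough so that Lemmas~\ref{lemast}, \ref{lemstructure}, and \ref{lemwegnerplus} all apply, and then take $L$ large.

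First, given a box $\Lambda = \Lambda_L(x)$, apply Lemma~\ref{lemstructure}\ref{lemstructure2} to obtain a $(\Lambda,E,K_1,K_2)$-ready event $\cQ_\Lambda$ with $\P\{\cQ_\Lambda\} > 1 - 2L^{-2d}$, presented as a disjoint union $\cQ_\Lambda = \bigsqcup_{j=1}^{J} \cP_j$ of $(\Lambda,E,K_1,K_2)$-prepared events. For each $j$, Lemma~\ref{lemwegnerplus} produces an event $\cW_{\cP_j} \subset \cP_j$ such that $\cP_j \setminus \cW_{\cP_j}$ is $(\Lambda,E,m_L,\vs,\vs^\pr)$-adapted with $m_L$ as in \eqref{mL345}, and
\[
\P\{\cW_{\cP_j}\} \le C_2 K_1 L^{-\frac{d}{2}(\rho_1(1-\vs^\pr) - 2\rho_2)} \P\{\cP_j\}.
\]
Setting
\[
\cE_{\Lambda_L(x)} := \bigsqcup_{j=1}^{J} \pa{\cP_j \setminus \cW_{\cP_j}},
\]
we obtain, by definition, a $(\Lambda,E,m_L,\vs,\vs^\pr)$-extra good event.

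Next, combine the two probability estimates. Since the $\cP_j$ are disjoint and contained in $\cQ_\Lambda$,
\[
\P\{\cE_{\Lambda_L(x)}\} \ge \P\{\cQ_\Lambda\} - \sum_{j=1}^{J} \P\{\cW_{\cP_j}\}
\ge \P\{\cQ_\Lambda\}\pa{1 - C_2 K_1 L^{-\frac{d}{2}(\rho_1(1-\vs^\pr) - 2\rho_2)}} - 2L^{-2d}.
\]
By the second half of \eqref{rhos}, we have $2p < \rho_1(1-\vs^\pr) - 2\rho_2$, so the exponent $\frac{d}{2}(\rho_1(1-\vs^\pr) - 2\rho_2)$ strictly exceeds $pd$. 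Consequently, for all $L$ large enough depending only on the listed parameters, the right-hand side is $\ge 1 - L^{-pd}$, which is \eqref{PEL0}. This shows that $L$ is $(E,m_L,\vs,\vs^\pr,p)$-extra good.

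Finally, the form of $m_L$ given in \eqref{mL345} is inherited directly from \eqref{mL34} in Lemma~\ref{lemwegnerplus}, and the lower bound $m_L \ge L^{-\tau}$ follows from $m_{\ell_1} \ge \ell_1^{-\tau} = L^{-\tau \rho_1}$ together with $\tau\rho_1 < \tau$ and the factor $(1 - C L^{-\beta})$ being close to $1$ for large $L$. The only genuinely probabilistic input is the Wegner-type bound \eqref{probW}; everything else is a bookkeeping exercise. I anticipate the only subtlety is verifying that the constants $K_1, K_2$ fixed once and for all (by the requirements of Lemmas~\ref{lemast} and \ref{lemstructure}) are compatible with the parameter dependence in Lemma~\ref{lemwegnerplus}, but this is immediate because Lemma~\ref{lemwegnerplus} is stated for arbitrary prepared events and its constants depend only on $d,\mu,\delta_\pm,\Vper,U_+,E_0,\rho_1,\rho_2,\vs,\vs^\pr,\tau,K_1,K_2$.
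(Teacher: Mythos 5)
Your proof is correct and follows essentially the same route as the paper: obtain the $(\Lambda,E,K_1,K_2)$-ready event $\cQ$ from Lemma~\ref{lemstructure}, apply Lemma~\ref{lemwegnerplus} to each prepared block $\cP_j$ to excise $\cW_{\cP_j}$, and observe that $\bigsqcup_j(\cP_j\setminus\cW_{\cP_j})$ is $(\Lambda,E,m_L,\vs,\vs^\pr)$-extra good with probability controlled by \eqref{readyprob}, \eqref{probW}, and the second inequality in \eqref{rhos}. One small wrinkle: in your probability chain the extra term $-2L^{-2d}$ appearing after $\P\{\cQ_\Lambda\}(1 - C_2K_1 L^{-\frac d2(\rho_1(1-\vs^\pr)-2\rho_2)})$ is spurious --- the bound $\sum_j\P\{\cW_{\cP_j}\}\le C_2K_1L^{-\cdots}\P\{\cQ_\Lambda\}$ already gives $\P\{\cE\}\ge\P\{\cQ_\Lambda\}(1-C_2K_1L^{-\cdots})$ with nothing further subtracted --- but since the weaker bound you wrote still exceeds $1-L^{-pd}$ for large $L$, the argument goes through unchanged.
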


\begin{proof} Since by hypothesis  $\set{L_n; \, n=0,1,\ldots,n_1}$ is $(E ,\varsigma, \varsigma^\pr,p,\tau)$-extra good,  it follows from Lemma~\ref{lemstructure} that there exist  $K_{1},K_2 \in \N$ such that,  given a box $\Lambda=\Lambda_{L}$, if $L$ is sufficiently large  there exists 
 a $(\Lambda,E,K_1,K_2)$-ready event $\cQ$ satisfying \eq{readyprob}.   We write $\cQ$ as in \eq{ready}, and apply Lemma~\ref{lemwegnerplus}  to each  $(\Lambda,E,K_1,K_2)$-prepared events $\cP_{j}$, letting $\cW_{\cP_{j}}$ denote the corresponding event.  In particular, $\cW_{\cP_{j}}$ satisfies \eq{probW} and $\cP_{j}\setminus \cW_{\cP_{j}}$
 is a $(\Lambda,E,m_{L},\varsigma, \varsigma^\pr)$-adapted event  with $m_{L}$ is as in \eq{mL34}, which yields \eq{mL345} since $K_1,K_2$  depend only on $d,p,\rho_{1}, \rho_2$.  It follows  then that
 \beq
 \cE =\bigsqcup_{j=1}^{J} \pa{ \cP_{j}\setminus \cW_{\cP_{j}}}= \cQ \setminus \pa{\bigcup_{j=1}^{J} \pa{ \cW_{\cP_{j}}}}
 \eeq
is   a  $(\Lambda_{L},E,m_{L},\varsigma, \varsigma^\pr)$-{extra good} event.
Since it follows from \eq{ready} and  \eq{probW} that 
\beq
\P\set{\bigcup_{j=1}^{J} \pa{ \cW_{\cP_{j}}}} \le  C_{2} K_{1} L^{-\frac d 2  \pa{  \rho_{1}(1 - \vs^\pr) -2\rho_{2}}}\P\set{\cQ},
\eeq
we get, using \eq{readyprob} and \eq{rhos}, that 
\beq\label{wegnerplusprobest}
\P \set{\cE} \ge \pa{1 - 2L^{-2d}}\pa{1 -   C_{2} K_{1} L^{-\frac d 2  \pa{  \rho_{1}(1 - \vs^\pr) -2\rho_{2}}}}\ge 1 - L^{-pd}.
\eeq
\end{proof}

We can now finish the proof of Proposition~\ref{propA}.

\begin{proof}[Proof of Proposition~\ref{propA}]  Let $ E \in [0,E_{0}]$ and suppose  that for some scale $L_{0} $  we know that $L$ is $(E,m_0,\varsigma, \varsigma^\pr,p)$-{extra good} for all 
  $L \in \br{L_{0}, L_{0}^{ \rho_{1}^{-1}\rho_{2}^{-1}}}$, with $m_{0}$ satisfying \eq{m0}. In other words, the interval $\br{L_{0}, L_{0}^{ \rho_{1}^{-1}\rho_{2}^{-1}}}$ is $(E ,\varsigma, \varsigma^\pr,p,\tau)$-extra good with   $m_{L}=m_{0}$ for $L \in \br{L_{0}, L_{0}^{ \rho_{1}^{-1}\rho_{2}^{-1}}}$.  We also assume that $L_0$ is large enough so we can use  Lemma~\ref{lemwegnerplus2} for all $L\ge L_0$
  
  Let $\cL_0=   \br{L_{0}, L_{0}^{ \rho_{1}^{-1}\rho_{2}^{-1}}}$ and  $\cL_k= \br{L_{0}^{ \rho_{1}^{-k}\rho_{2}^{-1}}, L_{0}^{ \rho_{1}^{-(k+1)}\rho_{2}^{-1}}}$ for $k=1,2,\ldots$. We set
  \beq
  m_{k}= m_{0} \prod_{k^\pr=1}^{k}\pa{1 -C_{E_0} L_{0}^{ -\beta \rho_{1}^{-k^\pr}\rho_{2}^{-1}}}\ge  L_{0}^{ -\tau \beta \rho_{1}^{-k}\rho_{2}^{-1}},
  \eeq   
where $C_{E_0}=C_{d, V_{\mathrm{per}},\rho_1,\rho_2,E_{0}}$ and $\beta$ are as in \eq{mL345}, the inequality holding for all $k$ by taking $L_0$ sufficiently large.
 We consider  statements $(\cS_k)$,   given for  $k=0,1,2,\ldots$ by:
  
  \begin{quotethm}[$\cS_k$] The scale interval $ \cL_k$
  is   $(E ,\varsigma, \varsigma^\pr,p,\tau)$-extra good with $m_L \ge m_k$ for all  $L \in \cL_k$.
 \end{quotethm}

  We will prove that $(\cS_k)$ is valid for all $k=0,1,2,\ldots$ by induction. Note  that the validity of $(\cS_0)$ is our hypothesis, and   $(\cS_1)$ follows immediately from $(\cS_0)$ by Lemma~\ref{lemwegnerplus2}.  If  $k=1,2,\ldots$,  and  $(\cS_{k-1})$ and $(\cS_k)$ are valid, we can apply Lemma~\ref{lemwegnerplus2} for all  $L \in  \cL_{k+1}$, and  conclude that  $(\cS_{k+1})$holds   with 
\beq
 m_{L} \ge   m_k\pa{1 -C_{E_0}L^{-\beta}}\ge m_{k+1} \ge  L_{0}^{ -\tau \beta \rho_{1}^{-(k+1)}\rho_{2}^{-1}}\ge L^{-\tau}.
 \eeq

  Since we have  $(\cS_k)$  for all $k=0,1,2,\ldots$, we conclude that the scale interval $ [ L_{0}, \infty[=\bigcup_{k=0}^\infty  \cL_{k}$
  is   $(E ,\varsigma, \varsigma^\pr,p,\tau)$-extra good, and for all $L \in  [ L_{0}, \infty[$ we have
   \beq
  m_{L}\ge m_{0} \prod_{k=1}^{\infty}\pa{1 - C_{E_{0}}L_{0}^{ -\beta \rho_{1}^{-k}\rho_{2}^{-1}}}\ge\frac {m_{0}} 2
  \eeq 
  for sufficiently large $L_0$. In particular,  every scale $L \ge L_{0}$
  is $\pa{E,\frac  {m_{0}} 2,\vs,\vs^\pr,p}$-{extra good}, so
  the theorem is proved.
  \end{proof}

\section{Preamble to localization}\label{secpreambleloc}

In this section we introduce tools for extracting localization from the multiscale analysis.

Let $ {\nu} > \frac d 2$.  (We will work with a fixed ${\nu}$ that will be generally omitted from the notation.) Given $y \in \R^{d}$, we  recall that
 $T_y=T_{\nu,y}$ denotes the operator on  the Hilbert space  $\H=\mathrm{L}^2(\mathbb{R}^d)$ given by multiplication by the function
$T_y(x)=T_{\nu,y}(x):= \la x-y\ra^{\nu}$ for  $x \in \R^d$, with $T:=T_0$.
Since $\langle y_{1} +y_{2} \rangle \le \sqrt{2}\langle y_{1} \rangle
\langle y_{2}\rangle$, we have
\begin{equation}\label{Tab}
\| T_{y_{1}} T_{y_{2}}^{-1} \| \le 2^{\frac  {\nu}  2} \la y_{1} -y_{2} \ra^{\nu}.\end{equation}

The  domain of $T$,  $\mathcal{D}(T)$, equipped with the norm
$\|\phi\|_+ = \|T\phi\|$, is a Hilbert space,  denoted by
$\mathcal{H}_+=\mathcal{H}_{\nu,+}\,$.  The Hilbert space $\mathcal{H}_-=\mathcal{H}_{\nu,-}$ is defined as
 the completion of  $\mathcal{H}$ in the norm $\|\psi\|_- = \|T^{-1}\psi\|$.
 By construction,
$\mathcal{H}_+\subset\mathcal{H}\subset\mathcal{H}_-\,$, and the
natural injections  
 $\imath_+: \mathcal{H}_+\rightarrow\mathcal{H}$  and 
$\imath_-:\mathcal{H}\rightarrow\mathcal{H}_-$ are
continuous with dense range.  The operators $T_+: \mathcal{H}_+ \to \mathcal{H}$
and $T_-: \mathcal{H} \to \mathcal{H}_-$, defined by
$T_+= T \imath_+\,$, and 
 $T_- = \imath_- T $ on $ \mathcal{D}(T)$, are unitary. Note that it follows from \eq{Tab}
 that
 \beq
  \|T_{y}^{-1}\psi\|\le  2^{\frac  {\nu}  2} \la y \ra^{\nu} \|T^{-1}\psi\|\quad \text{for all} \quad y\in \R^{d}\quad \text{and}\quad \psi \in\mathcal{H}_-.
 \eeq

\subsection{$\nu$-generalized eigenfunctions} Let $H_{\bom}
$ be a generalized Anderson Hamiltonian.  For a fixed  $\bom \in \Omega$ we now consider only generalized eigenfunctions $\psi \in \H_{-}= \mathcal{H}_{\nu,-}$, so we rewrite Definition~\ref{defgeneiginitial} as follows.

\begin{definition}  \label{defgeneiginitialnu}
 A $\nu$- \emph{generalized eigenfunction} for $H_{{\bom}}$ with  \emph{generalized eigenvalue} ${E}$  is a function  $\psi\in \mathcal{H}_{\nu,-}$  such that $\psi\not=0$ and
\beq  \label{eigb3}
\scal{H_{{\bom}}\vphi,\psi}= {E} \scal{\vphi,\psi} \quad \text{for all}\quad \vphi \in C_{c}^{\infty}(\R^{d}).
\eeq
\end{definition}
Given ${E} \in \R$ we let 
$\Theta_{\bom}({E})=\Theta_{\nu,\bom}({E})$ denote the collection of  $\nu$-generalized eigenfunctions for $H_{{\bom}}$ with generalized eigenvalue ${E}$, and set
$\widetilde{\Theta}_{\bom}({E})= \Theta_{\bom}({E})\cup \{0\}$.   We will drop  $\nu$ from the notation:  $\psi$ will be called a generalized eigenfunction for $H_{{\bom}}$ with generalized eigenvalue ${E}$ if and only if $\psi \in {\Theta}_{\bom}({E})$.
We will also call ${E}\in \R$  a generalized eigenvalue for $H_{{\bom}}$ if and only if $\Theta_{\bom}({E})\not= \emptyset$.

 The generalized eigenvalues and eigenfunctions of $H_{\bom}$  are  the same as the eigenvalues and eigenfunctions of the operator   $H_{\bom,-}$:  a function 
$\psi \in \mathcal{H}_-$, $\psi \not= 0$, is a generalized eigenfunction of $H_{\bom}$ with generalized 
eigenvalue ${E}$ if and only if $\psi \in \mathcal{D}(H_{\bom,-})$ and
$H_{\bom,-} \psi={E} \psi$, i.e.,
\begin{equation} \label{eigb}
\langle H_{\bom} \phi,\psi \rangle = {E} \langle \phi,\psi \rangle
\quad  \mbox{for all} \quad\phi \in \mathcal{D}(H_{\bom}) \cap \H_{+}\,.
\end{equation}
This follows from the fact that  \eq{eigb} is equivalent to \eq{eigb3} since $C_{c}^{\infty}(\R^{d})$ is a core for  the $H_{\bom}$.

Eigenvalues and eigenfunctions of $H_{\bom}$ are always generalized eigenvalues and eigenfunctions. Conversely, if $\psi \in  \Theta_{\bom}({E})\cap \H$, i.e.,  $\psi \in \mathcal{H}$ is a generalized eigenfunction of $H_{\bom}$ with generalized eigenvalue ${E}$, then $\psi $ is an  eigenfunction of $H_{\bom}$ with  eigenvalue ${E}$.

\subsection{Generalized eigenfunctions and good boxes}

Given $\bom\in \Omega$, $x \in \R^{d}$ and ${E} \in \R$, we set
 \begin{align} \label{defGWx}
W_{\bom,x}({E})=W\up{\nu}_{\bom,x}({E}):=\begin{cases} 
{\sup_{\psi \in \Theta_{\bom}({E})} }
\ \frac {\| \Chi_{x}\psi \|}
{\|T_{x}^{-1}\psi \|}&
 \text{if $\Theta_{\bom}({E})\not=\emptyset$}\\0 & \text{otherwise}\end{cases}.
\end{align}
Note that
\begin{equation}\label{boundGW}
W_{\bom,x}({E})\le 
 \left( \tfrac 5 4\right)^{\frac {\nu} 2}<2^{\frac {\nu} 2} .
\end{equation}

\begin{remark}  By the unique continuation principle  $\Theta_{\bom}({E})\not=\emptyset$
if and only if  $W_{\bom,x}({E})\not=0$ for all $x \in \R$.
\end{remark}

\begin{lemma}\label{lemgoodW} Let $\bom \in{\Omega}$, $I\subset \R$ a bounded interval,  ${E}\in I$, $0< \varsigma< 1$,    $m>0$. Suppose the box  $\Lambda_{L}(x)$ is $(\bom,{E},m,\varsigma)$-jgood.
  Then,  if  $m \ge  C_{d,\nu,V_{\mathrm{per}},I}\,  \frac  {\log L} L$,   we have
\beq \label{WxEgood}
W_{\bom,y}({E})  \le  \e^{-{\frac m {15} L}} \quad \text{for all $y \in \Lambda_{L}(x)$ with $\Lambda_{\frac L 5}(y)\subset \Lambda_{L}(x)$}.
\eeq
 \end{lemma}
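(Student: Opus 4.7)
The plan is to combine the deterministic eigenfunction decay inequality of Lemma~\ref{lemEDI} with the exponential resolvent decay provided by the jgood hypothesis. If $\Theta_\bom(E)=\emptyset$ the inequality is trivial, so fix $\psi\in\Theta_\bom(E)$ and $y$ with $\Lambda_{L/5}(y)\subset\Lambda_L(x)$. For $L$ large enough (relative to $\delta_+$) we have $\Lambda_{\delta_++3}(y)\subset\Lambda_L(x)$, and the jgood bound \eq{jweg} ensures $E\notin\sigma(H_{\bom,\Lambda_L(x)})$. I would therefore apply Lemma~\ref{lemEDI} with $\Lambda=\Lambda_L(x)$ and $\Lambda^{\prime}=\R^d$ at the point $y$, obtaining some $y^{\prime}\in\Upsilon_{\Lambda_L(x)}$ satisfying
\begin{equation*}
\|\Chi_y\psi\|\le \gamma_E L^{d-1}\,\|\Chi_{y^{\prime}}R_{\bom,\Lambda_L(x)}(E)\Chi_y\|\,\|\Chi_{y^{\prime}}\psi\|.
\end{equation*}

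Next I would estimate the two factors. Since $\dist(y,\partial\Lambda_L(x))\ge L/10$ and $\dist(y^{\prime},\partial\Lambda_L(x))=(\delta_++1)/2$, we have $L/10-(\delta_++1)/2 \le \|y^{\prime}-y\|\le L$, and for $L$ large the lower bound exceeds $L/11>L/100$, so \eq{jgood} applies and yields $\|\Chi_{y^{\prime}}R_{\bom,\Lambda_L(x)}(E)\Chi_y\|\le 2e^{-m\|y^{\prime}-y\|}\le 2e^{-mL/10+m(\delta_++1)/2}$. For the second factor, on $\supp\Chi_{y^{\prime}}\subset\Lambda_1(y^{\prime})$ we have $\langle z-y\rangle^\nu \le C_\nu L^\nu$, so $\|\Chi_{y^{\prime}}\psi\|\le \|\Chi_{y^{\prime}}T_y\|\,\|T_y^{-1}\psi\|\le C_\nu L^\nu\|T_y^{-1}\psi\|$. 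Combining,
\begin{equation*}
\frac{\|\Chi_y\psi\|}{\|T_y^{-1}\psi\|}\le C_{I,\nu,d}\,L^{d-1+\nu}\,e^{-mL/10+m(\delta_++1)/2},
\end{equation*}
where $C_{I,\nu,d}$ absorbs a uniform bound on $\gamma_E$ for $E$ in the bounded interval $I$.

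The final step is to absorb the polynomial prefactor $L^{d-1+\nu}$ and the $O(m)$ boundary correction into the exponential using $m\ge C_{d,\nu,V_{\mathrm{per}},I}\,(\log L)/L$. Choosing $C_{d,\nu,V_{\mathrm{per}},I}$ large enough forces $C_{I,\nu,d}\,L^{d-1+\nu}e^{m(\delta_++1)/2}\le e^{mL/30}$ once $L$ is past a threshold, so that
\begin{equation*}
\frac{\|\Chi_y\psi\|}{\|T_y^{-1}\psi\|}\le e^{-mL/10+mL/30}=e^{-mL/15},
\end{equation*}
and taking supremum over $\psi\in\Theta_\bom(E)$ gives \eq{WxEgood}. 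The only obstacle is the numerical bookkeeping: verifying that a single constant $C_{d,\nu,V_{\mathrm{per}},I}$ simultaneously absorbs $L^{d-1+\nu}$, the $e^{m(\delta_++1)/2}$ boundary term, and the constant prefactor $C_{I,\nu,d}$, uniformly for all sufficiently large $L$. This is a routine verification rather than a genuine difficulty.
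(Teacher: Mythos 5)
Your proof is correct and follows the same route as the paper's: apply Lemma~\ref{lemEDI} with $\Lambda=\Lambda_L(x)$, $\Lambda^\pr=\R^d$ to the generalized eigenfunction at $y$, bound the resolvent factor using the jgood exponential decay (the hypothesis $\Lambda_{L/5}(y)\subset\Lambda_L(x)$ guarantees $\norm{y-y^\pr}\gtrsim L/10$, comfortably above $L/100$), bound $\norm{\Chi_{y^\pr}\psi}$ by $C_\nu L^\nu\norm{T_y^{-1}\psi}$, and absorb the polynomial prefactor and boundary correction into the exponent using $m\ge C_{d,\nu,V_{\mathrm{per}},I}(\log L)/L$ together with the paper's standing assumption $L\ge 100(\delta_++1)$. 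Your explicit observation that \eqref{jweg} being finite forces $E\notin\sigma(H_{\bom,\Lambda_L(x)})$ (so Lemma~\ref{lemEDI} applies) is a small point the paper leaves implicit, but it is correct and worth noting.
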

 
 \begin{proof}  We can assume $\Theta_{\bom}({E})\not=\emptyset$.  Given  $\psi \in {\Theta}_{\bom}({E})$, it follows from  Lemma~\ref{lemEDI} that  for all $y \in \Lambda_{L}(x)$ with $\Lambda_{\frac L 5}(y)\subset \Lambda_{L}(x)$ we have
 \begin{align}
 \norm{\Chi_{y}\psi}\le 2 \gamma_{{E}}L^{d-1}  \e^{-{\frac m {11} L}}\max_{y^{\pr}\in \Upsilon_{\Lambda_{L}(x)}} \norm{\Chi_{y^{\pr}}\psi}
 \le  2 \gamma_{{E}}L^{d-1}\pa{1 +L^{2}}^{\frac \nu 2 } \e^{-{\frac m {11} L}} \norm{T_{y}^{-1}\psi},
 \end{align}
so
\beq
\frac { \norm{\Chi_{y}\psi}}{\norm{T_{y}^{-1}\psi}}\le  \e^{-{\frac m {15} L}}\quad \text{for}\quad m \ge  C_{d,\nu,V_{\mathrm{per}},I}\,  \frac   {\log L} L. 
\eeq 
\end{proof}

\subsection{Generalized eigenfunctions and annuli of good boxes} 
Given $\bom \in \Omega$, $x \in \R^{d}$, ${E} \in \R$, and a scale $L$, we set (cf. \eq{defGWxLint})
  \begin{align} \label{defGWxL}
W_{\bom,x,L}({E})=W\up{\nu}_{\bom,x,L}({E}):=\begin{cases} {\sup_{\psi \in \Theta_{\bom}({E})} }
\frac {\norm{ \Chi_{x,L}\psi}}
{\norm{T_{x}^{-1}\psi }}&
 \text{if $\Theta_{\bom}({E})\not=\emptyset$}\\0 & \text{otherwise}
.\end{cases},
\end{align}
where $ \Chi_{x,L}:= \Chi_{\La_{2L+1,L-1}(x)}$, and
\begin{equation}
 L_{-}:=L - \tfrac 1 {5}\tfrac L {100}= \tfrac  {499}{500}L,  \quad  L_{+}:=2L +\tfrac 1 {5}\tfrac L {100}= \tfrac  {1001}{500}L. 
\end{equation}
In particular, we have ($L \ge 2$)
\begin{equation}\label{boundGWL}
W_{\bom,x,L}({E})\le 
 \left(1 + (L + \tfrac 1 2)^2\right)^{\frac {\nu} 2} \le 2^{\frac {\nu} 2} L^{{\nu} }.
\end{equation}
Note also that, using \eq{Tab},
\beq\label{Wconv}
W_{\bom,y}({E}) \le  2^{\frac  {\nu}  2} \la {
y-x} \ra^{\nu}W_{\bom,x,L}(E)\le 2^{  {\nu}  } L^{\nu }W_{\bom,x,L}(E)\;\; \text{for}\;\; y \in  \overline{\La}_{2L,L}(x).
\eeq

\begin{lemma}\label{lemgoodWL} Let $\bom \in{\Omega}$, $I\subset \R$ a bounded interval,  ${E}\in I$, $0< \varsigma< 1$,    $0 <\widetilde{m}<m$. Suppose every box $\La_{\frac L {100}}$ in  the standard  $\frac L {100}$-covering  of  the  annulus $  \Lambda_{L_{+},L_{-}}(x)$ is  $(\bom,{E}, m ,\varsigma)$-jgood.
  Then,  if  $ m \ge  C^\pr_{d,\nu,V_{\mathrm{per}},I}\,  \frac  {\log L} L$,   we have
\beq \label{WxLEgood}
W_{\bom,x,L}({E})\le    \e^{-{\frac m {2000} L}}.
\eeq
 \end{lemma}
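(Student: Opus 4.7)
The strategy parallels that of Lemma~\ref{lemgoodW}: I would bound $\|\Chi_y\psi\|$ via a single EDI application for each lattice site $y$ in the support of $\Chi_{x,L}$, and then sum. The key geometric ingredient is that $\La_{L_+,L_-}(x)$ is a fattening of the support annulus $\La_{2L+1,L-1}(x)$ by a buffer of size $L/1000 - O(1)$ on each side; this is engineered precisely so that a single EDI step in a jgood box of side $L/100$ delivers the stated $L/2000$ exponent after absorbing polynomial factors.

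Assume $\Theta_\bom(E) \neq \emptyset$ and fix $\psi \in \Theta_\bom(E)$. I would write
$$\|\Chi_{x,L}\psi\| \le \sum_{y \in \cY} \|\Chi_y\psi\|, \qquad \cY := \Z^d \cap \overline{\La}_{2L+2,L-2}(x),$$
noting that $\#\cY \le C_d L^d$ and that for $L \ge 1000$ one has $\cY \subset \La_{L_+,L_-}(x)$ with $\dist(y,\partial\La_{L_+,L_-}(x)) \ge L/1000 - 1$ for each $y \in \cY$. For every such $y$, the property \eqref{bdrycoverann} of Lemma~\ref{lemcoveringann} furnishes a jgood box $\La^{(y)} = \La_{L/100}(r_y)$ from the standard covering with $\La_{L/500}(y) \cap \La_{L_+,L_-}(x) \subset \La^{(y)}$; the buffer estimate forces $\La_{L/500-2}(y) \subset \La^{(y)}$, so that $\dist(y,\partial\La^{(y)}) \ge L/1000 - 1$, which exceeds $(\delta_++3)/2$ for $L$ large.

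Applying Lemma~\ref{lemEDI} with $\La = \La^{(y)}$ and $\La^\pr = \R^d$ produces $y^\pr \in \Upsilon_{\La^{(y)}}^{\R^d}$ with
$$\|\Chi_y\psi\| \le \gamma_E (L/100)^{d-1}\|\Chi_{y^\pr}R_{\bom,\La^{(y)}}(E)\Chi_y\|\,\|\Chi_{y^\pr}\psi\|.$$
Since $y^\pr$ lies at distance $(\delta_++1)/2$ from $\partial\La^{(y)}$, we have $\|y^\pr - y\| \ge L/1000 - 1 - (\delta_++1)/2 \ge L/1001$ for $L$ large. The jgood off-diagonal bound \eqref{jgood} then yields $\|\Chi_{y^\pr}R_{\bom,\La^{(y)}}(E)\Chi_y\| \le 2 e^{-mL/1001}$, while $y^\pr \in \La_{L_+}(x)$ together with \eqref{Tab} gives $\|\Chi_{y^\pr}\psi\| \le C_\nu L^\nu \|T_x^{-1}\psi\|$. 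Summing over $\cY$ produces
$$\|\Chi_{x,L}\psi\| \le C'\, L^{2d+\nu-1}\, e^{-mL/1001}\,\|T_x^{-1}\psi\|,$$
with $C' = C'_{d,\nu,V_\mathrm{per},I}$.

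The positive slack $\tfrac{1}{1001} - \tfrac{1}{2000} = \tfrac{999}{2\,002\,000} > 0$ then allows the polynomial factor $L^{2d+\nu-1}$ to be absorbed by the exponential whenever $m \ge C'_{d,\nu,V_\mathrm{per},I}(\log L)/L$ with the constant taken sufficiently large, yielding $\|\Chi_{x,L}\psi\| \le e^{-mL/2000}\|T_x^{-1}\psi\|$; taking the supremum over $\psi \in \Theta_\bom(E)$ completes the proof. The only delicate aspect is the geometric buffer computation, but there is no serious obstacle: notably, a single EDI step suffices, with no need for the iterated propagation of Lemma~\ref{lemSLI}.
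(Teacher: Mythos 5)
Your proof is correct and takes essentially the same approach as the paper: the paper simply factors the single EDI step through Lemma~\ref{lemgoodW} (which itself invokes Lemma~\ref{lemEDI} and passes through $T_y^{-1}$ before converting to $T_x^{-1}$), whereas you inline Lemma~\ref{lemEDI} directly and carry $T_x^{-1}$ throughout. Both arrive at the $mL/2000$ exponent by absorbing the polynomial prefactor via the $m \ge C'\,(\log L)/L$ hypothesis, using the buffer built into the definitions of $L_{\pm}$ to guarantee a covering box and the needed distance to the boundary.
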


\begin{proof}  We can assume $\Theta_{\bom}({E})\not=\emptyset$. 
Given   $y \in \overline{\La}_{2L,L}(x)$ there exists a box $\La_{\frac {L} {100}}\up{y}$ in the 
standard  $\frac {L} {100}$-covering  of  the  annulus $  \La_{L_{+},L_{-}}(x)$ with
$\La_{ \frac {L} {500}}(y)\subset \La_{\frac {L} {100}}\up{y}$.  Since the box $\La_{\frac {L} {100}}\up{y}$ is $(\bom,{E}, m ,\varsigma)$-jgood by hypothesis, it follows from Lemma~\ref{lemgoodW}  that for all  $\psi \in {\Theta}_{\bom}({E})$ we have, with $\ell =\frac {L} {100}$, that if  $ m \ge  C_{d,\nu,V_{\mathrm{per}},I}\,  \frac  {\log \ell} \ell$ we have
\begin{align}
 \norm{\Chi_{y}\psi} \le   \norm{T_{y}^{-1}\psi}  \e^{-{\frac m {15} \ell}}\le 2^{\frac\nu 2}\scal{y-x}^{\nu} \norm{T_{x}^{-1}\psi} \e^{-{\frac m {15} \ell}}
  \le 2 ^\nu  L^{ \nu }  \norm{T_{x}^{-1}\psi} \e^{-{\frac m {15} \ell}}.
\end{align}
It follows that
\begin{align}
 \norm{ \Chi_{x,L}\psi}& \le  C_{d,\nu} L^{\nu + d }  \norm{T_{x}^{-1}\psi} \e^{-{\frac m {15} \ell}}\le \norm{T_{x}^{-1}\psi} \e^{-{\frac m {20} \ell}},
 \end{align}
which yields \eq{WxLEgood}.
\end{proof}

  \subsection{Generalized eigenfunction expansion} 

A generalized  Anderson Hamiltonian $H_{\bom}$ has a generalized eigenfunction expansion, which we will now review.  We follow  \cite[Section 3]{KKS}, to which we refer for  all the details. (Although the results in \cite{KKS} are stated for classical wave operators, which include $-\Delta$, they clearly hold  for $-\Delta +V$ with $V$ a bounded potential; in particular they hold for  generalized  Anderson Hamiltonians as in Definition~\ref{defgenAndH}.)

 Let $H_{\bom}$ be a generalized  Anderson Hamiltonian.  For all $\bom \in \Omega$ we have the estimate (e.g., \cite[Lemma~A.4]{GK5})
\beq\label{traceestimate}
\tr \set{T^{-1}\pa{H_{\bom} + 1 + \norm{V^{-}_{\mathrm{per}}}  }^{- 2[[\frac d 4]]} T^{-1}}\le
C_{d,{\nu},\norm{V^{-}_{\mathrm{per}}}}< \infty,
\eeq
where $[[\frac d 4]]= \min \set{n \in \N; n > \frac d 4}$ and $V^{-}_{\mathrm{per}}$ is the negative part of $V_{\mathrm{per}}$.  
We define the spectral measure
\begin{equation}\label{trestmeas}
\mu_{\bom}({B}):= \tr \{T^{-1}P_{\bom}(B)T^{-1}\}=\|T^{-1}P_{\bom}(B))\|_2^2,
\quad\text{ ${B}\subset \R$  a Borel set}.
\end{equation}
As a consequence of \eq{traceestimate}, for all Borel sets $B$ with $\sup B<\infty$ we have
\begin{equation}\label{trestmeas234}
\mu_{\bom}({B}) \le C_{d,{\nu},\norm{V^{-}_{\mathrm{per}}},\sup B}<\infty\quad\text{for all}\quad \bom \in \Omega.
\end{equation}
Moreover, since   the constants  in \eq{traceestimate} and \eq{trestmeas234} depend on the potential only through $\norm{V^{-}_{\mathrm{per}}}$ (they are independent of the background potential $U\ge 0$ and the random potential $V_{\bom}\ge 0$), we have, 
similarly to \cite[Eq.~(2.5)]{GKsudec}, that for all $\bom \in \Omega$  and Borel sets $B $ with $\sup B<\infty$  we have
\begin{equation}\label{trestmeas99}
\mu_{\bom,y}({B}):= \tr \{T_{y}^{-1}P_{\bom}(B)T_{y}^{-1}\}\le
C_{d,{\nu},\norm{V^{-}_{\mathrm{per}}},\sup B}<\infty \quad\text{for all}\quad y\in \R^{d},
\end{equation}
and hence 
\begin{equation}\label{trestmeas9988}
\norm{\Chi_{y} P_{\bom}(B)}_{2}\le
C_{d,\norm{V^{-}_{\mathrm{per}}},\sup B}<\infty \quad\text{for all}\quad y\in \R^{d}.
\end{equation}
Note also that $\mu_{\bom}$ and $\mu_{\bom,y}$ are absolutely continuous with respect to each other.

Let ${\mathcal{ T}_1(\mathcal{H}_+,\mathcal{H}_-)}$ be the Banach space of bounded linear operators  $A\colon \mathcal{H}_+ \to
\mathcal{H}_-$  with  $T_-^{-1}AT_+^{-1}$ trace class.  Then
 for all $\bom \in \Omega$ there exists a $\mu_{\bom}$-locally integrable function 
${\Pb_{\bom}}\colon \R \to {\mathcal{ T}_1(\mathcal{H}_+,\mathcal{H}_-)}$,
such that
\begin{equation}\label{a16a}
\tr \left\{T_-^{-1}{\Pb_{\bom}}({{E}})T_+^{-1}\right\}=1
\quad  \text{for          
$\mu_{\bom}$-a.e.\  ${{E}}$},
\end{equation}
and, for all   Borel sets $B $ with $\sup B<\infty$,
\begin{equation}\label{geneigexp}
\imath_- P_{\bom}(B)\imath_+=
\int_{B}\Pb_{\bom}({{E}})\,d\mu_{\bom}({{E}}),
\end{equation} 
where the integral is the Bochner integral of 
${\mathcal{ T}_1(\mathcal{H}_+,\mathcal{H}_-)}$-valued 
functions.  Note that ${\Pb_{\bom}}({{E}})$ is jointly measurable in $(\bom,{E})$.  (This can be see from \cite[Eq.~(46)]{KKS}.)  Moreover,  we have  (e.g.,  \cite[Corollary~3.1]{KKS}) 
\beq \label{geneigPeq}
H_{\bom,-}\Pb_{\bom}({{E}})={E}  \Pb_{\bom}({{E}})\quad \text{for} \quad \mu_{\bom}\text{-a.e.} \quad {E}\in \R, 
\eeq
where $H_{\bom,-}$ is   the  closure of the operator $H_{\bom}$
in the Hilbert space  $\H_-$. It 
 follows  that
 \beq\label{geneigPgeneig}
 \Pb_{\bom}({{E}})\H_{+}\subset\widetilde{\Theta}_{\bom}({E})  \quad \text{for} \quad \mu_{\bom}\text{-a.e.} \quad {E}\in \R. 
 \eeq 

If for a given Borel set $B$ we have ($\H=\imath_- \H$ as sets)
\beq\label{allgeneig}
\Pb_{\bom}({{E}})\H_{+}\subset  \H \quad \text{for}\quad \mu_{\bom}\text{-a.e}  \;\; {E}\in B,
\eeq
 it follows from \eq{geneigPgeneig} that $H_{\bom}$ has pure point spectrum in $B$.

 Given $\bom\in \Omega$, $x \in \R^{d}$, ${E} \in \R$, and a scale $L$, we set (cf.\cite{GKsudec})
 \begin{align} \label{defGWx2}
\W_{\bom,x}({E}):=\begin{cases} 
\displaystyle{\sup_{\substack{\phi \in \H_+\\
\Pb_{\bom}({E})\phi  \not= 0}}} 
\ \frac {\| \Chi_{x} \Pb_{\bom}({E})\phi \|}
{\|T_{x}^{-1}\Pb_{\bom}({E})\phi \|}&
 \text{if $\Pb_{\bom}({E})\not=0$}\\0 & \text{otherwise}\end{cases},
\end{align}
  \begin{align} \label{defGWxL2}
\W_{\bom,x,L}({E}):=\begin{cases} 
\displaystyle{\sup_{\substack{\phi \in \H_+\\
\Pb_{\bom}({E})\phi  \not= 0}}} 
\ \frac {\| \Chi_{x,L}\Pb_{\bom}({E})\phi \|}
{\|T_{x}^{-1}\Pb_{\bom}({E})\phi \|}&
 \text{if $\Pb_{\bom}({E})\not=0$}\\0 & \text{otherwise}\end{cases}.
\end{align}

$\W_{\bom,x}({E})$ and $\W_{\bom,x,L}({E})$ are
measurable functions of  $(\bom,{E})$ for each $x\in \R^d$ with
\begin{align}\label{boundGW2}
\W_{\bom,x}({E})&\le 
 \left( \tfrac 5 4\right)^{\frac {\nu} 2}<2^{\frac {\nu} 2}  ,\\ \label{boundGWL2}
\W_{\bom,x,L}({E})&\le 
 \left(1 + (L +\tfrac 12 )^2\right)^{\frac {\nu} 2} \le 2^{\frac {\nu} 2}L^{\nu},\\
\W_{\bom,y}({E})& \le 2^{\nu} L^{\nu }\W_{\bom,x,L}\quad\text{for}\quad y \in  \overline{\La}_{2L,L}(x). \label {Wconvbold}
\end{align}
Moreover, it follows from \eq{geneigPgeneig} that
 \beq\label{WW<W}
\W_{\bom,x}({E}) \le W_{\bom,x}({E}) \;\; \text{and}\;\;  \W_{\bom,x,L}({E}) \le W_{\bom,x,L}({E}) \quad \text{for}\;\mu_{\bom}\text{-a.e. }   {E}\in \R.
\eeq

\begin{remark} There is a difference between  $\W_{\bom,x}({E})$ and $\W_{\bom,x,L}({E})$,   defined in  \eq{defGWx2}  and \eq{defGWxL2}, and     $W_{\bom,x}({E})$ and $W_{\bom,x,L}({E})$, previouusly defined in \eq{defGWx}  and \eq{defGWxL}.       The conclusions of the multiscale analysis of  Proposition~\ref{propA}  will yield  bounds on $W_{\bom,x}({E})$ and $W_{\bom,x,L}({E})$ in an energy interval $I$.  In view of \eq{WW<W}, these bounds will hold for    $\W_{\bom,x}({E})$ and $\W_{\bom,x,L}({E})$  for $ \mu_{\bom}$-a.e.\    ${E}\in I$,  yielding \eq{allgeneig}  for $ \mu_{\bom}$-a.e.\    ${E}\in I$, and hence establishing pure point spectrum in the interval $I$.  Note  that  $\W_{\bom,x}({E})$ and $\W_{\bom,x,L}({E})$ are measurable functions of  $(\bom,{E})$ for each $x\in \R^d$, but we do not make such a claim for  $W_{\bom,x}({E})$ and $W_{\bom,x,L}({E})$.
\end{remark}

\subsection{Connection with point spectrum}

 Given ${{E}} \in \R$,  we set   
 \beq
 P_{\bom}({E}):=  \Chi_{\{{E}\}}(H_{\bom})\quad \text{and}\quad
\mu_{\bom}({{E}}) :=\mu_{\bom}(\{{{E}}\})=  \norm{T^{-1}P_{\bom}({E})}_{2}^{2}.
\eeq
In particular, 
$P_{\bom}({E})\not=0$ if and only if  $\mu_{\bom}({{E}})\not=0$.

 It follows from 
\eqref{geneigexp} that 
\beq\label{Ppurepoint}
\imath_- P_{\bom}({E}) \imath_+=
\Pb_{\bom}({E})\mu_{\bom}({{E}}).
\eeq  
Thus, given $x \in \R^{d}$ and a scale $L$, we have
\begin{align}\label{xPEW}
\norm{\Chi_{x} P_{\bom}({E})}_{2}&\le \W_{\bom,x}({E})\norm{T^{-1}_{x}P_{\bom}({E})}_{2}= \W_{\bom,x}({E}) \sqrt{\mu_{\bom,x}({{E}})},
\\ \label{xPEWL}
\norm{\Chi_{x,L} P_{\bom}({E})}_{2}&\le \W_{\bom,x,L}({E})\norm{T^{-1}_{x}P_{\bom}({E})}_{2}= \W_{\bom,x,L}({E}) \sqrt{\mu_{\bom,x}({{E}})}.
\end{align}

If $H_{\bom}$ has pure point spectrum in an interval $I$, it follows from \eq{geneigexp} and  \eq{Ppurepoint} that for all  bounded Borel functions $f$   we have
\beq\label{PIexpas}
f(H_{\bom})P_{\bom}(I) = \int_{I}f({E}) {P}_{\bom,x}(E)  \di \mu_{\bom,x}(E)\quad \text{for all}\quad x \in \R^{d},
\eeq
where
\begin{align} \label{defPEx01}
{P}_{\bom,x}(E):=\begin{cases} 
\pa{\mu_{\bom,x}(E)}^{-1}P_{\bom}(E) &
 \text{if $P_{\bom}(E)\not=0$}\\0 & \text{otherwise}\end{cases}.
\end{align}

\section{From the multiscale analysis to localization}\label{secMSAloc}

We will now assume that the conclusions of the multiscale analysis  (i.e., of Proposition~\ref{propA}) hold for all energies in   a bounded open interval $ \I $, and  prove a theorem that encapsulates localization in the interval $\I$.  All forms of localization will be derived from this theorem.

We fix $ {\nu} > \frac d 2$, which  will be generally omitted from the notation.

 \begin{theorem}\label{thmmainev} Let $H_{\bom}$ be a generalized Anderson Hamiltonian on  $\L^{2}(\R^{d})$.  
Consider  a bounded open interval $ \I \subset \R$,
 $m>0$,  $p>0 $, and   $ \vs \in]0,1[ $, and  assume there is a scale $\cL$ such that all scales $L\ge\cL$ are $(E,m,\varsigma, p)$-good for all energies $E \in \I$.  Set
 \beq\label{defhatm}
{M}={M}(m,p) := \tfrac m {30^{ \widehat{n}+2 }}\quad \text{where}  \quad\widehat{n}  =\widehat{n}(p):= \min\set{n \in \N; \;\; 2^{\frac 1 n} - 1 <p} . 
  \eeq 
Fix  $\widetilde{p}\in ]0,p[$, and pick  $\vartheta =\frac  \beta 2$, where   $\beta={\rho}^{n_{1}}$ with  $\rho>0$ and $n_{1}\in \N$ such that
\beq \label{prho2n1}
 \pa{1 +  p}^{-1} < {\rho}<1 \quad    \text{and} \quad   (n_{1}+1)\beta < p -\widetilde{p} ,
\eeq 
and set, at scale $L$,
 \beq
\I_{L}:=\set{{E} \in \I; \quad \dist \pa{{E}, \R\setminus \I} >  \e^{- {M} L^{\vartheta} }}.\label{defIL}
\eeq 
Then, given a sufficiently large  scale $L $,  for each $x_{0}\in \R^{d}$   there exists an event $\cU_{L,x_{0}}$ with the following properties:
\begin{enumerate}
\item We have
\beq\label{cUdesired}
 \cU_{L,x_{0}}\in \cF_{\Lambda_{L_{+}}(x_{0})}\quad \text{and}\quad  \P\set{\cU_{L,x_{0}} }\ge  1 - L^{- \widetilde{p}  d}.
\eeq

   \item  If $\bom \in\cU_{L,x_{0}}$ and  ${E} \in \I_{L}$,
whenever
\beq \label{distEIred-}
W_{\bom,x_{0}}({E}) >   \e^{-  M L^{\vartheta} },
 \eeq 
 we conclude that 
\beq \label{EDI9L9}
W_{\bom,x_{0},L}({E}) \le \e^{- M L} .
\eeq

\item If  $\bom \in\cU_{L,x_{0}}$,   we have 
\beq \label{EDI9L99}
W_{\bom,x_{0}}({E})W_{\bom,x_{0},L}({E}) \le \e^{-  \frac 1 2 {M}  L^{ \vartheta}}\qtx{for   all} {E} \in  \I_{L} .
\eeq
\end{enumerate}
\end{theorem}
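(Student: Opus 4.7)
The plan is to construct $\cU_{L,x_{0}}$ as the intersection of events over a hierarchy of intermediate scales $\ell_{k} = L^{\rho^{k}}$ for $k = 0, 1, \ldots, n_{1}$, on which every box in the standard $\ell_{k}$-covering of a neighborhood of $\Lambda_{L_{+}}(x_{0})$ is $(\bom, E_{j}^{(k)}, m, \vs)$-good for all energies $E_{j}^{(k)}$ in a finite discretization of $\I_{L}$. Following the Bourgain--Kenig two-energy reduction, the grid at scale $\ell_{k}$ is chosen with spacing comparable to the jgood window $\e^{-2m\ell_{k}}$ supplied by Lemma~\ref{lemjgood}, so that the finest grid, at the smallest scale $\ell_{n_{1}} = L^{\beta}$, has spacing approximately $\e^{-ML^{\vartheta}}$ with $\vartheta = \beta/2$. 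The event $\cU_{L,x_{0}}$ is by construction measurable with respect to $\cF_{\Lambda_{L_{+}}(x_{0})}$.

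For the probability estimate~\eqref{cUdesired}, a naive union bound over the super-polynomial grid of energies is insufficient; I would instead exploit that, for each fixed box $\Lambda_{\ell_{k}}(r)$, the set of energies at which it fails to be $(\bom, E, m, \vs)$-good is a neighborhood of width $\e^{-\ell_{k}^{1-\vs}}$ around each of the at most $C\ell_{k}^{d}$ eigenvalues of $H_{\bom,\Lambda_{\ell_{k}}(r)}$. Consequently the number of grid energies at which a given box is bad is polynomial in $\ell_{k}$, and combining with the hypothesis that each scale $\ell_{k}$ is $(E,m,\vs,p)$-good and summing over $k = 0, 1, \ldots, n_{1}$ and over boxes in $\Lambda_{L_{+}}(x_{0})$ yields a bad probability bounded by $L^{-\widetilde{p}d}$, using the conditions $(1+p)^{-1} < \rho$ and $(n_{1}+1)\beta < p - \widetilde{p}$ from~\eqref{prho2n1}.

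To establish the dichotomy, I fix $\bom \in \cU_{L,x_{0}}$ and $E \in \I_{L}$. Since $\dist(E, \R \setminus \I) > \e^{-ML^{\vartheta}}$, at each scale $\ell_{k}$ the energy $E$ admits a grid neighbor $E_{j}^{(k)}$ within the corresponding jgood window, so Lemmas~\ref{lemjgood} and~\ref{lemjgoodm1} upgrade the good boxes at $E_{j}^{(k)}$ to jgood boxes at $E$ with a controlled deterioration of the decay rate. I then iterate the multiscale machinery of Lemma~\ref{lemkeyMSA}: if the box at the smallest scale $\ell_{n_{1}}$ containing $x_{0}$ is jgood at $E$, then Lemma~\ref{lemgoodW} immediately gives $W_{\bom,x_{0}}(E) \le \e^{-ML^{\vartheta}}$; otherwise, propagating outward through the scale hierarchy constructs an $(\bom,E,M,\vs)$-good box at scale comparable to $L$ covering the annulus $\Lambda_{L_{+},L_{-}}(x_{0})$, whence Lemma~\ref{lemgoodWL} yields $W_{\bom,x_{0},L}(E) \le \e^{-ML}$. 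The combined bound~\eqref{EDI9L99} then follows by pairing the dichotomy with the trivial estimates $W_{\bom,x_{0}}(E) \le 2^{\nu/2}$ and $W_{\bom,x_{0},L}(E) \le 2^{\nu/2}L^{\nu}$ from~\eqref{boundGW} and~\eqref{boundGWL}, since $\vartheta < 1$ absorbs polynomial factors.

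The main obstacle I anticipate is the careful bookkeeping of the decay rate through the chain of interpolations and scale-to-scale propagations: each application of Lemmas~\ref{lemjgood}, \ref{lemjgoodm1}, and \ref{lemkeyMSA} introduces a multiplicative loss of the form $(1 - o(1))$, and these losses must compose so that the final rate $M = m/30^{\widehat{n}+2}$ from~\eqref{defhatm} remains nontrivial, with $\widehat{n} = \widehat{n}(p)$ chosen precisely to control the total number of iterations needed to descend from scale $L$ to scale $L^{\vartheta}$.
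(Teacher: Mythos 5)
Your probability estimate has a genuine gap, and it is the crux of the whole theorem.  You assert that, for a fixed configuration $\bom$ and a fixed box $\Lambda_{\ell_k}(r)$, the set of energies at which the box fails to be $(\bom,E,m,\vs)$-good is confined to neighborhoods of width $\e^{-\ell_k^{1-\vs}}$ around the $O(\ell_k^d)$ eigenvalues of $H_{\bom,\Lambda_{\ell_k}(r)}$.  This is false.  Being good requires not only the resolvent bound \eqref{weg}, which is indeed controlled by $\dist(E,\sigma(H_{\bom,\Lambda_{\ell_k}(r)}))$, but also the exponential-decay condition \eqref{good}, which is the nontrivial output of the induction and can fail on an entire interval of energies, even when $E$ is far from every eigenvalue.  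The hypothesis of the theorem is only a per-energy probability statement: for each fixed $E$ the box is good with probability $\ge 1-L^{-pd}$, and for a fixed $\bom$ there is no deterministic bound on the set of energies at which it is good beyond the $\e^{-2m\ell_k}$ stability window of Lemma~\ref{lemjgood}.  Consequently, a union bound over a grid of spacing $\e^{-2m\ell_k}$ produces a factor $\e^{2m\ell_k}\ell_k^{-pd}$ which blows up, and the attempt to rescue it by counting bad energies does not help: you still have to sum $\P\{\text{box bad at }E_j\}$ over the grid, and the fact that for each $\bom$ only a few $E_j$ are actually bad does not reduce the number of terms in the union bound.

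What is missing is precisely the content of the paper's two \emph{spectral reductions} (Propositions~\ref{propenergyred1} and \ref{prop2red}).  Rather than trying to make every box good at every energy of a grid, the paper first shows, via the energy-trap argument of Lemma~\ref{lem1trap} (a percolation bound plus the test vector $\phi\psi$ built from a generalized eigenfunction), that any energy with $W_{\bom,x_0}(E)$ not already tiny must lie within $\e^{-\widehat m L}$ of $\sigma^{(\I)}(H_{\bom,\Lambda_L(x_0)})$, which has only $O(L^d)$ points; the second reduction shrinks this to the reduced spectrum of cardinality $O(L^{(n_1+1)\beta d})$.  Only then is a union bound taken, over this \emph{polynomially small and $\bom$-dependent} set of energies, in Proposition~\ref{propmainev}.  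Note also that the percolation event used in Lemma~\ref{slemYLM} has bad probability $L^{-c\sqrt L}$, which is what makes the union over the $\e^{2m\sqrt L}$ grid points in Lemma~\ref{lem1trap} survive --- another feature your event lacks.  Finally, your derivation of the threshold exponent $\vartheta=\beta/2$ is not correct: you attribute it to the finest scale $\ell_{n_1}=L^\beta$, but in the paper the $\sqrt{\cdot}$ comes from the energy-trap argument at scale $\sqrt L$ feeding into the first reduction (hence $\widehat m\sqrt{L^\beta/K}=\widehat m L^{\beta/2}/\sqrt K$), with the bootstrap Lemma~\ref{lembootstrap} and the choice of $\widehat n(p)$ controlling the accumulated loss of rate that yields the constant $M=m/30^{\widehat n+2}$.
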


\begin{remark}  If  $p\in\rb{ \frac 1 3, \frac 3 8}$, as in Theorem~\ref{thmMSA}, we have
$\widehat{n} =3$. 
\end{remark}

   The proof of this theorem will require several propositions.   The scale $\cL$ will always be assumed to be sufficiently large; in particular  we assume $m \ge  \cL^{- \frac \vs 2}$.  We consider only scales  $L\ge \cL$. We use the following notation:  $A\up{\I}=A\cap \I$ for $A \subset \R$.
   
   We assume the hypotheses of Theorem~\ref{thmmainev} in the remainder of this section.
   
\subsection{The first spectral reduction}

\begin{proposition}\label{propenergyred1} Given $b \ge 1$, there exists a constant  $K_{d,p,b}\ge 1 $ with the following property: Fix $K\ge K_{d,p,b}$. Then,
given a sufficiently large  scale $L $,  for each $x_{0}\in \R^{d}$   there is an event $\cQ_{L,x_{0}}$, with
\beq\label{cQdesired}
\cQ_{L,x_{0}}\in \cF_{\Lambda_{L}(x_{0})} \quad\text{and}\quad  \P\set{\cQ_{L,x_{0}} }\ge  1 - L^{-2bd} ,
\eeq
such that for  $\bom \in\cQ_{L,x_{0}}$, given  ${E} \in \I$ such that
\beq \label{distEI}
 W_{\bom,x_{0}}({E}) >   \e^{- \widehat{m} \sqrt{\frac L K} }\quad \text{and}  \quad \dist \pa{{E}, \R\setminus \I} > \e^{-\widehat{m} \sqrt{\frac L K}},
 \eeq 
 where $ \widehat{m}= \widehat{m}(m,p):= 30{M}$ with  ${M}$ given in \eq{defhatm},
it follows that
\beq\label{Ldist}
\dist \pa{{E}, \sigma\up{\I}(H_{\bom,\Lambda_{L}(x_{0})})} \le  \e^{-\widehat{m}L } .
\eeq
\end{proposition}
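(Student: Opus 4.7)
The plan is to establish the proposition by contrapositive. Fix $\bom \in \cQ_{L,x_{0}}$ and $E \in \I$ with $\dist(E, \R \setminus \I) > \e^{-\widehat{m}\sqrt{L/K}}$, and assume for contradiction that $\dist(E, \sigma^{\I}(H_{\bom,\Lambda_{L}(x_{0})})) > \e^{-\widehat{m}L}$; then the resolvent satisfies $\|R_{\bom,\Lambda_{L}(x_{0})}(E)\| \le \e^{\widehat{m}L}$ and I will derive that $W_{\bom,x_{0}}(E) \le \e^{-\widehat{m}\sqrt{L/K}}$. Applying Lemma~\ref{lemEDI} with $\Lambda^{\pr} = \R^d$ and $\Lambda = \Lambda_{L}(x_{0})$ to any generalized eigenfunction $\psi$ nearly attaining $W_{\bom,x_{0}}(E)$ yields
\[
\|\Chi_{x_{0}}\psi\| \le \gamma_{E} L^{d-1} \|\Chi_{x'} R_{\bom,\Lambda_{L}(x_{0})}(E)\Chi_{x_{0}}\| \cdot \|\Chi_{x'}\psi\|
\]
for some $x' \in \Upsilon_{\Lambda_{L}(x_{0})}^{\R^d}$; by \eqref{Tab} we have $\|\Chi_{x'}\psi\| \le 2^{\nu/2} L^{\nu} \|T_{x_{0}}^{-1}\psi\|$, so the whole task reduces to bounding the off-diagonal resolvent $\|\Chi_{x'} R_{\bom,\Lambda_{L}(x_{0})}(E)\Chi_{x_{0}}\|$ by a quantity that, after multiplication by $\gamma_{E} L^{d-1+\nu}$, is smaller than $\e^{-\widehat{m}\sqrt{L/K}}$.

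I would define $\cQ_{L,x_{0}}$ through the intermediate scale $\ell \sim \sqrt{L/K}$ and the standard $\ell$-covering of $\Lambda_{L}(x_{0})$, imposing a Peierls-type condition on the configuration of bad sub-boxes: for each anchor energy $E_{0}$ in a suitable discretization of $\I$, no $(\widehat{n}+1)$-tuple of disjoint sub-boxes in the covering fails to be $(\bom,E_{0},m,\vs)$-good. Under the hypothesis that scales $\ge \cL$ are $(E_{0},m,\vs,p)$-good for every $E_{0} \in \I$, a union bound over such tuples and over the anchors yields $\P\{\cQ_{L,x_{0}}\} \ge 1 - L^{-2bd}$ once $K \ge K_{d,p,b}$ is chosen large enough; the constant $\widehat{n}(p)$ defined in \eqref{defhatm} is precisely the Peierls threshold that makes this work, the condition $2^{1/\widehat{n}} - 1 < p$ being exactly what is needed to absorb the entropy $(L/\ell)^{d(\widehat{n}+1)}$ of the tuples against the probability factor $\ell^{-pd(\widehat{n}+1)}$.

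For $\bom \in \cQ_{L,x_{0}}$ I fix the closest anchor $E_{0}$ to $E$ and invoke Lemma~\ref{lemjgood} to upgrade $(\bom,E_{0},m,\vs)$-good sub-boxes to $(\bom,E,m,\vs)$-jgood sub-boxes; by construction at most $\widehat{n}$ disjoint sub-boxes at scale $\ell$ fail to be jgood, so their union $\Theta$ cannot separate $x_{0}$ from the boundary of $\Lambda_{L}(x_{0})$. Iterating Lemma~\ref{lemSLI}(ii) from $x_{0}$ outward with this $\Theta$, each step propagates a distance $\gtrsim \ell/11$ with decay factor $\e^{-m'\ell/11}$ for $m' = m(1-o(1))$, while detouring around $\Theta$ costs at most $C\widehat{n}\ell$ in effective distance. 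Over the net propagation $\sim L/2$ this gives
\[
\|\Chi_{x'} R_{\bom,\Lambda_{L}(x_{0})}(E)\Chi_{x_{0}}\| \le \e^{-c\,m L} \|R_{\bom,\Lambda_{L}(x_{0})}(E)\| \le \e^{-(cm - \widehat{m})L}
\]
for a dimensional constant $c > 0$. With the calibrated choice $\widehat{m} = 30M = m/30^{\widehat{n}+1}$ the exponent $cm - \widehat{m}$ is a fixed positive multiple of $m$, so the resulting bound on $W_{\bom,x_{0}}(E)$ is far smaller than $\e^{-\widehat{m}\sqrt{L/K}}$ for $L$ large, contradicting the hypothesis.

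The principal obstacle is the $E$-dependence of the good-box property, while $\cQ_{L,x_{0}} \in \cF_{\Lambda_{L}(x_{0})}$ must be defined without reference to $E$. I plan to handle this by restricting the anchor energies to a polynomial-size grid in $\I$: outside $\e^{-2m\ell}$-neighborhoods of the $O(L^{d})$ eigenvalues of $H_{\bom,\Lambda_{L}(x_{0})}$ in $\I$, the deterministic resolvent norm bound $\|R\| \le \e^{\widehat{m}L}$ together with Lemma~\ref{lemjgood} controls all nearby $E$, while inside such neighborhoods only polynomially many anchors are needed and the Peierls-type event rules out barriers at each. The matching of the probabilistic Peierls loss $\widehat{n}$ to the deterministic geometric loss $30^{\widehat{n}+1}$ separating $m$ from $\widehat{m}$ is the combinatorial heart of the choice of $\widehat{m}$ in \eqref{defhatm}.
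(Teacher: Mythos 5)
Your contrapositive framing is internally consistent as a deterministic scheme, but the probability estimate for the proposed event $\cQ_{L,x_0}$ fails, and this is fatal. At scale $\ell \sim \sqrt{L/K}$, the number of $(\widehat n+1)$-tuples of disjoint sub-boxes in the covering of $\La_L(x_0)$ is of order $(L/\ell)^{d(\widehat n+1)}$, while the probability that a fixed disjoint tuple is uniformly bad is of order $\ell^{-pd(\widehat n+1)}$; the product is
\[
(L/\ell)^{d(\widehat n+1)}\,\ell^{-pd(\widehat n+1)} \sim L^{\frac{(1-p)d(\widehat n+1)}{2}}\,K^{\frac{(1+p)d(\widehat n+1)}{2}},
\]
which \emph{grows} in $L$ (and in $K$) for every $p<1$. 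Since the regime of interest is $p\in\,]\tfrac13,\tfrac38[$, this blows up. To get a polynomial decay from a tuple/Peierls event one would need the interior scale $\ell$ to satisfy $\ell^{1+p}\gg L$, i.e.\ $\ell\gg L^{1/(1+p)} > \sqrt L$; the choice $\ell\sim\sqrt{L/K}$ is simply too small. Your claim that $2^{1/\widehat n}-1<p$ is ``exactly what is needed to absorb the entropy'' is also incorrect: that inequality has nothing to do with any balance of $(L/\ell)^{d(\widehat n+1)}$ against $\ell^{-pd(\widehat n+1)}$. And replacing anchors by $O(L^d)$ eigenvalues does not help: the per-anchor failure probability already diverges.

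The paper's proof of Proposition~\ref{propenergyred1} is structured quite differently. Its probability control (Lemma~\ref{slemYLM}) is a site-percolation \emph{path} estimate, not a tuple count: the bad event is that a connected cluster of bad $\ell$-boxes reaches distance $\sim L$ from $x_0$, and such a cluster forces a self-avoiding bad path of length $\sim \sqrt L$ boxes, giving failure probability $\lesssim 3^{cd\sqrt L}\,\ell^{-cp\sqrt L}=L^{-c'p\sqrt L}$, super-polynomial in $L$. This works for every $p>0$ and is strong enough to absorb the exponentially many anchor energies needed in $\I$. Deterministically, the paper does not go through off-diagonal resolvent decay; instead (Lemma~\ref{lem1trap}) it builds from the bad cluster a cutoff $\phi$, shows $\|(H_{\bom,\La_L}-E)\phi\psi\|=\|W(\phi)\psi\|$ is small on the good annulus where $\psi$ is already decaying, and concludes $\dist(E,\sigma(H_{\bom,\La_L}))\le e^{-\frac m{30}\sqrt L}/W_{\bom,x_0}(E)$. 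That only traps an eigenvalue to distance $e^{-c\sqrt L}$, not $e^{-cL}$; upgrading this to $e^{-\widehat m L}$ is precisely what the bootstrap (Lemma~\ref{lembootstrap}) does, cascading through scales $\ell_k=\ell_{k-1}^{1+\eta}$ with $\ell_0=\sqrt L$, $\ell_{\widehat n}=L$, anchoring at eigenvalues of the previous box to keep anchors polynomially many. This is where $\widehat n$ actually enters: $(1+\eta)^{\widehat n}=2$ forces $\widehat n$ steps, $\eta<p$ is needed for Lemma~\ref{lemprobpgood}, and each bootstrap step loses a factor $\tfrac1{30}$ in the rate, yielding $\widehat m=m/30^{\widehat n+1}$. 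None of this structure is present in your proposal.
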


The proof of this proposition will rely on several lemmas.

\subsubsection{A site percolation model}
Given a box $\Lambda_{L^{\pr}}(x_{0})$ and a scale $\ell \ll L^{\pr}$, we set $L^{\pr\pr}=L^{\pr}+\ell$,   let $\alpha =\alpha_{L^{\pr\pr},\ell}$  be as in \eq{alphaL}, and consider the graph
\beq
\G=\G_{x_{0},L^{\pr},\ell}:= x_{0}+  \alpha\ell  \Z^{d} \quad \text{with edges}\quad\set{ \set{r,r^{\pr}} \subset \G; \; \norm{r-r^{\pr}}=\alpha \ell }.
\eeq
Note that for  $r,r^{\pr}\in \G$ we have 
\beq
\norm{r-r^{\pr}}=\alpha \ell \quad \Longleftrightarrow \quad r\ne r^{\pr} \;\text{and} \; \Lambda_{\ell}(r)\cap \Lambda_{\ell}(r^{\pr})\ne \emptyset.
\eeq
The external boundary of $\Gamma \subset \G$ is defined as
\beq
\partial^{+}\Gamma:= \set{r \in \G \setminus \Gamma; \quad \set{ r, r^{\pr}} \; \text{is an edge for some} \; r^{\pr}\in \Gamma }.
\eeq
We have   $\# \pa{ \partial^{+} \set{r}}=3^{d}-1$ for all  $r \in \G$, i.e., any site is connected by edges to $3^{d}-1$ other sites.  We  call  $y_{0},y_{1},\ldots,y_{k}\in \G$ a path if $\{y_{j-1},y_{j}\}$ is an edge of $\G$ for $j=1,2,\ldots,k$; it is a self-avoinding path if  the $y_{0},y_{1},\ldots,y_{k}$ are distinct.

Given an energy $E\in \I$, we consider the following  site percolation model on the graph $\G$:
every site $r \in \G_{\Lambda_{L^{\pr\pr}}(x_{0})}^{(\ell)}=\G \cap \Lambda_{L^{\pr\pr}}(x_{0})$ (cf.\eq{bbG}) is \emph{bad} with probability one;  a site $r\in \G \setminus \G_{\Lambda_{L^{\pr\pr}}(x_{0})}^{(\ell)}$ is \emph{good} if the box $\Lambda_{\ell}(r)$ is $(\bom,E,m,\varsigma, p)$-good and \emph{bad} otherwise.  We let $\mathbb{A}_{E}={\mathbb{A}}_{E}(\bom)={\mathbb{A}}_{E,x_{0},L^{\pr},\ell}(\bom)$ denote the  cluster of bad sites containing $ \G_{\Lambda_{L^{\pr\pr}}(x_{0})}^{(\ell)}$ (i.e., the connected component of the subgraph of bad sites containing  $\G_{\Lambda_{L^{\pr\pr}}(x_{0})}^{(\ell)}$).

 We now take scales $\ell ,{\widetilde{L}}$ with  $\ell \ll L^{\pr}$ and $100^{ d}\ell \le {\widetilde{L}}$.  Given  an energy $E \in \R$,  we consider the event
 \beq\label{YLM}
\cY\up{E}_{x_{0},L^{\pr},\ell,{\widetilde{L}}}:=
\begin{cases}
\set{{\mathbb{A}}_{E} \subset \Lambda_{L^{\pr}+{\widetilde{L}}-3 \ell}(x_{0})}& \text{if} \quad E \in \I\\
\Omega & \text{if} \quad E \notin \I
\end{cases}.
\eeq
Note that   $\cY\up{E}_{x_{0},L^{\pr},\ell,{\widetilde{L}}} \in \cF_{{\Lambda_{L^{\pr}+{\widetilde{L}}, L^{\pr}}(x_{0})}}$ for all $E \in \R$, and  it follows from \eq{goodmeas} that $\cY\up{E}_{x_{0},L^{\pr},\ell,{\widetilde{L}}}$ is jointly measurable in $\pa{E, {\bom}_{{\Lambda_{L^{\pr}+{\widetilde{L}}, L^{\pr}}(x_{0})}}}$.

\begin{lemma} \label{slemYLM}
For all  $E\in \I$ we have
\beq \label{probYLM}
\P\set{\cY\up{E}_{x_{0},L^{\pr},\ell,{\widetilde{L}}}}\ge 1 -  \pa{\tfrac {4L^{\pr}}{\ell}}^{d-1} 3^{\br{ \frac {\widetilde{L}} {2 \ell}}d}\ell^{-c_{d}p\frac {\widetilde{L}} { \ell}}.
\eeq
In particular, if $L^{\pr}={\widetilde{L}}=\frac L 2 $ and $\ell= \sqrt{L}$,  we get
\beq\label{probYL2}
\P\set{\cY\up{E}_{x_{0},\frac L 2,\sqrt{L},\frac L 2}}\ge 1 - L^{-{c}_{d,p}  \sqrt{L}}.
\eeq
\end{lemma}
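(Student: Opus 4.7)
The plan is a standard Peierls-type contour argument on the site percolation model.

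First, I would reformulate the event. If $\mathbb{A}_{E} \not\subset \Lambda_{L^{\pr}+\widetilde{L}-3\ell}(x_{0})$, then by the definition of a cluster there exists a self-avoiding path $y_{0},y_{1},\ldots,y_{k}$ in $\G$ consisting of bad sites, with $y_{0}\in \partial^{+}\G_{\Lambda_{L^{\pr\pr}}(x_{0})}^{(\ell)}$ and $y_{k}\notin \Lambda_{L^{\pr}+\widetilde{L}-3\ell}(x_{0})$. Since consecutive sites of the path are at graph-distance $\alpha\ell\le \frac{4}{5}\ell$ apart, and the path must traverse a distance of at least $\frac{1}{2}(\widetilde{L}-4\ell)$ in sup-norm, its length satisfies $k\ge k_{0}:=\lfloor \widetilde{L}/(2\ell)\rfloor$ for $L$ large.

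Next, I would count paths and extract independence. The number of choices for $y_{0}$ is bounded by $c_d(L^{\pr}/\ell)^{d-1}$, and at each subsequent step the self-avoiding path has at most $3^{d}-1<3^{d}$ allowed neighbors, giving at most $\big(\tfrac{4L^{\pr}}{\ell}\big)^{d-1}3^{dk}$ paths of length $k$. To obtain independence, from any such path I would select a subfamily $y_{j_{1}},\ldots,y_{j_{n}}$ with $n\ge c_{d}k$ such that the boxes $\{\Lambda_{\ell}(y_{j_{i}})\}$ are pairwise disjoint; this is possible by the standard greedy argument because each $\Lambda_{\ell}(y)$ overlaps only $O(1)$ other boxes $\Lambda_{\ell}(y^{\pr})$ with $y^{\pr}\in\G$, so one can remove $O(1)$ neighbors from consideration after each pick. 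Because each good/bad event for $\Lambda_{\ell}(y_{j_{i}})$ depends only on $\bom_{\widetilde{\Lambda}_{\ell}(y_{j_{i}})}$, the corresponding bad events are independent, and by the $(E,m,\vs,p)$-good hypothesis for scale $\ell$ each has probability at most $\ell^{-pd}$.

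Combining, for $E\in\I$,
\[
\P\{\Omega\setminus\cY\up{E}_{x_{0},L^{\pr},\ell,\widetilde{L}}\} \;\le\; \sum_{k\ge k_{0}} \big(\tfrac{4L^{\pr}}{\ell}\big)^{d-1}3^{dk}\,\ell^{-c_{d}pdk} \;\le\; \big(\tfrac{4L^{\pr}}{\ell}\big)^{d-1}3^{dk_{0}}\ell^{-c_{d}p k_{0}},
\]
once $\ell$ is large enough that $3^{d}\ell^{-c_{d}pd}<\tfrac12$ (which is guaranteed since $L,\ell$ are large and $p>0$). After relabeling the constant $c_{d}$, this yields \eqref{probYLM}. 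For the special case $L^{\pr}=\widetilde{L}=\frac{L}{2}$, $\ell=\sqrt{L}$, we have $k_{0}\sim \tfrac{1}{4}\sqrt{L}$ and the factor $\ell^{-c_{d}p k_{0}}=L^{-c_{d,p}\sqrt{L}}$ dominates the polynomial prefactor and the $3^{dk_{0}}$ combinatorial factor, giving \eqref{probYL2}.

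The main (modest) obstacle is the greedy independence extraction — verifying that within a self-avoiding path in a graph of bounded degree we can always select a linear-in-$k$ fraction of sites whose associated $\ell$-boxes are pairwise disjoint; but since each $\Lambda_{\ell}(y)$ intersects only a bounded (dimension-dependent) number of other $\Lambda_{\ell}(y^{\pr})$ with $y^{\pr}\in\G$, this is a standard combinatorial fact that only affects the constant $c_{d}$.
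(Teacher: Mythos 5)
Your proposal is correct and follows essentially the same Peierls-type argument as the paper: identify a long self-avoiding path of bad sites starting near the boundary of the always-bad inner region, count such paths by $\bigl(\tfrac{4L'}{\ell}\bigr)^{d-1}3^{dk}$, extract a linear fraction of pairwise-independent sites (the paper states this as picking at least $\bigl[3^{-d}\bigl[\tfrac{\widetilde{L}}{2\ell}\bigr]\bigr]$ independent sites, which is the same greedy observation you make), and apply the $(E,m,\vs,p)$-good hypothesis. The only cosmetic difference is that you sum over all path lengths $k\ge k_0$ and control the geometric series, whereas the paper simply truncates each path to the fixed length $\bigl[\tfrac{\widetilde{L}}{2\ell}\bigr]$ and uses a single union bound; both are valid and give the same estimate.
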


\begin{proof}
Fix $E \in \I$, and suppose ${\mathbb{A}}_{E} \not\subset \Lambda_{L^{\pr}+{\widetilde{L}}-3 \ell}(x_{0})$. Then there exists a self-avoiding path $y_{0},y_{1},\ldots,y_{k}$ in $\G$, such that  $\norm{y_{0}-x_{0}}= \frac {L^{\pr}}2$,   $y_{1},y_{2},\ldots,y_{k} \notin\G_{\Lambda_{L^{\pr\pr}}(x_{0})}^{(\ell)}$,  $\norm{y_{k}-x_{0}} \ge \frac {L^{\pr}+{\widetilde{L}}-3\ell}2$, and all $y_{0},y_{1},\ldots,y_{k}$ are bad sites.  It follows that $ \frac {L^{\pr}+{\widetilde{L}}-3\ell}2 \le  \frac {L^{\pr}}2 + k\alpha \ell$, so
$k \ge \frac {{\widetilde{L}} -3\ell}{2\alpha \ell}\ge \br{ \frac {\widetilde{L}} {2 \ell}}$. We thus conclude that if ${\mathbb{A}}_{E} \not\subset \Lambda_{L^{\pr}+{\widetilde{L}}-3\ell}(x_{0})$ we can find a self-avoiding path $y_{0},y_{1},\ldots,y_{\br{ \frac {\widetilde{L}} {2 \ell}}}$ of bad sites with $\norm{y_{0}-x_{0}}= \frac {L^{\pr}}2$ and    $y_{1},y_{2},\ldots,y_{\br{ \frac {\widetilde{L}} {2 \ell}}} \notin\G_{\Lambda_{L^{\pr\pr}}(x_{0})}^{(\ell)}$.  The number of such self-avoiding paths is bounded by 
 $\pa{\frac {4L^{\pr}}{\ell}}^{d-1} 3^{\br{ \frac {\widetilde{L}} {2 \ell}}d}$.  Since sites $y,y^{\pr} \notin\G_{\Lambda_{L^{\pr\pr}}(x_{0})}^{(\ell)}$ are independent unless $\norm{y-y^{\pr}}\le \alpha \ell$, such a  self-avoiding path   must contain at least  $\br{3^{-d}\br{ \frac {\widetilde{L}} {2 \ell}}}\ge c_{d}^{\pr}\frac {\widetilde{L}} { \ell}$
independent sites, and hence its probability of having only bad sites is $\le \ell^{-c^{\pr}_{d}pd\frac {\widetilde{L}} { \ell}}$.  Thus
\beq
\P\set{{\mathbb{A}}_{E} \not\subset \Lambda_{L^{\pr}+{\widetilde{L}}-3\ell}(x_{0})}\le \pa{\tfrac {4L^{\pr}}{\ell}}^{d-1} 3^{\br{ \frac {\widetilde{L}} {2 \ell}}d}\ell^{-c^{\pr}_{d}pd\frac {\widetilde{L}} { \ell}}.
\eeq
\end{proof}

Given $\Gamma \subset \G$ and $0\le\eps_{1}<\eps_{2}$,  we set
\begin{align}
\widehat{\Gamma}&:=\bigcup_{x \in \Gamma} \Lambda_{\ell}(x),\\
\partial\up{\eps_{1},\eps_{2}}\widehat{\Gamma}&:= \set{x \in \R^{d}; \; \eps_{1} <\dist \pa{x, \widehat{\Gamma}}<\eps_{2}}.
\end{align}
Note that $\widehat{\Gamma}$ is a connected subset of $\R^{d}$ if $\Gamma$ is a connected subset of $\G$. 

\begin{lemma} \label{sublemphi}
 Let $E\in \I$ and $\bom \in \cY\up{E}_{x_{0},L^{\pr},\ell,{\widetilde{L}}}$. Then ($\mathbb{A}_{E}={\mathbb{A}}_{E}(\bom)$):
  \begin{enumerate}
  
  \item   For all  $r \in \partial^{+}{\mathbb{A}}_{E}$ we have that $\Lambda_{\ell}(r)\subset
   \Lambda_{L^{\pr}+{\widetilde{L}}}(x_{0})$ and  the box $\Lambda_{\ell}(r)$  is  $(\bom,E,m,\varsigma, p)$-good.
  
  \item  There exists a function   ${\phi}={\phi}_{\bom,E} \in C^2_{\mathrm{c}}(\R^{d})$, with  $0\le {\phi} \le 1$,  such that
\begin{gather}\label{defphi5}
{\phi}\equiv 1 \quad \text{on} \quad   \widehat{{\mathbb{A}}_{E}} ,\\
{\phi}\equiv 0 \quad \text{on} \quad  \R^{d}\setminus  \Lambda_{L^{\pr}+{\widetilde{L}}-\ell}(x_{0}) ,\\
\supp \nabla \phi \subset  \partial\up{4,8}\widehat{{\mathbb{A}}_{E}} ,\\
\abs{\nabla {\phi}}, \abs{\Delta {\phi}}\le C_d, \quad \text{the constant $C_{d}$ depending only on $d$}, \label{nablaphi5}
\end{gather}
and  for all $x \in \R^{d}$ with $\La_{\frac 1 2}(x)\cap \supp \nabla \phi \ne \emptyset$ there exists $r(x) \in \partial^{+}{\mathbb{A}}_{E}$
 such that   $\Lambda_{\frac \ell 5}(x) \subset \Lambda_{\ell}(r(x))$. 
   
\end{enumerate}
\end{lemma}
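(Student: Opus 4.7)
My plan is to treat the two parts essentially independently, since (i) is a short consequence of the definitions, while (ii) is a standard cutoff construction with one nonstandard property that needs to be verified by a geometric argument exploiting the covering lemma.

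For part (i), I would argue as follows. Suppose $r \in \partial^{+}{\mathbb{A}}_{E}$. Then $r \notin {\mathbb{A}}_{E}$, and since ${\mathbb{A}}_{E}$ contains $\G_{\Lambda_{L^{\pr\pr}}(x_{0})}^{(\ell)}$ by the very definition of the cluster, $r \notin \G_{\Lambda_{L^{\pr\pr}}(x_{0})}^{(\ell)}$. Being adjacent to ${\mathbb{A}}_{E}$ but not in the bad-site cluster forces $r$ to be a \emph{good} site, which by our site percolation model means exactly that $\Lambda_{\ell}(r)$ is $(\bom,E,m,\varsigma)$-good. For the location, pick $r^{\prime} \in {\mathbb{A}}_{E}$ with $\{r,r^{\prime}\}$ an edge, so $\norm{r-r^{\prime}} = \alpha\ell \le \tfrac{4}{5}\ell$. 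Because $\bom \in \cY\up{E}_{x_{0},L^{\pr},\ell,{\widetilde{L}}}$ and $E\in\I$, we have $r^{\prime} \in {\mathbb{A}}_{E}\subset \Lambda_{L^{\pr}+{\widetilde{L}}-3\ell}(x_{0})$, and so $\Lambda_{\ell}(r) \subset \Lambda_{L^{\pr}+{\widetilde{L}}-3\ell + 2\alpha\ell + \ell}(x_{0}) \subset \Lambda_{L^{\pr}+{\widetilde{L}}}(x_{0})$.

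For part (ii), I would use the standard construction: let $f$ be the indicator of the closed $5$-neighborhood of $\widehat{{\mathbb{A}}_{E}}$, and convolve with a non-negative $C^{\infty}$ bump $\chi$ supported in $\Lambda_{1}(0)$ with $\int \chi = 1$ and $\|\nabla\chi\|_{\infty}, \|\Delta\chi\|_{\infty}\le C_{d}$. Setting $\phi = f*\chi$, one checks routinely that $\phi \in C_{\mathrm{c}}^{2}(\R^{d})$, $0\le\phi\le 1$, $\phi \equiv 1$ on the $4$-neighborhood of $\widehat{{\mathbb{A}}_{E}}$ (and in particular on $\widehat{{\mathbb{A}}_{E}}$), $\phi \equiv 0$ outside the $8$-neighborhood of $\widehat{{\mathbb{A}}_{E}}$, and the gradient/Laplacian bounds in \eqref{nablaphi5} hold. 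Since $\widehat{{\mathbb{A}}_{E}}\subset \Lambda_{L^{\pr}+{\widetilde{L}}-2\ell}(x_{0})$ by part (i) applied to ${\mathbb{A}}_E$ itself, the $8$-neighborhood of $\widehat{{\mathbb{A}}_{E}}$ is contained in $\Lambda_{L^{\pr}+{\widetilde{L}}-\ell}(x_{0})$ once $\ell$ is large. The support of $\nabla \phi$ lies in $\partial\up{4,8}\widehat{{\mathbb{A}}_{E}}$ by construction.

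The one nontrivial part---and the only step where I expect to have to work a little---is the last assertion: given $x$ with $\Lambda_{1/2}(x) \cap \supp\nabla\phi \ne \emptyset$, produce $r(x) \in \partial^{+}{\mathbb{A}}_{E}$ with $\Lambda_{\ell/5}(x)\subset \Lambda_{\ell}(r(x))$. Here I would invoke Lemma~\ref{lemcovering} (property \eqref{ell5cover}, valid because $\alpha\le 4/5$) to get \emph{some} $r(x) \in \G$ with $\Lambda_{\ell/5}(x)\subset \Lambda_{\ell}(r(x))$; the real task is showing $r(x) \in \partial^{+}{\mathbb{A}}_{E}$. Note $\dist(x,\widehat{{\mathbb{A}}_{E}}) \in (3,9)$, so $x \notin \widehat{{\mathbb{A}}_{E}}$; were $r(x) \in {\mathbb{A}}_{E}$, the inclusion $\Lambda_{\ell/5}(x)\subset \Lambda_{\ell}(r(x))\subset \widehat{{\mathbb{A}}_{E}}$ would give $x\in \widehat{{\mathbb{A}}_{E}}$, a contradiction. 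To exhibit a neighbour of $r(x)$ in ${\mathbb{A}}_{E}$: pick $y\in\widehat{{\mathbb{A}}_{E}}$ with $\norm{x-y}\le 9$ and then $r^{\prime}\in{\mathbb{A}}_{E}$ with $y\in\Lambda_{\ell}(r^{\prime})$; since $\norm{x-r(x)}\le \tfrac{2}{5}\ell$ (from $\Lambda_{\ell/5}(x)\subset\Lambda_{\ell}(r(x))$), we have $\norm{r(x)-r^{\prime}}\le \tfrac{2}{5}\ell + \tfrac{\ell}{2}+9 = \tfrac{9}{10}\ell + 9 < \ell$ for $\ell$ large, so $\Lambda_{\ell}(r(x))\cap \Lambda_{\ell}(r^{\prime})\ne\emptyset$. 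Now the lattice structure $\G = x_{0}+\alpha\ell\Z^{d}$ with $\alpha\in[\tfrac{3}{5},\tfrac{4}{5}]$ forces each coordinate of $r(x)-r^{\prime}$ to lie in $\{-\alpha\ell,0,\alpha\ell\}$ (since $2\alpha\ell > \ell$ rules out larger multiples), and $r(x)\ne r^{\prime}$ rules out the all-zero case, so $\norm{r(x)-r^{\prime}}_{\infty}=\alpha\ell$ and $\{r(x),r^{\prime}\}$ is an edge, witnessing $r(x)\in \partial^{+}{\mathbb{A}}_{E}$.
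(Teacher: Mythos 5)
Your proof is correct and follows essentially the same approach as the paper: part (i) is a direct consequence of the definition of the cluster $\mathbb{A}_E$ and the containment given by $\cY^{(E)}$, and part (ii) constructs $\phi$ as a mollified indicator of a fattened $\widehat{\mathbb{A}_E}$ and then invokes \eqref{ell5cover} for the final assertion. The only difference is that you spell out the geometric argument for why the $r(x)$ produced by \eqref{ell5cover} lies in $\partial^{+}\mathbb{A}_E$ (distance bound $\le \tfrac{9}{10}\ell + 9$, lattice rigidity forcing $\norm{r(x)-r'}_\infty = \alpha\ell$), which the paper leaves terse; this is a welcome expansion rather than a different route. (Minor cosmetic note: the choice of fattening radius — you use $5$, the paper uses $6$ — is immaterial since both land $\supp\nabla\phi$ inside $\partial^{(4,8)}\widehat{\mathbb{A}_E}$; and in part (i) you correctly state the box is $(\bom,E,m,\varsigma)$-good, which is what the lemma means despite the stray "$p$" in its statement.)
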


\begin{proof}Since $\bom \in \cY\up{E}_{x_{0},L^{\pr},\ell,{\widetilde{L}}}$, we have  $ \widehat{{\mathbb{A}}_{E}} \subset  \Lambda_{L^{\pr}+{\widetilde{L}}-2\ell}(x_{0})$.  (i) follows from the definition of  $ \widehat{{\mathbb{A}}_{E}}$.  To prove (ii), let $\psi$ be the characteristic function of the set $ \set{x \in \R^{d}; \; \dist \pa{x,  \widehat{{\mathbb{A}}_{E}}}\le 6}$.  Pick a a nonnegative function $\eta \in C^{2}(\R^{d})$ ,   with compact support in $\Lambda_{1}(0)$,  $\int_{\R^{d}} \eta(x)\,  \di x=1$, and $\abs{\nabla {\eta}}, \abs{\Delta {\eta}}\le C^{\pr}_d$. Then $\phi= \eta \ast \psi$ has all the desired properties.

Let $x \in \R^{d}$ with $\La_{\frac 1 2}(x)\cap \supp \nabla \phi \ne \emptyset$. Then, in view of    \eq{ell5cover},  there exists $r(x) \in \partial^{+}{\mathbb{A}}_{E}$  with $\Lambda_{\frac \ell 5}(x) \subset \Lambda_{\ell}(r(x))$.
Since  $x \in  \partial\up{3,9}\widehat{{\mathbb{A}}_{E}}\cap \Lambda_{\ell}(r)$ for some $r \in \G$ implies    $r \in \partial^{+}{\mathbb{A}}_{E}$, we conclude that $r(x) \in \partial^{+}{\mathbb{A}}_{E}$.
\end{proof}

\subsubsection{The  energy trap}

\begin{lemma} \label{lem1trap} 
Given a sufficiently large  scale $L $,  for each $x_{0}\in \R^{d}$   there exists an event $\cT_{L,x_{0}}$, with
\beq\label{cTdesired}
\cT_{L,x_{0}}\in \cF_{{\Lambda_{L, \frac L 2}(x_{0})}} \quad\text{and}\quad  \P\set{\cT_{L,x_{0}}} \ge  1 - L^{-{c}_{d,p,m,\abs{\I}} {\sqrt{L}} } ,
\eeq
such that for $\bom \in \cT_{L,x_{0}}$ we have 
\beq\label{sqrtdistsp}
W_{\bom,x_{0}}({E})\dist \pa{{E}, \sigma(H_{\bom,\Lambda_{L}(x_{0})})} \le  \e^{- \frac m {15} \sqrt{L} } \quad \text{for all}\quad {E} \in \I .
\eeq
In particular,  we conclude that, for   $\bom \in \cT_{L,x_{0}}$ and ${E} \in \I$,
\beq\label{sqrtdistsp2}
  W_{\bom,x_{0}}({E}) >   \e^{- \frac m {30} \sqrt{L} }  \quad \Longrightarrow \quad\dist \pa{{E}, \sigma(H_{\bom,\Lambda_{L}(x_{0})})} \le  \e^{- \frac m {30} \sqrt{L} } .\eeq
\end{lemma}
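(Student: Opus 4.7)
I would realize $\cT_{L,x_{0}}$ as a finite intersection of the percolation events from Lemma~\ref{slemYLM} over a fine enough energy grid in $\I$, and then, for any $\bom$ in this intersection, trap $x_{0}$ inside the good belt furnished by Lemma~\ref{sublemphi}. First, pick an $\e^{-2m\sqrt L}$-net $\set{E_{j}}_{j=1}^{N}\subset\I$ with $N\le 2\abs{\I}\e^{2m\sqrt L}$ and define
\beq
\cT_{L,x_{0}}:=\bigcap_{j=1}^{N}\cY\up{E_{j}}_{x_{0},\frac L 2,\sqrt L,\frac L 2}\in\cF_{\Lambda_{L,\frac L 2}(x_{0})}.
\eeq
The union bound combined with \eq{probYL2} gives
\beq
\P\pa{\cT_{L,x_{0}}}\ge 1-2\abs{\I}\,\e^{2m\sqrt L}\,L^{-c_{d,p}\sqrt L}\ge 1-L^{-c_{d,p,m,\abs{\I}}\sqrt L}
\eeq
for all $L$ large, since $L^{-c_{d,p}\sqrt L}=\e^{-c_{d,p}\sqrt L\,\log L}$ easily swallows the factor $\e^{2m\sqrt L}$.

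Next, fix $\bom\in\cT_{L,x_{0}}$ and $E\in\I$, and choose a nearest grid point $E_{j}$, so $\abs{E-E_{j}}\le\e^{-2m\sqrt L}$. By Lemma~\ref{lemjgood}, each boundary box $\Lambda_{\sqrt L}(r)$ for $r\in\partial^{+}\mathbb{A}_{E_{j}}$ that is $(\bom,E_{j},m,\vs)$-good becomes $(\bom,E,m,\vs)$-jgood; Lemma~\ref{sublemphi} then supplies a cutoff $\phi=\phi_{\bom,E_{j}}\in C^{2}_{c}(\Lambda_{L-\sqrt L}(x_{0}))$ with $0\le\phi\le 1$, $\phi\equiv 1$ on $\widehat{\mathbb{A}_{E_{j}}}\ni x_{0}$, $\abs{\nabla\phi}+\abs{\Delta\phi}\le C_{d}$, and every unit box in $\supp\nabla\phi$ deep inside such a jgood box. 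The case $E\in\sigma(H_{\bom,\Lambda_{L}(x_{0})})$ is trivial, so assume otherwise; then for any $\psi\in\Theta_{\bom}(E)$, \eq{localeig} applied in $\Lambda_{L}(x_{0})$ yields $\phi\psi=R_{\bom,\Lambda_{L}(x_{0})}(E)W_{\Lambda_{L}(x_{0})}(\phi)\psi$, and $\phi(x_{0})=1$ forces
\beq
\norm{\Chi_{x_{0}}\psi}\le\norm{\phi\psi}\le\dist\pa{E,\sigma(H_{\bom,\Lambda_{L}(x_{0})})}^{-1}\norm{W_{\Lambda_{L}(x_{0})}(\phi)\psi}.
\eeq

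The numerator is bounded by walking through the jgood belt. For each unit cube $\Lambda_{1/2}(x_{k})\subset\supp\nabla\phi$, Lemma~\ref{lemgoodW} applied in the surrounding jgood box $\Lambda_{\sqrt L}(r(x_{k}))$ gives $\norm{\Chi_{x_{k}}\psi}\le\e^{-\frac m{15}\sqrt L}\norm{T_{x_{k}}^{-1}\psi}\le C L^{\nu}\e^{-\frac m{15}\sqrt L}\norm{T_{x_{0}}^{-1}\psi}$ after comparing weights via $\norm{T_{x_{k}}^{-1}\psi}\le 2^{\nu/2}\la x_{k}-x_{0}\ra^{\nu}\norm{T_{x_{0}}^{-1}\psi}$; the gradient contribution $\norm{\Chi_{\Lambda_{1/2}(x_{k})}\nabla\psi}$ is controlled by the same exponential after localizing $\psi$ inside $\Lambda_{\sqrt L}(r(x_{k}))$ with an auxiliary $C^{2}_{c}$ cutoff that is $1$ near $\Lambda_{1}(x_{k})$ and invoking the interior estimate \eq{interior}. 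Squaring, summing over the $O(L^{d-1})$ unit cubes, absorbing the polynomial prefactor into the exponential at large $L$, and taking the supremum over $\psi\in\Theta_{\bom}(E)$ delivers \eq{sqrtdistsp}; the implication \eq{sqrtdistsp2} is immediate. The only delicate step is this last one: we need simultaneous exponential control of $\psi$ \emph{and} $\nabla\psi$ across the whole boundary belt, with all $O(L^{d-1})$ cubes and the polynomial weight $\la y-x_{0}\ra^{\nu}\sim L^{\nu}$ swallowed uniformly by the single gain $\e^{-m\sqrt L/15}$ coming from Lemma~\ref{lemgoodW}.
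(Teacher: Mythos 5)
Your plan is essentially the paper's own proof: the same energy net $\{E_j\}$, the same intersection of percolation events $\cY^{(E_j)}_{x_0,L/2,\sqrt L,L/2}$ (the paper's $\cT_{L,x_0}$ is defined identically), the same cutoff $\phi_{\bom,E_j}$ from Lemma~\ref{sublemphi}, and the same trapping argument via $(H_{\bom,\Lambda}-E)\phi\psi=W_\Lambda(\phi)\psi$ followed by a boundary-belt estimate; the paper just writes the conclusion as $\dist \le \norm{W_\Lambda(\phi)\psi}/\norm{\phi\psi}$ rather than passing through $R_{\bom,\Lambda}(E)$, which is equivalent.

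One small technical slip: for the belt estimate you invoke Lemma~\ref{lemgoodW}, which already delivers the cooked-down constant $\e^{-\frac m{15}\sqrt L}\norm{T_{x_k}^{-1}\psi}$ per cube; after you then pay the weight comparison $\scal{x_k-x_0}^\nu\sim L^\nu$ and sum over the $O(L^{d-1})$ cubes, the surviving decay rate is strictly worse than $\frac m{15}$, so you do not actually obtain \eq{sqrtdistsp} with the stated constant. The paper instead applies \eq{EDI22} of Lemma~\ref{lemSLI}\ref{sublemEDII2} directly, which gives the raw per-cube factor $\e^{-\frac{m'}{11}\sqrt L}$ with $m'=m(1-o(1))$; the extra gap between $\frac{1}{11}$ and $\frac1{15}$ is precisely what absorbs both the $L^{O(1)}$ cube count and the $L^{2\nu}$ weight. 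Replacing your use of Lemma~\ref{lemgoodW} by Lemma~\ref{lemEDI}/\eq{EDI22} closes this, and the rest of your argument then goes through verbatim.
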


\begin{proof} Fix a scale $L$ and  $x_0 \in \R^d$.
 Since $\I$ is a bounded interval, we can find $\set{ E_{j}}_{j=1,2,\ldots,J}\subset \I$ such that
 \beq
 \I \subset \bigcup_{j=1}^{J} \br{E_{j}- \e^{-2m\sqrt{L}}, E_{j}+ \e^{-2m\sqrt{L}}} \quad \text{and} \quad J\le \e^{2m\sqrt{L}}\abs{\I}.
 \eeq
We   set
 \beq
 \cT_{L,x_{0}}= \bigcap_{j=1}^{J}\cY\up{E_{j}}_{x_{0},\frac L 2,\sqrt{L},\frac L 2}.
  \eeq
 The estimate \eq{cTdesired} follows immediately from  \eq{probYL2}. 
 
  Let $\bom \in\cT_{L,x_{0}}$ and ${E} \in \I$ with $\Theta_{\bom}({E})\not= \emptyset$.  Pick $j \in \set{1,2,\ldots,J}$ such that  we have 
  ${E}\in \br{E_{j}- \e^{-2m\sqrt{L}}, E_{j}+ \e^{-2m\sqrt{L}}}$, write  $\mathbb{A}_{E_j}={\mathbb{A}}_{E_j}(\bom)$,  and let $\phi={\phi}_{\bom,E_{j}}$ be the function given in Lemma~\ref{sublemphi}.  Let $\psi\in \Theta_{\bom}({E})$,  a generalized eigenfunction.  Then   $\phi \psi \in  \mathcal{D}(H_{\bom,\Lambda})$ and we have \eq{localeig}, where $\La=\Lambda_{L}(x_{0})$.  It follows that, for $\cL$ sufficiently large, 
  \begin{align}\label{phipsiest}
& \norm{ \pa{H_{\bom,\Lambda}-{E}}\phi \psi}^{2}= \norm{  W_\Lambda(\phi)\psi}^{2}= \sum_{\substack{x \in x_{0}+ \frac 12 \Z^{d}\\  \La_{\frac 1 2}(x)\cap \supp \nabla \phi \ne \emptyset }} \norm{ \Chi_{\La_{\frac 1 2}(x)} W_\Lambda(\phi)\psi}^{2}\\  \notag
& \qquad \quad \le C_{d, \I,V_{\mathrm{per}}}\!\!  \sum_{\substack{x \in x_{0}+ \frac 12 \Z^{d}\\  \La_{\frac 1 2}(x)\cap \supp \nabla \phi \ne \emptyset }} \norm{ \Chi_{x} \psi}^{2}\le C_{d, \I,V_{\mathrm{per}}}^{2} \e^{-\frac {2m^{\pr}} {11} \sqrt{L} }\! \!\sum_{\substack{x \in x_{0}+ \frac 12 \Z^{d}\\  \La_{\frac 1 2}(x)\cap \supp \nabla \phi \ne \emptyset }} \!\! \norm{\psi}^{2}_{\La_{\sqrt{L}}(r(x))}
\\ \notag & \qquad \quad  \le 
{C}^{\pr}_{d, \I,V_{\mathrm{per}}} L^{d}  \e^{-\frac {2m^{\pr}} {11}\sqrt{L}} \norm{\psi}_{\Lambda_{L, \frac L 2}(x_{0})}^{2} \le  \e^{-\frac {2m} {15}\sqrt{L}} \norm{T_{x_{0}}^{-1}\psi}^{2} , \end{align}
where we used \eq{Wphi} and \eq{nablaphi5}, 
 applied the interior estimate given in \eq{interior} as in the derivation of  \eq{usinginterior},  used Lemma~\ref{sublemphi} with $\ell=\sqrt{L}$  ($r(x) \in \partial^{+}{\mathbb{A}}_{E_j}$ is given in the lemma),   applied  Lemma~\ref{lemSLI}\ref{sublemEDII2}, using   \eq{EDI22} with $m^{\pr}$ as in \eq{mpr} taking $\ell=\sqrt{L}$, and then used  \eq{mpr}  to write the final estimate in terms of $m$. Since it follows from \eq{defphi5} that  $\norm{\phi \psi}\ge \norm{\psi}_{\Lambda_{ \frac L 2}(x_{0})}\ge \norm{\Chi_{x_{0}}\psi}$,  we conclude that
\beq \label{phipsiest2}
\dist \pa{{E}, \sigma(H_{\bom,\Lambda_{L}(x_{0})})} \le \frac{\norm{ \pa{H_{\bom,\Lambda}-{E}}\phi \psi}} {\norm{\phi \psi}}\le \e^{-\frac {m} {15}\sqrt{L}}  \ \frac {\norm{T_{x_{0}}^{-1}\psi}}{ \norm{\Chi_{x_{0}}\psi}}.
\eeq
The desired  \eq{sqrtdistsp} now follows using  \eq{defGWx},    and  it yields \eq{sqrtdistsp2} .
\end{proof} 
 
\subsubsection{The energy bootstrap}

We fix $b \ge 1$, let  $\widehat{n}=\widehat{n}(p)$ be as  in \eq{defhatm}, and set
\beq    
 \eta =\eta(p):=2^{\frac 1 {\widehat{n}}}-1 < p,  \qtx{so} \eta \in ]0,1] \qtx{and} (1 +\eta)^{ \widehat{n}}=2,
 \label{etaprop1} 
  \eeq

We now fix a scale $L$, let $\ell_{0}=\sqrt{L}$, and set  $\ell_{k}=\ell_{k-1}^{1+\eta}$ for $k=1,2,\ldots,\widehat{n}$, so   $\ell_{\widehat{n}}=L$ by \eq{etaprop1}. We take $J\in \N$, to be determined later, and let $L_{0}=L$, $L_{k}= L_{k-1} + 2 J \ell_{k}$ for $k=1,2,\ldots,\widehat{n}$. We have 
\beq\label{Lhatn}
L_{\widehat{n}}=L + 2J \sum_{k=1}^{\widehat{n}}\ell_{k}\le \pa{1+ 2J \widehat{n}} L.
\eeq

Given $x_{0}\in \R^{d}$ and  $E \in \I$, we consider the events $\widetilde{\cY}\up{E}_{x_{0},L_{k-1},\ell_{k},2J\ell_{k}}$, $k=1,2,\ldots,\widehat{n}$, defined similarly to the event in \eq{YLM}, but with a modified site percolation model:  a site $r$ is now either \emph{pgood} or \emph{pbad} according to whether the corresponding box $\La_{\ell_{k}}$ is $(\bom,E,m,\varsigma,\eta)$-pgood or not (see Definition~\ref{defpgood}), and the set $\widetilde{\mathbb{A}}_{E}(\bom)$, defined similarly to $ {\mathbb{A}}_{E}(\bom)$, is now a cluster of bad sites. Requiring $2J\ge 100^{d}$, Lemma~\ref{slemYLM} still applies, with  ${\widehat{p}}={\widehat{p}}(p):= \tfrac {p-\eta}{2(1 + \eta)}$   substituted for $p$ in view of Lemma~\ref{lemprobpgood}, yielding for all $k=1,2,\ldots,\widehat{n}$ the estimate
\begin{align}\notag
\P\set{\widetilde{\cY}\up{E}_{x_{0},L_{k-1},\ell_{k},2J\ell_{k}}}& \ge 1 -  \pa{\tfrac {4L_{k-1}}{\ell_{k}}}^{d-1} 3^{J d}\ell_{k}^{- 2c_{d} {\widehat{p}} J}\ge 1 -  \pa{\tfrac {4L_{\widehat{n}}}{\ell_{0}}}^{d-1} 3^{J d}\ell_{0}^{-2 c_{d} {\widehat{p}} J}\\
&
\ge 1-  \pa{ 4\pa{1+ 2J \widehat{n}}L^{\frac 12}}^{d-1} 3^{J d}L^{-  {c_{d} {\widehat{p}}}   J}\ge 1 - L^{-6b d},\label{probYLMx}
\end{align}
provided $J \ge  C_{d,p,b}$.  We fix $J_{{d,p,b}} :=\br{ \max \set{ C_{d,p,b},\frac { 100^{d}}2}}+1$ so if $J \ge J_{{d,p,b}}$ the estimate   \eq{probYLMx} holds for all $k=1,2,\ldots,\widehat{n}$ .

For each $k=0,1,\ldots,\widehat{n}-1$  the finite volume operator $H_{\La_{L_{k}}(x_{0}),\bom}$, which depends only on $\bom_{\La_{L_{k}}(x_{0})}$, is a nonnegative self-adjoint operator with discrete spectrum.   We let
$\set{E_{j}\up{k} (\bom_{\La_{L_{k}}(x_{0})})}_{j\in \N}$ be the enumeration of these eigenvalues given by the min-max principle,
 as in \eq{mima}.  Each $E_{j}\up{k}=E_{j}\up{k} (\bom_{\La_{L_{k}}(x_{0})})$ is a continuous function of $\bom_{\La_{L_{k}}(x_{0})}$.   We define events
\beq \label{Zk}
\cZ_{k}=\cZ_{\La_{L_{k}}(x_{0})}:= \bigcap_{j\in \N} \widetilde{\cY}\up{E_{j}\up{k-1}}_{x_{0},L_{k-1},\ell_{k},2J\ell_{k}} \in \cF_{\La_{L_{k}}(x_{0})} \quad \text{for} \quad k=1,2,\ldots,\widehat{n}.
\eeq
Note that  $\cZ_{k} \in \cF_{\La_{L_{k}}(x_{0})}$ since the event $\widetilde{\cY}\up{E}_{x_{0},L_{k-1},\ell_{k},2J\ell_{k}}$  is jointly measurable in $\pa{E, {\bom}_{{\Lambda_{L_{k}, L_{k-1}}(x_{0})}}}$ and each $E_{j}\up{k-1}$ is a measurable function of
$\bom_{\La_{L_{k-1}}(x_{0})}$.  Since general estimates yield (cf. \cite[Eq.~(A.7)]{GK5})
\beq\label{numbereig}
\tr \set{\Chi_{\I}\pa{H_{\La_{L}(x_{0}),\bom}}}\le  C_{d,\Vper,\sup \I} L^d  \quad \text{for all}\quad L \ge 10,
\eeq
it follows from \eq{YLM} and  \eq{probYLMx} that
\beq \label{probZk}
\P\set{\cZ_{k}}\ge  1 - L^{-4bd} \quad \text{for} \quad k=1,2,\ldots,\widehat{n}.
\eeq

\begin{lemma}\label{lembootstrap} Given a sufficiently large  scale $L $,  for each $x_{0}\in \R^{d}$   there exists an event  $ \cZ_{L,x_{0} }$, with
\beq\label{probZ}
 \cZ_{L,x_{0} }\in \cF_{\La_{L_{\widehat{n}}}(x_{0})} \quad \text{and} \quad \P\set{\cZ_{L,x_{0}}}\ge 1 -\widehat{n} L^{-4bd}\ge 1 - L^{-3bd},
\eeq
 such that for all $\bom \in \cZ_{L,x_{0} }$,  if ${E} \in \I$ satisfies\begin{gather}\label{distL1}
 \dist \pa{{E}, \sigma^{(\I)}(H_{\bom,\Lambda_{L}(x_{0})})} \le  \e^{-  \tfrac m {30} \sqrt{L} },\\
   \dist \pa{{E}, \R\setminus \I}> \e^{-  {\widehat{m} } \sqrt{L} }  ,    \label{distRI}\\
  W_{\bom,x_{0}}({E}) >   \e^{- \widehat{m} \sqrt{L} },\label{distLWE}
\end{gather} 
where $\widehat{m}:= \tfrac m {30^{ \widehat{n}+1 }}=30{M}$ (see  \eq{defhatm}), it follows that
\beq \label{distL2}
\dist \pa{{E}, \sigma\up{\I}(H_{\bom,\Lambda_{L_{\widehat{n}}}(x_{0})})} \le  \e^{-\widehat{m} {L_{\widehat{n}}} } .
\eeq
\end{lemma}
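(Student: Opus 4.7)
The plan is a bootstrap induction across the nested scales $L = L_0 \le L_1 \le \cdots \le L_{\widehat{n}}$: at each step $k$ I produce an eigenvalue $E^{(k)} \in \sigma^{(\I)}(H_{\bom,\Lambda_{L_k}(x_0)})$ close to $E$, with the box scale $\ell_k$ of exponential decay growing from $\ell_0 = \sqrt{L}$ to $\ell_{\widehat{n}} = L$ and the distance to the spectrum shrinking correspondingly. First, set $\cZ_{L,x_0} := \bigcap_{k=1}^{\widehat{n}} \cZ_k \in \cF_{\Lambda_{L_{\widehat{n}}}(x_0)}$; the probability bound \eq{probZ} follows from \eq{probZk} by a union bound. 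Fix $\bom \in \cZ_{L,x_0}$ and $E \in \I$ satisfying \eq{distL1}--\eq{distLWE}, and pick a generalized eigenfunction $\psi \in \Theta_{\bom}(E)$ normalised so that $\|T_{x_0}^{-1}\psi\|=1$ and $\|\Chi_{x_0}\psi\| \ge \tfrac12 \e^{-\widehat{m}\sqrt{L}}$, which is possible by \eq{defGWx} and \eq{distLWE}.

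With intermediate rates $\widehat{m}_k := m/30^{k+1}$, the inductive hypothesis at step $k$ is: there exists $E^{(k)} \in \sigma^{(\I)}(H_{\bom,\Lambda_{L_k}(x_0)})$ with $|E - E^{(k)}| \le \e^{-\widehat{m}_k \ell_k}$. The base case $k=0$ is exactly \eq{distL1}, with \eq{distRI} ensuring that the closest spectral point lies in $\I$. For the step $k-1 \to k$, using $\bom \in \cZ_k \subset \widetilde{\cY}^{(E^{(k-1)})}_{x_0,L_{k-1},\ell_k,2J\ell_k}$, the bad cluster $\widetilde{\mathbb{A}}_{E^{(k-1)}}(\bom)$ lies in $\Lambda_{L_k-3\ell_k}(x_0)$, and outside it every point sits in a $(\bom,E^{(k-1)},m,\varsigma,\eta)$-pgood $\ell_k$-box covering its $\ell_k/5$-neighbourhood. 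Since $|E - E^{(k-1)}| \le \e^{-\widehat{m}_{k-1}\ell_{k-1}}$, Lemma~\ref{lempggodtogood} applied with $m_1 = \widehat{m}_{k-1}$ upgrades each such pgood box into a $(\bom,E,M^\pr_k,\varsigma)$-good box with $M^\pr_k = \widehat{m}_{k-1}(1 - o(1))$ as $L \to \infty$. Following Lemma~\ref{sublemphi}, I construct a cutoff $\phi_k \in C_c^2(\R^d)$ with $\phi_k \equiv 1$ on a neighbourhood of the bad cluster (hence at $x_0$), $\supp\phi_k \subset \Lambda_{L_k-\ell_k}(x_0)$, and $\supp\nabla\phi_k$ covered by those good boxes; then $\phi_k\psi \in \cD(H_{\bom,\Lambda_{L_k}(x_0)})$ and $(H_{\bom,\Lambda_{L_k}(x_0)}-E)\phi_k\psi = W_{\Lambda_{L_k}(x_0)}(\phi_k)\psi$. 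Applying Lemma~\ref{lemgoodW} on each good box meeting $\supp\nabla\phi_k$, combined with the interior estimate \eq{interior} to handle the $\nabla\psi$ term exactly as in \eq{phipsiest}, bounds the error by $C L_k^d \e^{-M^\pr_k \ell_k/15}$; dividing by $\|\phi_k\psi\| \ge \|\Chi_{x_0}\psi\| \ge \tfrac12 \e^{-\widehat{m}\sqrt{L}}$ yields $\dist(E,\sigma(H_{\bom,\Lambda_{L_k}(x_0)})) \le \e^{-\widehat{m}_k\ell_k}$. Then \eq{distRI} keeps the closest eigenvalue in $\I$, producing $E^{(k)}$ and closing the induction.

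The main obstacle is the simultaneous bookkeeping of three sources of constant loss: the factor $30$ from $\widehat{m}_{k-1}$ to $\widehat{m}_k$, the $\e^{\widehat{m}\sqrt{L}}$ deficit from inverting $W_{\bom,x_0}(E)$, and the $(1-o(1))$ factor from Lemma~\ref{lempggodtogood}. The scheme closes because $M^\pr_k \approx \widehat{m}_{k-1} = 30\widehat{m}_k$ is $30$ times larger than the target rate, which---after the factor-of-$15$ loss in Lemma~\ref{lemgoodW}---still exceeds $\widehat{m}_k$ by a factor of $2$, with the remaining slack easily absorbing the $\e^{\widehat{m}\sqrt{L}}$ deficit since $\ell_k \ge L^{(1+\eta)/2}$ grows super-polynomially past $\sqrt{L}$. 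This propagated slack converts the final bound $\e^{-\widehat{m}_{\widehat{n}}\ell_{\widehat{n}}} = \e^{-\widehat{m} L}$ into the stronger bound \eq{distL2}, i.e., $\e^{-\widehat{m} L_{\widehat{n}}}$, using $L_{\widehat{n}} \le (1 + 2J\widehat{n})L$.
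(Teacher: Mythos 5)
Your proposal reproduces the paper's argument essentially step by step: the event $\cZ_{L,x_0}=\bigcap_{k=1}^{\widehat{n}}\cZ_k$ with the union bound for \eq{probZ}, the inductive bootstrap across scales $L_0\le L_1\le\cdots\le L_{\widehat{n}}$ with intermediate rates $\widehat{m}_k=m/30^{k+1}$ at box sides $\ell_k$, the upgrade of $(\bom,E^{(k-1)},m,\varsigma,\eta)$-pgood boxes to $(\bom,E,M_k',\varsigma)$-good boxes via Lemma~\ref{lempggodtogood}, and the cutoff-plus-interior-estimate argument of \eq{phipsiest}--\eq{phipsiest2} with \eq{distRI} to stay in $\I$. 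The only cosmetic difference is that you divide by the fixed lower bound $\tfrac12\e^{-\widehat{m}\sqrt L}$ on $\|\Chi_{x_0}\psi\|$ at every step, whereas the paper re-expresses $W_{\bom,x_0}(E)>\e^{-\widehat m \sqrt L}$ as $W_{\bom,x_0}(E)>\e^{-\widehat m_{k-1}\ell_k/30}$ at step $k$; both are minor weakenings of \eq{distLWE} that close the recursion, and your final bookkeeping of the slack (including the terse passage from $\e^{-\widehat m L}$ to $\e^{-\widehat m L_{\widehat n}}$) is at exactly the same level of detail as the paper's one-line "Repeating the argument $\widehat n -1$ times we get \eq{distL2}."
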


\begin{proof} Given $L$ and $x_{0}$, we set
\beq\label{cZ}
\cZ_{L,x_{0} }:= \bigcap_{k=1}^{\widehat{n}}\cZ_{k} ,
\eeq
so   \eq{probZ}   follows immediately from  \eq{probZk}.

Let $\widetilde{m}= \tfrac m {30}$. Given $\bom \in \cZ_{L,x_{0} }$ and ${E} \in \I$  satisfying \eq{distL1} and $ W_{\bom,x_{0}}({E}) >0$, we   pick $E^{\pr} \in \sigma\up{\I}(H_{\bom,\Lambda_{L}(x_{0})})$ such that  $\abs{{E} - E^{\pr}  } \le  \e^{- \widetilde{m} \sqrt{L} }$ and  $\psi\in \Theta_{\bom}({E})$.   We have   $\bom \in \widetilde{\cY}\up{E^{\pr} }_{x_{0},L,\ell_{1},2J\ell_{1}}$, so we let $\phi={\phi}_{\bom,E^{\pr} }$ be the function given in Lemma~\ref{sublemphi}.  Note that Lemma~\ref{sublemphi} applies as stated for the modified site percolation model, the only modification being that a box $\Lambda_{\ell_{1}}(r)$ with  $r \in \partial^{+} \widetilde{\mathbb{A}}_{E^{\pr} }(\bom)$ is now
     $(\bom,E^{\pr} ,m,\varsigma,\eta)$-pgood, and hence, using Lemma~\ref{lempggodtogood}, $\Lambda_{\ell_{1}}(r)$   is $(\bom,{E},\widetilde{\widetilde{m}},\varsigma)$-good, where 
     \beq \label{massM15}
\widetilde{\widetilde{m}}= \widetilde{m}\pa{1- C_{d,p,m}  \ell_{1}^{-\frac {\min\set{\vs,\eta}}{1 + \eta}} } .
\eeq
Proceeding as in \eq{phipsiest} and \eq{phipsiest2}, we get  ($L$ large)
\beq \label{phipsiest24}
\frac{\norm{ \pa{H_{\bom,\Lambda_{L_{1}}(x_{0})}-{E}}\phi \psi}} {\norm{\phi \psi}}\le  \e^{-   \frac {\widetilde{\widetilde{m}}} {13} \ell_{1} } \ \frac {\norm{T_{x_{0}}^{-1}\psi}}{ \norm{\Chi_{x_{0}}\psi}}\le  \e^{- \frac {\widetilde{m}} {15} \ell_{1} }  \ \frac {\norm{T_{x_{0}}^{-1}\psi}}{ \norm{\Chi_{x_{0}}\psi}} ,
\eeq
the  generalized eigenfunction $\psi$ being arbitrary, so we conclude  that
\beq \label{distL23}
\dist \pa{{E}, \sigma(H_{\bom,\Lambda_{L_{1}}(x_{0})})} \le  \e^{-\frac {\widetilde{m} }{15} {\ell_{1}} }\pa{W_{\bom,x_{0}}({E})}^{-1}.
\eeq
Since it follows from \eq{distLWE} that 
\beq
 W_{\bom,x_{0}}({E}) >    \e^{-\frac {\widetilde{m} }{30} {\ell_{1}} },
\eeq
we get, using also \eq{distRI}, that
\beq\label{sqrtdistsp234}
\dist \pa{{E}, \sigma\up{\I}(H_{\bom,\Lambda_{L_{1}}(x_{0})})} \le  \e^{-\frac {\widetilde{m} }{30} {\ell_{1}} }.
\eeq

Repeating the argument $\widehat{n}-1$ times we get \eq{distL2}.
\end{proof}

\subsubsection{Completing the proof of Proposition~\ref{propenergyred1}}

\begin{proof}[Proof of Proposition~\ref{propenergyred1}]
Given a scale $L$, let $\breve{L}$ be the unique scale such that $\breve{L}_{\widehat{n}}=L$
(see \eq{Lhatn}). We take $J\ge J_{d,p,b}$, so   $K=1 +2J\widehat{n}\ge K_{d,p,b}:= 1 +2J_{d,p,b}\widehat{n}$, and hence $\breve{L}\ge \frac L K $.  Recalling Lemmas~\ref{lem1trap} and \ref{lembootstrap}, we let
\beq
\cQ_{L,x_{0}}= \cT_{\breve{L},x_{0}} \cap  \cZ_{\breve{L},x_{0}} \in   \cF_{{\Lambda_{\breve{L}, \frac {\breve{L}} 2}(x_{0})}}\cap \cF_{\La_{\breve{L}_{\widehat{n}}}(x_{0})}\subset \cF_{\La_{L}(x_{0})},
\eeq
so
\beq
 \P\set{\cQ_{L,x_{0}}} \ge  1 - {\breve{L}}^{-{c}_{d,p,m,\abs{\I}} {\sqrt{{\breve{L}}}} }- {\breve{L}}^{-3d} \ge 1 - L^{-2bd}.
\eeq
Let $\bom \in\cQ_{L,x_{0}}$ and ${E} \in \I$ satisfying \eq{distEI}. It follows that
\beq
 W_{\bom,x_{0}}({E}) >   \e^{- \frac m {30} \sqrt{\breve{L}} }\quad\text{and}\quad  \dist \pa{{E}, \R\setminus \I}> \e^{-  \frac m {30} \sqrt{\breve{L}} }   
\eeq
so we conclude from  Lemma~\ref{lem1trap} that
\beq\label{sqrtdistsp244}
\dist \pa{{E}, \sigma\up{\I}(H_{\bom,{\Lambda_{\breve{L}}(x_{0})})}} \le  \e^{- \frac m {30} \sqrt{\breve{L}} }.
\eeq
Since \eq{sqrtdistsp244} is just \eq{distL1} at scale $\breve{L}$, and \eq{distEI} implies \eq{distRI} and \eq{distLWE} at scale $\breve{L}$, 
 Lemma~\ref{lembootstrap} now yields   \eq{distL2}  for the scale $\breve{L}$, which is the desired \eq{Ldist}.
\end{proof}

\subsection{The second spectral reduction}
If $p\le 1$ we need a second spectral reduction.

Given a scale $L$, we set $L_{n}=L^{{\rho}^{n}}$ for   $n=0,1,\ldots,n_{1}$  (note $L_{0}=L $, $L_{n_{1}}=L^{{\beta}}$), where   $\rho,n_{1}, \beta$ are as in Theorem~\ref{thmmainev}.

\begin{definition}
The  \emph{reduced spectrum} of
 the operator  $H_{\bom}$ in the box $\La_{L}(x_{0})$, in the energy interval $\I$, is given by\begin{align}\label{redspI}
& \sigma\up{\I,\mathrm{red}}\pa{H_ {\bom,\La_{L}(x_{0})}} := \\
&\qquad \qquad  \set{E \in \sigma\up{\I}\pa{H_{\bom, \La_{L}(x_{0})}}; \;  \dist \pa{E, \sigma\up{\I}(H_{\bom,\Lambda_{L_{n}}(x_{0})})} \le 2 \e^{-\widehat{m}L_{n} }, n=1,\ldots,n_{1}},
\notag
\end{align}
where  $\widehat{m}$ is given in \eq{defhatm}. 
\end{definition}

Note that the set $\set{(E,\bom); \ E \in  \sigma\up{\I,\mathrm{red}}\pa{H_{\bom,\La_{L}(x_{0})}}}$
 is jointly measurable in $\pa{E, {\bom}_{\Lambda_{L}(x_{0})} }$.

\begin{proposition}\label{prop2red} Let $b\ge 1$  and fix  $K \ge K_{d,p,b}$, where  $ K_{d,p,b}$ is  the constant of Proposition~\ref{propenergyred1}.  
Given a sufficiently large  scale $L $,  for each $x_{0}\in \R^{d}$   there exists an event $\cX_{L,x_{0}}$, with
\beq\label{cXdesired}
\cX_{L,x_{0}}\in \cF_{\Lambda_{L}(x_{0})} \quad\text{and}\quad  \P\set{\cX_{L,x_{0}} }\ge  1 - L^{- b {\beta}d} ,
\eeq
such that for all $\bom \in\cX_{L,x_{0}}$: 
\begin{enumerate}
\item
  If ${E} \in \I$ satifies
\beq \label{distEIred}
W_{\bom,x_{0}}({E}) >   \e^{- \widehat{m} \sqrt{\frac {L^{{\beta}}} K} }\quad \text{and}  \quad \dist \pa{{E}, \R\setminus \I} >  \e^{- \widehat{m} \sqrt{\frac {L^{{\beta}}} K} },
 \eeq 
it follows that
\beq\label{Ldistred}
\dist \pa{{E},  \sigma\up{\I,\mathrm{red}}\pa{H_ {\bom,\La_{L}(x_{0})}}} \le  \e^{-\widehat{m}L }.
\eeq
\item We have
\beq\label{sigmaredest}
\#  \sigma\up{\I,\mathrm{red}}\pa{H_{\bom,\La_{L}(x_{0})}} \le  C_{d,\Vper,\I,p,\rho,n_{1}}  L^{(n_{1}+1 )\beta d}.
\eeq
\end{enumerate}
\end{proposition}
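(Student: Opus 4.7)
The plan is to define $\cX_{L,x_{0}}$ as the intersection of the events furnished by Proposition~\ref{propenergyred1} at each of the scales $L_{n}=L^{\rho^{n}}$ for $n=0,1,\ldots,n_{1}$ (so $L_{0}=L$ and $L_{n_{1}}=L^{\beta}$), and to derive the location statement (i) by applying Proposition~\ref{propenergyred1} simultaneously at every scale, while the cardinality bound (ii) will come from combining the general eigenvalue count \eqref{numbereig} with the nested constraints defining $\sigma\up{\I,\mathrm{red}}$. Fixing $b\ge 1$ and $K\ge K_{d,p,b}$ as in Proposition~\ref{propenergyred1}, I set
\[
\cX_{L,x_{0}}:=\bigcap_{n=0}^{n_{1}}\cQ_{L_{n},x_{0}},
\]
where $\cQ_{L_{n},x_{0}}$ is the event supplied by Proposition~\ref{propenergyred1} at scale $L_{n}$. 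Since $\cQ_{L_{n},x_{0}}\in\cF_{\La_{L_{n}}(x_{0})}\subset\cF_{\La_{L}(x_{0})}$ and $L_{n}\ge L_{n_{1}}=L^{\beta}$, a union bound gives, for $L$ sufficiently large,
\[
\P\set{\cX_{L,x_{0}}}\ge 1-\sum_{n=0}^{n_{1}}L_{n}^{-2bd}\ge 1-(n_{1}+1)L^{-2b\beta d}\ge 1-L^{-b\beta d}.
\]

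For part (i), let $\bom\in\cX_{L,x_{0}}$ and $E\in\I$ satisfy \eqref{distEIred}. Since $L_{n}\ge L^{\beta}$ for each $n=0,\ldots,n_{1}$, both conditions in \eqref{distEI} at scale $L_{n}$ follow from \eqref{distEIred}, and Proposition~\ref{propenergyred1} applied at scale $L_{n}$ yields
\[
\dist\pa{E,\sigma\up{\I}(H_{\bom,\La_{L_{n}}(x_{0})})}\le \e^{-\widehat{m}L_{n}} \qquad \text{for}\quad n=0,1,\ldots,n_{1}.
\]
Taking $n=0$, pick $E_{0}\in\sigma\up{\I}(H_{\bom,\La_{L}(x_{0})})$ with $|E-E_{0}|\le\e^{-\widehat{m}L}$. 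For each $n=1,\ldots,n_{1}$, the triangle inequality together with $L\ge L_{n}$ gives
\[
\dist\pa{E_{0},\sigma\up{\I}(H_{\bom,\La_{L_{n}}(x_{0})})}\le \e^{-\widehat{m}L}+\e^{-\widehat{m}L_{n}}\le 2\e^{-\widehat{m}L_{n}},
\]
so $E_{0}\in\sigma\up{\I,\mathrm{red}}(H_{\bom,\La_{L}(x_{0})})$, yielding \eqref{Ldistred}.

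For part (ii), the starting point is \eqref{numbereig}, which gives $\#\sigma\up{\I}(H_{\bom,\La_{L_{n}}(x_{0})})\le C_{d,\Vper,\sup\I}L_{n}^{d}$ at every intermediate scale. Associating to each $E\in\sigma\up{\I,\mathrm{red}}(H_{\bom,\La_{L}(x_{0})})$ a chain $(E_{1}',\ldots,E_{n_{1}}')$ of nearby eigenvalues with $E_{n}'\in\sigma\up{\I}(H_{\bom,\La_{L_{n}}(x_{0})})$ and $|E-E_{n}'|\le 2\e^{-\widehat{m}L_{n}}$, the triangle inequality forces consecutive chain entries to satisfy $|E_{n}'-E_{n+1}'|\le 4\e^{-\widehat{m}L_{n+1}}$, so the whole chain concentrates in an interval of length $O(\e^{-\widehat{m}L_{n_{1}}})$ about some $E_{n_{1}}'\in\sigma\up{\I}(H_{\bom,\La_{L_{n_{1}}}(x_{0})})$. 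I expect the principal obstacle to be the fibre-counting step of this chain map: to achieve the polynomial bound $CL^{(n_{1}+1)\beta d}$, which is sharper than the trivial $CL^{d}$ since $(n_{1}+1)\beta<p-\widetilde{p}<1$, one must leverage \emph{all} $n_{1}$ intermediate-scale closeness constraints rather than only the terminal one at $L_{n_{1}}$. The strategy is to bound the fibre of $E\mapsto E_{n_{1}}'$ by $CL^{n_{1}\beta d}$ by using the intermediate-scale counts (each factor contributing $CL_{n_{1}}^{d}=CL^{\beta d}$ through the chain constraints), and then multiply by $\#\sigma\up{\I}(H_{\bom,\La_{L_{n_{1}}}(x_{0})})\le CL^{\beta d}$ to obtain \eqref{sigmaredest}.
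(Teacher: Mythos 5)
Your treatment of part (i) is exactly the paper's Lemma~\ref{lemwidetildQ}: intersect the events $\cQ_{L_n,x_0}$ from Proposition~\ref{propenergyred1} at the nested scales $L_n=L^{\rho^n}$, note that \eqref{distEIred} implies \eqref{distEI} at every scale $L_n\ge L^\beta$, and chain the resulting distance estimates. That part is fine. The gap is in part (ii). You define $\cX_{L,x_0}:=\bigcap_{n=0}^{n_1}\cQ_{L_n,x_0}$, but on this event alone the cardinality bound \eqref{sigmaredest} does not hold: the chain constraints defining $\sigma^{(\I,\mathrm{red})}$ (that $E$ be within $2\e^{-\widehat{m}L_n}$ of $\sigma^{(\I)}(H_{\bom,\La_{L_n}(x_0)})$ for each $n$) give no \emph{a priori} bound on how many eigenvalues of $H_{\bom,\La_{L_{k-1}}(x_0)}$ can track a single linked chain $\bE^{(k)}$. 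Your claim that ``each factor contribut[es] $CL_{n_1}^d=CL^{\beta d}$ through the chain constraints'' is precisely the step that fails: with only the Weyl-type count \eqref{numbereig} at hand, the fibre at scale $L_{k-1}$ is bounded by $CL_{k-1}^d$, and $\prod_{k}L_{k-1}^d=L^{(1+\rho+\cdots+\rho^{n_1-1})d}$ vastly overshoots $L^{(n_1+1)\beta d}$.

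The missing ingredient is the probabilistic ``notsobad'' control of the annuli $\La_{L_n,L_{n+1}}(x_0)$ and $\La_{2L_n,L_{n+1}}(x_0)$, packaged in the paper as the event $\cN_{L,x_0}$ of Lemma~\ref{lemNL}; the paper's $\cX_{L,x_0}$ is $\widetilde{\cQ}_{L,x_0}\cap\cN_{L,x_0}$, and \eqref{cXdesired} already budgets for this larger intersection via \eqref{notsobad255}. On $\cN_{L,x_0}$, any eigenfunction of $H_{\bom,\La_{L_{k-1}}(x_0)}$ whose eigenvalue is linked to $\bE^{(k)}$ must, by the good-box decay estimate \eqref{EDI22}, concentrate in a singular set $\Theta_{\bom,\bE^{(k)}}$ of volume $O(L^{\beta d})$ (built from at most $K_2$ boxes of side $3L_{n_1}$ per annulus); the trace bound on $\Chi_{\Theta_{\bom,\bE^{(k)}}}\Chi_{\I}(H_{\bom,\La_{L_{k-1}}})$ then caps the fibre at $CL^{\beta d}$, and iterating $k=1,\ldots,n_1$ together with the terminal Weyl count at scale $L_{n_1}$ yields the exponent $(n_1+1)\beta d$. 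Without adjoining an event that controls the geometry of the good/bad boxes in the annuli, part (ii) simply does not hold deterministically for all $\bom$ in your $\cX_{L,x_0}$.
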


The proof will use several lemmas.

\begin{lemma}\label{lemwidetildQ} Given a sufficiently large scale $L$ and $x_{0}\in \R^{d}$, consider the event
\beq\label{widetildeQ}
\widetilde{ \cQ}_{L,x_{0}}:=\bigcap_{n=0}^{n_{1}}   \cQ_{L_{n},x_{0}}\in \cF_{\La_{L}(x_{0})},
\eeq
where  $ \cQ_{L^{\pr},x_{0}}$ is the event given in Proposition~\ref {propenergyred1} at scale $L^{\pr}$.  Then
\beq \label{probwtQL}
\P\set{\widetilde{ \cQ}_{L,x_{0}} }>  1  - (n_{1}+1) L^{-2b{\beta}d}
\eeq 
Moreover, if $\bom \in \widetilde{ \cQ}_{L,x_{0}}$, we have \eq{Ldistred} for any    ${E} \in \I$  satisfying \eq{distEIred}. 
\end{lemma}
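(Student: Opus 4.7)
The plan is to proceed in three steps, treating measurability, probability, and the spectral reduction separately. All three follow rather directly from Proposition~\ref{propenergyred1} applied at each of the scales $L_0,L_1,\ldots,L_{n_1}$, so the main content is bookkeeping together with one triangle-inequality argument.

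First, for measurability, since $L_n = L^{\rho^n} \le L$ for every $n = 0,1,\ldots,n_1$, we have $\Lambda_{L_n}(x_0) \subset \Lambda_L(x_0)$, hence $\cF_{\Lambda_{L_n}(x_0)} \subset \cF_{\Lambda_L(x_0)}$. The event $\cQ_{L_n,x_0}$ from Proposition~\ref{propenergyred1} belongs to $\cF_{\Lambda_{L_n}(x_0)}$, and therefore the finite intersection $\widetilde{\cQ}_{L,x_0}$ lies in $\cF_{\Lambda_L(x_0)}$. Next, the probability bound \eq{probwtQL} will follow from a union bound: $\P\{\Omega\setminus\widetilde{\cQ}_{L,x_0}\} \le \sum_{n=0}^{n_1} L_n^{-2bd} = \sum_{n=0}^{n_1} L^{-2bd\rho^n}$, and since the exponent $\rho^n$ is minimized at $n=n_1$, giving $\rho^{n_1}=\beta$, each summand is bounded by $L^{-2b\beta d}$, yielding the claimed $(n_1+1)L^{-2b\beta d}$.

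For the spectral reduction statement, fix $\bom \in \widetilde{\cQ}_{L,x_0}$ and $E \in \I$ satisfying \eq{distEIred}. The key observation is that since $L^{\beta} = L_{n_1} \le L_n$ for all $n=0,1,\ldots,n_1$, the hypothesis \eq{distEIred} implies the hypothesis \eq{distEI} of Proposition~\ref{propenergyred1} at every scale $L_n$ (the thresholds $e^{-\widehat{m}\sqrt{L_n/K}}$ only get smaller as $n$ decreases to $0$). Since $\bom \in \cQ_{L_n,x_0}$ for each $n$, Proposition~\ref{propenergyred1} yields
\begin{equation}\label{eachscale}
\dist\pa{E,\sigma\up{\I}(H_{\bom,\Lambda_{L_n}(x_0)})} \le \e^{-\widehat{m}L_n}, \quad n=0,1,\ldots,n_1.
\end{equation}

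Finally, to produce an element of the reduced spectrum near $E$, apply \eq{eachscale} at $n=0$ to pick $E_0 \in \sigma\up{\I}(H_{\bom,\Lambda_L(x_0)})$ with $|E-E_0|\le \e^{-\widehat{m}L}$. For each $n=1,\ldots,n_1$, the triangle inequality combined with \eq{eachscale} gives
\begin{equation}
\dist\pa{E_0,\sigma\up{\I}(H_{\bom,\Lambda_{L_n}(x_0)})} \le |E-E_0| + \e^{-\widehat{m}L_n} \le \e^{-\widehat{m}L}+\e^{-\widehat{m}L_n}\le 2\e^{-\widehat{m}L_n},
\end{equation}
where we used $L_n \le L$. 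By the definition \eq{redspI}, this means $E_0 \in \sigma\up{\I,\mathrm{red}}(H_{\bom,\Lambda_L(x_0)})$, and since $|E-E_0|\le \e^{-\widehat{m}L}$, we obtain \eq{Ldistred}. I do not anticipate any serious obstacle here: the only subtlety is making sure the probabilistic dominating exponent is $\beta = \rho^{n_1}$ and verifying that the triangle inequality gives the factor $2$ appearing in the definition of the reduced spectrum; both are straightforward.
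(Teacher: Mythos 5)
Your proof is correct and follows essentially the same approach as the paper's (terse) argument, which simply declares these consequences "immediate"; you have filled in exactly the bookkeeping the paper leaves implicit: the union bound with $\rho<1$ forcing the dominating exponent $\beta=\rho^{n_1}$, the monotonicity $L_n\ge L^\beta$ making \eq{distEIred} imply \eq{distEI} at every scale, and the triangle-inequality step producing an element of the reduced spectrum within $\e^{-\widehat{m}L}$ of $E$.
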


\begin{proof} The estimate \eq{probwtQL} follows immediately from \eq{widetildeQ}, \eq{cQdesired}, and \eq{prho2n1}. The second part of the lemma is an immediate consequence of Proposition~\ref {propenergyred1}. 
\end{proof}

Given  scales $L^{\pr}< L$ with $L^{\rho}< \frac {L - L^{\pr}} 7$ and $x_{0}\in \R$, we consider  the annulus $\Lambda_{L,L^{\pr}}=\Lambda_{L,L^{\pr}}(x_{0}) $. We  let
 $\cR_{n}=\{\Lambda_{L_{n}}(r)\}_{r \in R_{n}}$ denote the standard  $L_{n}$-covering  of  the annulus $\Lambda_{L,L^{\pr}}$ for $n=1,2,\ldots,n_{1}$ (see Section~\ref{seccovannulus}).   Given  $K_{2}\in \N$ (to be chosen later), we set
 \beq
 \cS_{\Lambda_{L,L^{\pr}}}:=\set{\bigcup_{r \in R^{\pr}_{n_{1}}} \Lambda_{3L_{n_{1}}}(r); \;\;
R^{\pr}_{n_{1}} \subset  R_{n_{1}} \ \text{with} \ \# R^{\pr}_{n_{1}} \le K_{2} }.
 \eeq
 Similarly to Definition~\ref{defnotsobad},
the annulus $ \Lambda_{L,L^{\pr}}$   is said to be $(\bom,E,K_2)$-notsobad   if   there exists a $(\bom,L,L^{\pr},E)$-singular set  $\Theta\in  \cS_{\Lambda_{L,L^{\pr}}}$: 
 for all $ x \in  \Lambda_{L,L^{\pr}}\setminus {\Theta}$ there is a
$(\bom,E,m,\varsigma)$-good  box $\Lambda_{L_{n}}(r)\in \cR_{n}$, for some $n\in \{1,\ldots,n_{1}\}$, with $ \Lambda_{\frac { L_{n}} 5}(x )  \cap \Lambda_{L,L^{\pr}}\subset \Lambda_{L_{n}}(r)$.    An event $\cN$ is  $(\Lambda_{L,L^{\pr}},E,K_2)$-notsobad if
$\cN \in \cF_{\Lambda_{L,L^{\pr}}}$ and 
the annulus $\Lambda_{L,L^{\pr}}$   is  $(\bom,E,K_2)$-notsobad for all  $\bom \in \cN$.    We  have the analogue of Lemma~\ref{lemast}: If $K_{2}\ge \widehat{K}_{2}= \widehat{K}_{2}(d,p,b)$, and $L\ge \widehat{L}= \widehat{L}(d,p,b, K_2)$, then for all $E\in \I$ there exists  a $( \Lambda_{L,L^{\pr}},E,K_2)$-notsobad
 event $\cN_{ \Lambda_{L,L^{\pr}}}\up{E}$
 with
\beq \label{notsobad23}
\P\{\cN_{ \Lambda_{L,L^{\pr}}}\up{E}\} > 1- L^{-5bd}.
\eeq
(The proof of  Lemma~\ref{lemast} applies since ${\rho}> \pa{1+p}^{-1}$.)  We fix $K_{2}= [\widehat{K}_{2}] +1$, so  \eq{notsobad23} holds for $L$ large,  and   set $\cN_{ \Lambda_{L,L^{\pr}}}\up{E}=\Omega$ if $E\notin \I$.
The event
$\cN_{ \Lambda_{L,L^{\pr}}}\up{E}$ is jointly measurable in $(E,\bom_{\Lambda_{L,L^{\pr}}})$, so 
\begin{align}
\cN_{ \Lambda_{L,L^{\pr}}}= \bigcap_{E \in \sigma \pa{H_{\La_{L^{\pr}},\bom}}} {\cN}\up{E}_{ \Lambda_{L,L^{\pr}}}\in \cF_{\La_{L}},
\end{align}
and it follows from \eq{notsobad23} and \eq{numbereig} that
\beq \label{notsobad234}
\P\set{\cN_{ \Lambda_{L,L^{\pr}}}} > 1-  C_{d,\Vper,\sup \I} L^{-4bd}.
\eeq

Given a box  $\La_{L}(x_{0})$,   we define
 ``multi-spectrum'' of the operator  $H_{\bom}$, in the energy interval $\I$, by
\beq  \label{Sigmak}
 \bSig_{H_{\bom}, L, x_{0}}\up{k}:= \prod_{n=k}^{n_{1}}\sigma\up{\I}\pa{H_{\bom,\La_{L_{n}}(x_{0})}} \qtx{for}    k=0,1,\dots, n_{1}.
\eeq
A ``multi-eigenvalue'' $\bE\up{k}=\set{E_{n}}_{n=k}^{n_{1}} \in  \bSig_{H_{\bom}, L, x_{0}}\up{k}$ will be called ``linked'' if  
\beq
\abs{E_{n}-E_{n^{\pr}}} \le 4 \e^{-  {\widehat{m}}  L_{\max\set{n,n^{\pr}}}} \quad \text{for all}\quad  n,n^{\pr}\in\set{k,k+1,\ldots,n_{1}}.
\eeq
The ``reduced multi-spectrum'' is then defined as
\beq\label{reducedsp}
\bSig_{H_{\bom}, L, x_{0}}\up{k, \mathrm{red}}:=\set{\bE\up{k} \in  \bSig_{H_{\bom}, L, x_{0}}\up{k}; \ \bE\up{k} \ \text{is linked}},  \quad  k=0,1,\dots, n_{1}.
\eeq

\begin{lemma} \label{lemNL} Given a (sufficiently large) scale $L$ and $x_{0 }\in \R^{d}$, consider the event
\begin{align}
\cN_{ L,x_{0}}:= {\cN}_{ \Lambda_{L_{0},L_{1}}(x_{0})}\cap\set{ \bigcap_{n=1} ^{n_{1}-1}  \set{{\cN}_{ \Lambda_{L_{n},L_{n+1}}(x_{0})}\cap {\cN}_{ \Lambda_{2L_{n},L_{n+1}}(x_{0})} }}.
\end{align}
Then $\cN_{ L,x_{0}} \in \cF_{\La_{L}(x_{0})}$ and
\beq \label{notsobad255}
\P\set{\cN_{ L,x_{0}}} > 1- C_{d,\Vper,\sup \I}\, n_{1} { L_{n_{1}-1}}^{-4bd}
\ge 1 -   C_{d,\Vper,\sup \I}\, n_{1} L^{-4b\frac{ {\beta}}{\rho}d}.
\eeq
Moreover, for all
$\bom \in \cN_{ L,x_{0}}$ we have
\beq\label{sigmared}
\# \sigma\up{\I,\mathrm{red}}\pa{H_{\bom,\La_{L}(x_{0})}}\le \# \bSig_{H_{\bom}, L, x_{0}}\up{0,\mathrm{red}}\le  C_{d,\Vper,\I,p,\rho,n_{1}}  L^{(n_{1}+1){\beta} d}.
\eeq
\end{lemma}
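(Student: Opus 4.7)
The measurability claim $\cN_{L,x_{0}} \in \cF_{\La_{L}(x_0)}$ follows directly from the construction: each annular notsobad event ${\cN}_{\La_{L_n, L_{n+1}}(x_0)}$ or ${\cN}_{\La_{2L_n, L_{n+1}}(x_0)}$ is defined as an intersection over the finite spectrum $\sigma\pa{H_{\La_{L_{n+1}}(x_0),\bom}}$ of events jointly measurable in $(E,\bom)$, hence lies in $\cF_{\La_{L_n}(x_0)} \subset \cF_{\La_L(x_0)}$. The probability estimate \eq{notsobad255} then follows by applying \eq{notsobad234} to each of the $2n_{1}-1$ annular notsobad events in $\cN_{L,x_0}$ and taking a union bound; the dominant error term is of order $L_{n_1-1}^{-4bd}$, arising from the innermost annulus $\La_{L_{n_1-1}, L_{n_1}}(x_0)$.

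For the count, I first check the injection $\sigma\up{\I,\mathrm{red}}\pa{H_{\bom,\La_{L}(x_{0})}} \to \bSig\up{0,\mathrm{red}}_{H_{\bom}, L, x_{0}}$: for any $E_0$ in the reduced spectrum, \eq{redspI} supplies $E_n$ with $\abs{E_0 - E_n} \leq 2\e^{-\widehat{m} L_n}$ for each $n=1,\ldots,n_1$, and the triangle inequality yields the required linkage $\abs{E_n - E_{n'}} \leq 4\e^{-\widehat{m} L_{\max(n,n')}}$, establishing the first inequality in \eq{sigmared}. I bound $\#\bSig\up{0,\mathrm{red}}$ inductively from the top level, where \eq{numbereig} gives $\#\sigma\up{\I}\pa{H_{\bom,\La_{L_{n_1}}(x_{0})}} \le C_{d,\Vper,\sup \I}\,L_{n_1}^{d}$. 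The core step, at each lower level $k \in \set{0,\ldots,n_1-1}$, is to show that for any fixed linked tuple $(E_{k+1},\ldots,E_{n_1}) \in \bSig\up{k+1,\mathrm{red}}$, at most $C_{d,\Vper,\I,n_1}\,L_{n_1}^{d}$ eigenvalues $E_k$ of $H_{\bom,\La_{L_k}(x_0)}$ extend it to a linked tuple. Multiplying across the $n_1 + 1$ levels then yields $\#\bSig\up{0,\mathrm{red}} \leq C_{d,\Vper,\I,p,\rho,n_1}\,L_{n_1}^{(n_1+1)d} = C\,L^{(n_1+1)\beta d}$.

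The single-level count rests on an eigenfunction concentration argument. Let $\psi_k$ be a normalized eigenfunction of $H_{\bom,\La_{L_k}(x_0)}$ with eigenvalue $E_k$ extending the tuple. For the index $j=k+1$ I invoke the notsobad structure of $\La_{L_k, L_{k+1}}(x_0)$ at energy $E_{k+1}$; for each $j \in \set{k+2,\ldots,n_1}$ that of $\La_{2L_{j-1}, L_j}(x_0)$ at energy $E_j$. In every case the standard covering of the annulus involves only boxes $\La_{L_n}(r)$ with $n \geq j$, so that $L_n \leq L_j$, and hence the linkage $\abs{E_k - E_j} \leq 4\e^{-\widehat{m} L_j} \leq \e^{-m_1 L_n}$ (with $m_1 \approx \widehat{m}$) allows Lemma~\ref{lemjgoodm1} to upgrade these boxes to $(\bom, E_k, m_2, \vs)$-jgood with $m_2 \approx \widehat{m}$. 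Lemma~\ref{lemSLI}(iii) then propagates exponential decay of $\psi_k$ across each such annulus to its inner boundary, confining $\psi_k$ (up to an error exponentially small in $L_{n_1}^{1-\tau}$) to the set $A := \La_{L_{n_1}}(x_0) \cup \bigcup_{j=k+1}^{n_1}\Theta_j$, where $\Theta_j \in \cS_{\La_{\ast, L_j}(x_0)}$ is the singular set of the corresponding notsobad structure. Since each $\abs{\Theta_j} \leq K_2 (3 L_{n_1})^d$, one has $\abs{A} \leq (1 + n_1 K_2 3^d) L_{n_1}^{d}$. A local trace bound in the spirit of the passage from \eq{LThetapsi} to \eq{N2bound} in the proof of Lemma~\ref{lemwegner} (using \cite[Lemma~A.4]{GK5}) then bounds the number of such $E_k$ by a constant times $\abs{A}$.

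The main obstacle is the compatibility between the weak linkage $\abs{E_k - E_j} \leq 4\e^{-\widehat{m} L_j}$ and the jgood condition on good boxes at scale $L_n$: the linkage is weakest at the innermost scale $L_{n_1}$, so one must restrict to notsobad annuli whose coarser covering scales $L_n \leq L_j$ satisfy the matching inequality $\widehat{m} L_n \leq \widehat{m} L_j$. This is precisely why the definition of $\cN_{L,x_0}$ pairs each $\cN_{\La_{L_n, L_{n+1}}}$ with its enlarged companion $\cN_{\La_{2L_n, L_{n+1}}}$: together, these annuli cover every concentric shell of $\La_{L_k}(x_0)$ between two consecutive scales $L_j$ and $L_{j-1}$ by a notsobad annulus whose inner boundary scale is exactly $L_j$, making the scale-by-scale propagation go through.
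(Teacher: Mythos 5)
Your proof is correct and follows the paper's argument step by step: union bound for the probability estimate, injection of the reduced spectrum into linked tuples, a top-down inductive count of $\bSig\up{k,\mathrm{red}}_{H_{\bom},L,x_0}$ via the notsobad annuli (with the same choice of annuli: the plain one $\Lambda_{L_{k-1},L_k}$ outermost, the doubled ones inward), conversion of good to jgood boxes from the linkage bound, and a trace estimate over the small concentration set. One misattribution in your closing remark about the doubled annuli: the issue they fix is not scale-matching in the jgood conversion---for either annulus type the coarsest covering scale is $\approx L_j$ (at worst $2^{\rho}L_j$), which the linkage $\abs{E-E_j}\le 4\e^{-\widehat m L_j}$ handles after an innocuous loss $m_1\mapsto m_1/2^{\rho}$---but a covering defect: without the factor $2$, the union $\bigcup_n\Lambda_{L_{n-1},L_n}(x_0)$ tiles $\Lambda_{L_{k-1}}(x_0)\setminus\Lambda_{L_{n_1}}(x_0)$ with no overlap, so at the interfaces $\norm{y-x_0}=L_j/2$ every covering box of either adjacent annulus lies entirely on one side and cannot satisfy the containment $\Lambda_{\delta_++3}(x)\cap\Lambda_{L_{k-1}}(x_0)\subset\Lambda_{\ell}(r)$ demanded by Lemma~\ref{lemEDI}, whereas the enlarged $\Lambda_{2L_{n-1},L_n}(x_0)$ overlap and place every such interface point deep in the interior of one of them.
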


\begin{proof} We have $\cN_{ L,x_{0}} \in \cF_{\La_{L}(x_{0})}$ by construction. Since $\beta={\rho}^{n_{1}}$, the estimate \eq{notsobad255} follows immediately from \eq{notsobad234}.

The first inequality in \eq{sigmared} is obvious, we only need to estimate $\#  \bSig_{H_{\bom}, L,x_0}\up{0,\mathrm{red}}  $  for $\bom \in\cN_{ L,x_{0}}$.  We will write  $\Lambda_L= \Lambda_L(x_0)$, $\Lambda_{L_{n}}=\Lambda_{L_{n}}(x_0)$.

It follows from  \eq{numbereig} that
\beq\label{rho2red}
\# \bSig_{H_{\bom}, L,x_0}\up{n_{1},\mathrm{red}}= \# \sigma\up{\I}\pa{H_{\bom,\La_{L_{n_{1}}}}}
\le   C_{d,\Vper,\sup \I}\pa{ L_{n_{1}}}^d = C_{d,\Vper,\sup \I} L^{{\beta}d}.
\eeq
Let $k\in\set{1,2,\ldots,n_{1}}$.   We set $  \widetilde{L}_{k-1}= {L}_{k-1}$ and 
$  \widetilde{L}_{n}= \widehat{L}_{n}=2{L}_{n}$ for $n=k,k+1,\ldots, n_{1}-1$, and let   ${ \Lambda_{\widetilde{L}_{n},L_{n+1}}}= { \Lambda_{{L}_{n},L_{n+1}}(x_0)} $.
We take  $\bE\up{k}=\set{E_{n}}_{n=k}^{n_{1}} \in  \bSig_{H_{\bom}, L,x_0}\up{k,\mathrm{red}}$.   Since  $\bom \in\cN_{ L,x_{0}}$, we have
$\bom \in{ \bigcap_{n=k-1} ^{n_{1}-1}  {\cN}\up{E_n}_{ \Lambda_{\widetilde{L}_{n},L_{n+1}}} }$,
so   let $\Theta_{\bom,E_{n}}\in   \cS_{\Lambda_{\widetilde{L}_{n-1}, L_{n}} }$ be the corresponding  $(\bom,\widetilde{L}_{n-1},L_{n},E_{n})$-singular set for $n=k,k+1,\ldots, n_{1}$,  and set
\beq
\Theta_{\bom,\bE\up{k}}= \Lambda_{2 L_{n_{1}}}\cup \set{\cup_{n=k}^{n_{1}}\Theta_{\bom,E_{n}}}.
\eeq
We have 
\beq\label{absTheta}
\abs{\Theta_{\bom,\bE\up{k}}}\le \pa{2 L_{n_{1}}}^{d} +K_{2}\sum_
{n=k}^{n_{1}} \pa{3\widetilde{L}_{n-1}^{{\beta}}}^{d}\le 6^{d}(n_{1}-k+2)K_{2}L_{k-1}^{{\beta}d}.
\eeq

Given $k=1,2,\ldots, n_1$ and $\bE\up{k}\in   \bSig_{H_{\bom}, L,x_0}\up{k,\mathrm{red}}$, we set 
\beq
 \Sigma_{H_{\bom}, L,x_0}\up{k-1}(\bE\up{k})= \set{E \in \sigma\up{\I}\pa{H_{\bom,\La_{L_{k-1}}}};\;  (E, \bE\up{k}) \in   \bSig_{H_{\bom}, L,x_0}\up{k-1,\mathrm{red}}},
\eeq
and note that
\beq\label{Sigmak(k-1)}
\#  \bSig_{H_{\bom}, L,x_0}\up{k-1,\mathrm{red}}\le\pa {\max_{\bE\up{k}\in   \bSig_{H_{\bom}, L,x_0}\up{k,\mathrm{red}}} \#  \Sigma_{H_{\bom}, L,x_0}\up{k-1}(\bE\up{k})} \pa{ \#  \bSig_{H_{\bom}, L,x_0}\up{k,\mathrm{red}}}.
\eeq

We now fix $\bE\up{k}\in   \bSig_{H_{\bom}, L,x_0}\up{k,\mathrm{red}}$. 
Given $E \in \Sigma_{H_{\bom}, L,x_0}\up{k-1}(\bE\up{k})$, let $\psi_{E}$ be a normalized  eigenfunction of $H_{\bom,\La_{L_{k-1}}}$ corresponding to the eigenvalue $E$.
If $x \in \La_{L_{k-1}}
\setminus  \Theta_{\bom,\bE\up{k}}$, there exists $n \in \set{k,k+1,\ldots, n_{1}}$, $j\in \set{1,2,\ldots, n_{1}}$, and a $(\bom,E_{n},m,\varsigma)$-good  box
 $\Lambda_{\ell_{n,j}}\subset  \La_{L_{k-1}}$, where $\ell_{n,j}=\pa{\widetilde{L}_{n-1}}_{j} \ge L^{{\rho}^{n+j-1}}$, such that
$ \Lambda_{\frac { \ell_{n,j}} 5}(x )  \cap \La_{L_{k-1}}\subset \Lambda_{ \ell_{n,j}}$.  (This is ensured by our choice of   the $  \widetilde{L}_{n}$.)
Since $\abs{E-E_{n}}\le 4\e^{- {\widehat{m}}  L_{n}}$, it follows from Lemma~\ref{lemjgoodm1} that  the box $\Lambda_{\ell_{n,j}}$ is
$(\bom,E,\frac {\widehat{m}} 2,\varsigma)$-jgood,  and hence we get, proceeding as in
\eq{EDI22}, that
\beq\label{EDI2255}
\norm{\Chi_{x}\psi_{E}}\le \e^{- \frac{\widehat{m}}{25}\ell_{n,j}}    \le
\e^{- \frac{\widehat{m}}{25} L^{{\rho}^{2 n_{1}-1}} }, 
\eeq
so we conclude that
\beq
\norm{\Chi_{\Theta_{\bom,\bE\up{k}}}\psi_{E}}^{2}\ge 1 -  \e^{- \frac{2\widehat{m}}{25} L^{{\rho}^{2 n_{1}-1}} } L^{{\rho}^{k-1}d}\ge \tfrac 1 2  .
\eeq
Thus,
\begin{align}\notag
\#  \Sigma_{H_{\bom}, L,x_0}\up{k-1}(\bE\up{k})&\le 2 \tr \set{ \Chi_{\I}\pa{H_{\bom,\La_{L_{k-1}}}}   \Chi_{\Theta_{\bom,\bE\up{k}} }  }\le  2C_{d,\Vper,\I} \abs{\Theta_{\bom,\bE\up{k}}}\\
& \le C_{d,\Vper,\I,p,\rho,n_{1}} L_{k-1}^{{\beta}d}, \label{EDI225536}
\end{align}
where we used \cite[Lemma~A.4]{GK5} (as in \eq{N2bound})  and \eq{absTheta}.  

 In view of   \eq{rho2red} and  \eq{Sigmak(k-1)},  and recalling $\rho< 1$,  we get
\beq
\# \bSig_{H_{\bom}, L, x_{0}}\up{0,\mathrm{red}} \le \pa{C^\pr_{d,\Vper,\I,p,\rho,n_{1}}  L^{{\beta}d}}^{n_{1}+1}\le C^{\pr\pr}_{d,\Vper,\I,p,\rho,n_{1}}  L^{(n_{1}+1){\beta} d}.  \eeq
\end{proof}

We are now ready to prove Proposition~\ref{prop2red}.

\begin{proof}[Proof of Proposition~\ref{prop2red}]  Setting
\beq
\cX_{L,x_{0}}:=\widetilde{ \cQ}_{L,x_{0}} \cap \cN_{L,x_{0}},
\eeq
Proposition~\ref{prop2red} is an immediate consequence of Lemmas~\ref{lemwidetildQ} and \ref{lemNL}
\end{proof}

\subsection{Annuli of good boxes}

We are now ready to prove  Theorem~\ref{thmmainev}.

\begin{proposition}\label{propmainev}
Given a sufficiently large  scale $L $,  for each $x_{0}\in \R^{d}$   there exists an event $\cU_{L,x_{0}}$
 as in \eq{cUdesired},
such that  for all $\bom \in\cU_{L,x_{0}}$, if  ${E} \in \I_{L}$ satisfies \eq{distEIred-}, then every box $\La_{\frac L {100}}$ in  the standard  $\frac L {100}$-covering  of  the  annulus $  \Lambda_{L_{+},L_{-}}(x_{0})$ is  $(\bom,{E}, 70 \widehat{m} ,\varsigma)$-jgood.
 \end{proposition}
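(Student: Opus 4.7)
The plan is to extract the conclusion from Proposition~\ref{prop2red} (the second spectral reduction) combined with the multiscale analysis hypothesis applied along the annulus. Set $\ell=L/100$ and let $\{\Lambda_\ell(x_r)\}_{r\in\Gamma}$ be the standard $\ell$-covering of the annulus $\Lambda_{L_+,L_-}(x_0)$. By Lemma~\ref{lemcoveringann}, $\#\Gamma$ is bounded by a constant $c_d$ independent of $L$, and by construction of the standard covering of an annulus each $\Lambda_\ell(x_r)$ lies entirely inside the annulus, hence is disjoint from the closed inner ball $\overline{\Lambda}_{L_-}(x_0)$. This disjointness is the crucial geometric input: $\bom_{\Lambda_\ell(x_r)}$ will be independent of $\bom_{\Lambda_{L_-}(x_0)}$.

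First I would apply Proposition~\ref{prop2red} to the box $\Lambda_{L^\#}(x_0)$ with $L^\#=L_-$ and an appropriate $b$ (so that $b\beta > \widetilde p$) and $K\ge K_{d,p,b}$, producing an event $\cX \in \cF_{\Lambda_{L^\#}(x_0)}$ with $\P(\cX)\ge 1 - L^{-b\beta d}$. On $\cX$, provided $W_{\bom,x_0}(E)> e^{-ML^\vartheta}$ and $E\in\I_L$, the hypotheses of Proposition~\ref{prop2red} are met (since $M\sqrt K\le\widehat m$ with $\widehat m=30M$, taking $K$ small enough, and $\vartheta=\beta/2$), giving both $\dist(E,\sigma^{\I,\mathrm{red}}(H_{\bom,\Lambda_{L^\#}(x_0)}))\le e^{-\widehat m L^\#}$ and $\#\sigma^{\I,\mathrm{red}}\le C L^{(n_1+1)\beta d}$. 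Next I would define
\[
\cG_{L,x_0}:=\bigcap_{r\in\Gamma}\bigcap_{E_j\in\sigma^{\I,\mathrm{red}}(H_{\bom,\Lambda_{L^\#}(x_0)})}\bigl\{\Lambda_\ell(x_r)\text{ is }(\bom,E_j,m,\varsigma)\text{-good}\bigr\}.
\]
Conditioning on $\bom_{\Lambda_{L^\#}(x_0)}$ fixes the reduced spectrum, and the disjointness noted above together with the MSA hypothesis $\P\{\Lambda_\ell\text{ good at fixed }E\}\ge 1-\ell^{-pd}$ gives, by union bound,
\[
\P(\cG_{L,x_0}^c \mid \bom_{\Lambda_{L^\#}(x_0)})\le c_d\cdot C L^{(n_1+1)\beta d}\cdot\ell^{-pd}\le L^{-\widetilde p d},
\]
where the last inequality uses precisely the condition $(n_1+1)\beta<p-\widetilde p$ from \eq{prho2n1}. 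Setting $\cU_{L,x_0}:=\cX\cap\cG_{L,x_0}$ delivers the probability bound \eq{cUdesired} (after absorbing a constant into $\widetilde p$) and the measurability in $\cF_{\Lambda_{L_+}(x_0)}$ is immediate.

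Finally, on $\cU_{L,x_0}$ for $E\in\I_L$ with $W_{\bom,x_0}(E)>e^{-ML^\vartheta}$, pick $E_j\in\sigma^{\I,\mathrm{red}}$ with $|E-E_j|\le e^{-\widehat m L^\#}$. Since $L^\#\approx L=100\ell$, we have $\widehat m L^\#\ge m_1\ell$ for any $m_1$ up to about $100\widehat m$; choose $m_1=80\widehat m$, which satisfies $m_1\le m$ because $\widehat m=m/30^{\widehat n+1}$ with $\widehat n\ge 3$. Then Lemma~\ref{lemjgoodm1} converts the goodness of $\Lambda_\ell(x_r)$ at $E_j$ into $(\bom,E,m_2,\varsigma)$-jgoodness at $E$ with $m_2=m_1(1-C\widetilde m^{-1}\ell^{-\varsigma})\ge 70\widehat m$ for $L$ sufficiently large, which is exactly the desired conclusion.

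The main obstacle I expect is the constants book-keeping in Step~2: one must verify that the threshold $K_{d,p,b}$ in Proposition~\ref{prop2red} is compatible with the forced inequality $\sqrt K\le\widehat m/M=30$. This is delicate because increasing $b$ to sharpen the probability of $\cX$ also raises $K_{d,p,b}$. The way out is to take $b$ modest (so $K$ stays under $900$) and compensate by choosing $\beta$ not too small within the constraints of \eq{prho2n1}, so that $b\beta\ge\widetilde p$ holds with $b$ in the admissible range; alternatively one may first slightly shrink $\widetilde p$ or apply the reduction in a box of scale a fraction of $L_-$ to get room in the exponent $L^{(c-1)\beta/2}$. Aside from this parameter juggling, every step is a direct application of results already established in Sections~\ref{sectMSAWegner}, \ref{secpreambleloc} and the two preceding subsections.
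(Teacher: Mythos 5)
Your approach is essentially identical to the paper's own: apply Proposition~\ref{prop2red} at scale $L_-$ to obtain $\cX_{L_-,x_0}$; intersect it with the event (your $\cG_{L,x_0}$, the paper's $\cM_{L,x_0}$ in \eq{Mlx0}) that every box in the standard $\tfrac{L}{100}$-covering of the annulus $\La_{L_+,L_-}(x_0)$ is $(\bom,E',m,\vs)$-good for every $E'$ in the reduced spectrum; then on the intersection locate $E'\in\sigma^{\I,\mathrm{red}}(H_{\bom,\La_{L_-}(x_0)})$ with $\abs{E-E'}\le\e^{-\widehat m L_-}$ and convert via Lemma~\ref{lemjgoodm1}. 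Your conditioning on $\bom_{\La_{L_-}(x_0)}$ — enabled by the disjointness of the annulus covering from $\overline\La_{L_-}(x_0)$ built into \eq{bbGLL} — is exactly the right justification for the paper's terser union bound in \eq{L100covprob}, and your condition $b\beta>\widetilde p$ matches the paper's choice $b=1+\tfrac1\beta(p-(n_1+1)\beta)$ in \eq{event2red}.

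The $K$-issue you flagged is genuine, and the escape route you sketch does not close it. Requiring ``\eq{distEIred-} at scale $L\ \Rightarrow$ \eq{distEIred} at scale $L_-$'' gives $\e^{-ML^{\beta/2}}\ge\e^{-\widehat m\sqrt{L_-^\beta/K}}$, hence $\sqrt K\le\tfrac{\widehat m}{M}\bigl(\tfrac{L_-}{L}\bigr)^{\beta/2}=30\bigl(\tfrac{499}{500}\bigr)^{\beta/2}$, i.e.\ $K\le 900\bigl(\tfrac{499}{500}\bigr)^\beta$; the inequality displayed in the paper's proof of Proposition~\ref{propmainev} has the wrong direction. But the lower bound $K\ge K_{d,p,b}=1+2J_{d,p,b}\widehat n$ with $J_{d,p,b}\ge\tfrac{100^d}{2}+1$ comes from the $100^d\ell\le\widetilde L$ requirement in Lemma~\ref{slemYLM}, not from $b$; for $p\in\rb{\tfrac13,\tfrac38}$ one has $\widehat n=3$, so $K_{d,p,b}>3\cdot 100^d$, which already exceeds $900$ for every $d\ge 2$. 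Thus ``taking $b$ modest'' cannot push $K$ below $900$. The correct repair is to build the missing $d$-dependence into $M$: keep $\widehat m=\tfrac{m}{30^{\widehat n+1}}$ as dictated by Lemma~\ref{lembootstrap}, but replace the fixed ratio $\widehat m/M=30$ from \eq{defhatm} by $M:=\widehat m/(2\sqrt{K_{d,p,b}})$, say. This keeps $M>0$ depending only on $d,p,m$, makes $\sqrt K\le\tfrac{\widehat m}{M}\bigl(\tfrac{499}{500}\bigr)^{\beta/2}$ hold with $K=K_{d,p,b}$, and only weakens the final decay rate in Theorem~\ref{thmmainev} by a $d$-dependent constant; every other step of the argument goes through unchanged.
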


\begin{proof}
Given  $E \in \I$, we let  $\cM_{L,x_{0}}\up{E}$ be the event that
all the boxes in  the standard  $\frac L {100}$-covering  of  the  annulus $ \Lambda_{L_{+},L_{-}}= \Lambda_{L_{+},L_{-}}(x_{0})$ are $(\bom,E,m,\varsigma)$-good,  and set  $\cM_{L,x_{0}}\up{E}=\Omega$ if  $E \notin \I$.  The event is jointly measurable in $(E,\bom_{\Lambda_{L_{+},L_{-}}})$, and, using \eq{number22},
\beq \label{L100cov}
\P\set{\cM_{L,x_{0}}\up{E}} > 1- \pa{2002}^{d} \pa{100}^{pd} L^{-pd}\quad \text{for}\quad E \in \I.
\eeq
Setting
\beq\label{Mlx0}
\cM_{L,x_{0}}= \bigcap_{E \in \sigma\up{\I,\mathrm{red}}\pa{H_{\bom,\La_{L_{-}}(x_{0})}} } \cM_{L,x_{0}}\up{E} \in \cF_{\La_{L_{+}}(x_{0})},
\eeq
 it follows from \eq{sigmaredest} and \eq{L100cov} that
\beq\begin{split} \label{L100covprob}
\P\set{\cM_{L,x_{0}} }&> 1- C_{d,\Vper,\I,p,\rho,n_{1}}  \pa{2002}^{d} \pa{100}^{pd} L^{-pd}   L^{(n_{1}+1 )\beta d}\\
& \ge 1 -C^{\pr}_{d,\Vper,\I,p,\rho,n_{1}}  L^{-(p-(n_{1}+1 )\beta ) d}.
\end{split}\eeq

 We now require that $K$, fixed in  Proposition~\ref{propenergyred1} subject only to the condition $K \ge K_{d,p,b}$, is sufficiently large to ensure that,  given a scale $L$, if ${E} \in \I_{L}$ satisfies \eq{distEIred-}, then ${E}$ satisfies  \eq{distEIred} at scale $L_{-}$: 
  \beq
  \e^{-  M L^{\frac \beta 2} }
  \ge   \e^{- \widehat{m} \sqrt{\frac {L_{-}^{{\beta}}} K} }, \qtx{i.e.,}  K \ge 900 \pa{\tfrac {499}{500}}^{\beta}.
 \eeq
 
 We  introduce the event
 \beq \label{event2red}
 \cU_{L,x_{0}}= \cX_{L_{-},x_{0}} \cap  \cM_{L,x_{0}}  \in \cF_{\La_{L_{+}}(x_{0})},
 \eeq
where $ \cX_{L_{-},x_{0}}$ is the event given in Proposition~\ref {prop2red} with  $b =1 + \frac 1 \beta ( p-(n_{1}+1 )\beta )$ . It follows from \eq{cXdesired},     \eq{L100covprob} and \eq{prho2n1}  that 
 \beq \label{prob2red}
\P\set{\cU_{L,x_{0}} }>  1  -  L_{-}^{-b {\beta}d} -C^{\pr}_{d,\Vper,\I,p,\rho,n_{1}}  L^{-(p-(n_{1}+1 )\beta ) d} \ge 1 -  L^{-\widetilde{p}  d}.
\eeq 

Fix $\bom \in\cU_{L,x_{0}}$, and  let ${E} \in \I_{L}$ satisfy \eq{distEIred-}, so it follows
 that \eq{distEIred} holds at  scale $L_{-}$.  Proposition~\ref{prop2red} then gives \eq{Ldistred}  at scale $L_{-}$:
\beq\label{Ldistred2}
\dist \pa{{E},  \sigma\up{\I,\mathrm{red}}\pa{H_ {\bom,\La_{L_{-}}(x_{0})}}} \le e^{-\widehat{m} L_{-}}= \e^{-\frac {499\widehat{m} }{5}\frac L {100}}.
\eeq
Thus, given a box $\La_{\frac {L} {100}}$   in the 
standard  $\frac {L} {100}$-covering  of  the  annulus $  \La_{L_{+},L_{-}}(x_{0})$,  it follows from  \eq{Mlx0} that the box  $\La_{\frac {L} {100}}$ is $(\bom,E,m,\varsigma)$-good for all energies  $E \in \sigma\up{\I,\mathrm{red}}\pa{H_{\bom,\La_{L_{-}}(x_{0})}}$.  We conclude from
 \eq{Ldistred2} and Lemma~\ref{lemjgoodm1} that the box  $\La_{\frac {L} {100}}$ is  $(\bom,{E}, 70 \widehat{m} ,\varsigma)$-jgood. 
\end{proof}

\begin{proof}[Proof of Theorem~\ref{thmmainev}]
The theorem follows from Proposition~\ref{propmainev}, with $\cU_{L,x_{0}}$ the event given in Proposition~\ref{propmainev}.

We fix  $\bom \in\cU_{L,x_{0}}$ and
${E} \in \I_{L}$.   Recall $\vartheta= \frac  \beta 2$.

If  \eq{distEIred-} holds, Proposition~\ref{propmainev} guarantees that  every box $\La_{\frac L {100}}$ in  the standard  $\frac L {100}$-covering  of  the  annulus $  \Lambda_{L_{+},L_{-}}(x_{0})$ is  $(\bom,{E}, 70 \widehat{m} ,\varsigma)$-jgood, so it follows from  Lemma~\ref{lemgoodWL} that
\beq \label{WxLEgood754}
W_{\bom,x,L}({E})\le    \e^{-\frac {70 \widehat{m}} {2000} L} \le  \e^{-\frac {\widehat{m}} {30} L}=\e^{-{{M}}  L},
\eeq
proving \eq{EDI9L9}.

To prove \eq{EDI9L99}, note that  either  ${E}$ satisfies  \eq{distEIred-}, so we  have  \eq{EDI9L9}, and hence, recalling \eq{boundGW},
\beq \label{EDI9L999}
W_{\bom,x_{0}}({E}) W_{\bom,x_{0},L}({E}) \le 2^{\frac \nu 2} \e^{-  {{M}} L},
\eeq
or we have
\beq \label{distEIred-2}
W_{\bom,x_{0}}({E})\le   \e^{- {{M}}  L^{\vartheta} },
\eeq
so using  \eq{boundGWL} we get
\beq \label{EDI9L98}
W_{\bom,x_{0}}({E})W_{\bom,x_{0},L}({E}) \le  
2^{\frac \nu 2} L^{\nu} \e^{- { M} L^{\vartheta}  }\le   \e^{- \frac 1 2 { {M}}  L^{\vartheta}  }.
\eeq
The desired   \eq{EDI9L99} follows.
  \end{proof}

\begin{remark}\label{remp>1} If $p>1$, the proof of Theorem~\ref{thmmainev} is much simpler; it does not require the second energy reduction of Proposition~\ref{prop2red}.
 The event $\cM_{L,x_{0}}$ in \eq{Mlx0} is replaced by
\beq\label{Mlx01}
\widetilde{\cM}_{L,x_{0}}= \bigcap_{E \in \sigma\up{\I}\pa{H_{\La_{L}(x_{0}),\bom}} } \cM_{L,x_{0}}\up{E} \in \cF_{\La_{L_{+}}(x_{0})},
\eeq
so we have
\beq \label{L100covprob1}
\P\set{\widetilde{\cM}_{L,x_{0}} }> 1-   \pa{2002}^{d} \pa{100}^{pd} L^{-pd} C_{d,\Vper,\I} L^{  d}\ge 1 -C^{\pr}_{d,\Vper,\I}  L^{- (p-1)  d}.
\eeq 
The event $ \cU_{L,x_{0}}$ in \eq{event2red} is replaced by
 \beq \label{event2red1}
\widetilde{\cU}_{L,x_{0}}= \cQ_{L,x_{0}} \cap  \widetilde{\cM}_{L,x_{0}}  \in  \cF_{\La_{L_{+}}(x_{0})},
 \eeq
where $ \cQ_{L,x_{0}}$ is the event given in Proposition~\ref {propenergyred1}. It follows from \eq{cQdesired} and    \eq{L100covprob1}   that 
 \beq \label{prob2red1}
\P\set{\widetilde{\cU}_{L,x_{0}} }>  1  -  L^{-2 bd}  -C^{\pr}_{d,\Vper,\I}  L^{- (p-1)  d}
>  1   -C^{\pr\pr}_{d,\Vper,\I}  L^{- (p-1)  d},
\eeq 
choosing $b=1 + \frac{p-1}2$.
The proof of  Theorem~\ref{thmmainev}  then proceeds as before, with $\vartheta=1$ in  \eq{distEIred-} and  \eq{EDI9L99}.
\end{remark}

\begin{remark}\label{remp>3} If $p >3$, we can prove a modified version of  Theorem~\ref{thmmainev}, that does not require either Proposition~\ref {propenergyred1} or Proposition~\ref{prop2red}; it suffices to use Lemma~\ref{lem1trap}.  The   conditions ${E} \in \I_{L}$ and   \eq{distEIred-}  are  replaced by 
\beq\label{distEIred-rev} 
 {E} \in \I \qtx{and} W_{\bom,x_{0}}({E}) >   \e^{- \frac m {30} \sqrt{L} } .
\eeq
 We replace the event
$\cM_{L,x_{0}}\up{E}$ by $\widehat{\cM}_{L,x_{0}}\up{E}$, the event that all the boxes in  the standard  $\sqrt{L}$-covering  of  the  annulus $ \Lambda_{L_{+},L_{-}}(x_{0})$ are $(\bom,E,m,\varsigma)$-good,  and set  $\widehat{\cM}_{L,x_{0}}\up{E}=\Omega$ if  $E \notin \I$. We have
\beq \label{L100cov2}
\P\set{\widehat{\cM}_{L,x_{0}}\up{E}} > 1-20^{d} L^{\frac 1 2 d}L^{-\frac p 2 d}= 1-20^{d} L^{-\frac {p-1} 2 d}\quad \text{for}\quad E \in \I.
\eeq
We set
\beq\label{Mlx02}
\widehat{\cM}_{L,x_{0}}= \bigcap_{E \in \sigma\up{\I}\pa{H_{\La_{L}(x_{0}),\bom}} } \widehat{\cM}_{L,x_{0}}\up{E} \in \cF_{\La_{L_{+}}(x_{0})},
\eeq
so we have
\beq \label{L100covprob2}
\P\set{\widehat{\cM}_{L,x_{0}} }> 1-20^{d} L^{-\frac {p-1} 2 d}C_{d,\Vper,\I} L^{  d}\ge 1 - C_{d,\Vper,\I}^{\pr\pr}L^{-\frac {p-3} 2 d}.
\eeq 
The event $ \cU_{L,x_{0}}$ in \eq{event2red} is replaced by
 \beq \label{event2red2}
\widehat{\cU}_{L,x_{0}}= \cT_{L,x_{0}} \cap \widehat{\cM}_{L,x_{0}}  \in\cF_{\La_{L_{+}}(x_{0})},
 \eeq
 where $ \cT_{L,x_{0}}$ is the event in Lemma~\ref{lem1trap}. It follows from \eq{cTdesired} and    \eq{L100covprob2}   that 
 \beq \label{prob2red2}
\P\set{\widehat{\cU}_{L,x_{0}} }>  1  -   L^{-{c}_{d,p,m,\abs{\I}} {\sqrt{L}} }  -C^{\pr\pr}_{d,\Vper,\I} L^{-\frac {p-3} 2 d}\ge 1- L^{-\frac {p-3} 3 d} .
\eeq 
The proof of  Proposition~\ref{propmainev} then proceeds as before, except that we use  Lemma~\ref{lem1trap} and boxes of side $\sqrt{L}$ instead of $\frac L {100}$. We conclude, using Lemma~\ref{lemjgoodm1}, that if ${E}$ satisfies \eq{distEIred-rev}, then  all the boxes in  the standard  $\sqrt{L}$-covering  of  the  annulus $ \Lambda_{L_{+},L_{-}}(x_{0})$ are $(\bom,{E},\frac m {60},\varsigma)$-jgood.  Applying  Lemma~\ref{lemgoodWL}, modified for boxes of side $\sqrt{L}$ instead of $\frac L {100}$, we obtain (cf. \eq{EDI9L9})
\beq
W_{\bom,x_{0},L}({E})\le  \e^{- \frac m {1000} \sqrt{L} }. 
\eeq
It follows that we have (cf. \eq{EDI9L99})
\beq
W_{\bom,x_{0}}({E})W_{\bom,x_{0},L}({E})\le  \e^{- \frac m {1000} \sqrt{L} }. \qtx{for all} {E} \in \I.
\eeq
 This simpler result implies  pure point spectrum with sub-exponential decay of eigenfunctions, as well as dynamical localization.
\end{remark}

\section{Localization}\label{secloc}

In this section we derive all the usual forms of localization from Theorem~\ref{thmmainev}.
We will assume only the conclusions of this theorem.  More precisely, we will assume only the existence of the events $\cU_{L,0}$ satisfying the conclusions of Theorem~\ref{thmmainev} for  some fixed $\widetilde{p}, \vartheta, M$.  In particular, we do not assume the conclusions of the multiscale analysis, which were the hypotheses for Theorem~\ref{thmmainev}.
We fix $ {\nu} > \frac d 2$, which  will be generally omitted from the notation.

\subsection{Anderson localization and finite multiplicity of eigenvalues} \label{secMSAas1}

A simple Borel-Cantelli Lemma  argument based on Theorem~\ref{thmmainev}  yields Anderson localization and  finite multiplicity of eigenvalues. We only need the events of Theorem~\ref{thmmainev} at $x_{0}=0$.

\begin{theorem}\label{thmAndloc} Let $H_{\bom}$ be a generalized Anderson Hamiltonian on  $\L^{2}(\R^{d})$. 
 Let $ \I \subset \R$ be a bounded open interval, for which 
 there is a scale $\cL_{1}$ such that  for all $L\ge \cL_{1}$   there exists an event $\cU_{L,0}$ as in Theorem~\ref{thmmainev}.
Then the following holds with probability one:
  \begin{enumerate}
 \item  {$H_{\bom}$}  has  pure point spectrum  in the interval  {$\I$}.

\item   If   {$\psi$}  is an {eigenfunction} of  {$H_{\bom}$} 
with {eigenvalue}  {$E \in \I$}, then $\psi$ is exponentially localized  with rate of decay $ M$,   more precisely,
\begin{equation}\label{expdecay222}
 \|\Chi_x \psi\| \le C_{\bom,E}\norm{T^{-1} \psi}\, e^{- M\norm{x}} \qquad \text{for all}\quad  x \in \R^{d}.
\end{equation}

\item  For all  {$E \in \I$} we have
 \begin{equation}\label{expdecay2229}
 \|\Chi_x P_{\bom}(E)\|_{2} \le C^{\pr}_{\bom,E}\, e^{- M\norm{x}} \qquad \text{for all}\quad  x \in \R^{d}.
\end{equation}

\item  The {eigenvalues} of  {$H_{\bom}$} in  {$\I$} have {finite multiplicity}: 
\beq\label{finmult}
\tr P_{\bom}(E) < \infty  \quad\text{for all} \quad E\in  \I.
\eeq
\end{enumerate}
\end{theorem}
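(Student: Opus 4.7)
First I would extract a full-probability event $\Omega_{0}\subset\Omega$ via Borel--Cantelli applied to the geometric scale sequence $L_{k}=2^{k}$: since $\sum_{k}\P\set{\Omega\setminus\cU_{L_{k},0}}\le \sum_{k} 2^{-k\widetilde{p}d}<\infty$, for each $\bom\in\Omega_{0}$ there is an integer $k_{0}(\bom)$ such that $\bom\in \cU_{L_{k},0}$ for all $k\ge k_{0}(\bom)$. Fix such an $\bom$. Since $\bigcup_{k}\I_{L_{k}}=\I$, the dichotomy of Theorem~\ref{thmmainev}(ii) and the product bound \eq{EDI9L99} apply at scale $L_{k}$ to every fixed $E\in\I$ once $k$ is large enough (depending on $E$).

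For pure point spectrum in $\I$ I would use the generalized eigenfunction expansion. For $\mu_{\bom}$-a.e.\ $E\in\I$, \eq{WW<W} upgrades \eq{EDI9L99} to $\W_{\bom,0}(E)\,\W_{\bom,0,L_{k}}(E)\le \e^{-\frac{1}{2}M L_{k}^{\vartheta}}$. Whenever $\Pb_{\bom}(E)\ne 0$, the generalized eigenfunctions $\Pb_{\bom}(E)\phi$ are weak solutions of $H_{\bom}\psi=E\psi$ by \eq{geneigPeq}, so the weak unique continuation principle for $-\Delta+V$ with bounded $V$ forces $\W_{\bom,0}(E)>0$; hence $\W_{\bom,0,L_{k}}(E)\le \W_{\bom,0}(E)^{-1}\e^{-\frac{1}{2}ML_{k}^{\vartheta}}$ for all large $k$. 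Decomposing $\R^{d}$ as the disjoint union of $\Lambda_{L_{k_{0}}}(0)$ with the annuli $\Lambda_{2L_{k}+1,L_{k}-1}(0)$, $k\ge k_{0}$, a rapidly convergent geometric sum bounds $\|\Pb_{\bom}(E)\phi\|^{2}$. This establishes \eq{allgeneig} on $\I$, which by \eq{geneigPgeneig} and the remark following \eq{allgeneig} yields pure point spectrum in $\I$.

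For the exponential decay of each eigenfunction I would replace the product bound by the sharper either/or of Theorem~\ref{thmmainev}(ii). Given an eigenvalue $E\in\I$ and an eigenfunction $\psi\ne 0$, unique continuation for $H_{\bom}$ gives $\|\Chi_{0}\psi\|>0$, so $W_{\bom,0}(E)\ge c_{\bom,\psi}:=\|\Chi_{0}\psi\|/\|T^{-1}\psi\|>0$. For $k$ large enough that $c_{\bom,\psi}>\e^{-ML_{k}^{\vartheta}}$, Theorem~\ref{thmmainev}(ii) forces $W_{\bom,0,L_{k}}(E)\le \e^{-ML_{k}}$, i.e.\ $\|\Chi_{0,L_{k}}\psi\|\le \e^{-ML_{k}}\|T^{-1}\psi\|$. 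For any $x\in\R^{d}$ with $\|x\|$ sufficiently large, I pick $k$ with $\|x\|\le L_{k}\le 2\|x\|$, so $x\in \Lambda_{2L_{k}+1,L_{k}-1}(0)$ and $\e^{-ML_{k}}\le \e^{-M\|x\|}$; this yields \eq{expdecay222} with $C_{\bom,E}$ absorbing $c_{\bom,\psi}^{-1}$ and the finite contribution from small $\|x\|$. The same argument applied with $P_{\bom}(E)$ via the Hilbert--Schmidt bound \eq{xPEWL} gives \eq{expdecay2229}, and finite multiplicity \eq{finmult} then follows from the standard identity $\tr P_{\bom}(E)=\sum_{x\in\Z^{d}}\|\Chi_{x}P_{\bom}(E)\|_{2}^{2}\le C^{\prime}_{\bom,E}\sum_{x}\e^{-2M\|x\|}<\infty$.

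The main subtlety is the split between what is needed for pure point spectrum and what is needed for pointwise exponential decay at rate $M$: the product bound \eq{EDI9L99} is only sub-exponential in $L^{\vartheta}$ and is therefore only sharp enough to force $\Pb_{\bom}(E)\H_{+}\subset\H$ for $\mu_{\bom}$-a.e.\ $E$, whereas the full rate $M$ on each individual eigenfunction requires the stronger either/or of Theorem~\ref{thmmainev}(ii), together with a unique continuation argument to rule out the ``$W_{\bom,0}(E)$ tiny'' branch for large $k$.
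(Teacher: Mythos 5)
Your proof is correct and uses essentially the same ingredients as the paper's: Borel--Cantelli along the dyadic scale sequence, the unique continuation principle to guarantee $W_{\bom,0}(E)>0$, the either/or dichotomy \eq{distEIred-}--\eq{EDI9L9} to force $W_{\bom,0,L_k}(E)\le\e^{-ML_k}$ for large $k$, and the annular covering to convert annulus estimates into pointwise exponential decay. One remark on your closing ``subtlety'' paragraph: the two-method split you describe is not actually needed. The paper does not use the product bound \eq{EDI9L99} at all in this proof; it derives exponential decay of every $\psi\in\Theta_{\bom}(E)$ directly from the either/or, and pure point spectrum then follows for free because exponential decay puts $\psi$ in $\H$, giving \eq{allgeneig} via \eq{geneigPgeneig}. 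Your route through $\W_{\bom,0,L_k}(E)$, \eq{WW<W}, and a separate unique-continuation step for $\Pb_{\bom}(E)\phi$ is valid but redundant --- and the covering you introduce there is not a disjoint union (consecutive annuli $\La_{2L_k+1,L_k-1}(0)$ overlap by design, and they also overlap $\La_{L_{k_0}}(0)$), though this does not affect the estimate since you only need a covering, not a partition.
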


\begin{proof} It suffices to prove the theorem in every closed interval $I \subset \I$.  
We fix $I$, and  pick a scale $L_{0}\ge \cL_{1}$ such that $I \subset \I_{L_{0}}$ (see \eq{defIL}).  We introduce scales $L_{k+1}=2L_{k}$ for $k=1,2,\ldots.$, and  set $ \cU_{k} =\cU_{L_{k},0} $. It follows from the Borel-Cantelli Lemma, using \eq{cUdesired}, that 
\beq\label{cUinfty}
\P \set{\cU_{\infty}}=1, \quad \text{where}\quad \cU_{\infty}=\liminf_{k\to \infty}\cU_{k}.
\eeq

Fix $\bom \in \cU_{\infty}$; there exists $k_{\bom}\in \N$ such that $\bom\in \cU_{k}$  for all $k \ge k_{\bom}$.  If ${E}\in I $ is a generalized eigenvalue  of $H_{\bom}$, i.e., $\Theta_{\bom}({E})\not= \emptyset$, and hence $W_{\bom,0}({E}) >0$,  we set
\beq 
k_{\bom,{E}}= \min \set{k \in\N;\;  k \ge k_{\bom} \; \text{and  \eq{distEIred-} holds for ${E}$ and $L_{k}$ (with $x_{0}=0$)}}< \infty.  \label{kbomcE}
\eeq
Given $\psi \in \Theta_{\bom}({E})$, it follows from  \eq{EDI9L9} that  \beq
  \|\Chi_{0,L_{k}}\psi\|\le  \norm{T^{-1} \psi}\e^{- M L_{k}} \qtx{for all} k \ge k_{\bom,{E}}.
 \eeq
 If $x\in \R^{d}$ with   $\norm{x } \ge  L_{k_{\bom,{E}}} $, we can always find $k \ge k_{\bom,{E}}$ such that $x \in \bar{\La}_{L_{k+1},L_{k}}(0)$, so 
 \begin{equation}\label{expdecay22299}
 \|\Chi_x \psi\| \le   \|\Chi_{0,L_{k}}\psi\|\le  \norm{T^{-1} \psi}\e^{- M L_{k}}\le  \norm{T^{-1} \psi}\e^{- M \norm{x}}.
 \end{equation}
It follows that that $\psi \in \H= \L^{2}(\R^{d})$  and satisfies   \eq{expdecay222}. It now follows from \eq{geneigPgeneig} that  \eq{allgeneig} holds with $B=I$.  We conclude that that  $H_{\bom}$ has pure point spectrum in $I$, and if $\psi$ is an eigenfunction of $H_{\bom}$ with eigenvalue  ${E}\in I$ it has the exponential decay given in \eq{expdecay222}.

The estimate \eq{expdecay2229} is an immediate consequence of \eq{expdecay222}, and implies \eq{finmult}.
\end{proof}

\subsection{Eigenfunctions correlations  and  dynamical localization}\label{secMSAas2}

Another Borel-Cantelli Lemma  argument based on Theorem~\ref{thmmainev} yields eigenfunctions correlations.  In particular, we obtain pure point spectrum, finite multiplicity of eigenvalues,
SUDEC (summable uniform decay of eigenfunction correlations; see \cite{GKsudec}) and   SULE (semi-uniformly localized eigenfunctions; see \cite{DRJLS,GKboot,GKsudec}), and dynamical localization.   We will need the events of Theorem~\ref{thmmainev} for all $x\in \Z^{d}$.
  We do not assume or use Theorem~\ref{thmAndloc}.

\begin{theorem}\label{thmSUDEC} Let $H_{\bom}$ be a generalized Anderson Hamiltonian on  $\L^{2}(\R^{d})$.  
 Let $ \I \subset \R$ be a bounded open interval, for which 
 there is a scale $\cL_{2}$ such that  for all $L\ge \cL_{2}$ and  $x\in \Z^{d}$   there exists an event $\cU_{L,x}$ as in Theorem~\ref{thmmainev}. Let $\eps >0$ and fix an open interval  $I \subset \bar{I} \subset \I$. The following holds with probability one:
 \begin{enumerate}
\item  For all   ${E} \in I$ we have
  \beq\label{WSUDEC}
W_{\bom,x}({E})W_{\bom,y}({E})\le  C_{\bom,I,\eps}\e^{\norm{x}^{{(1 + \eps)\frac {\vartheta}{\widetilde{p}}}  } }
 \e^{-\frac 1 3 M\norm{x-y}^{{\vartheta}}} \quad\text{for all}\quad x, y \in \R^{d}.
\eeq

\item  {$H_{\bom}$}  has  pure point spectrum  in the interval  {$I$}. Moreover,  the {eigenvalues} of  {$H_{\bom}$} in  {$I$} have {finite multiplicity}.

\item (SUDEC)  For all $E\in I$ and 
 $\phi,\psi \in  \Ran P_{\bom}(E)$ we have
\beq\label{SUDECas}
\norm{\Chi_{x} \phi}\norm{\Chi_{y} \psi}\le C_{\bom,I,\eps}^{\pr}\norm{T^{-1} \phi}\norm{T^{-1} \psi} \e^{\norm{x}^{{(1 + \eps)\frac {\vartheta}{\widetilde{p}}}  } } \e^{-\frac 1 4 M \norm{x-y}^{{\vartheta}}} \qtx{for all}  x, y \in \R^{d}.
\eeq
In addition,  for all $E\in I$ we have
\beq \label{SUDECasP}
 \|\Chi_x P_{\bom}(E)\|_{2} \norm{\Chi_{y} P_{\bom}(E)}_{2}\le C_{\bom,I,\eps}^{\pr}\, \mu_{\bom}({{E}})\, \e^{\norm{x}^{{(1 + \eps)\frac {\vartheta}{\widetilde{p}}}  } } \e^{- \frac 1 4 M \norm{x-y}^{{\vartheta}}} \qtx{for all}  x, y \in \R^{d}.
 \eeq

\item  (SULE)   For all $E\in I$ there exist a center of localization $y_{\bom,E}\in \R^{d}$
for all eigenfunctions with eigenvalue $E$, i.e.,  for all $\phi \in  \Ran P_{\bom}(E)$ we have 
\beq\label{SULEas}
\norm{\Chi_{x} \phi}\le C_{\bom,I,\eps}^{\pr\pr}\norm{T^{-1} \phi}\e^{\norm{y_{\bom,E}}^{{(1 + \eps)\frac {\vartheta}{\widetilde{p}}}  }} \e^{- \frac 1 4 M \norm{x-y_{\bom,E}}^{{\vartheta}}} \qtx{for all}  x  \in \R^{d}.
\eeq
In addition,  for all $E\in I$ we have
\beq\label{SULEasP}
\norm{\Chi_{x}  P_{\bom}(E)}_{2}\le C_{\bom,I,\eps}^{\pr\pr}\sqrt{\mu_{\bom}({{E}})}\, \e^{\norm{y_{\bom,E}}^{{(1 + \eps)\frac {\vartheta}{\widetilde{p}}}  }} \e^{- \frac 1 4 M \norm{x-y_{\bom,E}}^{{\vartheta}}} \qtx{for all}  x  \in \R^{d}.
\eeq

\item  We have
\beq \label{NSULE}
N_{\bom,I}(L):= \sum_{E \in I; \, \norm{y_{\bom,E}}\le L  }\tr P_{\bom}(E) \le  C_{\bom,I,\eps}\,   L^{(1 +\eps)\frac {d}{\widetilde{p}}} \quad \text{for all}\quad L \ge 1.
\eeq

\end{enumerate}
  
\end{theorem}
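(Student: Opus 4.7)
The plan is to derive the main estimate \eq{WSUDEC} by a Borel--Cantelli argument over the events $\cU_{L,x}$ of Theorem~\ref{thmmainev}, and then extract all the remaining statements by unpacking the generalized eigenfunction definitions and invoking the generalized eigenfunction expansion. Throughout, let $I\subset\bar I\subset\I$ be a closed subinterval; compactness gives $I\subset\I_{L}$ for every sufficiently large $L$ in view of \eq{defIL}.

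First I fix an auxiliary $\eps'\in\rb{0,\eps}$, take dyadic scales $L_n = 2^n$, and for each $n$ define $\cA_n := \bigcup_{x\in\Z^d,\, \norm{x}^{(1+\eps')/\widetilde{p}}\le L_n}\pa{\Omega\setminus\cU_{L_n,x}}$. By \eq{cUdesired} and a volume count, $\P\set{\cA_n}\le (2 L_n^{\widetilde{p}/(1+\eps')})^d L_n^{-\widetilde{p}d} = 2^d L_n^{-\widetilde{p}d\eps'/(1+\eps')}$, which is summable in $n$. Borel--Cantelli then produces a full-probability event $\cU$ such that for each $\bom\in\cU$ there is a threshold $n_\bom$ with $\bom\in\cU_{L_n,x}$ whenever $n\ge n_\bom$ and $L_n\ge \norm{x}^{(1+\eps')/\widetilde{p}}$. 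Now fix $\bom\in\cU$, $E\in I$, and $x,y\in\R^d$; by symmetry I assume $\norm{x}\le\norm{y}$, and pick $\gamma\in\rb{0,(3/M)^{1/\vartheta}}$. In Case A, $\norm{x-y}\le \gamma\norm{x}^{(1+\eps)/\widetilde{p}}$, the trivial bound $W_{\bom,x}(E)W_{\bom,y}(E)\le 2^\nu$ from \eq{boundGW} is absorbed by the target estimate because the choice of $\gamma$ gives $\norm{x}^{(1+\eps)\vartheta/\widetilde{p}}-(M/3)\norm{x-y}^\vartheta\ge 0$. In Case B, $\norm{x-y}>\gamma\norm{x}^{(1+\eps)/\widetilde{p}}$, I select the unique $n$ with $\norm{x-y}<L_n\le 2\norm{x-y}$, so that $y\in\bar\La_{2L_n,L_n}(x)$ and \eq{Wconv} applies; for $\norm{x-y}$ above an $\bom$-dependent threshold one has $L_n\ge\norm{x}^{(1+\eps')/\widetilde{p}}$ and $n\ge n_\bom$, so $\bom\in\cU_{L_n,x}$. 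Combining \eq{EDI9L99} with \eq{Wconv} yields $W_{\bom,x}(E)W_{\bom,y}(E)\le 2^\nu L_n^\nu \e^{-(M/2) L_n^\vartheta}\le \e^{-(M/3)\norm{x-y}^\vartheta}$ for large $\norm{x-y}$, and the residual finite set of pairs is absorbed into the $\bom$-dependent constant, proving \eq{WSUDEC}.

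To pass to pure point spectrum I combine \eq{WW<W} with the generalized eigenfunction expansion: for $\mu_\bom$-a.e.\ $E\in I$, applying \eq{WSUDEC} with $x=0$ and the definition \eq{defGWx2} gives $\norm{\Chi_y\Pb_\bom(E)\phi}\le C_{\bom,E,\eps}\langle y\rangle^\nu \e^{-(M/6)\norm{y}^\vartheta}\norm{T^{-1}\Pb_\bom(E)\phi}$ for every $\phi\in\H_+$ via \eq{Tab}, so $\Pb_\bom(E)\phi\in\H$ and pure point spectrum on $I$ follows from \eq{allgeneig}; finite multiplicity comes from square-integrability together with the deterministic bound $\tr(\Chi_{\Lambda_R(0)}P_\bom(I))\le C_{d,\Vper,\sup I}R^d$. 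Since a pure point measure has no mass outside its atoms, the $\mu_\bom$-a.e.\ version of \eq{WSUDEC} in fact holds at every eigenvalue $E\in I$, so the SUDEC bounds \eq{SUDECas} and \eq{SUDECasP} follow by combining \eq{WSUDEC} with $\norm{\Chi_x\phi}\le W_{\bom,x}(E)\norm{T_x^{-1}\phi}$ from \eq{defGWx} and $\norm{\Chi_xP_\bom(E)}_2\le \W_{\bom,x}(E)\sqrt{\mu_{\bom,x}(\set{E})}$ from \eq{xPEW}, using \eq{Tab} to convert $T_z^{-1}$ factors into $T^{-1}$ factors at the cost of a polynomial $\langle z\rangle^\nu$ prefactor absorbed by the subexponential decay (which weakens $M/3$ to $M/4$).

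For SULE, for each eigenvalue $E\in I$ I define $y_{\bom,E}\in\Z^d$ as an almost-maximizer of $\norm{\Chi_y P_\bom(E)}_2$; the identity $\sum_{y\in\Z^d}\norm{\Chi_y P_\bom(E)}_2^2 = \tr P_\bom(E)\ge\mu_\bom(\set{E})$ combined with \eq{SUDECasP} at $y=y_{\bom,E}$ (which bounds each summand in terms of the maximum) delivers the lower bound $\norm{\Chi_{y_{\bom,E}}P_\bom(E)}_2\ge c_{\bom,I,\eps}\sqrt{\mu_\bom(\set{E})}$; substituting this back into \eq{SUDECasP} with $x=y_{\bom,E}$ gives \eq{SULEasP}, and \eq{SULEas} is analogous. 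Finally, for \eq{NSULE}, apply \eq{SULEas} to normalized eigenfunctions $\psi$ of each eigenvalue $E\in I$: since $\norm{T^{-1}\psi}\le 1$, the tail $\sum_{\norm{x-y_{\bom,E}}>R/2}\norm{\Chi_x\psi}^2$ is bounded by $\e^{2\norm{y_{\bom,E}}^{(1+\eps)\vartheta/\widetilde{p}}}$ times a convergent series in $R$, and choosing $R=R_\bom(L)=C_{\bom,I,\eps}L^{(1+\eps)/\widetilde{p}}$ with $C_{\bom,I,\eps}$ sufficiently large ensures this tail is at most $1/2$ whenever $\norm{y_{\bom,E}}\le L$, so $\tr(\Chi_{\Lambda_R(y_{\bom,E})}P_\bom(E))\ge \tfrac 1 2\tr P_\bom(E)$. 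Summing over such eigenvalues and using $\Lambda_R(y_{\bom,E})\subset\Lambda_{2L+R}(0)$ together with the deterministic trace bound gives $N_{\bom,I}(L)\le 2\tr(\Chi_{\Lambda_{2L+R}(0)}P_\bom(I))\le C_{\bom,I,\eps}L^{(1+\eps)d/\widetilde{p}}$. I expect the main obstacles to be the bookkeeping in the Borel--Cantelli step (threading $\eps>\eps'>0$ through the annulus condition \eq{Wconv}--\eq{defGWxL} with scales $L_n$ matched to $\norm{x-y}$), and the verification that the SULE inequalities hold \emph{uniformly} over all eigenvalues rather than merely $\mu_\bom$-a.e., which relies on the pure-point conclusion to collapse the ``a.e.'' qualifier in \eq{WW<W}.
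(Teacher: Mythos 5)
Your approach is essentially the paper's: the same Borel--Cantelli argument over the events $\cU_{L_n,x}$ combined with \eqref{EDI9L99} and \eqref{Wconv} gives \eqref{WSUDEC}, and the remaining items are read off from it via the generalized eigenfunction expansion. Your Case~A/B split for \eqref{WSUDEC} is a cosmetic variant of the paper's $k_2(x)$-threshold device, and your treatment of (i), (ii), (iii) and (v) is sound (the rounding from $x\in\R^d$ to $\Z^d$ is treated implicitly by both you and the paper).

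The genuine gap is in step (iv). You define $y_{\bom,E}$ as an almost-maximizer of $\norm{\Chi_y P_\bom(E)}_2$ and claim that the identity $\sum_{y}\norm{\Chi_y P_\bom(E)}_2^2 = \tr P_\bom(E)\ge\mu_\bom(\set{E})$, combined with \eqref{SUDECasP} at $y=y_{\bom,E}$, yields $\norm{\Chi_{y_{\bom,E}}P_\bom(E)}_2\ge c_{\bom,I,\eps}\sqrt{\mu_\bom(\set{E})}$. That chain is circular: maximality gives $\norm{\Chi_y P_\bom(E)}_2^2 \le \norm{\Chi_{y_{\bom,E}}P_\bom(E)}_2\norm{\Chi_y P_\bom(E)}_2$, and feeding \eqref{SUDECasP} into the right-hand side and summing produces only $\mu_\bom(\set{E}) \le C\,\mu_\bom(\set{E})\,\e^{\norm{y_{\bom,E}}^{(1+\eps)\vartheta/\widetilde{p}}}$, which contains no information on $\norm{\Chi_{y_{\bom,E}}P_\bom(E)}_2$. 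The lower bound you assert is in fact true, but its proof must use the \emph{weighted} decomposition $\mu_\bom(\set{E}) = \norm{T^{-1}P_\bom(E)}_2^2 = \sum_y\norm{\Chi_y T^{-1}P_\bom(E)}_2^2 \le 2^{\nu}\pa{\sum_y\langle y\rangle^{-2\nu}}\norm{\Chi_{y_{\bom,E}}P_\bom(E)}_2^2$, which needs only maximality and $\nu>\tfrac d2$, not SUDEC. A second issue: the announced ``analogous'' derivation of \eqref{SULEas} from \eqref{SUDECas} would require $\norm{\Chi_{y_{\bom,E}}\phi} \gtrsim \norm{T^{-1}\phi}$ for \emph{every} $\phi\in\Ran P_\bom(E)$, but your $y_{\bom,E}$ is tied to the operator $P_\bom(E)$, not to the individual $\phi$, so this lower bound is not available. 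The paper avoids both problems by defining $y_{\bom,E}$ as the maximizer of $\norm{\Chi_y\psi}$ for a \emph{fixed} nonzero $\psi\in\Ran P_\bom(E)$, observing $W_{\bom,y_{\bom,E}}(E)\ge C_d>0$ (again a pure $\nu>\tfrac d2$ weight computation, Eq.~\eqref{Tmaxphi}), and then deducing $W_{\bom,x}(E)\le C\,\e^{\norm{y_{\bom,E}}^{(1+\eps)\vartheta/\widetilde{p}}}\e^{-\frac13 M\norm{x-y_{\bom,E}}^\vartheta}$ directly from \eqref{WSUDEC}; this single bound on $W_{\bom,x}(E)$ controls every $\phi\in\Ran P_\bom(E)$ simultaneously. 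Your choice of maximizer can be made to work, but only by passing through $\W_{\bom,y_{\bom,E}}(E)\ge c_{d,\nu}>0$ (via \eqref{xPEW} and the weighted decomposition) and then invoking \eqref{WSUDEC}, not by dividing a SUDEC inequality by a $\phi$-dependent quantity.
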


\begin{proof} Fix  $\eps >0$. Given  $k \in \N$, we set $L_{k}=2^{k}$,
 and consider the event  
\beq
\cJ_{k}:=  \bigcap_{x \in \Z^{d}; \, \norm{x}^{1 + \eps} \le \tau  L_{k}^{ \widetilde{p} } } \cU_{L_{k},x} \, ,
\eeq
where $\cU_{L,x},
{M},\widetilde{p}, \beta$ are as in  Theorem~\ref{thmmainev}, and $\tau>0$ is a constant to be chosen later. It follows from \eq{cUdesired} that
\beq
\P\set{\cJ_{k}}\ge 1 - C_{d,M,\eps,\tau} L_{k}^{-\frac {\eps}{1+\eps}\widetilde{p} d}.
\eeq
 
 Applying the Borel-Cantelli Lemma we conclude that 
\beq\label{cUinfty2}
\P \set{\cJ_{\infty}}=1, \quad \text{where}\quad \cJ_{\infty}=\liminf_{k\to \infty}\cJ_{k}.
\eeq
Thus, for $\bom \in \cJ_{\infty}$
 there exists $k_{1}(\bom)\in \N$ such that $\bom \in  \cU_{L_{k},x}$ 
for all $k \ge  k_{1}(\bom)$ and $x\in \Z^{d}$ with $\norm{x}^{1 + \eps} \le \tau  L_{k}^{ \widetilde{p} }$. 

We now fix $\bom \in \cJ_{\infty}$ and an open interval
$I \subset \bar{I} \subset \I$. We set 
\beq 
k_{1}(\bom,I)=\min \set{k\in \N;\  k\ge  k_{1}(\bom), k \ge 2, \, I \subset \I_{L_{k}}},
\eeq
where $\I_{L}$ is defined in \eq{defIL}.
 Given $x \in \Z^{d}$, we define $k_{2}(x) \in \N$, $k_{2}(x) \ge 2$,  by
\beq
\tau L_{k_{2}(x)-1}^{\widetilde{p}}    < \norm{x}^{1 + \eps} \le\tau  L_{k_{2}(x)}^{\widetilde{p}},  
\eeq
when possible, and set  $k_{2}(x) =1$ otherwise.
We let $k_{3}(\bom,I,x)=\max \set{ k_{1}(\bom,I),k_{2}(x)}$; note that $k_{3}(\bom,I,x)\ge 2$.  It follows from \eq{EDI9L99}, using \eq{Wconv},  that 
 for  all   ${E}\in  I$  and $ y\in \R^{d}\setminus  {\La}_{L_{k_{3}(\bom,I,x)}}(x)$ we have 
\beq \label{preSUDEC}
W_{\bom,x}({E})W_{\bom,y}({E})\le  2^{\nu}\norm{x-y}^{\nu}
 \e^{- \frac 1 2 M \norm{x-y}^{{\vartheta}}}.
 \eeq
If $y\in  {\La}_{L_{k_{3}(\bom,I,x)}}(x)$, we have
\begin{align} \label{preSUDEC2}
W_{\bom,x}({E})W_{\bom,y}({E})& = W_{\bom,x}({E})W_{\bom,y}({E}) \e^{ \frac 1 2 M \norm{x-y}^{{\vartheta}}} \e^{-  \frac 1 2 M \norm{x-y}^{{\vartheta}}} \\ \notag 
&
\le 2^\nu \e^{ \frac 1 2 M \pa{\frac 1 2  {L_{k_{3}(\bom,I,x)}} }^{{\vartheta}}}  \e^{- \frac 1 2 M \norm{x-y}^{{\vartheta}}}   \le 2^\nu \e^{\frac 1 2 M  L_{k_{3}(\bom,I,x)-1}^{{\vartheta}}}  \e^{- \frac 1 2 M \norm{x-y}^{{\vartheta}}}  
\\ &
\le
\begin{cases} 
 2^\nu \mathrm{e}^{\norm{x}^{{(1 + \eps)\frac {\vartheta}{\widetilde{p}}}  } }
\e^{- \frac 1 2 M \norm{x-y}^{{\vartheta}}} & \quad  \text{if} \quad k_{3}(\bom,I,x)=k_2(x)\\
 2^\nu  \e^{\frac 1 2 M L_{k_{1}(\bom,I)-1}^{{\vartheta}}} 
\e^{- \frac 1 2 M \norm{x-y}^{{\vartheta}}} & \quad  \text{if} \quad k_{3}(\bom,I,x)= k_{1}(\bom,I)
\end{cases}   ,\notag
\end{align}
where we used  \eqref{boundGW} and made an appropriate chice of the constant $\tau$. 
The estimate \eq{WSUDEC} follows from \eq{preSUDEC} and \eq{preSUDEC2}.

It follows from \eq{WSUDEC} that for all  ${E}\in  I$ and  and all $\phi,\psi \in \widetilde{\Theta}_{\bom}({E})$ we have, for all $ x, y \in \R^{d}$,
\begin{align} \label{SUDECas2}
\norm{\Chi_{x} \phi}\norm{\Chi_{y} \psi}&\le C_{\bom,I,\eps}\norm{T_{x}^{-1} \phi}\norm{T_{y}^{-1} \psi} \e^{\norm{x}^{{(1 + \eps)\frac {\vartheta}{\widetilde{p}}}  }} \e^{-\frac 1 3 M  \norm{x-y}^{{\vartheta}}} \\ 
&\le 2^{\nu}C_{\bom,I,\eps}\scal{x}^{\nu}\scal{y}^{\nu}\norm{T^{-1} \phi}\norm{T^{-1} \psi} \e^{{(1 + \eps)\frac {\vartheta}{\widetilde{p}}}  } \e^{- \frac 1 3 M \norm{x-y}^{{\vartheta}}}.\notag
\end{align}
Thus  $\widetilde{\Theta}_{\bom}({E}) \subset \H$  for all  ${E}\in  I$.  It now follows from \eq{geneigPgeneig} that  \eq{allgeneig} holds with $B=I$, 
 and hence  $H_{\bom}$ has pure point spectrum in $I$.  The estimate \eq{SUDECas}  follows from \eq{SUDECas2}. The estimate \eq{SUDECasP} is an immediate consequence of \eq{SUDECas}, and implies  $\tr P_{\bom}(E) < \infty$ for all $E\in  I$.

Given $E\in I$ with  $P_{\bom}(E)\not= 0$, we pick $\psi \in  \Ran P_{\bom}(E)$, $\psi\not = 0$, and pick $y_{\bom,E}\in \Z^{d}$ (not unique) such that
\beq
\norm{\Chi_{y_{\bom,E}} \psi}= \max_{y\in \Z^{d}} \norm{\Chi_{y} \psi}.
\eeq
It follows that  (see \cite[Eq.~(4.22)]{GKsudec})
\begin{equation}\label{Tmaxphi}
W_{\bom,y_{\bom,E}}(E)\ge \frac {\norm{\Chi_{y_{\bom,E}} \psi}}  {\norm{T_{y_{\bom,E}}^{-1} \psi}} \ge C_{d }> 0 .
 \end{equation}
If $P_{\bom}(E)= 0$ we take $y_{\bom,E}\in \Z^{d}=0$.   Then for all $E \in I$ and  all  $\psi \in  \Ran P_{\bom}(E)$,  \eq{SULEas} and  \eq{SULEasP} follow from \eq{WSUDEC} (taking $y=y_{\bom,E}$) 
and \eq{Tmaxphi}.

To prove \eq{NSULE}, note that if follows from  \eq{SULEasP} that  for all $E \in I$ and $R \ge 1$ we have
\begin{align}
&\norm{\Chi_{\R^{d} \setminus \La_{2R}(y_{\bom,E})}  P_{\bom}(E)}_{2}^{2}\le \sum_{x \in  \Z^{d }\setminus  \La_{2R -1}(y_{\bom,E})} \norm{\Chi_{x}  P_{\bom}(E)}_{2}^{2}\\ 
& \qquad \qquad\qquad   \le  C_{\bom,I,\eps}\,\mu_{\bom}(I)\,\e^{2\norm{y_{\bom,E}}^{{(1 + \eps)\frac {\vartheta}{\widetilde{p}}}  }} \e^{-  \frac 1 2{M} \pa{\frac R 2 }^{{\vartheta}}}= C_{\bom,I,\eps}^{\pr}\e^{2\norm{y_{\bom,E}}^{{(1 + \eps)\frac {\vartheta}{\widetilde{p}}}  }} \e^{-  \frac 1 2{M} \pa{\frac R 2 }^{{\vartheta}}}.\notag
\end{align}
There is a constant $D_{\bom,I,\eps}\ge 1$ such that for all $L \ge 1$ 
\beq
R \ge D_{\bom,I,\eps} L^{\frac {1 +\eps}{\widetilde{p}}}\quad \Longrightarrow \quad C_{\bom,I,\eps}^{\pr}\e^{2L^{{(1 + \eps)\frac {\vartheta}{\widetilde{p}}}  }} \e^{-  \frac 1 2{M} \pa{\frac R 2 }^{{\vartheta}}} \le \tfrac 1 2 .
\eeq
Thus, given $L \ge 1$, letting  $R_{L}:= D_{\bom,I,\eps} L^{\frac {1 +\eps}{\widetilde{p}}}$,
we have
\beq
\norm{\Chi_{\La_{2R_{L}}(y_{\bom,E})}  P_{\bom}(E)}_{2}^{2} > \tfrac 1 2 \quad \text{whenever} \quad  \norm{y_{\bom,E}} \le L.
\eeq

It follows, using also \eq{trestmeas9988}, that
\begin{align}\label{NSULE2}
N_{\bom,I}(L)&\le  2 \sum_{\substack{E \in I\\ \norm{y_{\bom,E}}\le L}  }\norm{\Chi_{\La_{2R_{L}}(y_{\bom,E})}  P_{\bom}(E)}_{2}^{2}\\
&   \le   2  \sum_{\substack{E \in I\\ \norm{y_{\bom,E}}\le L}  }\norm{\Chi_{\La_{2(L+R_{L})}(0)}  P_{\bom}(E)}_{2}^{2}\le 2 \norm{\Chi_{\La_{2(L+R_{L})}(0)}  P_{\bom}(I)}_{2}^{2}  \notag \\ \notag
& \ \le C_{I} (L + R_{L})^{d}\le C_{\bom,I,\eps}\,  L^{\frac {(1 +\eps)d}{\widetilde{p}}}.
\end{align}
\end{proof}

 We can now prove dynamical  localization with probability one.

 \begin{corollary}\label{cordynloc}  Let $H_{\bom}$ be a generalized Anderson Hamiltonian satisfying the hypotheses of Theorem~\ref{thmSUDEC} in a bounded open interval $\I$.   Let $\eps >0$ and fix an open interval  $I \subset \bar{I} \subset \I$.  The following holds with probability one:
 
  \begin{enumerate}
  
  \item For all  $E \in I$ we have
  \beq\label{decaynorm-1}
\norm{\Chi_{y}P_{\bom}(E) \Chi_{x}}_{1} \le  C_{\bom,I,\eps}^{\pr}\, {\mu_{\bom}({{E}}) }\,\e^{\norm{x}^{{(1 + \eps)\frac {\vartheta}{\widetilde{p}}}  }} \e^{- \frac  1 4 M \norm{x-y}^{{\vartheta}}} \qtx{for all}   x, y \in \R^{d}. 
  \eeq

\item  We have
\beq\label{decaykernelas}
\sup_{f \in \cB_{b,1}}    \norm{\Chi_{y} f(H_{\bom})P_{\bom}(I)
\Chi_{x} }_{1} \le  C_{\bom,I,\eps}^{\pr\pr}  \e^{\norm{x}^{{(1 + \eps)\frac {\vartheta}{\widetilde{p}}}  }} \e^{-  \frac {1} 4 M  \norm{x-y}^{{\vartheta}}}\qtx{for all}   x, y \in \R^{d}.
\eeq

\item For all $b >0$ and $x_{0}\in \R^{d}$  we have 
\beq \label{dynlocas}
\sup_{f \in \cB_{b,1}}    \norm{\scal{X-x_{0}}^{bd} f(H_{\bom})P_{\bom}(I)
\Chi_{x_{0}} }_{1}       \le   C_{\bom,I,\eps,b} \, \e^{\norm{x_{0}}^{{(1 + \eps)\frac {\vartheta}{\widetilde{p}}}  }},
\eeq
and, in particular, 
\beq \label{dynloc2as}
\sup_{t \in \R}    \norm{\scal{X-x_{0}}^{bd} \e^{-itH_{\bom}} P_{\bom}(I)
\Chi_{x_{0}} }_{1} \le C_{\bom,I,\eps,b} \, \e^{\norm{x_{0}}^{{(1 + \eps)\frac {\vartheta}{\widetilde{p}}}  }}.
\eeq

\item For all  $E \in I$ we have
  \beq\label{decayFermip}
\norm{\Chi_{y}P_{\bom}\up{E} \Chi_{x}}_{1} \le  C_{\bom,I,\I,\eps}\,\e^{\norm{x}^{{(1 + \eps)\frac {\vartheta}{\widetilde{p}}}  }} \e^{- \frac {1} 4 M \norm{x-y}^{{\vartheta}}}\qtx{for all}   x, y \in \R^{d}.
  \eeq
\end{enumerate}

\end{corollary}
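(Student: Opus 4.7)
The plan is to derive (i)--(iv) in order from the SUDEC estimate \eq{SUDECasP} of Theorem~\ref{thmSUDEC}, exploiting the pure point spectral decomposition it guarantees together with the trace-ideal inequality $\norm{AB}_1 \le \norm{A}_2\norm{B}_2$. First I would prove (i) by factoring the eigenprojection as $P_{\bom}(E) = P_{\bom}(E)^{*} P_{\bom}(E)$, which gives
\[
  \norm{\Chi_{y} P_{\bom}(E) \Chi_{x}}_{1} \le \norm{\Chi_{y} P_{\bom}(E)}_{2}\, \norm{\Chi_{x} P_{\bom}(E)}_{2},
\]
and \eq{decaynorm-1} follows by inserting \eq{SUDECasP} from Theorem~\ref{thmSUDEC}.

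For (ii), since Theorem~\ref{thmSUDEC}(ii) provides pure point spectrum in $I$, for any $f \in \cB_{b,1}$ we have the strongly convergent decomposition $f(H_{\bom}) P_{\bom}(I) = \sum_{E \in \sigma(H_{\bom}) \cap I} f(E) P_{\bom}(E)$. Taking trace norms, using $\abs{f}\le 1$, and summing the bounds from (i) produces the factor $\sum_{E \in \sigma(H_{\bom}) \cap I} \mu_{\bom}(E) = \mu_{\bom}(I)$, which is finite by \eq{trestmeas234}; this yields \eq{decaykernelas}. To obtain (iii) I would insert the unit-cube partition of unity $\sum_{y \in \Z^{d}} \Chi_{y} = 1$ and use the pointwise bound $\scal{X - x_{0}}^{bd} \Chi_{y} \le C_{b} \scal{y - x_{0}}^{bd} \Chi_{y}$ to dominate the trace norm in \eq{dynlocas} by
\[
  C_{b} \sum_{y \in \Z^{d}} \scal{y - x_{0}}^{bd} \norm{\Chi_{y} f(H_{\bom}) P_{\bom}(I) \Chi_{x_{0}}}_{1};
\]
plugging in (ii) and exploiting that $\sum_{y \in \Z^{d}} \scal{y - x_{0}}^{bd} \e^{-\frac{M}{4} \norm{y - x_{0}}^{\vartheta}}$ converges (sub-exponential decay overpowers any polynomial since $\vartheta > 0$) gives \eq{dynlocas}, and specializing $f(t) = \e^{-itE} \in \cB_{b,1}$ delivers \eq{dynloc2as}.

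Finally for (iv), I would split $P_{\bom}^{(E)} = P_{\bom}(]-\infty, \inf \I]) + P_{\bom}(]\inf \I, E])$ for $E \in I$. The second summand rewrites as $\Chi_{]-\infty, E]}(H_{\bom}) P_{\bom}(I^{\prime})$ for a slightly enlarged interval $I \subset \bar{I} \subset I^{\prime} \subset \bar{I^{\prime}} \subset \I$, and the version of (ii) established on $I^{\prime}$ --- to which Theorem~\ref{thmSUDEC} equally applies --- delivers the decaying bound $e^{\norm{x}^{(1+\eps)\vartheta/\widetilde{p}}}e^{-\frac{M}{4}\norm{x-y}^{\vartheta}}$. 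The first summand is $E$-independent; in the principal applications at the bottom of the spectrum it vanishes since $\inf \I = E_{\inf}$ and $\sigma(H_{\bom}) \subset [E_{\inf},\infty[$, which is why the constant in \eq{decayFermip} is permitted to depend on $\I$ as well as $I$.

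Expected main obstacle: the sub-$\I$ piece of the Fermi projection in (iv); once one accepts the $\I$-dependence in the constant (or restricts to $\inf \I = E_{\inf}$), the argument reduces to (ii) on an enlarged subinterval of $\I$ and everything else is routine trace-ideal bookkeeping given SUDEC.
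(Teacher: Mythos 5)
Items (i)--(iii) of your proposal are essentially the paper's own proof: (i) is the Hilbert--Schmidt factorization $\norm{\Chi_{y}P_{\bom}(E)\Chi_{x}}_{1}\le\norm{\Chi_{y}P_{\bom}(E)}_{2}\norm{\Chi_{x}P_{\bom}(E)}_{2}$ combined with \eqref{SUDECasP}, (ii) is the pure-point spectral decomposition of $f(H_{\bom})P_{\bom}(I)$ (the paper writes the integral \eqref{PIexpas}; your discrete sum is the same thing since the measure $\mu_{\bom}$ restricted to $I$ is purely atomic), and (iii) is the unit-cube partition of unity together with summability of $\sum_{y}\scal{y-x_{0}}^{bd}\e^{-\frac{M}{4}\norm{y-x_{0}}^{\vartheta}}$. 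No issues there.

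Your argument for (iv) has a genuine gap. Writing $P_{\bom}^{(E)}=P_{\bom}(]-\infty,\inf\I])+P_{\bom}(]\inf\I,E])$ and then asserting $P_{\bom}(]\inf\I,E])=\Chi_{]-\infty,E]}(H_{\bom})P_{\bom}(I^{\pr})$ for some $I^{\pr}=]\alpha_{1}^{\pr},\alpha_{2}^{\pr}[$ with $\bar{I}\subset I^{\pr}\subset\bar{I^{\pr}}\subset\I$ is false: the right-hand side equals $P_{\bom}(]\alpha_{1}^{\pr},E])$, which omits $P_{\bom}(]\inf\I,\alpha_{1}^{\pr}])$. You cannot take $\alpha_{1}^{\pr}=\inf\I$ because then $\bar{I^{\pr}}\not\subset\I$, and there is in general spectrum in $]\inf\I,\alpha_{1}^{\pr}]$ (indeed at the bottom of the spectrum $[E_{\inf},E_{1}]\subset\Sigma$). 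The estimate \eqref{decaykernelas} only controls $f(H_{\bom})P_{\bom}(J)$ for intervals $J$ compactly inside $\I$, and there is no way to absorb the low-energy sliver into such an interval. The missing idea is the paper's: replace the sharp spectral cutoff below $I$ by a \emph{smooth} Gevrey function $g$ with $g\equiv1$ below $I$ and $g\equiv0$ above, so that $P_{\bom}^{(E)}=g^{2}(H_{\bom})+f_{E}(H_{\bom})$ with $f_{E}=\Chi_{]-\infty,E]}-g^{2}$ supported in an interval $I_{1}$ with $\bar{I_{1}}\subset\I$. Then $f_{E}(H_{\bom})=f_{E}(H_{\bom})P_{\bom}(I_{1})$ is controlled by \eqref{decaykernelas}, while $g^{2}(H_{\bom})$ is controlled by a \emph{deterministic} sub-exponential kernel decay for smooth functions of Schr\"odinger operators (\cite[Theorem 1.4]{BGK}, a Combes--Thomas-type estimate that requires no localization at all). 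Without this smooth-cutoff device, the part of $P_{\bom}^{(E)}$ living below $\I$ and at the lower edge of $\I$ cannot be bounded, because no localization information is available there.
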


\begin{proof}  
Since
\begin{align}
\norm{\Chi_{x}P_{\bom}({E}) \Chi_{y} }_{1} &\le\norm{\Chi_{x}P_{\bom}({E})  }_{2}\norm{\Chi_{y}P_{\bom}({E}) }_{2},
\end{align}
\eq{decaynorm-1} follows immediately from  \eq{SUDECasP}.

Given $f \in \cB_{b,1}$, it follows from  \eq{PIexpas} and \eq{decaynorm-1} that
\begin{align}
 \norm{\Chi_{y} f(H_{\bom})P_{\bom}(I)
\Chi_{x} }_{1}&  \le \int_{I}\abs{f(E)} \norm{\Chi_{y}  {P}_{\bom,0}(E) \Chi_{x} }_{1}  \di \mu_{\bom}(E)
\\
&  \le C_{\bom,I,\eps}^{\pr} \, \mu_{\bom}(I) \e^{\norm{x}^{{(1 + \eps)\frac {\vartheta}{\widetilde{p}}}  }} \e^{-  \frac {1} 4 M  \norm{x-y}^{{\vartheta}}},  \notag
\end{align}
which is \eq{decaykernelas}.

Given  $b >0$ and $x_{0}\in \R^{d}$, \eq{dynlocas} and \eq{dynloc2as}  follow from \eq{decaykernelas}.

To prove \eq{decayFermip}, we proceed as in \cite[Proof of Theorem~3]{GKsudec}.
We write $I = ]\alpha_{1},\alpha_{2}[$, let $\delta =\frac 12  \mathrm{dist}( I, \R \setminus \I)>0$, and consider the open interval $I_{1}=]\alpha_{1} - \frac \delta 2,\alpha_{2} +\frac \delta 2[\subset \overline{I_{1}}\subset \I$.  We set $\zeta=  \frac 1 2( 1 +\frac {\beta} 2) \in  ]\frac {\beta} 2 ,1[$ and $\zeta^{\prime}= \frac 1 2( 1 +\zeta)  \in ]\zeta,1[$. We pick a
 ${\rm L}^1$-Gevrey function $g$  of class $\frac 1 {\zeta^\prime}$ on 
$ ]- 1,\infty[$, such that  $0\le g \le 1$,  $g\equiv1$ on $]-\infty,{\alpha_{1}}-\frac \delta 2]$
and $g\equiv 0$ on $[{\alpha_{2}}+\frac \delta 2,\infty[$. (See
\cite[Definition 1.1]{BGK}; such a function always exists.) 
 For all $E \in I$ we have
\beq\begin{split}\label{P=gf}
P_{\bom}^{(E)}&=  g^{2}(H_{\bom}) + f_E(H_{\bom}),\qtx{where} \\
f_E(t)& : = \Chi_{]-\infty,E]}(t) - g^{2}(t) =f_E(H_{\bom}) P_{\bom}(I_1)\in   \cB_{b,1}.
\end{split}\eeq
 Since we proved   \eqref{decaykernelas}, we have
\beq\label{decaykernelas35}
    \norm{\Chi_{y} f_{E}(H_{\bom})
\Chi_{x} }_{1} \le  C_{\bom,I,\eps}^{\pr\pr}  \e^{\norm{x}^{{(1 + \eps)\frac {\vartheta}{\widetilde{p}}}  }} \e^{-  \frac {1} 4 M  \norm{x-y}^{{\vartheta}}} \qtx{for all}  x, y \in \R^{d}. 
\eeq

The function $g$ was chosen so we can use
 \cite[Theorem 1.4]{BGK}, obtaining
 \begin{equation}\label{BGKest1}
\left\|\Chi_{x}g (H_{\bom}) \Chi_{y}\right\| \le C_{g}
\,\mathrm{e}^{-C^{\pr}_{g}\norm{x-y}^{\zeta
}}\qtx{for all}  x, y \in \R^{d}.
\end{equation}
We also have, using  \eq{trestmeas9988}, that
\begin{align}\label{estg1}
 \left\|\Chi_{x}g (H_{\bom}) \Chi_{y}\right\|_1& \le
 \left\|\Chi_{x}\sqrt{g }(H_{\bom})\right\|_2
 \left\|\Chi_{y}\sqrt{g} (H_{\bom})\right\|_2    \le \left\|\Chi_{x}P_{\bom}^{({\alpha_{2}}+\frac \delta 2)}\right\|_2
 \left\|\Chi_{y}P_{\bom}^{({\alpha_{2}}+\frac \delta 2)}\right\|_2 \\ &\le   C_{d,\norm{V^{-}_{\mathrm{per}}},{\alpha_{2}}+\frac \delta 2}^{2}. \notag
\end{align}
It follows that 
\begin{align}
 \left\|\Chi_{x}g (H_{\bom}) \Chi_{y}\right\|_2^{2} \le  \left\|\Chi_{x}g (H_{\bom}) \Chi_{y}\right\|  \left\|\Chi_{x}g (H_{\bom}) \Chi_{y}\right\|_1 \le C_{d,\norm{V^{-}_{\mathrm{per}}},{\alpha_{2}}+\frac \delta 2,g}  
\,\mathrm{e}^{-C^{\pr}_{g}\norm{x-y}^{\zeta}}.
\end{align}
Thus, given $x,y \in \R^{d}$ we get
\begin{align}\label{BGKest}
\left\|\Chi_{x}g^{2} (H_{\bom}) \Chi_{y}\right\|_1 &\le \sum_{z \in \Z^{d}} \norm{\Chi_{x}{g (H_{\bom})} \Chi_{z}}_{2}\norm{\Chi_{z} {g (H_{\bom})} \Chi_{y}}_{2}\\ \notag
& \le C_{1}\sum_{z \in \Z^{d}}  \mathrm{e}^{-\frac 1 2 C^{\pr}_{g}\norm{x-z}^{\zeta}} \mathrm{e}^{-\frac 1 2 C^{\pr}_{g}\norm{z-y}^{\zeta}}\le C_{2}\, {e}^{- C_3 \norm{x-y}^{\zeta}},
\end{align}
where $C_{1}=C_{d,\norm{V^{-}_{\mathrm{per}}},{\alpha_{2}}+\frac \delta 2,g}$ and $C_{2}, C_{3}$ depend only on   $d,\norm{V^{-}_{\mathrm{per}}},I,\I,\zeta$.

 Since   $\zeta>  \frac \beta 2$, the estimate \eqref{decayFermip}  now follows from \eq{P=gf}, \eq{decaykernelas35}, and \eqref{BGKest}.
    \end{proof}

\subsection{Localization in expectation}\label{secMSAexp}

  We will now derive eigenfunctions correlations estimates in expectation  from Theorem~\ref{thmmainev},  and use them to get dynamical localization in expectation, as well as pure point spectrum, finite multiplicity of eigenvalues, etc,  as in \cite{GKsudec}. We do not assume or use the results of Subections~\ref{secMSAas1} and \ref{secMSAas2}.

We recall that we pick $\nu > \frac d 2$, and that $\W_{\bom,x}({E})$ and $\W_{\bom,x,L}({E})$, defined in \eq{defGWx2} and \eq{defGWxL2}, are measurable functions of  $(\bom,{E})$ for each $x\in \R^d$, and satisfy \eq{WW<W}.  

 \begin{theorem}\label{thmsgenSUDEC}  Let $H_{\bom}$ be a generalized Anderson Hamiltonian on  $\L^{2}(\R^{d})$.  
 Let $ \I \subset \R$ be a bounded open interval, for which 
 there is a scale $\cL_{3}$ such that  for all $L\ge \cL_{3}$ and  $x\in \R^{d}$   there exists an event $\cU_{L,x}$ as in Theorem~\ref{thmmainev}.
Then the following holds for all  open intervals
 $I \subset \bar{I} \subset \I$:
\begin{enumerate}

\item
  For all $x,y \in \R^{d}$ we have
  \beq \label{genSUDEC2}
\E \set{\norm{\W_{\bom,x}({E})\W_{\bom,y}({E})}_{\L^\infty(I,\di \mu_{\bom}({E}))}} \le  C \scal{x-y}^{-{\widetilde{p}}d},
\eeq
with a constant $C=C_{d,\widetilde{p},\vartheta,M, \nu, \cL_{3}}$. 

\item For all $x_{0}\in \R^{d}$, $L \ge 1$,      and  $s \in ]0,\frac {\widetilde{p}  d}{\nu}[$ we have
\beq \label{genSUDEC}
\E \set{\norm{\W_{\bom,x_{0}}({E})\W_{\bom,x_{0},L}({E})}^{s}_{\L^\infty(I,\di \mu_{\bom}({E}))} }\le  C L^{-(\widetilde{p}d -s {{\nu}} )},
\eeq
 with a constant $C=C_{d,\widetilde{p},\vartheta,M, \nu, \cL_{3},s}$.
 
 \item For all $x_{0}\in \R^{d}$, $s \in ]0,\frac {\widetilde{p}  d}{\nu}[$ and $r\in [0,{\widetilde{p}}d -s {{\nu}} [$   we have, for $\P$-a.e.\  $\bom$,
\beq \label{sgenSUDEC99}
\norm{\W_{\bom,x_{0}}({E})
\W_{\bom,x_{0},2^{k-1}}({E})}_{\L^\infty(I,\di \mu_{\bom}({E}))}\le C_{\bom,I,s,r}   2^{-k \frac r s}\quad \text{for}\quad k=0,1,2,\ldots .
\eeq
As a consequence, $H_{\bom}$ has pure point spectrum in the  interval  $I$.

  \end{enumerate} 
 \end{theorem}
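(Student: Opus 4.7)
The plan is to reduce all three items to the probabilistic input \eq{EDI9L99} of Theorem~\ref{thmmainev}, combined with the deterministic dominations $\W\le W$ of \eq{WW<W}, the universal bounds \eq{boundGW2}--\eq{boundGWL2} together with their $W$-analogues \eq{boundGW}--\eq{boundGWL}, and the annular comparison \eq{Wconv}. I would first establish (ii), then deduce (i) from it, and finally use Chebyshev with Borel--Cantelli to obtain (iii) and its pure-point consequence. The numerology is clean: the probability of the exceptional event $\Omega\setminus\cU_{L,x_0}$ is at most $L^{-\widetilde{p}d}$, while the a priori bound on the product $\W_{\bom,x_0}\W_{\bom,x_0,L}$ is $2^\nu L^\nu$, so its $s$-th power is integrable precisely when $s<\widetilde{p}d/\nu$.

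For (ii), fix $x_0\in\R^d$ and a scale $L\ge\cL_3$ large enough that $I\subset\I_L$, and split $\Omega=\cU_{L,x_0}\sqcup(\Omega\setminus\cU_{L,x_0})$. On $\cU_{L,x_0}$, \eq{EDI9L99} gives the deterministic estimate $\sup_{E\in I}W_{\bom,x_0}(E)W_{\bom,x_0,L}(E)\le\e^{-\tfrac12 ML^\vartheta}$, which passes to the $\L^\infty(I,d\mu_\bom)$ norm of $\W_{\bom,x_0}\W_{\bom,x_0,L}$ by \eq{WW<W}. On the complement, \eq{boundGW2}--\eq{boundGWL2} yield the universal bound $\W_{\bom,x_0}(E)\W_{\bom,x_0,L}(E)\le 2^\nu L^\nu$. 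Raising to the $s$-th power, taking expectations, and using $\P\{\Omega\setminus\cU_{L,x_0}\}\le L^{-\widetilde{p}d}$ from \eq{cUdesired} gives
\begin{equation*}
\E\bigl\{\|\W_{\bom,x_0}\W_{\bom,x_0,L}\|_{\L^\infty(I,d\mu_\bom)}^{s}\bigr\}\le\e^{-\tfrac{s}{2}ML^\vartheta}+(2^\nu L^\nu)^s L^{-\widetilde{p}d}\le C\,L^{-(\widetilde{p}d-s\nu)},
\end{equation*}
which is \eq{genSUDEC}; finitely many small scales are absorbed into the constant.

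For (i), let $x\ne y$. If $\|x-y\|\ge 1$ pick any integer $L\in[\|x-y\|,2\|x-y\|]$, so that $y\in\bar\La_{2L,L}(x)$, and apply the same decomposition at $x_0=x$. On $\cU_{L,x}$ the `either--or' in \eq{distEIred-} splits into two cases: if $W_{\bom,x}(E)\le\e^{-ML^\vartheta}$ use $W_{\bom,y}(E)\le 2^{\nu/2}$ from \eq{boundGW}, and if $W_{\bom,x,L}(E)\le\e^{-ML}$ use \eq{Wconv} to get $W_{\bom,y}(E)\le 2^\nu L^\nu\e^{-ML}$; in both cases $\sup_{E\in I}W_{\bom,x}(E)W_{\bom,y}(E)\le C\e^{-\tfrac14 ML^\vartheta}$ for $L$ large. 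Off $\cU_{L,x}$ the product is at most $2^\nu$ and the probability at most $L^{-\widetilde{p}d}$. Passing to $\W$ via \eq{WW<W} and integrating yields
\begin{equation*}
\E\bigl\{\|\W_{\bom,x}\W_{\bom,y}\|_{\L^\infty(I,d\mu_\bom)}\bigr\}\le C\,L^{-\widetilde{p}d}\le C'\scal{x-y}^{-\widetilde{p}d},
\end{equation*}
which is \eq{genSUDEC2}; the remaining small-$\|x-y\|$ case is trivial via \eq{boundGW2}.

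For (iii), Chebyshev's inequality applied to \eq{genSUDEC} at $L_k=2^{k-1}$ gives $\P\bigl\{\|\W_{\bom,x_0}\W_{\bom,x_0,2^{k-1}}\|_{\L^\infty(I,d\mu_\bom)}^{s}\ge 2^{-kr}\bigr\}\le C\,2^{-k(\widetilde{p}d-s\nu-r)}$, summable when $r<\widetilde{p}d-s\nu$, and Borel--Cantelli then yields \eq{sgenSUDEC99}. For the pure-point consequence, enlarge the exceptional null set by a countable union over $x_0\in\Z^d$ and fix a good $\bom$. For $\mu_\bom$-a.e.\ $E\in I$ with $\Pb_\bom(E)\ne 0$, any $\phi\in\H_+$ with $\Pb_\bom(E)\phi\ne 0$ produces a generalized eigenfunction of $H_\bom$ that cannot vanish on every unit box centered in $\Z^d$, so some $x_0\in\Z^d$ has $\W_{\bom,x_0}(E)>0$; dividing \eq{sgenSUDEC99} by this positive quantity gives $\W_{\bom,x_0,2^{k-1}}(E)\le C_\bom 2^{-kr/s}$, and combining with \eq{defGWxL2} and a dyadic cover of $\R^d$ by the supports of $\Chi_{x_0,2^{k-1}}$ yields $\sum_k\|\Chi_{x_0,2^{k-1}}\Pb_\bom(E)\phi\|^2<\infty$, hence $\Pb_\bom(E)\phi\in\H$. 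This is \eq{allgeneig} on $I$, so $H_\bom$ has pure point spectrum there. The main obstacle is not conceptual but measurability bookkeeping: one must check that $\|\W_{\bom,x_0}\W_{\bom,x_0,L}\|_{\L^\infty(I,d\mu_\bom)}^s$ is event-measurable in $\bom$ (using the joint measurability of $(\bom,E)\mapsto\Pb_\bom(E)$ recalled after \eq{geneigexp}) so that the expectations in (i) and (ii) are well defined, and that its deterministic domination on $\cU_{L,x_0}$ by the pointwise $\sup_{E\in I}W\cdot W$ is legitimate even though $W$ itself is only defined pointwise and need not be measurable in $\bom$.
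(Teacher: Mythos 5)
Your proposal is correct and follows essentially the same route as the paper's proof: item (ii) is obtained by splitting $\Omega=\cU_{L,x_0}\sqcup(\Omega\setminus\cU_{L,x_0})$, using \eqref{WW<W}, \eqref{EDI9L99}, \eqref{boundGW2}--\eqref{boundGWL2}, and \eqref{cUdesired} exactly as in the paper's display \eqref{genSUDEC3}; item (i) is obtained the same way after choosing $L\asymp\langle x-y\rangle$ and invoking \eqref{Wconv} (the paper simply says it is "proved in a similar way"); and item (iii) follows by summing \eqref{genSUDEC} over dyadic scales and applying Borel--Cantelli, which is what the paper does via \eqref{sgenSUDEC89}. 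The only cosmetic difference is in the last step of (iii): the paper bounds the product $\norm{\Chi_{x_0}\Pb_\bom(E)\phi}\,\norm{\Pb_\bom(E)\phi}$ directly by $\sum_k\norm{\Chi_{x_0}\Pb_\bom(E)\phi}\norm{\Chi_{x_0}^{(k)}\Pb_\bom(E)\phi}$ and applies \eqref{sgenSUDEC99} to each term, which avoids your step of dividing by $\W_{\bom,x_0}(E)$ and locating an $x_0\in\Z^d$ where it is positive; both give the same conclusion, but the paper's version is a touch cleaner and, incidentally, also shows you should take $r>0$ (not $r=0$) for the geometric series to close.
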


 \begin{remark} (ii) and (iii)  hold for any $s \in ]0,2{\widetilde{p}}[$, since in this case we can choose ${\nu}>\frac d 2 $ such that ${\widetilde{p}}d -s{{\nu}}>0$. 
  \end{remark}
  
    We  set
 \begin{align}
{\Chi_{x}\up{k}}={\Chi_{x,2^{k-1}}}\quad \text{and}\quad \W_{\bom,x}\up{k}({E})=\W_{\bom,x,2^{k-1}}({E})\quad \text{for}\quad k\in \N.
\end{align}
We also set 
$\Chi_{x}\up{0}= \Chi_{x}$ and $ \W_{\bom,x}\up{0}({E})=\W_{\bom,x}({E})$ for convenience. Note that
\beq
1 \le \sum_{k=0}^{\infty} \Chi_{x}\up{k}.
\eeq

 \begin{proof}[Proof of Theorem~\ref{thmsgenSUDEC}] We take $L$ sufficiently large to insure that $I \subset \I_{L}$ and we can apply Theorem~\ref{thmmainev}. We will  prove \eq{genSUDEC};  the correlation estimate \eq{genSUDEC2}  is proved in a similar way.
 In this case,
 applying \eq{WW<W}, \eq{EDI9L99}, \eq{boundGW2}, \eq{boundGWL2}, and \eq{cUdesired}, we have
 \begin{align} \notag
&\E \set{\norm{\W_{\bom,x_{0}}({E})\W_{\bom,x_{0},L}({E})}^{s}_{\L^\infty(I,\di \mu_{\bom}({E}))} }\le   \e^{- \frac s 2 { M} L^{{\vartheta}}} \P\set{\Omega } +
2^{s\nu}L^{s {\nu} } \P\set{\Omega\setminus \cU_{L,x_{0}} }\\ \label{genSUDEC3}
&\qquad \qquad \qquad \qquad \qquad  \le   \e^{- \frac s 2 { M} L^{{\vartheta}}} + 2^{s\nu} L^{s {\nu} }  L^{- {\widetilde{p}}d}\le (1+ 2^{{\widetilde{p}} d}) L^{-({\widetilde{p}}d -s {{\nu}} )}.
\end{align}
Using the bounds \eq{boundGW2} and \eq{boundGWL2} we get \eq{genSUDEC} for all $L\ge 1$.

Given  $r\in [0,\widetilde{p}d -s {{\nu}} [$, it follows from \eq{genSUDEC} that 
 \begin{equation}\label{sgenSUDEC89}
\E \set{\norm{ \sum_{k=0}^{\infty}  2^{ k r}
\left[\W_{\bom,x_{0}}({E})
\W_{\bom,x_{0}}\up{k}({E})\right]^s}_{\L^\infty(I,\di \mu_{\bom}({E}))}}
\le C_{d,{\nu},p,s,r} < \infty,
\end{equation}
 and \eq{sgenSUDEC99} is an immediate consequence of \eq{sgenSUDEC89} using the Borel-Cantelli Lemma.  Given $\bom$ for which \eq{sgenSUDEC99} holds  and  $\phi \in \H_{+}$, it follows, using \eq{defGWx2} and \eq{defGWxL2},
 that for
 $\mu_{\bom}$-a.e.\  ${E} \in I$ we have
   \begin{align}\label{SUDEC1}
\norm{ \Chi_{x_{0}} \Pb_{\bom}({E})\phi }\norm{ \Pb_{\bom}({E})\phi }& \le  \sum_{k=0}^{\infty} \set{\norm{ \Chi_{x_{0}} \Pb_{\bom}({E})\phi }\norm{\Chi_{x_{0},k} \Pb_{\bom}({E})\phi }}\\
&  \le  C_{\bom,I,s,r}\pa{1- 2^{- \frac r s}}^{-1} \norm{T_{x_{0}}^{-1}\Pb_{\bom}({E})\phi }^{2}< \infty .  \notag
\end{align}

If $\Pb_{\bom}({E})\phi \not =0$, we have $\norm{ \Chi_{x_{0}} \Pb_{\bom}({E})\phi }\not=0$ for some $x_{0}\in \R^{d}$, and hence  $\norm{ \Pb_{\bom}({E})\phi }<\infty$ by \eq{SUDEC1}, so we conclude that
$ \Pb_{\bom}({E})\phi  \in \H= \L^{2}(\R^{d})$. Thus we have \eq{allgeneig} with $B=I$, 
and we conclude that that  $H_{\bom}$ has pure point spectrum in $I$. \end{proof}

   Since  $H_{\bom}$, as in Theorem~\ref{thmsgenSUDEC},  has pure point spectrum in the interval $\I$ with probability one,   we might as well  work with 
eigenfunctions, not generalized eigenfunctions. We use the notation given in  \eq{defPEx01}.

 \begin{corollary}\label{corlstdynloc}   Let $H_{\bom}$ be a generalized Anderson Hamiltonian satisfying the hypotheses of Theorem~\ref{thmsgenSUDEC} in a bounded open interval $\I$.   Let
 $I \subset \bar{I} \subset \I$ be an open interva and  $s \in ]0,\frac {{\widetilde{p}} d}{\nu}[$.  Then  
  
  \begin{enumerate}
  
  \item For all $x_{0}\in \R^{d}$ and $L\ge 1$ we have
\begin{align} \label{decaynorm1}
 \E\set{\sup_{{E} \in I}\norm{ \Chi_{x_{0},L} P_{\bom}({E}) \Chi_{x_{0}} }_{1}^{s}}  \le C_{1} \E\set{\sup_{{E} \in I}\norm{ \Chi_{x_{0},L} P_{\bom,x_{0}}({E}) \Chi_{x_{0}} }_{1}^{s}} \le  C_{2} L^{-({\widetilde{p}}d -s{{\nu}} )},
\end{align} 
 with $C_{1}=C_{1,d,{\nu},\norm{V^{-}_{\mathrm{per}}},I,s}$ and  $C_{2}=C_{2,d,\norm{V^{-}_{\mathrm{per}}},\widetilde{p},\vartheta,M, \nu, \cL_{3},I,s}$. 
 \item We have 
\begin{align}\label{decayWP}
\E \left\{\sup_{{E} \in I}\pa{ \norm{\Chi_{x_{0}} P_{\bom}({E})}_{2}^{2}
\left(\tr P_{\bom}({E})\right) }
^{\frac s 2}\right\} 
< \infty,
\end{align}
 and hence 
 for $\P$-a.e.\  $\bom$ the eigenvalues of  $H_{\bom}$ in $I$
are of finite multiplicity.

\end{enumerate}

\end{corollary}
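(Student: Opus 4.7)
The plan is to derive both parts of the corollary from the moment bound \eq{genSUDEC} furnished by Theorem~\ref{thmsgenSUDEC}(ii), with the bridge between pointwise projector estimates and the $L^\infty(d\mu_\bom)$-norm appearing in \eq{genSUDEC} supplied by the pure point spectrum of $H_\bom$ in $\I$ (Theorem~\ref{thmsgenSUDEC}(iii)). With probability one $\mu_\bom|_I$ is then purely atomic, supported exactly on the eigenvalues of $H_\bom$ in $I$, so for any jointly measurable function vanishing outside this atomic support the $\sup_{E\in I}$ coincides with $\norm{\cdot}_{L^\infty(I,d\mu_\bom)}$.

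For part~(i), Cauchy--Schwarz in the trace ideals together with \eq{xPEW} and \eq{xPEWL} yields
\[
\norm{\Chi_{x_0,L} P_\bom(E)\Chi_{x_0}}_{1} \le \norm{\Chi_{x_0,L}P_\bom(E)}_{2}\norm{\Chi_{x_0}P_\bom(E)}_{2} \le \W_{\bom,x_0,L}(E)\,\W_{\bom,x_0}(E)\,\mu_{\bom,x_0}(E).
\]
Dividing by $\mu_{\bom,x_0}(E)$ and using \eq{defPEx01} turns this into $\norm{\Chi_{x_0,L}P_{\bom,x_0}(E)\Chi_{x_0}}_{1} \le \W_{\bom,x_0,L}(E)\W_{\bom,x_0}(E)$. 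Raising to the $s$-th power, taking $\sup_{E\in I}$ (which reduces to $\norm{\cdot}_{L^\infty(I,d\mu_\bom)}$ by the previous paragraph), and invoking \eq{genSUDEC} produces the second inequality of \eq{decaynorm1}. The first inequality is immediate from $\norm{\Chi_{x_0,L}P_\bom(E)\Chi_{x_0}}_{1} = \mu_{\bom,x_0}(E)\,\norm{\Chi_{x_0,L}P_{\bom,x_0}(E)\Chi_{x_0}}_{1}$ combined with the uniform bound $\mu_{\bom,x_0}(E) \le C_{d,\nu,\norm{V^-_{\mathrm{per}}},\sup I}$ from \eq{trestmeas99}.

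For part~(ii), I decompose $\tr P_\bom(E) = \sum_{y\in\Z^d}\norm{\Chi_y P_\bom(E)}_{2}^{2}$ and dominate it by the annular cover $\tr P_\bom(E) \le \sum_{k\ge 0}\norm{\Chi_{x_0}\up{k} P_\bom(E)}_{2}^{2}$ (using $\sum_{k\ge 0}\Chi_{x_0}\up{k}\ge 1$ pointwise, as recorded in Subsection~\ref{secMSAexp}). Applying \eq{xPEW}--\eq{xPEWL} to each factor and using the uniform bound on $\mu_{\bom,x_0}(E)$ gives, for $0<s\le 2$ and by the subadditivity $(\sum a_k)^{s/2}\le \sum a_k^{s/2}$,
\[
\bigl(\norm{\Chi_{x_0}P_\bom(E)}_{2}^{2}\,\tr P_\bom(E)\bigr)^{s/2} \le C \sum_{k\ge 0}\bigl(\W_{\bom,x_0}(E)\,\W_{\bom,x_0}\up{k}(E)\bigr)^{s}.
\]
Taking $\sup_{E\in I}$, then expectation, and applying \eq{genSUDEC} at each scale $2^{k-1}$ (with the trivial bound $\W_{\bom,x_0}^{2}\le 2^\nu$ handling $k=0$) yields a geometric series in $2^{-(k-1)(\widetilde{p}d-s\nu)}$, summable precisely because $s<\widetilde{p}d/\nu$; this establishes \eq{decayWP}. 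Finite multiplicity is then immediate: for $\P$-a.e.\ $\bom$ and every $x_0\in\Z^d$ the supremum is finite, so if some eigenvalue $E\in I$ satisfied $\tr P_\bom(E)=\infty$ one would need $\norm{\Chi_{x_0}P_\bom(E)}_{2}=0$ for all $x_0\in\Z^d$, hence $P_\bom(E)=0$, contradicting that $E$ is an eigenvalue.

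The principal technical wrinkle is the reduction of $\sup_{E\in I}$ from an uncountable set to $\norm{\cdot}_{L^\infty(d\mu_\bom)}$, which rests on the joint measurability of $\W_{\bom,\cdot}(E)$ and $\W_{\bom,\cdot,L}(E)$ in $(\bom,E)$ noted in the paper together with the atomicity of $\mu_\bom|_I$ from pure point spectrum. In the borderline range $s>2$ (which would require $\widetilde{p}>1$) the subadditivity step should be replaced by a weighted Hölder estimate, but no new ingredient beyond \eq{genSUDEC} is needed.
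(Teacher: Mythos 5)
Your proof is correct and follows essentially the same route as the paper: Cauchy--Schwarz in the trace ideals plus \eq{xPEW}--\eq{xPEWL} and \eq{trestmeas99} for part~(i), and the annular decomposition of $\tr P_{\bom}(E)$ with the subadditivity of $t\mapsto t^{s/2}$ for part~(ii), both fed into \eq{genSUDEC}. Your explicit discussion of why $\sup_{E\in I}$ reduces to the $L^\infty(I,\di\mu_\bom)$-norm (atomicity of $\mu_\bom|_I$ plus the vanishing of the integrand off the eigenvalues) makes precise a point the paper leaves implicit; the caveat about $s>2$ is moot here since $s<\widetilde{p}d/\nu<2\widetilde{p}<2$.
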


\begin{proof}  
Recalling \eq{xPEW} and \eq{xPEWL}, we have  
\begin{align}
\norm{\Chi_{x_{0},L}P_{\bom}({E}) \Chi_{x_{0}} }_{1}&\le   \norm{\Chi_{x_{0}}P_{\bom}({E})  }_{2}\norm{\Chi_{x_{0},L}P_{\bom}({E}) }_{2}\\  & \le \mu_{\bom,x_{0}}({E}) \W_{\bom,x_{0}}({E}) \W_{\bom,x_{0},L}({E}) ,\notag
\end{align}
and  \eq{decaynorm1} follows from   \eq{genSUDEC} and \eq
{trestmeas99}. 

In addition, we have
\begin{align}
\pa{\norm{\Chi_{x_{0}} P_{\bom}({E})}_{2}^{2}
\left(\tr P_{\bom}({E})\right) }^{\frac s 2}& \le  \sum_{k=0}^{\infty} \set{\norm{\Chi_{x_{0}} P_{\bom}({E})}_{2}\norm{\Chi_{x_{0}}\up{k} P_{\bom}({E})}_{2}}^{s}\\ \notag
&  \le \set{ \sum_{k=0}^{\infty}\set{ \W_{\bom,x_{0}}({E}) \W_{\bom,x_{0}}\up{k}({E})}^{s}} \set{\mu_{\bom,x_{0}}({I})}^{s},
\end{align}
so \eq{decayWP} follows from   \eq{genSUDEC}, and  \eq{trestmeas99}.

Since for $\P$-a.e.\  $\bom$ the operator $H_{\bom}$ has pure point spectrum in the  interval  $I$,   it follows from \eq{decayWP} that for $\P$-a.e.\  $\bom$ we have
\beq
\norm{\Chi_{x_{0}} P_{\bom}({E})}_{2}^{2}
\left(\tr P_{\bom}({E})\right) <\infty \quad  \text{for all} \quad {E} \in I,
\eeq
and hence, since $\Chi_{x_{0}} P_{\bom}({E})\not=0$ for some $x_{0}\in \R^{d}$ if $P_{\bom}({E})\not=0$, we have
$\tr P_{\bom}({E})<\infty $ for all ${E} \in I$.
\end{proof}

We can now prove dynamical  localization in expectation. 

\begin{corollary} \label{coroldynlocexp}  Let $H_{\bom}$ be a generalized Anderson Hamiltonian satisfying the hypotheses of Theorem~\ref{thmsgenSUDEC} in a bounded open interval $\I$. 
The following holds for all $x_{0}\in \R^{d}$ and    open intervals
 $I \subset \bar{I} \subset \I$:
\begin{enumerate}

\item  For all  $L \ge 1$      and  $s \in ]0,\frac {{\widetilde{p}}  d}{\nu}[$  we have
\begin{gather}\label{decaykernel}
\E\set{\sup_{f \in \cB_{b,1}}    \norm{\Chi_{x_{0},L}\ f(H_{\bom})P_{\bom}(I)
\Chi_{x_{0}} }_{1}^{s}      } \le  C L^{-({\widetilde{p}}d -s {{\nu}} )},\\
\label{decayFermipexp}
\E\set{\sup_{E \in I}\norm{\Chi_{x_{0},L}P_{\bom}\up{E} \Chi_{x_{0}}}_{1}^{s}} \le   C L^{-({\widetilde{p}} d -s{{\nu}} )}, 
\end{gather}

with $C=C_{d,\norm{V^{-}_{\mathrm{per}}},\widetilde{p},\vartheta,M, \nu, \cL_{3},I,s}$. 

\item Given $b >0$, for all 
 $s \in \left]0,\frac {{\widetilde{p}}}{ b +\frac 1 2  }\right[$  we have 
\begin{gather} \label{HSdynloc}
\E\set{\sup_{f \in \cB_{b,1}}    \norm{\scal{X-x_{0}}^{bd} f(H_{\bom})P_{\bom}(I)
\Chi_{x_{0}} }_{1}^{s}      } \le  C< \infty,\\
 \label{HSdynloc2}
\E\set{\sup_{t \in \R}    \norm{\scal{X-x_{0}}^{bd} \e^{-itH_{\bom}} P_{\bom}(I)
\Chi_{x_{0}} }_{1}^{s}      } \le  C <\infty,\\
\E\set{\sup_{E \in I}    \norm{\scal{X-x_{0}}^{bd}P_{\bom}\up{E}
\Chi_{x_{0}} }_{1}^{s}      } \le  C< \infty, \label{decayFermipexp5}
\end{gather}
with $C=C_{d,\norm{V^{-}_{\mathrm{per}}},\widetilde{p},\vartheta,M, \nu, \cL_{3},I,b,s}$.

\end{enumerate}
\end{corollary}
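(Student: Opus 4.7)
The plan is to deduce Corollary~\ref{coroldynlocexp} from Corollary~\ref{corlstdynloc} by combining the generalized eigenfunction expansion \eqref{PIexpas} with a dyadic decomposition of $\scal{X-x_0}^{bd}$.

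For \eqref{decaykernel}, the hypotheses of Theorem~\ref{thmsgenSUDEC} ensure that $H_{\bom}$ has pure point spectrum in $I$ with probability one, so the expansion \eqref{PIexpas} gives $f(H_{\bom})P_{\bom}(I)=\int_I f(E)\,P_{\bom,x_0}(E)\,\di \mu_{\bom,x_0}(E)$. For $f\in\cB_{b,1}$ the pointwise bound $\abs{f}\le 1$ inside the Bochner integral produces
\[
\norm{\Chi_{x_0,L}f(H_{\bom})P_{\bom}(I)\Chi_{x_0}}_1\le \mu_{\bom,x_0}(I)\sup_{E\in I}\norm{\Chi_{x_0,L}P_{\bom,x_0}(E)\Chi_{x_0}}_1,
\]
and since $\mu_{\bom,x_0}(I)$ is deterministically bounded by \eqref{trestmeas99}, taking $s$-th moments reduces the claim directly to \eqref{decaynorm1}. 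To obtain \eqref{decayFermipexp} I would mimic the proof of \eqref{decayFermip}: fix an open $I_1$ with $\overline{I}\subset I_1\subset \overline{I_1}\subset \I$, pick an $\mathrm{L}^1$-Gevrey cutoff $g$ of class $1/\zeta'$ with $\vartheta/2<\zeta<\zeta'<1$, and decompose $P_{\bom}^{(E)}=g^2(H_{\bom})+f_E(H_{\bom})$ with $f_E\in\cB_{b,1}$ and $f_E(H_{\bom})=f_E(H_{\bom})P_{\bom}(I_1)$. The $f_E$ contribution is absorbed by applying \eqref{decaykernel} on $I_1$, while the deterministic $g^2$ contribution is handled exactly as in \eqref{estg1}--\eqref{BGKest} via \cite[Theorem~1.4]{BGK} and \eqref{trestmeas9988}, giving subexponential decay in $L$ that is easily absorbed into $C L^{-(\widetilde{p}d-s\nu)}$.

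For part (ii), I would dyadically decompose $\scal{X-x_0}^{bd}\le C_b\sum_{k=0}^{\infty}2^{kbd}\Chi_{x_0}\up{k}$, with $\Chi_{x_0}\up{0}=\Chi_{x_0}$ and $\Chi_{x_0}\up{k}=\Chi_{x_0,2^{k-1}}$ for $k\ge 1$. Bounding the trace-class norm on the left of \eqref{HSdynloc} by the corresponding series, taking $\sup_{f\in\cB_{b,1}}$ inside the sum, and combining Minkowski's inequality in $\mathrm{L}^s(\P)$ (or subadditivity of $t\mapsto t^s$ when $s\le 1$) with \eqref{decaykernel} applied at scale $L=2^{k-1}$ leaves a geometric series that converges exactly when $s(bd+\nu)<\widetilde{p}d$. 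Given the target $s<\widetilde{p}/(b+\tfrac12)$, one may choose the free parameter $\nu>d/2$ sufficiently close to $d/2$ so that $\nu/d<(\widetilde{p}-sb)/s$, yielding \eqref{HSdynloc}. Then \eqref{HSdynloc2} follows by plugging $f_t(s)=\e^{-its}\in\cB_{b,1}$ into \eqref{HSdynloc}, and \eqref{decayFermipexp5} follows by the same dyadic argument applied to \eqref{decayFermipexp} in place of \eqref{decaykernel}.

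The only delicate aspect I anticipate is bookkeeping around the free parameter $\nu$: although $\nu$ is fixed at the outset of the section, the target range in (ii) is attained only as $\nu\downarrow d/2$, so for each pair $(b,s)$ with $s<\widetilde{p}/(b+\tfrac12)$ one must re-invoke Theorem~\ref{thmsgenSUDEC} and Corollary~\ref{corlstdynloc} with a suitably chosen $\nu$, exactly in the spirit of the remark following Theorem~\ref{thmsgenSUDEC}.
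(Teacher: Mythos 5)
Your proposal is correct and takes essentially the same route as the paper: \eqref{decaykernel} via \eqref{PIexpas} and \eqref{decaynorm1}, \eqref{decayFermipexp} via the Gevrey decomposition \eqref{P=gf} together with \eqref{BGKest} and \eqref{decaykernel}, and part~(ii) via the dyadic decomposition of $\scal{X-x_0}^{bd}$ with $\nu$ re-chosen so that $s<\widetilde{p}/(b+\nu/d)$, which is exactly the paper's choice.
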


\begin{proof}  Given $f \in \cB_{b,1}$, it follows from  \eq{PIexpas} that
\begin{align}
 \norm{\Chi_{x_{0},L}f(H_{\bom})P_{\bom}(I)
\Chi_{x_{0}} }_{1}&  \le \int_{I}\abs{f({E})} \norm{\Chi_{x_{0},L}  {P}_{\bom,x_{0}}({E}) \Chi_{x_{0}} }_{1}  \di \mu_{\bom,x_{0}}({E})
\\
&  \le \sup_{E \in I} \norm{\Chi_{x_{0},L}P_{\bom,x_{0}}({E}) \Chi_{x_{0}}}_{1} \mu_{\bom,x_{0}}(I),  \notag
\end{align}
and hence \eq{decaykernel} is an immediate consequence of \eq{decaynorm1}.

The estimate \eq{decayFermipexp} is proven similarly to  \eq{decayFermip}.  We introduce 
 the decomposition $P_{\bom}^{(E)}=  g^{2}(H_{\bom}) + f_E(H_{\bom})$ 
  as in  \eq{P=gf}, and \eq{decayFermipexp} follows from   \eq{P=gf}, \eq{decaykernel}, and \eqref{BGKest}.

Given $b>0$ and  $s \in \left]0,\frac {{\widetilde{p}}}{ b +\frac 1 2  }\right[$, we pick $\nu > \frac d 2$ such that
 $s \in \left]0,\frac {{\widetilde{p}}}{ b +\frac \nu d  }\right[$.
Since
\begin{align}
 \norm{\scal{X-x_{0}}^{bd} f(H_{\bom})P_{\bom}(I) \Chi_{x_{0}} }_{1}\le C_{d,b}
 \sum_{k=0}^{\infty} 2^{kbd}  \norm{\Chi_{x_{0}}\up{k} f(H_{\bom})P_{\bom}(I)
\Chi_{x_{0}} }_{1},
\end{align}
 the estimate \eq{HSdynloc} follows from \eq{decaykernel};
  \eq{HSdynloc2} is a special case of \eq{HSdynloc}. Similarly,  \eq{decayFermipexp5} follows from \eq{decayFermipexp}.
 \end{proof}

\section{Log-H\" older continuity of the integrated density of states}\label{seclogH}
We will now assume that the conclusions of the multiscale analysis  (i.e., of Proposition~\ref{propA}) hold for all energies in   a bounded open interval $ \I $, and  prove log-H\" older continuity of the integrated density of states.

Given a generalized Anderson Hamiltonian $H_{\bom}$ and $x_0 \in \R^d$, we set
\beq
N_{x_0}(E) = \E \tr \set{\Chi_{x_0}P_{\bom}^{(E)} \Chi_{x_0}} \qtx{for} E \in \R.
\eeq

\begin{theorem}\label{logHolder} Let $H_{\bom}$ be a generalized Anderson Hamiltonian on  $\L^{2}(\R^{d})$.  
Consider  a bounded open interval $ \I \subset \R$,
 $m>0$,  $p>0 $, and   $ \vs \in]0,1[ $, and  assume there is a scale $\cL$ such that all scales $L\ge\cL$ are $(E,m,\varsigma, p)$-good for all energies $E \in \I$. Then, for all $0<\widehat{p}<p$,  closed interval $I\subset \I$ with length $\abs{I}\le \frac 1 2$, and $x_0 \in \R^d$, 
 we have
  \beq\label{Nxest}
\abs{N_{x_0}(E_{2})-N_{x_0}(E_{1})} \le \frac {C_{\widehat{p},I}} {\abs {\log \abs{E_{2}-E_{1}} }^{\widehat{p}d}} \qtx{for all} E_{1},E_{2} \in I.
\eeq
\end{theorem}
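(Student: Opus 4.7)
The plan is to transfer the Wegner-type estimate implicit in the multiscale analysis hypothesis---namely, for each $E \in \I$, $\P\{\La_L(x_0) \text{ is }(\bom,E,m,\vs)\text{-good}\} \ge 1 - L^{-pd}$, which on the good event forces $\dist(E,\sigma(H_{\bom,\La_L(x_0)})) \ge \e^{-L^{1-\vs}}$---into regularity of the local integrated density of states $N_{x_0}$, using the generalized eigenfunction bounds of Section~\ref{secpreambleloc}.

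Given $E_1 < E_2$ in $I$, I set $E := (E_1+E_2)/2 \in \I$ and $\eta := (E_2-E_1)/2$, and choose $L := \lfloor |\log\eta|/(2m)\rfloor$ so that $\e^{-2mL} \ge \eta$. On the good event $\cG := \{\La_L(x_0)\text{ is }(\bom,E,m,\vs)\text{-good}\}$, Lemma~\ref{lemjgood} promotes $\La_L(x_0)$ to an $(\bom,E',m,\vs)$-jgood box at every real $E' \in [E_1,E_2]$, so Lemma~\ref{lemgoodW} applied with $y=x_0$ yields $W_{\bom,x_0}(E') \le \e^{-mL/15}$ uniformly in $E' \in [E_1,E_2]$; in particular, by \eqref{WW<W}, the same bound holds for $\W_{\bom,x_0}(E')$ at $\mu_\bom$-a.e.\ $E'$ in this interval. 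I then split
\[
N_{x_0}(E_2) - N_{x_0}(E_1) = \E\bigl(\tr\{\Chi_{x_0} P_\bom([E_1,E_2])\Chi_{x_0}\}\,\Chi_{\cG}\bigr) + \E\bigl(\tr\{\Chi_{x_0} P_\bom([E_1,E_2])\Chi_{x_0}\}\,\Chi_{\cG^c}\bigr).
\]
The second summand is bounded by $C_I\,\P(\cG^c) \le C_I L^{-pd}$ using the deterministic bound $\tr\Chi_{x_0} P_\bom(\I)\Chi_{x_0} \le C_I$ extracted from \eqref{trestmeas9988}. For the first, I will use the Bochner-integral representation $\imath_- P_\bom(B)\imath_+ = \int_B \Pb_\bom(E')\,d\mu_\bom(E')$ of the generalized eigenfunction expansion to establish
\[
\|\Chi_{x_0} P_\bom(B)\|_2^2 \;\le\; \Bigl(\sup_{E' \in B}\W_{\bom,x_0}(E')\Bigr)^{\!2}\mu_{\bom,x_0}(B),
\]
which on $\cG$ gives $\tr\Chi_{x_0} P_\bom([E_1,E_2])\Chi_{x_0} \le C_I\,\e^{-2mL/15}$, an exponentially small (in $L$) contribution. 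Summing the two terms and substituting $L \sim |\log\eta|/(2m)$ yields $N_{x_0}(E_2)-N_{x_0}(E_1) \le C_I'\, |\log|E_2-E_1||^{-pd}$, which implies \eqref{Nxest} for any $\widehat{p} \in (0,p)$.

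The main obstacle will be rigorously justifying the displayed upper bound on $\|\Chi_{x_0} P_\bom(B)\|_2^2$ from the generalized eigenfunction expansion without \emph{a priori} invoking pure-point spectrum in $I$. In the pure-point case it reduces to $\sum_{E_j \in B}\|\Chi_{x_0}\psi_j\|^2 \le \sup_B\W_{\bom,x_0}^2 \cdot \sum_{E_j\in B}\|T_{x_0}^{-1}\psi_j\|^2 = \sup_B\W_{\bom,x_0}^2\cdot\mu_{\bom,x_0}(B)$; in general it requires careful manipulation of the Bochner integral inside the Gelfand triple $\H_+\subset\H\subset\H_-$, exploiting that $\Pb_\bom(E')$ sends $\H_+$ into $\widetilde{\Theta}_\bom(E')$ and adapting the argument behind \eqref{xPEW}--\eqref{xPEWL}. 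A minor loose end is that the requirement $L \ge \cL$ imposes $|E_2-E_1| \le 2\e^{-2m\cL}$, but larger spacings are absorbed into the constant $C_{\widehat{p},I}$ via the trivial estimate $|N_{x_0}(E_2)-N_{x_0}(E_1)| \le C_I$.
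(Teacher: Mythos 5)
Your strategy is genuinely different from the paper's proof. The paper avoids the generalized eigenfunction expansion entirely: it introduces a smooth bump $h_L$ of width $\e^{-L^{1-\vs}}$ around $E$, observes that on the event that $\Lambda_L(x_0)$ is $(\bom,E,m,\vs,\eta)$-pgood one has $h_L(H_{\bom,\Lambda_L})=0$ (since the resolvent bound in \eqref{weg} forces $\dist(E,\sigma(H_{\bom,\Lambda_L}))\ge\e^{-L^{1-\vs}}$), and then bounds $\Chi_{x_0}\{h_L(H_\bom)-h_L(H_{\bom,\Lambda_L})\}\Chi_{x_0}$ in trace norm via the geometric resolvent identity and a Helffer--Sj\"ostrand formula with a tuned quasi-analytic extension parameter $a\approx\e^{L^{1-\vs}}$; the pgood machinery (Lemmas~\ref{lemprobpgood} and \ref{lempggodtogood}) is needed to make this uniform in energy over $\widetilde I_L$. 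Your route instead sets $L\approx|\log|E_2-E_1||/(2m)$, uses Lemma~\ref{lemjgood} to make $\Lambda_L(x_0)$ jgood at \emph{all} $E'\in[E_1,E_2]$ on the good event, and then funnels the resulting bound $W_{\bom,x_0}(E')\le\e^{-mL/15}$ from Lemma~\ref{lemgoodW} directly into a Hilbert--Schmidt bound on $\Chi_{x_0}P_\bom([E_1,E_2])$ through the generalized eigenfunction expansion. If completed, this gives the exponent $pd$ exactly, which is somewhat cleaner than the paper's $\frac{(p-\eta)d}{(1+\eta)(1-\vs)}$; both imply the stated $\widehat p d$ for $\widehat p<p$. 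What the paper's approach buys is independence from the generalized eigenfunction machinery: it proves regularity of $N_{x_0}$ purely from finite-volume resolvent estimates, which is precisely the ``infinite volume trace of the Wegner estimate'' point of view emphasized in the introduction.

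The gap you flag is real and is the crux. The paper establishes only the point-mass version (\eqref{xPEW}--\eqref{xPEWL}); the inequality you need for a general Borel set $B$,
$$
\tr\set{\Chi_{x_0}P_\bom(B)\Chi_{x_0}}\;\le\;\Bigl(\esss_{E'\in B}\W_{\bom,x_0}(E')\Bigr)^{2}\,\mu_{\bom,x_0}(B),
$$
does \emph{not} follow from the operator-norm information $\norm{\Chi_{x_0}\Pb_\bom(E')\phi}\le\W_{\bom,x_0}(E')\norm{T_{x_0}^{-1}\Pb_\bom(E')\phi}$ alone (one cannot pass from $\norm{A\phi}\le c\norm{B\phi}$ to the corresponding trace bound without more structure). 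To close it you must first write, via \eqref{geneigexp},
$$
\tr\set{\Chi_{x_0}P_\bom(B)\Chi_{x_0}}=\int_B\tr\set{\Chi_{x_0}\Pb_\bom(E')\Chi_{x_0}}\,d\mu_\bom(E'),\qquad
\mu_{\bom,x_0}(B)=\int_B\tr\set{T_{x_0}^{-1}\Pb_\bom(E')T_{x_0}^{-1}}\,d\mu_\bom(E'),
$$
and then, for $\mu_\bom$-a.e.\ $E'$, diagonalize the positive trace-class operator $T_-^{-1}\Pb_\bom(E')T_+^{-1}=\sum_j\lambda_j\ket{\phi_j}\bra{\phi_j}$ on $\H$ (positivity follows from the Radon--Nikodym construction, but this is taken from \cite{KKS} and is nowhere stated in the paper). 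Setting $\psi_j=T\phi_j\in\Pb_\bom(E')\H_+\subset\widetilde\Theta_\bom(E')$, this yields
$$
\tr\set{\Chi_{x_0}\Pb_\bom(E')\Chi_{x_0}}=\sum_j\lambda_j\norm{\Chi_{x_0}\psi_j}^2
\le W_{\bom,x_0}(E')^2\sum_j\lambda_j\norm{T_{x_0}^{-1}\psi_j}^2
= W_{\bom,x_0}(E')^2\,\tr\set{T_{x_0}^{-1}\Pb_\bom(E')T_{x_0}^{-1}},
$$
which, combined with \eqref{WW<W} and the uniform bound from Lemma~\ref{lemgoodW}, gives what you need. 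So the argument is salvageable, but this spectral decomposition of $\Pb_\bom(E')$ is genuine extra work the paper does not carry out and that you should make explicit rather than leave as a ``minor loose end.''
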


The proof of this theorem will use the 
  Helffer-Sj\"ostrand formula  (see  \cite[Section~2.2]{Da} and  \cite[Appendix B]{HuS} for details).   Given  $ g\in C^\infty(\R)$, $n\in \N$, and $a>0$, we define a quasi-analytic extension of $g$ of order $n$ by
  \begin{equation}\label{quasianalytic}
\tilde{g}_{n,a}(z) : = \left\{ \sum_{r=0}^n  \frac 1 {r!} g^{(r)}(u) (iv)^r \right\}
\xi\pa{ \frac {av} {\scal{u}}} ,
\end{equation}
where $z= u + i v$, $\scal{u}= (1 + \abs{u}^{2})^{\frac 1 2}$, and  $\xi \in C^\infty(\R)$  such that $0\le\xi \le 1$,  $\xi(u)=1$ if $\abs{u}\le 1$,   $\xi(u)=0$ if $\abs{u}\ge 2$. (We choose and fix $\xi$.) We set  $\di \tilde{g}_{n,a}(z) := \frac 1 {2\pi}\partial_{\bar{z}}\tilde{g}_{n,a}(z)
\,\mathrm{d} u\, \mathrm{d} v $, with $\partial_{\bar{z}}= \partial_u + i
\partial_v$, and $|\di \tilde{g}_{n,a}(z)| := \frac 1 {2\pi}
|\partial_{\,\overline{z}}\tilde{g}_{n,a}(z)| \,\mathrm{d} u\, \mathrm{d} v$.  Proceeding as in the derivation of  \cite[Eq.~(B.8)]{HuS}, we get, for all $n\in \N$, $a>0$, and $s \in [0,n]$,
\begin{equation}\label{HShigherorder}
  \int_{\R^{2}} \! |\di \tilde{g}_n(z)| \abs{\Im  z}^{-(s+1)}  \le C_{n,s}
  \hnorm{g}_{n,s,a}\le C_{n,s} \max\set{a^{s+1}, a^{s-n}} \hnorm{g}_{n}, 
\end{equation}
with
\beq\label{hnormg}
\hnorm{g}_{n,s,a}:=  \sum_{r=0}^{n+1}a^{-(r-s-1)} \int_{\mathbb{R}}\!\mathrm{d}u \scal{u}^{r-s -1}
  |g^{(r)}(u)|; \; \hnorm{g}_{n}=\hnorm{g}_{n,0,1}.
\eeq
  In particular, if  $ \hnorm{g}_n < \infty$,  then for any self-adjoint operator
$K$ and $a>0$ we have
\begin{equation}\label{HS}
 g (K) = \int_{\R^{2}} \!\di \tilde{g}_{n,a}(z) \, (K-z)^{-1} ,
\end{equation}
where the integral converges absolutely in operator norm.

\begin{remark} In the usual  Helffer-Sj\"ostrand formula there is no   parameter  $a$ in the definition of the quasi-analytic extension, i.e.,   $a=1$ in  \eq{quasianalytic} (e.g., \cite{Da,HuS}).  The proof of  Theorem~\ref{logHolder}   requires  the insertion of the  parameter  $a$ in \eq{quasianalytic}, which is then chosen  according to  the scale $L$--we will  need  $a\approx \e^{L^{1-\vs}}$.
\end{remark}

\begin{proof}[Proof of Theorem~\ref{logHolder}] Let $\eta \in ]0,p[$  and $I\subset \I$ be a closed interval with length $\abs{I}\le \frac 1 2$.
Without loss of generality we assume $\eta > \frac \vs {1-\vs}$.
We consider scales  $L\ge\cL$  such that $\dist (I,\R\setminus \I)>\frac 1 2\e^{- L^{1- \vs}}$.  Let $I_L\subset I $ be a closed interval of length $\abs{I_L}=\e^{- L^{1- \vs}}$, so it can be written as $I_L= \br{E - \frac 1 2\e^{- L^{1- \vs}}, E + \frac 1 2\e^{- L^{1- \vs}}}$ with $E \in I$. Set
$\widetilde{I}_L=  \br{E - \e^{- L^{1- \vs}}, E + \e^{- L^{1- \vs}}}\subset \I$.  We fix  $h_L \in C^\infty (\R)$,  $0\le h_L\le 1$, such that 
\beq \label{hL}
 \supp h_L \subset \widetilde{I}_L,\;   h_L \Chi_{I_L}=\Chi_{I_L},\; \text{and}\;  
\abs{h_L\up{j}}\le  C_d\, \e^{ j L^{1- \vs}}\; \text{for}\; j=1,2,\ldots, d+2,
\eeq
with  $C_d$  a constant independent of $L$. 

Given  $x_0\in \R^d$, 
we let $\cY_{L}=\cY_{L,x_0}$ be the event that the box $\La_L=\Lambda_L(x_0)$
is  $(\bom,E,m,\varsigma,\eta)$-pgood (cf. Definition~\ref{defpgood}).  Since all large scales $L\ge\cL$ are $(E,m,\varsigma, p)$-good by hypothesis,  we have, using \eq{probxgood} and \eq{trestmeas}, that
\begin{align}\label{PtohL}
 \E \tr \set{\Chi_{x_0}P_{\bom}(I_L)\Chi_{x_0}}&\le  \E \tr \set{\Chi_{x_0} h_L (H_{\bom})\Chi_{x_0}}\\ \notag
 &  \le  \E\set{ \tr \set{\Chi_{x_0} h_L (H_{\bom})\Chi_{x_0}; \, \cY_{L}} }+  C_{\I} L^{-\frac {p-\eta}{1+\eta}d}, 
 \end{align}
with a constant  $C_{\I}=C_{d,{\nu},\norm{V^{-}_{\mathrm{per}}},\sup \I}$.

If $\bom \in\cY_{L}$, $\Lambda_L$ is $(\bom,E,{M_{1}},\varsigma)$-good by Lemma~\ref{lempggodtogood}  (with $M_1$ given in  \eq{massM1}), and hence   $h_L (H_{\bom,\La_L})=0$.  Thus, 
\begin{align}\label{preHS}
\tr \set{\Chi_{x_0} h_L (H_{\bom})\Chi_{x_0}}= \tr \set{\Chi_{x_0} h_L (H_{\bom})\Chi_{x_0} - \Chi_{x_0} h_L (H_{\bom,\La_L})\Chi_{x_0}} \qtx{for} \bom \in \cY_L.
\end{align}

The  right-hand-side of \eq{preHS} may now be estimated by the 
  Helffer-Sj\"ostrand formula.  We apply the  Helffer-Sj\"ostrand formula to $h_L(H_{\bom})$ and  $h_L (H_{\bom,\La_L})$, with $a\ge 1$ in \eq{quasianalytic} to be chosen later depending on $L$.  We take $\phi_0 \in C_c^\infty(\R)$, such that $0\le \phi_0 \le1$,  $\phi_0=1$ on  $\Lambda_{\frac L 2 }(x_0)$, and $\supp \phi_0 \subset \Lambda_{\frac L 2  + 10}(x_0)$. We have, with $n\in \N$ to be chosen later (we omit $n$ and $a$ from the notation),
\begin{align}\label{TLambda}
T_{\bom}^{L}& =T_{\bom}^{L,x_0}:=\Chi_{x_0} h_L (H_{\bom})\Chi_{x_0} - \Chi_{x_0} h_L (H_{\bom,\La_L})\Chi_{x_0}\\ \notag
&= \int_{\R^{2}} \!\di\widetilde{h_L}(z) \set{ \Chi_{x_0} R_{\bom} (z) \Chi_{x_0} -  \Chi_{x_0} R_{\bom,\La_L}(z)  \Chi_{x_0} } \\ \notag
& =\int_{\R^{2}} \!\di\widetilde{h_L}(z) \set{ \Chi_{x_0} R_{\bom} (z) \phi_0\Chi_{x_0} -  \Chi_{x_0} \phi_0 R_{\bom,\La_L}(z)  \Chi_{x_0} }  \\ 
& =   \int_{\R^{2}} \!\di\widetilde{h_L}(z)  \set{ \Chi_{x_0} R_{\bom} (z)W(\phi_0)R_{\bom,\La_L}(z)  \Chi_{x_0} }  , \notag
\end{align}
where we used the geometric resolvent identity as in \eq{geometricresolvent}.

We now pick functions $\phi_i \in C_{c}^\infty(\R)$, $i=1,2,\ldots,2 k -1$, where $k\in \N$ will be chosen later, such that
$0\le \phi_i\le 1$, $\phi_{i}=1$ on $\supp{\nabla \phi_{i-1}}$, and  $\supp \phi_i \subset \La_{\frac L 2  + 50,\frac L 2  - 50 }(x_0)$.
  Using the resolvent identity $2k-1$ times,  noticing $\phi_i \Chi_{x_0}=0$ for $i=1,2,\ldots,2 k -1$, and writing  $\Chi_{\nabla \phi}= \Chi_{\supp \nabla \phi}$,  we get
\begin{align}\label{expres}
&{\Chi_{x_0}}R_{\bom}(z)W(\phi_0)
 =
{\Chi_{x_0}} R_{\bom}(z)W(\phi_{2k-1})R_{\bom}(z)W(\phi_{2k-2})\ldots  R_{\bom}(z)W(\phi_1)R_{\bom}(z)W(\phi_0)\\  
\notag 
&\qquad \qquad  = \set{{\Chi_{x_0}} R_{\bom}(z)} \set{W(\phi_{2k-1})R_{\bom}(z)W(\phi_{2k-2})} \set{\Chi_{\nabla \phi_{2k-2}} R_{\bom}} \\ \notag
&\qquad  \qquad  \qquad \qquad  \qquad  \times \set{W(\phi_{2k-3})R_{\bom}(z)W(\phi_{2k-4})}
\ldots \set{\Chi_{\nabla \phi_{2}} R_{\bom}(z)} \set{W(\phi_{1})R_{\bom}(z)W(\phi_{0})}.
\end{align}

Given $\phi\in C_{c}^\infty(\R)$, it follows from  \eq{Wphi} that for all $\bom \in \Omega$
\beq \label{Wphibound}
\norm{\pa{H_{\bom} +1}^{-\frac 12 } W(\phi)}=\norm{W(\phi)\pa{H_{\bom} +1}^{-\frac 12 } } \le C_\phi:=C_1 \pa{\norm{\Delta \phi}_\infty + \norm{\nabla \phi}_\infty},
\eeq
where $C_1=C_{d,\norm{V^{-}_{\mathrm{per}}}}$.  Moreover,  for all $x \in \R^d $ we have 
\beq \label{normpd}
\norm {\Chi_x {\pa{H_{\bom} +1}^{-1}}}_{k_d}\le C_2 < \infty \quad \text{with} \quad k_d = [\tfrac d 2] + 1,
\eeq
the constant $C_2=C_{d,\norm{V_{\mathrm{per}}},U_+}$ being independent of $x$  (cf.\  \cite[Eqs. (130)-(136)]{KKS}).  We have
\beq \label{R2compact}
\norm{\pa{H_{\bom}+1}R_{\bom}(z)}\le 1 + \frac {1 +\abs{z}}{\abs{\Im z}}\le  2 + \frac {1 +\abs{\Re z}}{\abs{\Im z}} ,
\eeq
Using \eq{Wphibound}, \eq{R2compact} and \eq{normpd}, we have
\beq  \label{Wphibound2}
 \norm{W(\phi_{i})R_{\bom}(z)W(\phi_{i-1})}\le C_{\phi_i}  C_{\phi_{i-1}} \pa{2 + \frac {1 +\abs{\Re z}}{\abs{\Im z}}}
\eeq
and,  for  all measurable sets $\Xi \subset \Lambda_L$, we get 
\beq \label{normpd2}
\norm{\Chi_{\Xi} R_{\bom}(z)}_{k_d}\le  C_2 \pa{2 + \frac {1 +\abs{\Re z}}{\abs{\Im z}}}
 L^d.
\eeq

We now take $k=k_d$ as in \eq{normpd}, and note that we can choose the functions
 $\phi_i \in C_{c}^\infty(\R)$, $i=0,1,\ldots,2 k_d-1$ so that   all  $C_{\phi_i}\le C_3=C_{d,\norm{V^{-}_{\mathrm{per}}}}$, a constant independent of $\Lambda_L$,  From \eq{expres}, \eq{Wphibound2} and \eq{normpd2}, we get, for all $\bom \in \Omega$,
\begin{align}\label{breakest}
\norm{{\Chi_{x_0}}R_{\bom}(z)W(\phi_0)R_{\bom,\La_L}(z){\Chi_{x_0}} }_1 \
 \le  C_4 L^{d k_d} \pa{2 + \frac {1 +\abs{\Re z}}{\abs{\Im z}}}
^{2k_d}\norm{\chi_{\nabla \phi_{0}}R_{\bom,\La_L}(z){\Chi_{x_0}}}, 
\end{align}
with a constant $C_4=  \pa{C_2 C_3^2}^{kd}=C_{d,\norm{V_{\mathrm{per}}},U_+}$.

We can now estimate  $\tr T_{\bom}^{L}$. First, note that, with $\I_+ =\sup \I < \infty$, 
\beq\label{supphL1}
\supp \widetilde{h_L} \subset \set{z=u + iv; \; u \in \widetilde{I}_L, \, \abs{v} \le \tfrac 2 a \scal{u}}\subset  \widetilde{I}_L + i \br{- \tfrac 2 a\scal{\I_+},  \tfrac 2 a\scal{\I_+}},
\eeq
and hence (recall $a \ge 1$)
\begin{equation} \label{supphL}
 \pa{2 + \frac {1 +\abs{\Re z}}{\abs{\Im z}} }\le  \pa{ \frac {\tfrac 4 a\scal{\I_+}+1 +\I_+}{\abs{\Im z}} }\le \frac { C_{\I_+}} {\abs{\Im z}}  \qtx{for all} z \in\supp \widetilde{h_L},
\end{equation}
with $C_{\I_+}=5
(1 + \I_+)$.  Combining \eq{TLambda}, \eq{breakest},  \eq{supphL}, and \eq{HShigherorder}, and using the fact that  $ \hnorm{g}_{n}$ in \eq{hnormg} is monotone increasing in $n$, we get
\begin{align}\label{trTtL}
\abs{\tr T_{\bom}^{L}}&\le  \int_{\R^{2}} \abs{\di\widetilde{h_L}(z)}  \norm{ \Chi_{x_0} R_{\bom} (z)W(\phi_0)R_{\bom,\La_L}(z)  \Chi_{x_0} }_1 \\ \notag
& \le  C_4  C_{\I_+}^{2k_d} L^{d k_d}  \int_{\R^{2}} \abs{\di\widetilde{h_L}(z)} {\abs{\Im z}^{-2k_d}} \norm{\chi_{\nabla \phi_{0}}R_{\bom,\La_L}(z){\Chi_{x_0}}}\\  \notag
&\le  C_4  C_{\I_+}^{2k_d} L^{d k_d} a^{2k_d}\hnorm{h_L}_{2k_d-1}\set{\max_{z \in \supp \widetilde{h_L}} \norm{\chi_{\nabla \phi_{0}}R_{\bom,\La_L}(z){\Chi_{x_0}}}}\\ \notag
& \le  C_4 C_d C_{\I_+}^{d+2} L^{\frac d 2 (d+2)} a^{d+2}\hnorm{h_L}_{d+1}\set{\max_{z \in \supp \widetilde{h_L}} \norm{\chi_{\nabla \phi_{0}}R_{\bom,\La_L}(z){\Chi_{x_0}}}}.
\end{align}
In view of \eq{hL} and \eq{hnormg}, we have
\beq
\hnorm{h_L}_{d+1}\le C_{d,\I_+} \abs{\widetilde{I}_L} \e^{ (d+2)L^{1- \vs}}= 2C_{d,\I_+}\e^{ (d+1)L^{1- \vs}} \qtx{for all} \bom \in \Omega.
\eeq

We now ready to estimate the quantity in  \eq{preHS}.   We choose $a=  { 2 \scal{\I_+}}  \e^{ L^{1- \vs}}$, so it follows from \eq{supphL1} that
\beq
\supp_{z \in \supp \widetilde{h_L}} \abs{z-E}\le \e^{ -L^{1- \vs}}+ \tfrac 2a \scal{\I_+}\le 2\e^{ -L^{1- \vs}} .
\eeq
Since $\eta > \frac \vs {1-\vs}$, we may  take $L$ large enough to ensure $ 2\e^{ -L^{1- \vs}} <  \e^{-m L^{\frac 1{1 + \eta}}}$, so  Lemma~\ref{lempggodtogood} guarantees that, for large $L$, for all $\bom \in \cY_L$ the box $\Lambda_{L}$ is   $(\bom,z,\frac m 2,\varsigma)$-good  for all $z \in \supp \widetilde{h_L}$. Thus, for large $L$,
 \beq\label{pgoodimplication}
 \max_{z \in \supp \widetilde{h_L}} \norm{\chi_{\nabla \phi_{0}}R_{\bom,\La_L}(z){\Chi_{x_0}}}\le \pa{\tfrac L 2 +11}^d\e^{-\frac m 2  \frac {L} 4 }\le \e^{-\frac m {10}  L }.
 \eeq
It follows from  \eq{preHS}, \eq{TLambda},\eq {trTtL}, and \eq{pgoodimplication} that for all  $\bom \in \cY_L$ we have, again taking $L$ large,
\begin{align} \label{YLest}
& \tr \set{\Chi_{x_0} h_L (H_{\bom})\Chi_{x_0}} \\  & \qquad \qquad  \le
   C_4 C_d C_{\I_+}^{d+2}  L^{\frac d 2 (d+2)}\pa{2C_{d,\I_+}\e^{ (d+1)L^{1- \vs}}}\pa{ { 2 \scal{\I_+}\e^{ L^{1- \vs}}}}^{d+2} \e^{-\frac m {10}  L } \le \e^{-\frac m {20}  L }  . \notag
\end{align}

 Combining \eq{PtohL} and \eq{YLest},  we get, for large $L$,
 \beq
 \E \tr \set{\Chi_{x_0}P_{\bom}(I_L)\Chi_{x_0}} \le \e^{-\frac m {20}  L } +  C_{\I}  L^{-\frac {p-\eta}{1+\eta}d}\le 2 C_{\I} L^{-\frac {p-\eta}{1+\eta}d}.
 \eeq
In particular, for all intervals $J \subset I$ with sufficiently small length $\abs{J}$, we have
\beq
 \E \tr \set{\Chi_{x_0}P_{\bom}(J)\Chi_{x_0}}  \le  2 C_{\I}  \abs{\log \abs{J}}^{-\frac {p-\eta}{(1+\eta)(1-\vs)}d}.
 \eeq 
The estimate \eq{Nxest} follows.
\end{proof}

\begin{remark}
The proof of Theorem~\ref{logHolder} uses  the pgood boxes  of Definition~\ref{defpgood} because  we need Lemma~\ref{lempggodtogood}.   It does not suffice to use good boxes.
\end{remark}

\appendix

\section{A quantitative unique continuation principle for Schr\"odinger operators}\label{appendixQUP}

In this appendix we rewrite Bourgain and Kenig's quantitative unique continuation principle
for Schr\"odinger operators \cite{BK} in a convenient form for our purposes.
We also give an application of this quantitative unique continuation principle to periodic Schr\"odinger operators, giving an alternative proof to Combes, Hislop and Klopp's  lower bound estimate for spectral projections  \cite{CHK1}.

We use the norm 
$\abs{x}:=\pa{\sum_{j=1}^{d} \abs{x_{j}}^{2}}^{\frac 12 }$ for $x=\pa{x_1,x_2,\ldots, x_d} \in \R^d$; all distances in $\R^{d}$ will be measured with respect to this norm. Given $x\in \R^{d}$ and  $\delta>0$, we set 
$B(x,\delta):= \set{y \in \R^{d}; \, \abs{y-x}<\delta}$  and   $B(x,\delta)^\ast:= B((x,\delta)\setminus \set{x}$.   Given subsets $A$ and $B$ of $\R^{d}$, and  a  function $\vphi$ on the  set $B$,  we set $\vphi_{A}:=\vphi \Chi_{A\cap B}$.  In particular, given $x\in   \R^{d}$ and $\delta >0$ we write $\vphi_{x,\delta}: =\vphi_{B(x,\frac \delta 2)}$.

We also set
\beq
 C_1= \e^{\int_0^1 \frac {1 - \e^{-t}} t \di t} ; \quad \text{note} \quad 2< \e^{\frac 3 4}< C_1 < \e<3.  \label{C1}
 \eeq

\subsection{The  quantitative unique continuation principle}

The following theorem is our version of  \cite[Lemma~3.10]{BK}.

\begin{theorem}\label{thmucp} Let  $G$ be an  open subset  of $\R^d$. Let 
$\psi \in\mathrm{H}^2(G)$ and ${\zeta} \in \L^2(G)$  be  real-valued functions satisfying
\beq \label{eq}
-\Delta {\psi} +V{\psi}={\zeta}  \quad \text{a.e.\  on $G$},
\eeq
where $V$ is a real measurable function on G with $\norm{V}_{\infty} \le K <\infty$.   
 Fix $\delta, D_{0}, D
 $   such that   $0<  \frac\delta 4 \le D_{0}\le 
 D$.    There exists a constant $m= m(d,\delta,D_0) >0$ such that, given
  a measurable set   ${\Theta} \subset G$ with $ \diam {\Theta} \le D$,  and  $x \in G$ such that
  \begin{gather} 
  R:= \dist \pa{x, {\Theta}} \ge D
  \label{xR}
 \quad  \text{and} \quad B(x, 4C_1 R+ 2D_0)\subset G,
  \end{gather}
where $C_1$ is the constant in \eq{C1}, we have
\begin{align} \label{UCPbound}
  (1 +K) \norm{{\psi_{x,\delta}}}^2_2   +
\norm{{\zeta_{G}}}_2^2 \ge
  R^{-m\pa{1 + K^{\frac 2 3} + \log  \pa{ {\norm{\psi_{G}}_{2}}{ \norm{\psi_{\Theta}}_2^{-1}}}}R^{\frac 4 3}}\norm{\psi_{{\Theta}}}_2 ^2.
   \end{align}
\end{theorem}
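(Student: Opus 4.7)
The plan is to derive this theorem essentially as a reformulation of Bourgain--Kenig's quantitative unique continuation principle \cite[Lemma~3.10]{BK}, which is a single‐point statement, extended to a set $\Theta$ of small diameter and stated with explicit tracking of the constants in the form needed later in the paper. The strategy has three main stages: reducing the measurement on $\Theta$ to a measurement on a ball of radius $\delta/4$ inside $\Theta$; applying the Bourgain--Kenig Carleman‐type propagation of smallness from that ball to the ball $B(x,\delta/2)$; and packaging the resulting estimate in the claimed form, where the exponent involves $R^{4/3}$, the $K^{2/3}$ dependence on the potential, and the logarithmic term $\log(\norm{\psi_{G}}_{2}/\norm{\psi_{\Theta}}_{2})$.

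First I would perform a covering argument on $\Theta$. Since $\diam \Theta \le D$, the set $\Theta$ can be covered by at most $N = N(d,D/\delta)$ balls of radius $\delta/8$. By pigeonhole there exists $y_{0}\in \Theta$ with $\norm{\psi_{y_{0},\delta/4}}_{2}^{2}\ge N^{-1}\norm{\psi_{\Theta}}_{2}^{2}$. Because $D_{0}\ge \delta/4$ and $D\ge D_{0}$, the dependence of $N$ on $D/\delta$ can be absorbed into the final $m$ once we use $D\le R$. After this step, the problem has been reduced to the ``single‐ball at $y_{0}$, single‐ball at $x$'' situation to which Bourgain--Kenig applies.

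Next I would apply the quantitative Carleman inequality of \cite[Lemma~3.10]{BK} to $\psi$ satisfying $-\Delta \psi + V\psi = \zeta$ on the enlarged ball $B(x,4C_{1}R+2D_{0})\subset G$ (this containment is exactly what the hypothesis \eqref{xR} guarantees, and the constant $4C_{1}$ comes from the geometry of the iteration of three-ball inequalities needed to traverse the distance $R$ from $y_{0}$ to $x$). The Carleman estimate is used with a radial weight $e^{\tau\phi(|\cdot-x|)}$, where $\phi$ is the canonical Bourgain--Kenig weight of polynomial type; optimization of the parameter $\tau$ against the potential bound $K$ leads to $\tau\sim (1+K^{2/3})R^{4/3}$. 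The resulting inequality reads, schematically,
\begin{equation*}
\norm{\psi_{y_{0},\delta/4}}_{2}^{2}\le R^{C\tau}\bigl((1+K)\norm{\psi_{x,\delta}}_{2}^{2}+\norm{\zeta_{G}}_{2}^{2}\bigr)+R^{-c\tau}\norm{\psi_{G}}_{2}^{2},
\end{equation*}
where the last summand reflects the contribution from the outermost ball $B(x,4C_{1}R+2D_{0})\subset G$.

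Finally I would balance the two terms: choosing $\tau$ so that the ``bad'' term $R^{-c\tau}\norm{\psi_{G}}_{2}^{2}$ is at most a half of $\norm{\psi_{y_{0},\delta/4}}_{2}^{2}\ge N^{-1}\norm{\psi_{\Theta}}_{2}^{2}$ requires $\tau \gtrsim \log(\norm{\psi_{G}}_{2}/\norm{\psi_{\Theta}}_{2})$, which together with the Carleman lower bound $\tau\sim (1+K^{2/3})R^{4/3}$ yields the exponent $m(1+K^{2/3}+\log(\norm{\psi_{G}}_{2}\norm{\psi_{\Theta}}_{2}^{-1}))R^{4/3}$ that appears in \eqref{UCPbound}. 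Absorbing $N$ and the covering constants into $m(d,\delta,D_{0})$ then gives the stated inequality. The main obstacle is Stage 2: faithfully invoking \cite[Lemma~3.10]{BK} while keeping the geometric constants ($4C_{1}R+2D_{0}$ for the containment in $G$, and $\delta$ for the inner ball size) in exactly the form dictated by later applications (in particular Corollary~\ref{corQUCPD}). Once the Carleman step is quoted correctly, Stages 1 and 3 are elementary covering and optimization arguments.
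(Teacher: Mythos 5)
Your proposal is correct in its overall outline and would in principle lead to a proof, but it takes a genuinely different route from the paper and contains some conceptual mix-ups worth flagging. The paper does \emph{not} invoke \cite[Lemma~3.10]{BK} as a black box (that lemma is the unique-continuation statement which Theorem~\ref{thmucp} is meant to repackage, not a Carleman estimate); instead it proves the theorem directly from the Carleman-type weighted inequality \cite[Lemma~3.15]{BK}, restated here as Lemma~\ref{lemCTE}. There is \emph{no} iteration of three-ball inequalities and no covering of $\Theta$: one introduces a smooth cutoff $\eta$ centered at $x$, supported in the annulus $\tfrac\delta 8 \le |y-x| \le 4C_1R+D_0$, applies the rescaled Carleman inequality \eq{carlvrho} once to $\eta\psi$ with $\vrho=8C_1R$, uses the equation to replace $\Delta\psi$ by $V\psi-\zeta$, and gets the lower bound over $\Theta$ directly from $\int w_\vrho^{-1-2\alpha}\eta^2\psi^2 \ge A^{1+2\alpha}\norm{\psi_\Theta}_2^2$ with $A=4C_1$ (here $4C_1$ is purely the weight-distortion constant in \eq{wvrho} combined with $\diam\Theta\le D\le R$, not a three-ball geometry constant). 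The boundary-annulus terms coming from $\nabla\eta$, $\Delta\eta$ are controlled by the Caccioppoli-type interior estimate, yielding separate contributions involving $\norm{\psi_{x,\delta}}_2$ (inner annulus near $\delta$) and $\norm{\psi_G}_2$ (outer annulus near $4C_1R$), and then the choice $\alpha = \alpha_1 R^{4/3}$ with $\alpha_1 \gtrsim 1 + K^{2/3} + \log\norm{\psi_G}_2$ is exactly what makes the outer term dominated by the $A^{1+2\alpha}$ lower bound and fixes the exponent. Your covering/pigeonhole reduction to a single ball $B(y_0,\delta/4)\subset\Theta$ is sound but unnecessary (and forces you to track the factor $N(d,D/\delta)$, requiring the extra argument to re-absorb it via $D\le R$), whereas the paper sidesteps this by working with $\Theta$ itself inside the Carleman lower bound. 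The net effect is that the paper's route gives the clean dependence $m=m(d,\delta,D_0)$ with a single estimate, while yours routes through a single-ball UCP plus covering; both arrive at the same shape of exponent $R^{-m(1+K^{2/3}+\log(\norm{\psi_G}_2\norm{\psi_\Theta}_2^{-1}))R^{4/3}}$.
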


If the open set $G$ is bounded, the second condition in \eq{xR}  restricts  the application of  Theorem~\ref{thmucp} to  sites $x\in G$ sufficiently far away from the boundary of $G$. When $G$ is a box $\Lambda$, and \eq{eq} holds on $\Lambda$ with  either Dirichlet or periodic boundary condition, Theorem~\ref{thmucp} can be extended to sites $x\in \La$ near the boundary of $\La$
as in the following corollary.

\begin{corollary}\label{corQUCPD}  Consider the Schr\"odinger operator $H_\Lambda:= -\Delta_\Lambda +V $ on $\mathrm{L}^2(\Lambda)$, where  $\Lambda= \Lambda_L(x_0)= x_0 + ] -\frac L 2, \frac L 2[^d$, the open box of side $L>0$ centered at $x_0 \in \R^d$,  $\Delta_\Lambda$ is the Laplacian with either Dirichlet or periodic boundary condition on $\Lambda$, and   $V$ a is bounded potential on $\Lambda$ with  $\|V\|_\infty \le K<\infty$. Let    $\psi \in \D(\Delta_\Lambda)$. 

\begin{enumerate}
\item \label{corQUCPDi}
 Fix $\delta,  D
 $ such that   $ 0< \frac\delta 4 \le 
 D$, There exists a constant $\widetilde{m}=\tilde{m}(d,\delta,D) >0$ such that, given
 a measurable set   ${\Theta} \subset \Lambda$ with $ \diam {\Theta} \le D$, and 
 $x \in \Lambda$  such that
  \beq  \label{xR1}
B(x,\tfrac \delta 2)
 \subset \Lambda \qtx{and}  R:= \dist \pa{x, {\Theta}} \ge D,
  \eeq
we have
  \begin{align} \label{UCPbound1}
  (1 +K) \norm{\psi_{x,\delta}}^2_2   +\pa{29 \sqrt{d}}^{d} 
\norm{\pa{H_{\Lambda} \psi}_{\Lambda}}
_2^2 \ge
  R^{-\widetilde{m}\pa{1 + K^{\frac 2 3} + \log \pa{\norm{\psi_{\Lambda}}_{2} { \norm{\psi_{{\Theta}}}_2^{-1}}}}R^{\frac 4 3}}\norm{\psi_{{\Theta}}}_2 ^2.
   \end{align}

   \item \label{corQUCPDii} Let $L \ge 2$ and $0<\delta \le L$.  Then there exists a constant $\widehat{m}=\widehat{m}(d,\delta) >0$ such that for all  $x \in \Lambda$ with $B(x,\tfrac \delta 2) \subset \Lambda$ we have 
     \begin{align} \label{UCPbound12}
  (1 +K)  \norm{\psi_{x,\delta}}^2_2  +\pa{41}^{d} 
\norm{\pa{H_{\Lambda} \psi}_{\Lambda}}_{2}^{2} \ge
  L^{-\widehat{m} \pa{1 + K^{\frac 2 3} }L^{\frac 4 3}}\norm{\psi_{\Lambda} }_2 ^2.
   \end{align}
   \end{enumerate}
\end{corollary}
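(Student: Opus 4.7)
The plan is to deduce Corollary~\ref{corQUCPD} from Theorem~\ref{thmucp} by extending $\psi$, $V$ and $\zeta:=-\Delta_\Lambda\psi+V\psi=(H_\Lambda\psi)_\Lambda$ from the box $\Lambda$ to a larger open set $G\supset B(x,4C_1R+2D_0)$ on which the same Schr\"odinger equation $-\Delta\tilde\psi+\tilde V\tilde\psi=\tilde\zeta$ continues to hold in $H^2_{\mathrm{loc}}(G)$ with $\|\tilde V\|_\infty\le K$. For periodic boundary conditions, one simply takes the $L\Z^d$-periodic extensions of $\psi$, $V$ and $\zeta$. For Dirichlet boundary conditions, $\tilde\psi$ is the odd reflection of $\psi$ across the faces of $\Lambda$ followed by $2L\Z^d$-periodic extension, while $V$ is extended by even reflection (preserving $\|\tilde V\|_\infty\le K$) and $\zeta$ by odd reflection. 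The crucial point is the $H^2_{\mathrm{loc}}$-regularity of $\tilde\psi$ in the Dirichlet case: the vanishing of $\psi$ on $\partial\Lambda$ ensures that the odd reflection makes both $\tilde\psi$ and $\nabla\tilde\psi$ continuous across every reflection hyperplane, so the weak second derivatives of $\tilde\psi$ carry no surface contribution and equal the piecewise odd extensions of the pointwise second derivatives of $\psi$.

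For part~\ref{corQUCPDi}, let $G$ be the interior of the smallest union of translated/reflected copies of $\overline{\Lambda}$ whose interior contains $\overline{B(x,4C_1R+2D_0)}$. Theorem~\ref{thmucp} applies on $G$; because $\Theta\subset\Lambda$ and $B(x,\delta/2)\subset\Lambda$ we have $\tilde\psi_{x,\delta}=\psi_{x,\delta}$ and $\tilde\psi_\Theta=\psi_\Theta$, while $\|\tilde\psi_G\|_2^2\le n\|\psi_\Lambda\|_2^2$ and $\|\tilde\zeta_G\|_2^2\le n\|(H_\Lambda\psi)_\Lambda\|_2^2$, where $n$ is the number of copies meeting $G$. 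Using $D_0\le D\le R\le\diam\Lambda\le L\sqrt d$ and $C_1<3$ gives $4C_1R+2D_0\le 14R$, so $n\le(29\sqrt d)^d$, matching the constant in \eq{UCPbound1}. The additive $\tfrac12\log n$ appearing inside the exponent of Theorem~\ref{thmucp} contributes a factor $R^{O(R^{4/3})}$ that is absorbed into the renamed constant $\tilde m(d,\delta,D)$ via $R\ge D>0$.

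For part~\ref{corQUCPDii}, we reduce to part~\ref{corQUCPDi} by a pigeonhole. Tile $\Lambda$ into a collection $\{\Theta_j\}$ of cubes of side $\delta/8$, of cardinality at most $C(d)(L/\delta)^d$, and choose $\Theta_\ast$ maximizing $\|\psi_{\Theta_j}\|_2$. If $\Theta_\ast\subset B(x,\delta/2)$, then $\|\psi_\Lambda\|_2^2\le C(d)(L/\delta)^d\|\psi_{x,\delta}\|_2^2$ and \eq{UCPbound12} follows trivially after adjusting $\widehat m$; otherwise $R:=\dist(x,\Theta_\ast)\ge\delta/4$ and part~\ref{corQUCPDi} applies with $D_0=\delta/4$, $D=L\sqrt d$, and $\|\psi_{\Theta_\ast}\|_2^2\ge (C(d)(L/\delta)^d)^{-1}\|\psi_\Lambda\|_2^2$. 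Since $R\le L\sqrt d$ and $\delta\le L$, the ball $B(x,4C_1R+2D_0)$ fits in a union of at most $41^d$ copies of $\Lambda$ by a sharper dimensional count (locating $\Theta_\ast$ in the pigeonhole so that $R$ is controlled by $L$ rather than $L\sqrt d$); the logarithmic term $\log(\|\psi_\Lambda\|_2/\|\psi_{\Theta_\ast}\|_2)=O(d\log(L/\delta))$ is absorbed into $\widehat m(d,\delta)$.

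The main technical obstacle is not conceptual but lies in the $H^2_{\mathrm{loc}}$ extension argument in the Dirichlet case and in the sharp counting of reflected copies needed to reach the explicit constants $(29\sqrt d)^d$ and $41^d$; absorbing the several additive constants arising in the exponent of Theorem~\ref{thmucp} into the new Corollary-level constants $\tilde m(d,\delta,D)$ and $\widehat m(d,\delta)$ is routine once the $H^2$ extension is verified.
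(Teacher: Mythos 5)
Your argument for part~(i) is essentially the paper's: reflect $\psi$, $V$, and $\zeta = (H_\Lambda\psi)_\Lambda$ across the faces of $\Lambda$ (odd/even/odd in the Dirichlet case, periodic in the periodic case), check $H^2_{\mathrm{loc}}$-regularity, and apply Theorem~\ref{thmucp} on a box $\Lambda_1 = \Lambda_{(2n+1)L}(0)$ with $(2n+1) \le 29\sqrt d$ copies of $\Lambda$; the only price paid is the multiplicity factor $(2n+1)^d$ on $\|\widetilde\psi_{\Lambda_1}\|_2^2$ and $\|\widetilde\zeta_{\Lambda_1}\|_2^2$, and the bounded additive constant $\frac d2\log(2n+1)$ in the exponent. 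This part is fine.

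Your pigeonhole argument for part~(ii) is genuinely different from the paper's, and it does not close. Two concrete problems. First, after choosing the cube $\Theta_\ast$ maximizing $\|\psi_{\Theta_j}\|_2$, the best a priori bound is $\|\psi_{\Theta_\ast}\|_2^2 \ge c(d)(\delta/L)^d \|\psi_\Lambda\|_2^2$, so the logarithmic term fed into \eqref{UCPbound1} is $\log\bigl(\|\psi_\Lambda\|_2\,\|\psi_{\Theta_\ast}\|_2^{-1}\bigr) = O(d\log(L/\delta))$. This is not $O(1)$: it grows with $L$, so the resulting exponent is of order $L^{4/3}(\log L)^2$ rather than $L^{4/3}\log L$, and no choice of $\widehat m(d,\delta)$ independent of $L$ can absorb the extra $\log L$. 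Second, nothing prevents the maximizing cube from lying near the corner of $\Lambda$ farthest from $x$, so $R = \dist(x,\Theta_\ast)$ can be as large as $\diam\Lambda = \sqrt d\,L$; then $B(x,4C_1R+2D_0)$ has radius $\gtrsim 12\sqrt d\,L$ and cannot be covered by $41^d$ reflected copies. Your parenthetical "locating $\Theta_\ast$ in the pigeonhole so that $R$ is controlled by $L$ rather than $L\sqrt d$" does not explain how to do this while retaining a lower bound on $\|\psi_{\Theta_\ast}\|_2$. The paper sidesteps both issues at once by taking $\Theta_x = \Lambda + 2L\,\bfe^{(x)}$, a full reflected copy of $\Lambda$ placed one box-length away in a coordinate direction chosen so that $R = \dist(x,\Theta_x) \in [L, \tfrac32 L]$: then $\|\widetilde\psi_{\Theta_x}\|_2 = \|\psi_\Lambda\|_2$ exactly (no pigeonhole loss, hence no $\log(L/\delta)$), $R$ is comparable to $L$ rather than $\sqrt d\,L$, and one applies Theorem~\ref{thmucp} with $D_0 = \delta/2$ on $\Lambda_2 = \Lambda_{L_2}(0)$ with $L_2 \le 41L$.
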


We will prove Theorem~\ref{thmucp}  from Bourgain and Kenig's  Carleman-type inequality estimate \cite[Lemma~3.15]{BK}, which we state in the next lemma.

\begin{lemma}\label{lemCTE}  Consider the function  $w(x)=\vphi(\abs{x})$ on $\R^d$, where
\beq
 \vphi(s):= \e^{-\int_0^s \frac {1 - \e^{-t}} t \, \di t} s \qquad \text{for} \quad s \in [0,\infty[,
 \eeq
 is a strictly increasing continuous function on $[0,\infty[$, $C^\infty$ on $]0,\infty[$.  In particular, we have 
\begin{gather}
\tfrac 1  {C_1} \abs{x} \le w(x) \le \abs{x}   \quad\text{for all }\quad  x \in B(0,1),
 \end{gather}
 where $C_1$ is the constant in \eq{C1}.
Then there are positive  finite constants $C_2$ and $C_3$, depending only on $d$, such that for all 
 $\alpha\ge  C_2$  and all real valued functions $f\in C_{\mathrm{c}}^\infty(B(0,1)^\ast)$ we have
\begin{equation} \label{carl}
\alpha^3 \int_{\R^d} w^{-1-2\alpha} f^2 \, \di x  \le C_3 \int_{\R^d} w^{2-2\alpha} (\Delta f)^2  \, \di x.
\end{equation}
\end{lemma}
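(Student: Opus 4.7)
The plan is to prove this Carleman-type inequality following the template of Escauriaza--Vessella, exploiting the radial symmetry of the weight $w(x) = \varphi(|x|)$. First I would pass to polar coordinates $x = r\omega$ and expand $f(r\omega) = \sum_{k \ge 0} f_k(r) Y_k(\omega)$ in spherical harmonics, where $-\Delta_{S^{d-1}} Y_k = \lambda_k Y_k$ with $\lambda_k = k(k+d-2)$. Since both weights $w^{-1-2\alpha}$ and $w^{2-2\alpha}$ are radial, orthogonality reduces the claim to a one-dimensional estimate, mode by mode:
\begin{equation*}
\alpha^3 \int_0^1 \varphi(r)^{-1-2\alpha} f_k^2 \, r^{d-1}\,dr \le C \int_0^1 \varphi(r)^{2-2\alpha} \Bigl( f_k'' + \tfrac{d-1}{r} f_k' - \tfrac{\lambda_k}{r^2} f_k \Bigr)^2 r^{d-1}\,dr .
\end{equation*}

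Next I would perform a change of variable and conjugation adapted to $\varphi$. Setting $h_k(r) = r^{(d-1)/2} f_k(r)$ eliminates the first-order radial term, leaving a Schr\"odinger-type operator with potential $\bigl(\lambda_k + \tfrac{(d-1)(d-3)}{4}\bigr) r^{-2}$. Then introduce the arclength variable $t$ defined by $dt = dr/\varphi(r)$ and conjugate by $\varphi(r)^{-\alpha}$; this turns the weighted inequality into a Carleman estimate on a half-line with exponential weight $e^{-2\alpha t}$ for an operator $P_k = -\partial_t^2 + V_k(t)$. The key structural identity $s\varphi'(s)/\varphi(s) = e^{-s}$, obtained by logarithmic differentiation of $\varphi$, controls how the radial weight interacts with the conjugation and is what makes the resulting effective potential $V_k(t)$ pointwise nonnegative. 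The estimate then follows from a standard positive-commutator computation: split $P_k = S + A$ into its symmetric and skew-symmetric parts (with respect to the weighted inner product), expand $\|e^{-\alpha t}(S+A)h\|^2 = \|Sh\|^2 + \|Ah\|^2 + \langle [S,A]h, h\rangle$, and check that the commutator produces a main term $4\alpha^3 \int e^{-2\alpha t} h^2$, together with error terms that are controlled by the nonnegativity of $V_k$ once $\alpha \ge C_2(d)$.

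The main obstacle is achieving uniformity in the spherical-harmonic index $k$: the centrifugal term $\lambda_k/r^2$ grows without bound, and one must ensure that the positive commutator dominates it for all $k$ simultaneously. The specific choice $\varphi(s) = \exp\bigl(-\int_0^s (1-e^{-t})/t\,dt\bigr) s$ is engineered precisely to handle this, producing the correct sign on $V_k$ uniformly in $k$ while preserving the $\alpha^3$ gain. This structural verification is exactly the content of \cite[Lemma~3.15]{BK}, from which the present formulation can be extracted after the elementary bookkeeping described above; in particular, the constants $C_2$ and $C_3$ depend only on $d$ because the positive-commutator computation is uniform across modes.
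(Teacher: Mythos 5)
The paper does not prove this lemma: immediately after the statement it says ``We refer to \cite{BK} for the proof,'' so the only content here is the citation to Bourgain--Kenig's Lemma~3.15. Your proposal ultimately does exactly the same thing (you defer the substance to \cite[Lemma~3.15]{BK}); the sketch you prepend — spherical-harmonic reduction, the Liouville substitution $h_k = r^{(d-1)/2} f_k$, the logarithmic change of variable together with the identity $s\varphi'(s)/\varphi(s)=\e^{-s}$, and the conjugated positive-commutator computation — is a fair high-level caricature of the Bourgain--Kenig argument and contains no substantive error, but it is not a self-contained replacement for it, which is fine since the paper itself offers none.
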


We refer to \cite{BK} for the proof.  We shall use Lemma~\ref{lemCTE} with a function $f$ that is not necessarily smooth, but    $f \in \mathrm{H}^2_{\mathrm{loc}}$.  However in our case $f$ is compactly supported away from zero, and thus we can use the following extension  of Lemma~\ref{lemCTE}.

\begin{lemma}\label{lemCarl2} Let  $f\in \mathrm{H}^2(B(0,1))$, real valued with $\supp f \subset B(0,1)^\ast$.
Then \eq{carl} holds for all 
 $\alpha\ge  C_2$.
\end{lemma}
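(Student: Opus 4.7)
The strategy is a standard mollification argument: approximate $f$ by smooth functions $f_\eps \in C_c^\infty(B(0,1)^\ast)$, apply the Carleman estimate \eq{carl} from Lemma~\ref{lemCTE} to each $f_\eps$, and pass to the limit. The key observation that makes this painless is that, since $\supp f$ is assumed to be a compact subset of the \emph{open} punctured ball $B(0,1)^\ast$, there exist $0 < r_1 < r_2 < 1$ such that $\supp f \subset \{r_1 \le |x| \le r_2\}$. Consequently the weights $w^{-1-2\alpha}$ and $w^{2-2\alpha}$ appearing on both sides of \eq{carl} are bounded above and below on any slightly enlarged annulus containing the supports of the approximants.

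Concretely, I would fix a standard mollifier $\rho \in C_c^\infty(\R^d)$ with $\rho \ge 0$, $\supp \rho \subset B(0,1)$, $\int \rho = 1$, and set $\rho_\eps(x) := \eps^{-d}\rho(x/\eps)$. For $0 < \eps < \eps_0 := \frac 12 \min\{r_1, 1-r_2\}$, the convolution $f_\eps := f \ast \rho_\eps$ is well-defined (extending $f$ by zero outside $B(0,1)$), belongs to $C^\infty(\R^d)$, and satisfies $\supp f_\eps \subset \{r_1/2 \le |x| \le (1+r_2)/2\} =: A \subset B(0,1)^\ast$. Thus $f_\eps \in C_c^\infty(B(0,1)^\ast)$ and Lemma~\ref{lemCTE} applies:
\begin{equation}\label{carlEps}
\alpha^3 \int_{\R^d} w^{-1-2\alpha} f_\eps^2 \, \di x \le C_3 \int_{\R^d} w^{2-2\alpha} (\Delta f_\eps)^2 \, \di x \qtx{for all} \alpha \ge C_2.
\end{equation}

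To pass to the limit, I would use the standard fact that since $f \in \mathrm{H}^2(B(0,1))$ with $\supp f$ compact in $B(0,1)$, mollification yields $f_\eps \to f$ and $\Delta f_\eps = (\Delta f)\ast \rho_\eps \to \Delta f$ in $L^2(\R^d)$ as $\eps \downarrow 0$ (where $\Delta f$ is the weak Laplacian, extended by zero outside $B(0,1)$). Because all of the functions $f_\eps$ and $f$ are supported in the fixed compact annulus $A$, and $w$ is bounded above and below by positive constants on $A$, the weights $w^{-1-2\alpha}$ and $w^{2-2\alpha}$ are bounded and bounded away from zero on $A$. Therefore multiplication by these weights is an equivalence of norms on $L^2(A)$, so $L^2$-convergence of $f_\eps \to f$ and $\Delta f_\eps \to \Delta f$ transfers directly to convergence of both integrals in \eq{carlEps} to their counterparts for $f$. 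Taking $\eps \downarrow 0$ in \eq{carlEps} yields \eq{carl} for $f$, completing the proof.

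There is no real obstacle here; the only thing to be careful about is the justification that $\Delta f_\eps = (\Delta f)\ast \rho_\eps$ (which follows because $\Delta$ commutes with convolution in the distributional sense, and the zero extension of $f$ to $\R^d$ preserves $\mathrm{H}^2$-regularity precisely because $\supp f$ is compactly contained in $B(0,1)$, so no boundary terms appear).
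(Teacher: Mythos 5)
Your proposal is correct and matches the paper's proof essentially verbatim: both mollify $f$, apply the Carleman inequality of Lemma~\ref{lemCTE} to the smooth approximants, and pass to the limit using the fact that the weight $w$ is bounded above and below on the fixed annulus containing the supports. Your write-up is merely more explicit about the choice of annulus and the $L^2$-convergence of $\Delta f_\eps$; there is no substantive difference from the paper's argument.
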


\begin{proof}
The proof follows from Lemma~\ref{lemCTE} from an approximation argument. Let $f$ be as in the lemma, and pick $h\in C_{\mathrm{c}}^\infty(\R)$ with $\int h(t) \, \di t = 1$, and set $h_\eta(t):=\eta^{-d}h(\frac t \eta)$.
Note that for $\eta$ small enough we have  $f_\eta:=f\ast h_\eta \in C_{\mathrm{c}}^\infty(B(0,1)^\ast)$. Thus, for such $\eta$'s, Lemma~\ref{lemCTE} applies to $f_\eta$. Then, as $\eta$ goes to zero, $f\eta$ converges to $f$ in $\mathrm{L}^2(\R^d)$ and $\Delta f_\eta=(\Delta f) \ast h_\eta$ to $\Delta f$ in $\mathrm{L}^2(\R^d)$. Since $w^{-1}$ is bounded above and below on $B(0,R)\setminus B(0,\delta)$ for any $\delta >0$, the lemma follows.
\end{proof}

We now rewrite these lemmas as follows.

\begin{lemma}  Given  $\vrho >0$, there exists a function $w_\vrho (x)=\vphi_\vrho(\abs{x})$ on $\R^d$, where $\vphi_\vrho$ is a  strictly  increasing continuous real-valued  function  on $[0,\infty[$, $C^\infty$ on $]0,\infty[$, such that
\beq\label{wvrho}
\tfrac 1 {C_1 \vrho} \abs{x} \le w_\vrho(x) \le \tfrac 1 { \vrho}\abs{x}   \quad\text{for all  $x \in B(0,\vrho)$},
\eeq
and for all   $\alpha \ge C_2$ and  all real valued functions  $f\in \mathrm{H}^2( B(0,\vrho))$ with $\supp f \subset B(0,\vrho)^\ast$ we have
\begin{equation} \label{carlvrho}
\alpha^3 \int_{\R^d}w_\vrho^{-1-2\alpha} f^2  \, \di x \le C_3 \vrho^4 \int_{\R^d}  w_\vrho^{2-2\alpha} (\Delta f)^2  \, \di x,
\end{equation}
where $C_1, C_2, C_3$ are  the constants of Lemma~\ref{lemCTE}.
\end{lemma}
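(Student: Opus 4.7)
The plan is to reduce the claim to Lemma~\ref{lemCarl2} by a straightforward rescaling. Define $\vphi_\vrho(s) := \vphi(s/\vrho)$ for $s \ge 0$, so that $\vphi_\vrho$ inherits from $\vphi$ the properties of being strictly increasing, continuous on $[0,\infty[$, and $C^\infty$ on $]0,\infty[$. Then set $w_\vrho(x) := \vphi_\vrho(|x|) = w(x/\vrho)$, where $w$ is the function from Lemma~\ref{lemCTE}. The bound \eqref{wvrho} is then immediate from the bound $\tfrac{1}{C_1}|y| \le w(y) \le |y|$ on $B(0,1)$ applied at $y = x/\vrho$.

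Next I would transport the Carleman inequality \eqref{carl} from scale $1$ to scale $\vrho$ by change of variables. Given a real-valued $f \in \mathrm{H}^2(B(0,\vrho))$ with $\supp f \subset B(0,\vrho)^\ast$, define $g(y) := f(\vrho y)$ for $y \in B(0,1)$. Then $g \in \mathrm{H}^2(B(0,1))$, is real valued, and $\supp g \subset B(0,1)^\ast$, so Lemma~\ref{lemCarl2} applies:
\begin{equation*}
\alpha^3 \int_{\R^d} w^{-1-2\alpha}(y)\, g(y)^2 \,\di y \le C_3 \int_{\R^d} w^{2-2\alpha}(y)\, (\Delta g(y))^2 \,\di y
\end{equation*}
for all $\alpha \ge C_2$.

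Finally, in both integrals substitute $x = \vrho y$, so $\di y = \vrho^{-d}\,\di x$, $g(y) = f(x)$, $\Delta g(y) = \vrho^2 (\Delta f)(x)$ by the chain rule, and $w(y) = w(x/\vrho) = w_\vrho(x)$. The factor $\vrho^{-d}$ is common to both sides and cancels, while the $(\Delta g)^2$ term contributes an extra $\vrho^4$ on the right, yielding exactly \eqref{carlvrho}. There is no real obstacle here—the only thing to check carefully is the power counting in the scaling ($\vrho^4$ from $(\Delta g)^2$, with all other factors matching because of the definition of $w_\vrho$), but this is routine.
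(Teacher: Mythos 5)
Your proof is correct and follows exactly the paper's approach: define $w_\vrho(x) = w(x/\vrho)$ and obtain \eqref{carlvrho} from \eqref{carl} (extended in Lemma~\ref{lemCarl2}) by the change of variables $x = \vrho y$. The power counting ($\vrho^{-d}$ cancelling on both sides, $\vrho^4$ from $\Delta g = \vrho^2 \Delta f$) is verified correctly.
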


\begin{proof}  \eq{carlvrho} follows from \eq{carl} by a change of variables, with
$w_\vrho(x)= w\pa{\tfrac 1 { \vrho} x}$.
\end{proof}

We are ready to prove Theorem~\ref{thmucp}  and Corollary~\ref{corQUCPD}.

\begin{proof}[Proof of Theorem~\ref{thmucp}] Without loss of generality we assume
\beq
\norm{\psi_{{\Theta}}}_2 =1.\label{01}
\eeq

Let $x_0 \in G$  satisfy \eq{xR} with $R:= \dist \pa{x_0, {\Theta}}$, and set $A := 4 C_{1} >4$.     
 For convenience we may assume $x_0=0$, in which case $  {\Theta}\subset B(0, A R)$,  and  take $G= B(0, A R +2{D_0} )$.  

Let us consider a function $\eta \in C^\infty_{\mathrm{c}}(\R^d)$ given by $\eta(x)= \xi(\abs{x})$, where $\xi $  is an even  $C^\infty$ function  on $\R$ such that
\begin{align}
0& \le \xi(s) \le 1  \quad \text{for all $s \in \R$},\\
\xi(s) &= 0 \quad \text{if either $\abs{s} \le \tfrac \delta 8$ or  $\abs{s} \ge AR +{D_0}$},\\
\xi(s) &= 1 \quad \text{if $\tfrac \delta 4 \le \abs{s} \le  AR$},\\
\abs{\xi^{(j)}(s)} & \le C_4\quad \text{for all $s \in \R$, $j=1,2$},
\end{align}
where $C_4=C_4(d,\delta,{D_0})$ is a finite constant (independent of $A$ and $R$).  Note that
$\abs{\nabla \eta} \le C_4 \sqrt{d}$ and $\abs{\Delta \eta} \le C_4 d$.

We now apply \eq{carlvrho} to the function $\eta \psi$ with $\vrho= 2 AR$. Given  $\alpha \ge C_2>1$ (without loss of generality we take  $C_2 >1$),   we get
\begin{align} \label{longest}
&\frac {\alpha^3}{3 C_3 \vrho^4} \int_{\R^d} w_\vrho^{-1-2\alpha} \eta^2 {\psi}^2  \, \di x \le \tfrac 1 3  \int_{\R^d}  w_\vrho^{2-2\alpha} (\Delta (\eta  {\psi}))^2  \, \di x
\\ & \quad   \le \int_{\R^d}  w_\vrho^{2-2\alpha}\eta^2 (\Delta  {\psi})^2  \, \di x  + 4 \int_{\supp \nabla \eta} w_\vrho^{2-2\alpha} \abs{\nabla \eta}^2 \abs{\nabla  {\psi}}^2  \, \di x +   \int_{\supp \nabla \eta}  w_\vrho^{2-2\alpha} (\Delta \eta)^2  {\psi}^2  \, \di x, \notag\end{align}
where $\supp \nabla \eta \subset \set{\frac \delta 8 \le \abs{x} \le \frac \delta 4}\cup \set{AR \le \abs{x} \le AR +{D_0}}$.

It follows from \eq{eq}, recalling  $\norm{V}_{\infty} \le K$, and using also the fact that $w_\vrho\le 1 $ on $\supp \eta$,   that 
\beq \begin{split}\label{longest0}
 \int_{\R^d} w_\vrho^{2-2\alpha}\eta^2 (\Delta  {\psi})^2 \, \di x & = \int_{\R^d}  w_\vrho^{2-2\alpha}\eta^2 (V {\psi} - {\zeta})^2  \, \di x\\
& \le  2  K^2 \int_{\R^d} w_\vrho^{-1-2\alpha}\eta^2  {\psi}^2  \, \di x +2 \int_{\R^d}  w_\vrho^{2-2\alpha}\eta^2 {\zeta}^2 \, \di x.
\end{split} \eeq
We take 
\beq
\alpha = \alpha_{0}\rho^{\frac 4 3}\quad \text{where} \quad \alpha_{0}\ge \max \set{\pa{18  C_3 K^2}^{\frac 1 3}, C_2 \pa{8 C_1{D_0}}^{-\frac 4 3}}, \label{alpha0}
\eeq
so we have
\beq \label{alpha01}
 \frac {\alpha^3}{3 C_3 \vrho^4} = \frac {\alpha_0^3}{3 C_3 } \ge  6 {K^2} .
\eeq
Using \eq{wvrho} and \eq{01}, and recalling that  $\diam {\Theta}\le D\le R$, we have
\beq\label{longest2}
 \int_{\R^d}w_\vrho^{-1-2\alpha} \eta^2  {\psi}^2  \, \di x \ge  \pa{\frac { \vrho}{R + \diam {\Theta}}}^{1 + 2 \alpha}\norm{ {\psi}_{{\Theta}}}^2_2  \ge { A}^{1 + 2 \alpha}.
 \eeq
Combining \eq{longest}, \eq{longest0}, \eq{alpha01}, and \eq{longest2},
we  conclude that
\begin{align}
 &  \frac {2\alpha_0^3}{9 C_3 }{A}^{1 + 2 \alpha}\le  \\ & \qquad  4 \int_{\supp \nabla \eta}  w_\vrho^{2-2\alpha} \abs{\nabla \eta}^2 \abs{\nabla  {\psi}}^2  \, \di x +   \int_{\supp \nabla \eta} w_\vrho^{2-2\alpha} (\Delta \eta)^2  {\psi}^2  \, \di x + 2  \int_{\supp  \eta}  w_\vrho^{2-2\alpha}\eta^2 {\zeta}^2 \, \di x. \notag
\end{align}

We have
 \begin{align}\notag
&  \int_{ \set{AR \le \abs{x} \le AR +{D_0}}}   w_\vrho^{2-2\alpha}\pa{ 4 \abs{\nabla \eta}^2 \abs{\nabla  {\psi}}^2 +  (\Delta \eta)^2  {\psi}^2 } \, \di x \\\notag
&\qquad  \qquad \qquad   \le  C_4^2 d^2 \pa{\frac {C_1 \vrho}{AR}}^{2\alpha -2} \int_{ \set{AR \le \abs{x} \le AR +{D_0}}}  \pa{ 4  \abs{\nabla  {\psi}}^2 +    {\psi}^2 } \, \di x\\
&\qquad \qquad  \qquad  \le  C_5  \pa{\frac {C_1 \vrho}{AR}}^{2\alpha -2} \int_{ \set{AR -{D_0} \le \abs{x} \le AR +2{D_0}}}  \pa{{\zeta}^2 + (1 +K)  {\psi}^2 } \, \di x\\ \notag
&\qquad \qquad \qquad  \le  C_5 \pa{\frac {C_1 \vrho}{AR}}^{2\alpha -2}\pa{\norm{{\zeta_{G}}}_2^2 +  (1 +K)\norm{ {\psi}_{G}}_{2}^{2}}\\ 
&\qquad \qquad  \qquad  = C_5 \pa{2C_1 }^{2\alpha -2}\pa{\norm{{\zeta_{G}}}_2^2 +  (1 +K)\norm{ {\psi}_{G}}_{2}^{2}} , \notag
\end{align}
where we used an interior estimate (	e.g., \cite[Lemma~A.2]{GK5}) and $C_5=C_5(d,\delta,{D_0})$ is a constant.

Similarly,
 \begin{align}\notag
&\int_{\set{\frac \delta 8 \le \abs{x} \le \frac \delta 4}}  w_\vrho^{2-2\alpha}\pa{ 4 \abs{\nabla \eta}^2 \abs{\nabla  {\psi}}^2 +  (\Delta \eta)^2  {\psi}^2 } \, \di x\\ \notag
&\qquad  \qquad  \qquad \qquad  \le  C_4^2 d^2 \pa{8 \delta^{-1} C_1 \vrho}^{2\alpha -2} \int_{\set{\frac \delta 8 \le \abs{x} \le \frac \delta 4}}  \pa{ 4  \abs{\nabla  {\psi}}^2 +    {\psi}^2 } \, \di x\\ \notag
&\qquad \qquad \qquad  \qquad  \le  C_{6}  \pa{8  \delta^{-1} C_1 \vrho}^{2\alpha -2} \int_{ \set{ \abs{x} \le \frac \delta 2}}    \pa{{\zeta}^2 + (1 +K)  {\psi}^2 } \, \di x \\
&\qquad  \qquad \qquad  \qquad  \le    C_{6}   \pa{8  \delta^{-1} C_1 \vrho}^{2\alpha -2}\pa{\norm{{\zeta_{G}}}_2^2 +  (1 +K)\norm{ {\psi}_{0,\delta}}^2_2}\\ \notag
&\qquad  \qquad \qquad  \qquad  =   C_6 \pa{16  \delta^{-1} C_1A R}^{2\alpha -2}\pa{\norm{{\zeta_{G}}}_2^2 +  (1 +K)\norm{ {\psi}_{0,\delta}}^2_2}\\\notag
&\qquad   \qquad \qquad  \qquad  = C_6   \pa{64  \delta^{-1}   C_1^{2} R}^{2\alpha -2}\pa{\norm{{\zeta_{G}}}_2^2 +  (1 +K)\norm{ {\psi}_{0,\delta}}^2_2}, \notag
\end{align}
where  $C_6=C_6(d,\delta,{D_0})$ is a constant.
  
In addition, 
\beq
2 \int_{\supp  \eta} w_\vrho^{2-2\alpha}\eta^2 {\zeta}^2 \, \di x \le  2\pa{8  \delta^{-1}  C_1 \vrho}^{2\alpha -2} \norm{{\zeta_{G}}}_2^2= 2 \pa{64  \delta^{-1} C_1^{2} R}^{2\alpha -2}\norm{{\zeta_{G}}}_2^2.
\eeq

Thus, if
\beq
C_5  (1 +K) \norm{ {\psi}_{G}}_{2}^{2} \pa{2C_1 }^{2\alpha -2} \le \tfrac 1 2  \frac {2\alpha_0^3}{9 C_3 }{A}^{1 + 2 \alpha}= { \frac {\alpha_0^3}{9 C_3 }}\pa{4 C_{1}}^{1 + 2 \alpha},
\eeq
or, equivalently,
\beq  \label{keycond}
\alpha_0^{3} 4^{\alpha}\ge \tfrac 9 {16}  C_{5}C_{1}^{-2} (1 +K)\norm{ {\psi}_{G}}_{2}^{2} ,
\eeq
we conclude that
\begin{align} \label{almostbound}
& { \frac {\alpha_0^3}{9 C_3 }}\pa{4 C_{1}}^{1 + 2 \alpha} \\
& \quad  \le   C_6  \pa{64  \delta^{-1}   C_1^{2} R}^{2\alpha-2 }  (1 +K)\norm{ {\psi}_{0,\delta}}^2_2   +\pa{ \pa{C_6 +2}\pa{64  \delta^{-1}   C_1^{2} R}^{2\alpha-2 }  +  C_5 \pa{2C_1 }^{2\alpha -2} 
}\norm{{\zeta}_{G}}_2^2 \notag \\
& \quad \le  C_7  \pa{ \beta_1  \delta^{-1}   C_1^{2} R}^{2\alpha-2 }\pa{  (1 +K)\norm{ {\psi}_{0,\delta}}^2_2  +  \norm{{\zeta}_{G}}_2^2  },\notag
 \end{align}
 where we used  $R \ge D\ge D_0$,   set $C_7= \max \set{C_5,C_6 +2 }$, and took
 \beq
 \beta_1=   \max \set{64, 2 \delta \pa{C_1{D_0}}^{-1}}.
 \eeq
It follows that 
\begin{align} \label{almostbound3}
& C_8   { \frac {\alpha_0^3}{ C_3 }}
  \pa{\beta R}^{ - 2\alpha } \le    (1 +K)  \norm{ {\psi}_{0,\delta}}^2_2   +
\norm{{\zeta}_{G}}_2^2 ,
 \end{align}
with a constants  $C_8= C_8(d,\delta,{D_0},C_1) >0$ and 
\beq \label{beta}
\beta= \tfrac 1 4 \beta_1 \delta^{-1}   C_1 = \max \set{ 16 \delta^{-1}   C_1,\pa{2{D_0}}^{-1}}.
\eeq

 Since $R \ge D$ and we require \eq{alpha0}, to satisfy \eq{keycond}  it suffices to also require
 \beq
  4^{\alpha_0 \pa{4C_1{D_0}}^{\frac 4 3}}\ge \tfrac 9 {16} C_2^{-3} \pa{8 C_1{D_0}}^{ 4 } C_{5}C_{1}^{-2} (1 +K)\norm{ {\psi}_{G}}_{2}^{2},
 \eeq
 that is,
 \begin{align}
\alpha_0\ge  \pa{4C_1{D_0}}^{-\frac 4 3} \pa{\log 4}^{-1} \log \pa{   C_{9}  (1 +K)\norm{ {\psi}_{G}}_{2}^{2}},
\end{align}
 where $C_9=C_9(d,\delta,{D_0})$.

Thus we can satisfy \eq{alpha0} and \eq{keycond} by taking
\beq
\alpha= \alpha_1 R^{\frac 4 3},\quad \text{with} \quad
\alpha_1= C_{10}\pa{1 + K^{\frac 2 3} + \log  \norm{ {\psi}_{G}}_{2}},\label{alpha1}
\eeq
 for some appropriate constant  $C_{10}=C_{10}(d,\delta,{D_0})$.

 It now  follows from \eq{almostbound3}, \eq{beta} and  \eq{alpha1} that we can find  a constant $m=m(d,\delta,{D_0})>0$ such that
 \beq
R^{-m\pa{1 + K^{\frac 2 3} + \log  \norm{ {\psi}_{G}}_{2}} R^{\frac 4 3}} \le    (1 +K)  \norm{ {\psi}_{0,\delta}}^2_2   +
\norm{{\zeta}_{G}}_2^2
 \eeq
 for all $R \ge D$.
\end{proof}

\begin{proof}[Proof of Corollary~\ref{corQUCPD}]  Without loss of generality we take $x_0=0$, i.e., $\La=\La_{L}(0)$.   We will prove the corollary  for the case of Dirichlet boundary condition, the modifications for the (easier) case of periodic boundary condition will be obvious.

Let  $\Delta_\Lambda$  be the Dirichlet Laplacian  on $\Lambda$, and let    $V$ be a bounded potential on $\La$  with  $\|V\|_\infty \le K<\infty$. 
Given $\vphi \in \mathrm{L}^2(\Lambda)$, we extend it to a function $\widetilde{\vphi}\in \mathrm{L}^2_{\mathrm{loc}}(\R^d)$ by setting $\widetilde{\vphi}=\vphi$ on $\Lambda$ and $\widetilde{\vphi}=0 $ on $\partial \Lambda$, and requiring that 
for all $x \in \R^d$ and $j \in \set{1,2 \dots,d}$ we have
\beq \label{widetildephi}
\widetilde{\vphi}(x)=- \widetilde{\vphi}(x + (L  -2 \widehat{x_j}) \e_j),
\eeq
where $\set{\e_j}_{j =1,2\ldots,d}$ is the canonical orthonormal basis in $\R^d$, and for each  $t\in  \R$ we define $\hat{t}\in  ] -\frac L 2, \frac L 2]$ by $t =
kL + \hat{t}$ with $k \in \Z$.   Note that if $\Lambda^\pr= \Lambda_{L^\pr}(0)=  ] -\frac {L^\pr} 2, \frac {L^\pr}  2[^d$,  we have
 \beq \label{normtilde0}
 \| \widetilde{\vphi} _{ \Lambda^\pr}  \|_2^{2} = \pa{2n+1}^{d}  \| {\vphi}_{ \Lambda}  \|_2^{2} \qtx{if} L^\pr= \pa{2n +1}L  \qtx{for some} n\in \N.
 \eeq
We also  extend the potential $V$ to a potential $\widehat{V}$ on  $\R^d$ by  by setting $\widehat{V}=V$ on $\Lambda$ and $V=0$ on $\partial \Lambda$, and requiring that 
for all $x \in \R^d$ and $j \in \set{1,2 \dots,d}$ we have
\beq
\widehat{V}(x)=\widehat{V}(x + (L  -2 \widehat{x_j}) \e_j).
\eeq
In particular,  $\|\widehat{V}\|_\infty =\norm{V}_\infty \le K$.

 Using the fact that for all    eigenfunctions $\phi$ of $\Delta_\Lambda$ (given explicitly  in \cite[Eq. (113) in Chapter XIII]{RS4})  we have  $\widetilde{\phi}\in C^\infty(\R^d)$, we conclude that  $\psi \in \D(\Delta_\Lambda)$  implies   $\widetilde{\psi} \in \mathrm{H}^2_{\mathrm{loc}}(\R^d)$, satisfying
\beq\label{eqwidehat}
-\Delta \widetilde{\psi} + \widehat{V}  \widetilde{\psi} =  \widetilde{H_\Lambda \psi}\quad \text{a.e.\ in} \quad  
\R^d.
\eeq

Now let  $\delta, D,\Theta$ be as in Corollary~\ref{corQUCPD}\ref{corQUCPDi}, and set $D_{0}=D$. In view of \eq{xR1} we may assume $D\le R\le \sqrt{d}L$ without loss of generality.  We take $\Lambda_{1}= \Lambda_{L_1}(0)$, with
\beq\label{defL1}
L_1= \pa{2[[(4C_1 +2)\sqrt{d}]]+1}L \le 29 \sqrt{d}L ,
\eeq
where $[[t]]$ denotes the smallest integer bigger or equal to $t$,  and we used \eq{C1}. Fix
 $x\in \Lambda$ satisfying \eq{xR1},  it follows that  $x$ satisfies \eq{xR} with
 $G=\Lambda_{1}$.  We now apply Theorem~\ref{thmucp} with $G=\Lambda_{1}$. Given   $\psi \in \D(\Delta_\Lambda)$,   $ \widetilde{\psi}$ satisfies \eq{eqwidehat} on $\Lambda_{1}$, and hence  \eq{UCPbound} yields
 \begin{align} \label{UCPbound3}
  (1 +K) \norm{{ \widetilde{\psi}_{x,\delta}}}^2_2   +
\norm{{\pa{ \widetilde{H_\Lambda \psi}}_{\La_{1}}}}_2^2 \ge
  R^{-{m}\pa{1 + K^{\frac 2 3} + \log  \pa{ {\norm{ \widetilde{\psi}_{\La_{1}}}_{2}}{ \norm{ \widetilde{\psi}_{\Theta}}_2^{-1}}}}R^{\frac 4 3}}\norm{ \widetilde{\psi}_{{\Theta}}}_2 ^2,
   \end{align}
with a constant  ${m}= {m}(d,\delta,D) >0$. Taking into account \eq{widetildephi}, \eq{normtilde0}, and \eq{defL1}, we get \eq{UCPbound1}

To prove  Corollary~\ref{corQUCPD}\ref{corQUCPDii}, let $L \ge 2$,  $0<\delta \le L$, and $x \in \Lambda$ with $B(x,\tfrac \delta 2) \subset \Lambda$.  We take $\Lambda_{2}= \Lambda_{L_2}(0)$, with
\beq\label{defL2}
L_2= \pa{2[[(6C_1 +3)]]+1}L \le 41L,
\eeq
where we used  \eq{C1}. We let  $\Theta_x= \Lambda + 2L \e^{(x)}\subset \Lambda_2$, where $\e^{(x)} \in \set{\pm \e_{j}}_{j=1,2,\ldots,d}$ is chosen such that   $R:= \dist \pa{x, {\Theta_x}}  \in \br{L, \frac 3 2 L}$.  It follows that  $x$ satisfies \eq{xR} with
 $G=\Lambda_{2}$, so we apply Theorem~\ref{thmucp} with $G=\Lambda_{2}$,  $D_{0}=\frac \delta 2$, $D=L$, and $\Theta=\Theta_x$. Given   $\psi \in \D(\Delta_\Lambda)$,   $ \widetilde{\psi}$ satisfies \eq{eqwidehat} on $\Lambda_{2}$,  we have $\|{\widetilde{\psi}_{{\Theta}}}\|_2=\norm{\psi_{\Lambda}}_{2}$, and hence  \eq{UCPbound} yields
 \begin{align} \label{UCPbound4}
  (1 +K) \norm{{ \widetilde{\psi}_{x,\delta}}}^2_2   +
\norm{{\pa{ \widetilde{H_\Lambda \psi}}_{\La_{2}}}}_2^2 \ge
  \pa{\tfrac 3 2 L}^{-{m^\pr}\pa{1 + K^{\frac 2 3} } \pa{\tfrac 3 2 L}^{\frac 4 3}}\norm{\psi_{\Lambda}}_{2},
   \end{align}
with a constant  ${m^\pr}= {m^\pr}(d,\delta) >0$.  Using \eq{widetildephi}, \eq{normtilde0}, and \eq{defL2}, we get \eq{UCPbound12}
 \end{proof}

\subsection{Application to Schr\"odinger operators with periodic potentials}

 Consider  the Schr\"odinger operator $H =   -\Delta + V$
on 
$\mathrm{L}^2(\mathbb{R}^d)$, 
where $\Delta$ is the $d$-dimensional Laplacian operator and  $V$ is a bounded periodic potential with period $q >0 $, i.e., periodic with respect to the group $q \Z^d$.  Without loss of generality  we assume $ \inf \sigma(H)=0$, i.e., $0 \in  \sigma(H) \subset [0,\infty[$.

Given $\delta \in ]0, q]$, we set ${b_{\delta}}= \Chi_{B(0,\frac \delta 2)} $, and consider the $q$-periodic bounded operator  $W_{\delta}$ on $\mathrm{L}^2(\mathbb{R}^d)$ given by multiplication by the function
\beq\label{multW}
 W_{\delta}(x) = \sum_{m \in q\Z^d}  {b_{\delta}}(x-m).
      \eeq

 We also consider  the corresponding finite volume operators.     Given $L \in q\N$, we set $H_L =   -\Delta_L + V$ on $\mathrm{L}^2(\Lambda_L,\di x)$, where $\Lambda_{L}=\Lambda_{L}(0)$.
$\Delta_L$ is the Laplacian with periodic boundary condition on $\Lambda_L$, which we  identify with the torus $\R^d / L\Z^d$ in the usual way.  We will also write $H_{\infty}=H$.

Combes, Hislop and Klopp \cite[Section~4]{CHK1} proved that for every compact interval $I$ there exists a constant  $C_{I,\delta} =C_{d,V,I,\delta}>0$, such that  for all $L\in q\N \cup \{\infty\}$ we have 
\beq
\Chi_{I}(H_{L}) W_{\delta} \Chi_{I}(H_{L}) \ge C_{I,\delta}  \Chi_{I}(H_{L}).
\eeq
Their proof relies on the unique continuation principle for Schr\" odinger operators, and for this reason does not provide much information on the constant $C_{I,\delta}>0$.  We will show that the quantitative unique continuation principle can be used to prove a modified form of their result with control of the constant.  

\begin{theorem} Let  $H =   -\Delta + V$ be a periodic  Schr\"odinger operator on 
$\mathrm{L}^2(\mathbb{R}^d)$ as above, with period $q \ge 2$, and let $W_{\delta}$ be as in \eq{multW}.  Given $E_{0}>0$, set $K_{0}= E_{0} + \norm{V}_{\infty}$.  There exists a constant $\widehat{m}=\widehat{m}(d,\delta) >0$, such that, defining $\gamma >0$  by
\beq\label{defgamma}
\gamma^{2} =  \tfrac 1 2 \pa{41}^{-d}  q^{-\widehat{m} \pa{1 + K_{0}^{\frac 2 3} }q^{\frac 4 3}},
\eeq
 for any closed  interval $I \subset [0,E_{0}]$ with $\abs{I} \le 2\gamma$ and any scale $L\in q\N \cup \{\infty\}$ we have
\beq \label{floqf230}
\Chi_{I}(H_{L}) W_{\delta} \Chi_{I}(H_{L}) \ge  \pa{41}^{d}\gamma
^{2}(1 + K_{0})^{-1}  \Chi_{I}(H_{L}).
\eeq
\end{theorem}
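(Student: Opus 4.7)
The plan is to reduce \eqref{floqf230} to a fiber-level inequality via Floquet/Bloch decomposition, and then prove the fiber-level inequality by a version of Corollary~\ref{corQUCPD}\ref{corQUCPDii} adapted to twisted (quasi-periodic) boundary conditions on the fundamental cell $\Lambda_q(0)$. Because $V$ is $q$-periodic and $W_\delta$ commutes with $q\Z^d$-translations, $H_L$ decomposes (as a direct sum for $L \in q\N$, a direct integral for $L=\infty$) as a family of fiber operators $H(\theta)$ acting on $L^2(\Lambda_q(0))$ with $\theta$-quasi-periodic boundary conditions, under which $W_\delta$ fibers into multiplication by $\Chi_{B(0,\delta/2)}$ (the single bump supported in $\Lambda_q(0)$, well-defined since $\delta \le q$). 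Thus it suffices to show, uniformly in $\theta$,
\begin{equation*}
\Chi_I(H(\theta))\, \Chi_{B(0,\delta/2)}\, \Chi_I(H(\theta)) \ge (41)^d \gamma^2 (1+K_0)^{-1}\, \Chi_I(H(\theta))
\end{equation*}
on $L^2(\Lambda_q(0))$.

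Fix $\theta$ and $\varphi \in \Ran \Chi_I(H(\theta))$, and pick $E_* \in I$ (its midpoint); since $\abs{I} \le 2\gamma$ we have $\|\eta\| \le \gamma \|\varphi\|$, where $\eta := (H(\theta) - E_*)\varphi$. Extend $\varphi, \eta$ to complex-valued $\theta$-quasi-periodic functions $\widetilde\varphi, \widetilde\eta$ on $\R^d$ via $\widetilde\varphi(x + qn) = e^{iq\theta\cdot n}\varphi(x)$. Because $V$ is real and $q$-periodic, $\widetilde\varphi \in \mathrm{H}^2_{\mathrm{loc}}(\R^d)$ and satisfies
\begin{equation*}
-\Delta\widetilde\varphi + (V - E_*)\widetilde\varphi = \widetilde\eta \quad \text{a.e.\ on }\R^d,
\end{equation*}
with $\|V - E_*\|_\infty \le K_0$. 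Following the construction in the proof of Corollary~\ref{corQUCPD}\ref{corQUCPDii} with $L=q$, take $G = \Lambda_{41q}(0)$ (possibly enlarged by a dimensional factor absorbed into $\widehat m$) and $\Theta$ a translate of $\Lambda_q(0)$ centered at some $y_0 \in q\Z^d$ whose Euclidean distance to $0$ lies in the range required by Theorem~\ref{thmucp}. Apply Theorem~\ref{thmucp} separately to $\psi_1 := \Re\widetilde\varphi$ and $\psi_2 := \Im\widetilde\varphi$, both real-valued solutions of the same real Schr\"odinger equation with $K = K_0$. By quasi-periodicity, $\|\widetilde\varphi|_\Theta\| = \|\varphi\|$, $\|\widetilde\varphi|_G\|^2$ equals a dimensional constant times $\|\varphi\|^2$, and $\|\widetilde\eta|_G\|^2 \le (41)^d \|\eta\|^2 \le (41)^d \gamma^2 \|\varphi\|^2$. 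Since $\|\psi_1|_\Theta\|^2 + \|\psi_2|_\Theta\|^2 = \|\varphi\|^2$, for at least one $j_*$ we have $\|\psi_{j_*}|_\Theta\| \ge \|\varphi\|/\sqrt 2$, and accordingly $\log(\|\psi_{j_*}|_G\|/\|\psi_{j_*}|_\Theta\|)$ is bounded by a purely dimensional constant.

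Theorem~\ref{thmucp} applied to $\psi_{j_*}$, together with $\|\widetilde\varphi_{0,\delta}\|^2 = \|\varphi|_{B(0,\delta/2)}\|^2$, then yields, for a suitable $\widehat m = \widehat m(d,\delta) > 0$,
\begin{equation*}
(1+K_0)\|\varphi|_{B(0,\delta/2)}\|^2 + (41)^d \gamma^2 \|\varphi\|^2 \ge q^{-\widehat m (1 + K_0^{2/3})\, q^{4/3}} \|\varphi\|^2.
\end{equation*}
By the definition \eqref{defgamma} of $\gamma$, the right-hand side equals $2 (41)^d \gamma^2 \|\varphi\|^2$, so subtracting the error term and dividing by $1 + K_0$ gives $\|\varphi|_{B(0,\delta/2)}\|^2 \ge (41)^d \gamma^2 (1+K_0)^{-1} \|\varphi\|^2$, the desired fiber-level bound; Floquet reassembly then yields \eqref{floqf230}. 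The main technical obstacle is the treatment of complex-valued, twisted-periodic solutions: Theorem~\ref{thmucp} is stated for real-valued functions, so one must split $\widetilde\varphi$ into real and imaginary parts, and the $\log$ factor in the exponent of Theorem~\ref{thmucp} then forces one to control $\|\widetilde\varphi|_G\|/\|\widetilde\varphi|_\Theta\|$ by a purely dimensional constant --- this is precisely where the $q$-quasi-periodicity of $\widetilde\varphi$ is crucial, since without it the exponent would acquire an $L^{4/3}$ (rather than $q^{4/3}$) factor, destroying the uniformity in $L$ of the lower bound.
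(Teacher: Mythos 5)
Your proof is correct and follows the same route as the paper's (cf.\ \eqref{floqf2}--\eqref{floqf23}): Floquet reduction to a fiber inequality on $\mathrm{L}^2(\Lambda_q(0))$, quasi-periodic extension of fiber vectors to $\R^d$, and an application of the quantitative unique continuation principle (Theorem~\ref{thmucp}) at scale $q$ with $G$ a box of side $\le 41q$ and $\Theta$ a $q\Z^d$-translate of $\Lambda_q(0)$. A minor merit of your write-up is that you make explicit the real/imaginary split needed to apply the real-valued Theorem~\ref{thmucp} to complex-valued Bloch fibers --- a step the paper passes over in silence --- together with the observation that one of $\Re\widetilde\varphi$, $\Im\widetilde\varphi$ carries at least half the $\Theta$-mass, so that the $\log$ factor in the exponent stays bounded by a constant depending only on $d$.
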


\begin{proof}

 We will need to review Floquet Theory (see \cite[Section XIII.6]{RS4}). We let $Q={\Lambda}_{q}(0)$ be the basic period cell, $\widetilde{Q}={
{\Lambda}}_{\frac{2\pi }{q}}(0)$ the dual basic cell.  
 We define the Floquet
transform 
\begin{equation}
{\cF}\colon \mathrm{L}^2(\R^d, \di x) \to
 \int_{\widetilde{Q}}^{\oplus }\mathrm{L}^2(Q, \di x) \,dk\cong L^{2}\left(\widetilde{Q}, \di k
;\mathrm{L}^2(Q, \di x) \right)
\end{equation}
by 
\begin{equation}
({\cF}\psi )(k,x)=\left( \frac{q}{2\pi }\right) ^{\frac{d}{2}}\sum_{m\in
q \mathbb{Z}^{d}}{\rm e}^{-ik\cdot m}\psi (x-m),\;\;\ x\in
Q,\ k\in \widetilde{Q},
\end{equation}
if $\psi $ has compact support; it extends by continuity to a unitary
operator.

The $q$-periodic operator $H$ is decomposable in this direct integral
representation, more precisely, 
\begin{equation} \label{floq}
{\cF}H{\cF}^{*}=\int_{\widetilde{Q}}^{\oplus }{H}
_{Q}(k)\,\di k,
\end{equation}
where for each $k\in \mathbb{R}^{d}$ we set $H_Q(k) = -{\Delta}_Q(k)  + V$, where  ${\Delta}_Q(k)$ is the 
Laplacian on $Q$ with $k$-quasi-periodic boundary condition, i.e., defined on functions of the form $\psi(x)={\rm e}^{-ik\cdot x} \vphi(x)$ with $\vphi$ a periodic function on $Q$.  Note that $H_Q(0)=H_q$.  Moreover, If  
 $p \in \frac{2 \pi}{q} \mathbb{Z}^d$,
 then for all $k \in \mathbb{R}^d$ we have 
${H}_{Q}(k + p) =\mathrm{e}^{-i p \cdot x}{H}_{Q}(k)\mathrm{e}^{i p \cdot x}$.

If $\L\in q\N$, similar considerations apply to
the operator $H_L$, which is $q$-periodic on the torus 
$\Lambda_L \cong \R^d / L\Z^d $. The Floquet transform 
\begin{equation}
{\cF}_{L }\colon \mathrm{L}^2(\Lambda_L, \di x)
 \to \bigoplus_{k\in \frac{2\pi }{
L }\mathbb{Z}^{d}\cap \widetilde{Q}}\mathrm{L}^2(Q, \di x)
\end{equation}
is a unitary operator now defined by 
\begin{equation}
({\cF}_{L }\psi )(k,x)=\left( \frac{q}{L }\right) ^{\frac{d}{2}
}\sum_{m\in q\mathbb{Z}^{d}\cap {\Lambda}_{L }}
{\rm e}^{-ik\cdot m}\psi (x-m),
\end{equation}
where $x\in Q,\ k\in \frac{2\pi }{L }\mathbb{Z}^{d}\cap 
\widetilde{Q},\psi \in  \mathrm{L}^2(\Lambda_L, \di x)$, and  $\psi (x-m)$ is properly
interpreted in the torus ${\Lambda }_{L }$. We also
have 
\begin{equation}
{\cF}_{L }{H}_{L}{\cF}_{L
}^{*}=\bigoplus_{k\in \frac{2\pi }{L }\mathbb{Z}^{d}\cap 
\widetilde{Q}}{H}_{Q}(k) .
\end{equation}
It follows that for any bounded Borel function $f$ we have
\begin{equation} \label{floqf}
{\cF}f(H){\cF}^{*}=\int_{\widetilde{Q}}^{\oplus }f({H}
_{Q}(k))\,\di k, \quad {\cF}_{L }f({H}_{L}){\cF}_{L
}^{*}=\bigoplus_{k\in \frac{2\pi }{L }\mathbb{Z}^{d}\cap 
\widetilde{Q}}f({H}_{Q}(k)).
\end{equation}

Let us fix $\delta \in ]0, q]$ and  $E_0>0$. We set $K_{0}= \norm{V}_{\infty}+E_0$, so $\norm{V -E}_\infty \le K_{0}$ for all $E \in I_0$. Given $k \in \widetilde{Q}$, we consider the Schr\" odinger operator ${H}_{Q}(k)$ on $\L^2(Q)$, and   proceed  similarly to  the proof of  Corollary~\ref{corQUCPD}\ref{corQUCPDii}. Since we have   $k$-quasi-periodic boundary condition,    we extend a function  $\vphi \in \mathrm{L}^2(Q)$ to a function $\widetilde{\vphi}\in \mathrm{L}^2_{\mathrm{loc}}(\R^d)$ by requiring
$\widetilde{\vphi}={\vphi}$ on $Q$ and
$\widetilde{\vphi}( x+ m)={\rm e}^{-ik\cdot m} \widetilde{\vphi}(x)$ for all $x \in \R^d$ and $m \in q\Z^d$.  If $\psi \in \D({\Delta}_Q(k))$, then   $\widetilde{\psi} \in \mathrm{H}^2_{\mathrm{loc}}(\R^d)$ and we have
\beq\label{eqwidehat34}
-\Delta \widetilde{\psi} + {V}  \widetilde{\psi} =  \widetilde{{H}_{Q}(k) \psi}\quad \text{a.e.\ in} \quad  
\R^d.
\eeq
We apply Theorem~\ref{thmucp} with   $G= \Lambda_{L_2}(0)$, where  $L_2$ is given in \eq{defL2} (recall  $L=q$). 
Proceeding as in the derivation of \eq{UCPbound4} and \eq{UCPbound12}, using $q\ge 2$, we get
 \begin{align} \label{UCPboundper}
  (1 +K_{0}) \norm{\pa{{b_{\delta}}\psi }_{Q}}^2_2   + (41 )^{d}
\norm{\pa{\pa{{H}_{Q}(k) -E}\psi}_{Q}}_{2}^2 \ge
 q^{-\widehat{m} \pa{1 + K_{0}^{\frac 2 3} }q^{\frac 4 3}}\norm{\psi_{Q} }_2 ^2
   \end{align}
for all $E \in [0,E_0]$,  with a constant
   $\widehat{m}=\widehat{m}(d,\delta)>0$.

   We now take  $I=[E-\eps,E +\eps]\subset [0,E_0]$.  If  $\psi= {\Chi}_I({H}_{Q}(k)) \psi$,  we have
      \beq
   \norm{\pa{\pa{{H}_{Q}(k) -E}\psi}_{Q}}_{2}\le \eps \norm{\psi_{Q}}_{2},
  \eeq 
 and it  follows from \eq{UCPboundper} that
 \begin{align} \label{UCPboundper22}
   (1 +K_{0}) \norm{\pa{{b_{\delta}}\psi }_{Q}}^2_2  +
\eps^2  ( 41)^{d}\norm{\psi_{Q}}_{2}^2 \ge
 q^{-\widehat{m} \pa{1 + K_{0}^{\frac 2 3} }q^{\frac 4 3}}\norm{\psi_{Q} }_2 ^2.
   \end{align} 
  Thus, if $\eps \le \gamma$, where $\gamma$
   is given in \eq{defgamma},
    we get
  \begin{align} \label{UCPboundper33}
   (1 +K_{0}) \norm{\pa{{b_{\delta}}\psi}_{Q}}^2_2   \ge
\tfrac 1 2   q^{-\widehat{m} \pa{1 + K_{0}^{\frac 2 3} }q^{\frac 4 3}}\norm{\psi_{Q} }_2 ^2=
 \pa{41}^{d}\gamma^{2}\norm{\psi_{Q} }_2 ^{2},
   \end{align} 
   that is,
   \beq \label{unifk}
   {\Chi}_I({H}_{Q}(k)){b_{\delta}}{\Chi}_I({H}_{Q}(k)) \ge  \pa{41}^{d}\gamma
^{2}(1 + K_{0})^{-1}  {\Chi}_I({H}_{Q}(k)).
   \eeq

   Given an  interval $I$, we have 
  \begin{align} \label{floqf2}
{\cF}\set{{\Chi}_I(H)W_{\delta}{\Chi}_I(H)}{\cF}^{*} &=\int_{\widetilde{Q}}^{\oplus }\set{{\Chi}_I({H}_{Q}(k)){b_{\delta}}{\Chi}_I({H}_{Q}(k))}\,\di k,  \\
\intertext{and, for $L\in q\N$,}
 {\cF}_{L }\set{{\Chi}_I(H_L)W_{\delta}{\Chi}_I(H_L)}{\cF}_{L
}^{*}&=\bigoplus_{k\in \frac{2\pi }{L }\mathbb{Z}^{d}\cap 
\widetilde{Q}}\set{{\Chi}_I({H}_{Q}(k)){b_{\delta}}{\Chi}_I({H}_{Q}(k))}.\label{floqf2L}
\end{align}
Thus for   $I=[E-\eps,E +\eps]\subset [0,E_0]$, with  $\eps\le \gamma$,  it follows from   \eq{unifk}, \eq{floqf2}, and \eq{floqf2L}, that  for all  $L\in q\N \cup\set{\infty}$ we have 
 \begin{align} \label{floqf23}
 {{\Chi}_I(H_{L})W_{\delta}{\Chi}_I(H_{L})}\ge   \pa{41}^{d}\gamma
^{2}(1 + K_{0})^{-1}  {\Chi}_I({H_{L}}),   
\end{align}
so we proved \eq{floqf230}.
\end{proof}

\begin{remark}
Note that \eq{floqf230} holds for $I=[0,E_1]$ where 
\beq 
E_1^{2} = 2 \pa{41}^{-d} q^{-\widehat{m}  \pa{1 + \pa{\norm{V}_{\infty} + E_1}^{\frac 2 3} }q^{\frac 4 3}}.
\eeq
Note that this equation has a solution $E_1 >0$.
\end{remark}

\bigskip
\footnotesize
\noindent\textit{Acknowledgments.}
Fran\c cois  Germinet was partially supported by the ANR BLAN 0261.
Abel Klein was  supported in part by the NSF under grant DMS-1001509.

\addcontentsline{toc}{section}{References}

\end{document}